\def\h{{\bf h}}
\def\g{{\bf g}}
\def\v{{\bf v}}
\def\Umb{U}
\def\i{{i_1,i_2}}
\def\k{{\bf k}}
\newcommand{\bea}{\begin{eqnarray}}
\newcommand{\eea}{\end{eqnarray}}
\newcommand{\e}{\eta}
\def\bi{\begin{itemize}}
\def\ei{\end{itemize}}
\def\bc{\begin{center}}
\def\ec{\end{center}}
\def\C{\hbox{$\mit I$\kern-.7em$\mit C$}}
\def\R{\hbox{$\mit I$\kern-.6em$\mit R$}}
\def\N{\hbox{$\mit I$\kern-.6em$\mit N$}}
\def\ket#1{|#1\rangle}
\newcommand{\one}{\mbox{$1 \hspace{-1.0mm}  {\bf l}$}}
\def\tr{\mathrm{tr}}
\def\ket#1{\left| #1\right>}
\def\bra#1{\left< #1\right|}
\newcommand{\proj}[1]{\ket{#1}\bra{#1}}
\newtheorem{theorem}{Theorem}
\newtheorem{corollary}[theorem]{Corollary}
\newtheorem{lemma}[theorem]{Lemma}
\newtheorem{observation}[theorem]{Observation}
\begin{document}

\author{C. Spee}
\affiliation{Institute for Theoretical Physics, University of
Innsbruck, Innsbruck, Austria}
\author{J.I. de Vicente}
\affiliation{Departamento de Matem\'aticas, Universidad Carlos III de
Madrid, Legan\'es (Madrid), Spain}
\author{B. Kraus}
\affiliation{Institute for Theoretical Physics, University of
Innsbruck, Innsbruck, Austria}
\title{The maximally entangled set of 4-qubit states}

\begin{abstract}
Entanglement is a resource to overcome the natural restriction of operations used for state manipulation to Local Operations assisted by Classical Communication (LOCC). Hence, a bipartite maximally entangled state is a state which can be transformed deterministically into any other state via LOCC. In the multipartite setting no such state exists. There, rather a whole set, the Maximally Entangled Set of states (MES), which we recently introduced, is required. This set has on the one hand the property that any state outside of this set can be obtained via LOCC from one of the states within the set and on the other hand, no state in the set can be obtained from any other state via LOCC. Recently, we studied LOCC transformations among pure multipartite states and derived the MES for three and generic four qubit states. Here, we consider the non-generic four qubit states and analyze their properties regarding local transformations. As already the most coarse grained classification, due to Stochastic LOCC (SLOCC), of four qubit states is much richer than in case of three qubits, the investigation of possible LOCC transformations is correspondingly more difficult. We prove that most SLOCC classes show a similar behavior as the generic states, however we also identify here three classes with very distinct properties. The first consists of the GHZ and W class, where any state can be transformed into some other state non--trivially. In particular, there exists no isolation. On the other hand, there also exist classes where all states are isolated. Last but not least we identify an additional class of states, whose transformation properties differ drastically from all the other classes. Although the possibility of transforming states into local-unitary inequivalent states by LOCC turns out to be very rare, we identify those states (with exception of the latter class) which are in the MES and those, which can be obtained (transformed) non-trivially from (into) other states respectively. These investigations do not only identify the most relevant classes of states for LOCC entanglement manipulation, but also reveal new insight into the similarities and differences between separable and LOCC transformations and enable the investigation of LOCC transformations among arbitrary four qubit states.
\end{abstract}
\maketitle

\tableofcontents

\section{Introduction}

Multipartite states occur in many applications of quantum
information, such as one-way quantum computing, quantum error
correction, and quantum secret sharing \cite{Gothesis97,RaBr01,secretshar}. The
existence of those practical and abstract applications is due to
the subtle properties of multipartite entangled states.
Furthermore, multipartite entanglement plays also an
important role in other fields such as many-body physics \cite{AmFa08}.  Thus, one
of the main goals in quantum information theory is to gain a better
understanding of the non-local properties of quantum states. However,
whereas the bipartite case is well understood, the multipartite
case is much more complex. Even though a big theoretical effort has
been undertaken where several entanglement measures for
multipartite states have been introduced \cite{CoKu00}, different
classes of entangled states have been identified \cite{DuViCi00},
and a normal form of multipartite states has been presented
\cite{Ves}, we are far from a full-fledged theory of multipartite entanglement \cite{HoHo07}.

The paradigm of Local Operations and Classical Communication (LOCC) plays a fundamental role in entanglement theory. First, this is the most general class of transformations spatially separated parties can implement: each party can apply locally to its share of the state the most general transformation allowed by the laws of quantum mechanics (a trace-preserving completely positive map) which can be conditioned on the other parties' outcomes through successive rounds of classical communication. Thus, the study of LOCC transformations provides the possible protocols with which entangled states can be manipulated for spatially separated parties. Second, and more importantly, entanglement is the resource to overcome the limitations due to LOCC transformation, as entanglement can only decrease under LOCC transformations. Hence, the investigation of LOCC transformations induces an operationally meaningful ordering in the set of entangled states. Indeed, if the state $|\psi\rangle$ can be transformed by LOCC into the state $|\phi\rangle$, then all tasks that can be implemented using the latter are also amenable using the former (but not necessarily the other way around). Hence, $|\psi\rangle$ is more (or equally) useful than $|\phi\rangle$ and consequently more (or equally) entangled. Obviously then, the very basic requirement for an entanglement measure is to respect the LOCC ordering. Thus, the study of LOCC transformations allows to classify which states are more useful for quantum information processing from a solid theoretical point of view. Unfortunately, the investigation of LOCC protocols is in general very difficult as their mathematical characterization is involved and plagued with technical subtleties \cite{Donald}. For instance, it has been recently shown that the set of LOCC operations is not closed \cite{LOCCnotclosed1, LOCCnotclosed2} and certain transformations can only be accomplished in the limit of infinitely many rounds of classical communication \cite{infiniteround}. In fact, considering transformations among pure states, although LOCC convertibility is characterized in the bipartite case \cite{Nielsen}, only a few classes of states have been studied in the multipartite case \cite{Turgut1,Turgut2,Tajima} prior to our work.

Due to these difficulties, other directions to classify the entanglement contained in a multipartite system have been pursued such as Local Unitary (LU) and Stochastic LOCC (SLOCC) equivalence. These approaches consider a more amenable mathematical problem which provides an operationally meaningful classification. Two $n$-partite states, $\ket{\Psi_1},\ket{\Psi_2}$ are called LU-equivalent if there exist LUs, $U_1,\ldots U_n$ such that $U_1\otimes \ldots \otimes U_n\ket{\Psi_1}=\ket{\Psi_2}$. Note that two states which are LU-equivalent are equally useful (i.\ e.\ LU is an invertible LOCC protocol) and, therefore, this classification groups states that contain exactly the same amount of entanglement. Necessary and sufficient conditions for the LU-equivalence of pure $n$-qubit states have been derived in \cite{barbaraLU}. On the other hand, the SLOCC classification questions whether two states, $\ket{\Psi_1}$ and $\ket{\Psi_2}$, can be interconverted at least probabilistically by LOCC \cite{DuViCi00,slocc4}. Mathematically this means that there exists a local invertible operation, $g$ such that $g \ket{\Psi_1}=\ket{\Psi_2}$. For three-qubit entangled states it has been shown there there exist two different truly tripartite entangled SLOCC classes, the GHZ-class and the W-class \cite{DuViCi00} \footnote{Note that we consider troughout the paper only truly multipartite entangled states}. However, for more systems, there are infinitely many SLOCC classes \cite{slocc4}. Note that the entanglement contained in states belonging to different SLOCC classes is  fundamentally different. However, as it is the case for the LU classification too, this just defines equivalence classes and does not allow to order entangled states according to their usefulness. On the other hand, it is known that any LOCC operation can be written as a separable map (SEP). These are maps of the form $\rho\rightarrow\sum_i X_i\rho X_i^\dagger$, where the $X_i$ are local operators and $\sum_iX_i^\dag X_i=\one$. In \cite{Gour} necessary and sufficient conditions for the possibility of transforming one state into another via SEP have been investigated (see also below). Contrary to the approaches mentioned above, SEP defines an ordering; however, it must be noted that LOCC is strictly contained in SEP (i.\ e.\ there exist transformations that can be implemented by SEP but not by LOCC \cite{Bennett,Eric,LOCCnotclosed1}) and SEP transformations lack an operational interpretation.

Given the enormous complexity of multipartite entangled states, it would be desirable to go beyond the classifications based on equivalence mentioned above and find classes of states which are indicated to be particularly relevant in an operationally meaningful sense as put forward by the principles of entanglement theory. To this end, we introduced recently the concept of the maximally entangled set (MES) of multipartite states \cite{MESus}, which we will explain in the following. In contrast to the simple picture we have in the bipartite case, in the multipartite case many different notions of maximally entangled states exist in the literature (see e.g. \cite{maxent, Lstate}). Most of them refer to states which optimize a certain measure of entanglement. However, from the fundamental point of view, it should be clear from our argumentations above that the most meaningful way to order the set of entangled states is by LOCC convertibility. Considering entanglement as a resource, a maximally entangled state ought to be a state which can be transformed into any other state deterministically via LOCC. In the bipartite case, it follows from the results of \cite{Nielsen} that this is the state $\ket{\Phi^+}\propto \sum_i \ket{ii}$. Note that here and in the following we consider convertibility via LOCC only among LU-equivalence classes as LU transformations can always be performed. Thus, we only consider one representative for each LU-equivalence class. As there are more than one truly $n$-partite entangled SLOCC classes for $n\geq 3$, there cannot exist a single maximally entangled state. In fact, as we showed in \cite{MESus} the generalization of the maximally entangled bipartite state leads to a set of pure states, the MES. This set, $MES_n$, has the following properties:  (i) No state in $MES_n$ can be obtained from any other $n$-partite state via LOCC (excluding LU) and (ii) for any truly
$n$-partite entangled pure state, $\ket{\Phi} \not\in MES_n$, there exists a state in $MES_n$ from which $\ket{\Phi}$ can be obtained via LOCC. Stated differently, it is the unique minimal set of states from which any other state can be reached via LOCC. Thus, our notion of maximal entanglement completely matches the idea of maximal usefulness under LOCC manipulation. The determination of the MES should then not only improve our understanding of the possibilities of multipartite LOCC protocols but, more importantly, identify the class of the most relevant entangled states in an operational way. Notice that we restrict to LOCC transformations among fully-entangled $n$-qubit  pure states. This implies, that transformations can only be done among states in the same SLOCC class. Thus, the MES identifies the most (in the above mentioned sense) relevant states in each SLOCC class.

In \cite{MESus} we provided general tools to decide LOCC convertibility in the multipartite case and characterized $MES_3$ and the generic subset of $MES_4$. Interestingly, these sets contain infinitely many states (even in the 3-qubit case where there exist just two different SLOCC classes). However, $MES_3$ is of measure
zero in the full set of $3$-qubit states and only relatively few states are maximally useful for LOCC conversions. On the contrary, $MES_4$ is of
full measure: almost all $4$-qubit
states are in the MES. However, the reason for this is that almost all states
are \textit{isolated}, i.\ e.\ they can neither be obtained from nor
transformed to any other fully entangled state by deterministic
LOCC (excluding LU). Hence, LOCC induces a trivial ordering in the set of entangled states and the possibility of LOCC conversions is very rare in the
multipartite case. This implies, that most states are useless for entanglement manipulation via LOCC.
However, in \cite{MESus} we also identified a zero-measure subset of states in $MES_4$
which are LOCC convertible. Hence, despite the fact that $MES_4$ is of full measure and therefore almost all states are required to reach any other state, the most useful states regarding entanglement manipulation are contained in the zero-measure set of LOCC convertible states in the MES. Note that this significant set of multipartite entangled states is as in the bipartite case likely to guide to new applications.
In \cite{MESus} we considered solely generic $4$-qubit
states, i.\ e.\ we excluded SLOCC classes giving rise to a subset of states of measure
zero. The aim of this paper is to complete this analysis by considering all the remaining SLOCC classes which can be grouped into nine families.
The reason for doing so is not just for the sake of completeness. It is not enough to know the MES for generic states as from the point of view of applications it is generally believed that very few states are relevant. Actually, most multipartite states highlighted in previous literature from its potential for applications and/or relevant mathematical structure such as the GHZ state \cite{ghz}, the cluster state \cite{RaBr01} or the $|L\rangle$ and $|M\rangle$ states \cite{Lstate} belong to the non-generic 4-qubit SLOCC classes.  Moreover, even if certain properties, such as isolation is a generic feature for the SLOCC classes studied in \cite{MESus}, it could happen that this is not the case for non-generic families.
In fact, as we will show here, there are three classes (each containing one or more SLOCC classes) which show a completely different behavior regarding LOCC transformations than the generic and all remaining SLOCC classes do. In order to explain the differences let us mention here that for the generic SLOCC class investigated in
\cite{MESus} the following holds:
i) most states are isolated, i.e. they can neither be reached from nor transformed to any other state via non--trivial LOCC;
ii) any state which can be reached via a separable operation can also be reached via a very simple LOCC protocol. They consist of only one party applying a measurement (and all the other parties just apply a local unitary depending on the measurement outcome) \cite{SaSc15}.
iii) All properties, such as being in MES, being convertible, being reachable (via both SEP and LOCC) can be directly read off a state given in the standard form. The standard form will always be given by $g_1\otimes g_2\otimes g_3\otimes g_4\otimes \ket{\Psi}$, where the $g_i \in SL(2)$ for any $i$ and the state $\ket{\Psi}$ is some carefully picked representative of the SLOCC class. In standard form, some conditions, which depend on the SLOCC class under consideration, on the operators $g_i$ are imposed, e.g. that they are diagonal. Given this standard form, it holds generically that a state is reachable iff there exists a local symmetry of $\ket{\Psi}$, such that three of the operators $g_i$ commute with the corresponding local terms of the symmetry, whereas the fourth does not. A state which is not reachable is necessarily in MES. All reachable states can then be shown to be reachable by a state in the MES. Moreover, interestingly all states which are reachable are also convertible.

Here, we show that for the vast variety of non-generic SLOCC classes properties i)--iii) still hold true. However, we also identify the following classes which show a very distinct behavior:
\bi
\item [(I)] Classes, where no state is isolated. That is LOCC conversions to or from any state are always possible. In fact, there are only two such classes, the one corresponding to the GHZ state, $|0000\rangle+|1111\rangle$, and that corresponding to the W state, $|0001\rangle+|0010\rangle+|0100\rangle+|1000\rangle$ (see Sec. \ref{secGHZ} and Sec. \ref{secW}). Note that this also explains why there exists no isolation in case of three qubits, as there the only genuinely three--partite entangled classes are the GHZ and the W class. Moreover, the GHZ and W class are the only classes for which the set of states in MES and these particular SLOCC class are of zero-measure. In all the other cases almost all states inside the particular class are isolated and are therefore necessarily in MES. Note further that these results hold also true in the n--qubit W and GHZ case.
\item [(II)] Classes where all states are isolated. In these classes not even separable transformation can be found which transform any of the states into a state which is not LU--equivalent to the initial state (see Sec. \ref{secLba3aneq0}, Sec. \ref{secLba3a0bneq0} and Sec. \ref{secLa4aneq0}). Note that these classes can be seen as those having exactly the opposite properties as the GHZ and W class.
\item[(III)] A class where it is no longer the case that states which can be reached via a separable operation can also be reached via a very simple LOCC protocol, where only one party measures. Moreover, it is no longer possible to infer from the standard form the SEP convertibility properties of a state. That is, the local operations do no longer need to have the properties explained in (iii) for a state to be reachable via SEP.
    The structure of the separable transformation which is possible in this case is impossible in all
    the other classes. This class is called L-state class in the literature (see Sec. \ref{cycle3}). Certain states in this class have properties resembling states in the three qutrit case \cite{HeSp15}. There we identify pairs of states, $(\ket{\Psi},\ket{\Phi})$, where the transformation from $\ket{\Psi}$ to $\ket{\Phi}$ is possible via SEP, however, no LOCC protocol exists to realize this transformation. Note that as
    the difference among these transformations has only been proven among conversions involving ensembles of states \cite{Eric, LOCCnotclosed1}, and that it is known that they are equivalent in the case of transformations among pure bipartite states \cite{nielsenandchuang}, this result is in strong contrast to all previous results of pure state transformations. It constitutes the first example of pure state transformations, which are possible via SEP, but not via LOCC. As the properties of states in this class are significantly distinct from all other classes and as they resemble the three qutrits, which we analyze in \cite{HeSp15}, the investigation of all MES in this class is beyond the scope of this paper. Note that the existence of this particular SLOCC class is likely to be the reason for why the general properties valid for all other classes (mentioned above) cannot be proven in general. This implies that basically every family of SLOCC classes has to be treated separately.
    \ei

Note that the existence of these classes is precisely the reason for investigating in detail the very involved case of four qubits. It is our intention to investigate in the future whether this different behavior might be useful for certain applications. Moreover, their existence also shows once again the notoriously difficult structure of multipartite LOCC transformations.

Apart from the identification of the classes mentioned above, we present here the characterization of all states in $MES_4$ (excluding the ones in the $L$--state class). Moreover, we characterize the LOCC convertible states in $MES_4$ for all these SLOCC classes, thus providing the most relevant subclass of states for LOCC manipulation.

On the other hand, it is of technical interest to investigate further the difference between SEP and LOCC operations.
Moreover, dealing here with all the possible 4-qubit SLOCC classes and their very different mathematical structure shows in more detail the wide applicability of the techniques we introduced in \cite{MESus} to decide LOCC and SEP convertibility. We present here all the details of the derivations as they can be used directly to study arbitrary LOCC transformations among four qubits. This is relevant to compute for instance the new entanglement measures which we recently introduced \cite{SaSc15, ScSa15}. There, the entanglement of a state is measured by how useful it is for state transformations and how difficult it is to generate the state. More precisely, the source entanglement measures the volume of states which can reach the state via LOCC, the accessible entanglement measures how many LU--inequivalent states can be reached from the state of interest via LOCC. Due to the operational character of these quantities it is easy to construct entanglement measures from them. Moreover, the standard form introduced for each SLOCC class can be utilized to decide very easily whether two states are LU--equivalent or not \footnote{Note that in some cases the seed state is not uniquely chosen. However, the results derived for those SLOCC classes apply to any choice of seed state. Moreover, the seed state can be made unique by imposing positivity and/or an order  of some of the parameters of the seed state. This unique seed state can then be used to decide LU-equivalence in a very simple way.}.

Hence, the results presented here do not only characterize the most relevant states (with the exception of the L--state class), but can be used to study general LOCC transformations among four qubit states. Moreover, they reveal novel differences and similarities between SEP and LOCC transformations and lead to a simple criterion for LU--equivalence.

The outline of the remaining of the paper is the following. In Section II we first introduce our notation. Then, we recall the results on SEP and will then show how they can be used to study LOCC transformations. As we will see, the symmetries of a state play an important role in the investigation of the possible transformations from and to a state. We will present a systematic method of determining the symmetries of an arbitrary 4-qubit state. Next, we will show how the symmetry can be used to identify a standard form up to LUs for states in a particular SLOCC class. Using the results on SEP we will then present the general formalism to determine all possible LOCC transformations. Since any state, which cannot be reached via LOCC is necessarily in the MES, any state which cannot be reached via SEP is particularly in the MES. For all classes but the L--state class, we will then show that whenever a SEP transformation exists, it is already so constrained that one can find a corresponding very simple LOCC transformation. It turns out that there is no need for complicated protocols and it suffices that each party measures just once. In fact, any reachable state can be obtained via a LOCC protocol where only one party measures and the remaining once apply, depending on the outcome, a local unitary operator, as mentioned above. In Sec. II the outline of the subsequent sections is presented and the results are summarized. In Sec. III to Sec. X  the various SLOCC classes \footnote{Note that the classification of SLOCC classes given in \cite{slocc4} was derived using related methods to the one that we use here in order to determine the symmetries. A representative of each SLOCC class is given at the beginning of each section.} are considered and the corresponding states in $MES_4$ as well as the convertible states are identified. The structure of these sections is always as outlined in Sec. II. However, unfortunately, in order to determine the possible SEP operations almost always a new proof is required.

\section{General method of determining the MES}

In this section we present the general formalism of deriving the MES and the non--isolated states contained in the MES.
The section is organized as follows. First of all, we present the notation we are going to use throughout this paper. Then we summarize the results presented in \cite{Gour} regarding possible SEP transformation and the MES for three and generic four qubit states. From then on it will be clear that the symmetries of a state will play an important role in studying possible state transformations. In Subsec \ref{SubsecSym} we will demonstrate how the symmetries of an arbitrary 4 qubit state can be determined. Finally, we will show how this allows us to identify a standard form of states within a particular SLOCC class and the MES for that class. An outline of the subsequent sections, where the MES and the non--isolated MES are determined for the various SLOCC classes, is presented and the results are summarized.

\subsection{Notation}

Throughout the paper
$\sigma_i$, where $i=0,1,2,3$  and $\one, \sigma_x, \sigma_y, \sigma_z$ denote the identity operator and the Pauli
operators. Furthermore, $W(\alpha)=\exp(i\alpha \sigma_w)$ for $w=x,y,z$. We will ignore normalization whenever it does not lead to any
confusion. By $\mathcal{G}$
we denote the set of local invertible (not necessarily determinant
$1$) operators. $g,h$ denote elements of $\mathcal{G}$, e.g. $g=g_1\otimes
\ldots \otimes g_n$, with $g_i\in GL(2)$. Two states are in the same SLOCC class (LU class) if there exists a $g\in \mathcal{G}$ ($g$ local unitary) which converts one state into the other respectively. Throughout the paper we denote by $d_i$ (and $\tilde{d}_i)$ a diagonal matrix $\in GL(2)$. Moreover, we will use the notation $D_i=d_i^\dagger d_i$.
$U$ denotes the unitary transformation that maps the computational basis into the magic basis \cite{magicbasis}, i.e. \bea \label{eq:Umb}
U=\ket{\Phi^+}\bra{00} -i\ket{\Phi^-}\bra{01}+\ket{\Psi^-}\bra{10}-i\ket{\Psi^+}\bra{11},\eea

where $\ket{\Phi^\pm}=\frac{1}{\sqrt{2}}(\ket{00}\pm \ket{11}), \ket{\Psi^\pm}=\frac{1}{\sqrt{2}}(\ket{01}\pm \ket{10})$ denote the Bell states. Whenever we consider a transformation between two states, we denote by $\ket{\Psi_1} = g
\ket{\Psi}$ the initial state and by $\ket{\Psi_2}= h
\ket{\Psi}$ the final state, where $|\Psi\rangle$ is some properly chosen representative state, which we call the seed state, for the considered SLOCC class. We will also use the notation $G_i= g_i^\dagger g_i$. These operators are strictly positive (if they would not be of full rank entanglement would be destroyed) and without loss of generality (wlog) we normalize these operators such that $\tr (G_i)= \tr (H_i) =1$. Hence, three parameters are needed in order to specify these operators. We will use two different parametrizations throughout the text as, depending on the SLOCC class, it is more convenient to work with one or the other. In the first one we will use the notation $G_i= \left(
    \begin{array}{cccc}
       & g_i^1&g_i^2 \\
     & g_i^{2 *} & g_i^3 \\
     \end{array}
  \right)$  where $g_i^3=1-g_i^1$ and $0\leq \sqrt{(g_i^1-1/2)^2+|g_i^2|^2}<1/2$ (and similarly for $H_i$). In case $g_i^1=g_i^3$ and $g_i^2$ is purely imaginary (or real) we write  $G_i^w=(g_i^w)^\dagger g_i^w=\one/2+g_i^2 \sigma_w$ for $w=y$ (or $w=x$) respectively.  For the second notation we will expand the operators in the basis $\{\one, \sigma_x, \sigma_y, \sigma_z\}$ leading to $G_i=\one/2+\sum_{j=1}^3 \bar{g}_i^j\sigma_j$ and similarly for $H_i$. We will sometimes arrange these coefficients to form vectors that we denote by $\textbf{g}_i=(\bar{g}_i^1,\bar{g}_i^2,\bar{g}_i^3)$ and that must fulfill $0\leq|\textbf{g}_i|<1/2$.
It will often happen that only one Pauli operator has to be considered. In this case we use the notation $g_i^w \in span\{\one, \sigma_w\}$ for $w\in \{x,y,z\}$.
The group generated by some operators $S_i$ will be denoted by $\langle S_i\rangle$.

\subsection{Separable transformations and $MES_3$ and $MES_4$}
\label{SubSecSepMES}
We review here the results on state transformation using separable transformation presented in
\cite{Gour} and the maximally entangled sets of three and generic four qubit states. Let us denote by $S(\ket{\Psi})$ the set of local symmetries of $\ket{\Psi}$, i.e. $S(\Psi)=\{S\in \mathcal{G}:
S\ket{\Psi}=\ket{\Psi}\}$ \footnote{Recall that the elements of $\mathcal{G}$ do not necessary have determinant one. If we would restrict ourselves here to determinate one operators, we would define $S(\Psi)=\{S\in \mathcal{G}, |S|=1 \mbox{ and } \exists q\in \C:
S\ket{\Psi}=\sqrt{q}\ket{\Psi}\}$. However, in Eq. (\ref{EqSep}) this proportionality factor has to be taken into account as long as it is not a phase.}. It has been shown in \cite{Gour} that a state $\ket{\Psi_1} = g
\ket{\Psi}$ can be transformed via SEP to $\ket{\Psi_2}= h
\ket{\Psi}$ iff there exists an integer $m$ and a set of probabilities,
$\{p_k\}_{1}^m$ ($p_k\geq 0, \sum_{k=1}^m p_k=1$) and $S_k \in
S(\Psi)$ such that \bea \label{EqSep} \sum_k p_k S_k^\dagger H S_k=
r G.\eea Here, $H=h^\dagger h \equiv \bigotimes H_i$, and
$G=g^\dagger g \equiv \bigotimes G_i$ are local operators and
$r=\frac{n_{\Psi_2}}{n_{\Psi_1}}$ with
$n_{\Psi_i}=||\ket{\Psi_i}||^2$. The local Positive Operator-Valued Measure (POVM) elements \footnote{Note that we call here the measurement operators POVM elements.}
which transform $\ket{\Psi_1}$ into
$\ket{\Psi_2}$ are given by $M_k=\frac{ \sqrt{p_k}}{\sqrt{r}} h S_k
g^{-1}$, which can be easily understood as follows. Suppose that there exists a non--trivial SEP protocol, i.e. not all POVM elements are local unitary operations, which transforms the state $\ket{\Psi_1}$ into the state $\ket{\Psi_2}$. Let us denote by $M_k$ the local POVM elements. Then we have
\bea M_k \ket{\Psi_1}=\sqrt{p_k/r} \ket{\Psi_2},\eea
where $r=\frac{n_{\Psi_2}}{n_{\Psi_1}}$. Using the fact that $\ket{\Psi_1}$ and $\ket{\Psi_2}$ must be in the same SLOCC class and hence must be of the form $g\ket{\Psi}, h\ket{\Psi}$ for some $g,h \in {\cal G}$ and the seed state, $\ket{\Psi}$, we have that
\bea M_k=\frac{ \sqrt{p_k}}{\sqrt{r}} h S_k
g^{-1}.\eea

As $M_k$ has to be local, $S_k$ has to be a local symmetry of the state $\ket{\Psi}$. Eq. (\ref{EqSep}) expresses then nothing else but the fact that the map is trace-preserving. Obviously the same holds true for LOCC transformations. Thus, in order to determine the possible transformations, one first needs to determine the local symmetries of the state. In the next section we present a systematic method of doing so.

In \cite{MESus} we utilized this result to determine the $MES_3$ and the generic set of $MES_4$. For the sake of completeness we recall here these results, which were derived as follows. First off all, it is clear that a state that cannot be reached from any other state via SEP can also not be reached via LOCC. Thus, such a state must be in the MES. For all the other states we derived an initial state and constructed a LOCC protocol, which transforms the latter to the desired state. Note that a initial state can always be chosen to be in the MES. Using the same standard form as in \cite{MESus} these investigations led to the following results \cite{MESus}:

\begin{theorem} The MES of three qubits, $MES_3$, is given by \bea \label{mes3} MES_3=\{g^x_1\otimes g^x_2\otimes
g^x_3 \ket{GHZ}, g_1 \otimes g_2 \otimes \one \ket{W}\},\eea where no $g^x_i\propto \one$ (except for the GHZ state) and $g_1$ and $g_2$ are diagonal.
\end{theorem}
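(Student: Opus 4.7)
The plan is to treat the two genuinely three-partite entangled SLOCC classes---the GHZ and W classes---separately, and in each case to apply the SEP criterion Eq.~(\ref{EqSep}) from Subsec.~\ref{SubSecSepMES}. For each class I would (i) compute the local symmetry group $S(\ket{\Psi})$ of the seed state by the method described in Subsec.~\ref{SubsecSym}, (ii) derive a standard form for an arbitrary SLOCC representative $g_1\otimes g_2\otimes g_3\ket{\Psi}$ by quotienting the $g_i$ on the left by LUs and on the right by $S(\ket{\Psi})$, and (iii) use the standard form to decide for which representatives Eq.~(\ref{EqSep}) admits a non-LU solution, which is exactly the condition for failing to belong to the MES.

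For the GHZ seed a direct calculation yields that $S(\ket{GHZ})$ is generated by $\sigma_x^{\otimes 3}$ together with the four-parameter family of invertible diagonal operators $d_1\otimes d_2\otimes d_3$, $d_i=\text{diag}(a_i,b_i)$, subject to $a_1 a_2 a_3 = b_1 b_2 b_3 = 1$. Since LUs on the left leave $G_i=g_i^\dagger g_i$ invariant while the right symmetries act as $G_i\mapsto d_i^\dagger G_i d_i$ (resp.\ $\sigma_x G_i \sigma_x$), one can bring each $G_i$ into $\textrm{span}\{\one,\sigma_x\}$, i.e.\ into the form $g_i^x$ of the theorem, with $\sigma_x^{\otimes 3}$ used only to fix a sign. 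For the W seed $\sigma_x^{\otimes 3}$ is \emph{not} a symmetry; one finds that $S(\ket{W})$ consists essentially of a three-parameter family of diagonal operators satisfying three multiplicative constraints, and the richer diagonal freedom allows one to trivialize one of the three local factors, giving the standard form $g_1\otimes g_2\otimes\one\ket{W}$ with $g_1,g_2$ diagonal.

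The core technical step is to show that any standard-form state satisfying the non-triviality hypotheses of the theorem is in the MES, i.e.\ that no state not LU-equivalent to it can reach it by SEP. Substituting the standard form of the target into $\sum_k p_k S_k^\dagger H S_k = rG$, expanding each $S_k$ in the generators of $S(\ket{\Psi})$, and matching Pauli components on each qubit, I would argue that under the hypotheses ``no $g_i^x\propto\one$'' (for GHZ) and ``both $g_1,g_2$ non-trivial'' (for W) the equation forces $H=rG$, so that the initial and target states are LU-equivalent. The main obstacle I anticipate is the GHZ case: because $\sigma_x^{\otimes 3}$ interacts non-trivially with the diagonal part of $S(\ket{GHZ})$, the sum generates cross terms in the Pauli expansion that must be shown to vanish only in the trivial solution, and this is precisely where the hypothesis of a non-zero $\sigma_x$ component in every $G_i$ is used.

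For the reachability direction---that every state outside the claimed set can be obtained from a state in the set via LOCC---I would construct an explicit protocol of the ``one party measures, the others apply a conditioned LU'' form mentioned below Eq.~(\ref{EqSep}). Given a target violating the MES conditions (say some $g_i^x\propto\one$ in the GHZ case, or a trivial diagonal factor in the W case), one picks a suitable pre-image in the MES and designs a two-outcome POVM on the appropriate party whose Kraus operators are built from pairs of symmetries in $S(\ket{\Psi})$; the remaining parties then apply a conditional LU to match the target. Verifying that this yields a valid LOCC protocol reduces to checking Eq.~(\ref{EqSep}) for the chosen symmetries, which is a direct computation.
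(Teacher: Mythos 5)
Your overall strategy---determine $S(\ket{\Psi})$ for the GHZ and W seeds, fix a standard form by quotienting by the symmetry, show via Eq.~(\ref{EqSep}) that states satisfying the stated hypotheses cannot be reached, and exhibit ``one party measures'' LOCC protocols for everything else---is exactly the method the paper uses (Theorem~1 itself is imported from \cite{MESus}, but the recipe is the one laid out in Sec.~II and instantiated for the 4-qubit GHZ and W classes in Secs.~\ref{secGHZ} and \ref{secW}). Your GHZ stabilizer (diagonal operators with $a_1a_2a_3=b_1b_2b_3=1$ together with $\sigma_x^{\otimes 3}$) is correct, and reducing each $G_i$ to $\mathrm{span}\{\one,\sigma_x\}$ is the right standard form for that class.

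There is, however, a concrete error in your description of $S(\ket{W})$ that would break the W half of the argument. The stabilizer of $\ket{W}$ is not a family of diagonal operators with multiplicative constraints: it is generated (up to an overall factor) by $P_z^{\otimes 3}$ together with the unipotent upper-triangular operators $\bigotimes_i\left(\begin{smallmatrix}1 & w_i\\ 0 & 1\end{smallmatrix}\right)$ subject to the single \emph{additive} constraint $w_1+w_2+w_3=0$; this is the $n=3$ version of the symmetry $S_{w,x,y,z}$ written explicitly in Sec.~\ref{secW}. This matters in two places. First, a purely diagonal symmetry acts as $G_i\mapsto d_i^\dagger G_i d_i$ and can never remove off-diagonal components of the $G_i$, so it cannot produce the standard form $d_1\otimes d_2\otimes g_3\ket{W}$ (equivalently $x_0\ket{000}+x_1\ket{100}+x_2\ket{010}+x_3\ket{001}$) from which the MES condition ``$x_0=0$'' is read off; you need the $w_i$ freedom for that, and with it the claimed form $g_1\otimes g_2\otimes\one\ket{W}$ with $g_1,g_2$ diagonal follows. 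Second, in the isolation step you must show that when the target has all factors diagonal, the non-unitary and non-diagonal symmetries cannot contribute to Eq.~(\ref{EqSep}); the paper does this in the 4-qubit W class by matrix-element arguments forcing $p_{w,x,y,z}=0$ unless $w=x=y=0$ and $|z|=1$, and this step has no analogue if one assumes the symmetry is already diagonal. Once the symmetry group is corrected, the remainder of your plan (Pauli-component matching for the GHZ class using the hypothesis that every $G_i$ has a nonzero $\sigma_x$ component, and two-outcome POVMs built from pairs of symmetries for the reachability direction) goes through as in the paper.
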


The set of generic four-qubit states considered in \cite{MESus} is given by the SLOCC classes, $G_{abcd}$, whose representatives are of the form
\begin{align}\label{seed0}
|\Psi\rangle&=\frac{a+d}{2}(|0000\rangle+|1111\rangle)+\frac{a-d}{2}(|0011\rangle+|1100\rangle)+\frac{b+c}{2}(|0101\rangle+|1010\rangle)+\frac{b-c}{2}(|0110\rangle+|1001\rangle),
\end{align}
where $a,b,c,d \in \C$ with  $b^2\neq c^2\neq d^2\neq b^2$, $a^2\neq b^2, c^2, d^2$ and the parameters fulfill the condition that there exists no $q\in \C\backslash 1$ such that $\{a^2,b^2,c^2,d^2\}=\{q a^2,q b^2,q c^2,q d^2\}$. Excluding normalization and the overall phase this gives in total six parameters to describe the seed states, $\ket{\Psi}$, for the generic SLOCC classes. The values of $\{\bar{g}_i^j\}$ identify LU-inequivalent states (up to certain sign-flips) yielding 12 additional parameters to describe the states in these SLOCC classes. Hence, in total states in these classes are described by 18 parameters. We showed in \cite{MESus} that the MES of generic four qubit states is of full--measure. In particular, it is shown that

\begin{theorem} \label{Thgen4} A generic state, $h \ket{\Psi}$,
is reachable via LOCC from some other state iff (up to permutations)
 either $h= h_1\otimes h^w_2 \otimes h^w_3 \otimes h^w_4$, for
$w\in\{x,y,z\}$ where $h_1\neq h^w_1$ or $h= h_1\otimes\one^{\otimes 3}$ with $h_1\not \propto \one$ arbitrary.
\end{theorem}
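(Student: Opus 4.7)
The plan is to invoke the SEP reachability criterion of Eq.~(\ref{EqSep}) combined with explicit knowledge of the local symmetry group of the seed state $|\Psi\rangle$. First I would determine $S(|\Psi\rangle)$ for a generic $G_{abcd}$ seed: rewriting $|\Psi\rangle$ in the magic basis and using the genericity assumptions (no equalities among $a^2,b^2,c^2,d^2$ and no common rescaling $q$), one shows that the only local symmetries are, up to signs, the Klein four-group $\{\one^{\otimes 4},\sigma_x^{\otimes 4},\sigma_y^{\otimes 4},\sigma_z^{\otimes 4}\}$. Reachability of $h|\Psi\rangle$ via SEP then reduces to the existence of some initial $g$ and a probability vector $(p_0,p_x,p_y,p_z)$ satisfying
\begin{equation}\label{plan:sep}
\sum_{w\in\{0,x,y,z\}}p_w\bigotimes_{i=1}^{4}\sigma_w H_i\sigma_w \;=\; r\bigotimes_{i=1}^{4}G_i,
\end{equation}
with the transformation being nontrivial iff some $p_w$ with $w\neq 0$ is positive.

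The next step is to analyse Eq.~(\ref{plan:sep}) by expanding both sides in the tensor Pauli basis and equating coefficients. The right-hand side is a single pure tensor, whereas the left is a convex combination of at most four distinct pure tensors, each obtained from the first by sign flips of the Pauli components perpendicular to the corresponding $w$-axis. I would split according to the support of $\{p_w\}_{w\neq 0}$. If this support consists of a single index $w$, matching the single-qubit and higher-order Pauli coefficients forces $\sigma_w H_i\sigma_w=H_i$ (equivalently $h_i=h_i^w$) for three of the four indices, whereas the remaining factor must have a Bloch-vector component transverse to the $w$-axis for the transformation to be nontrivial; this is case (a) up to a permutation. If the support has size at least two, matching the cross coefficients, for instance those of $\sigma_{j_1}\otimes\sigma_{j_2}\otimes\one\otimes\one$, yields equations of the form $\sum_w p_w \epsilon_{w,j_1}\epsilon_{w,j_2}=\alpha_{j_1}\alpha_{j_2}$ (with $\alpha_j=\sum_w p_w \epsilon_{w,j}$ and $\epsilon_{w,j}=\pm 1$ the sign picked up by conjugation), which cannot be satisfied for nontrivial probabilities unless three of the $H_i$ collapse to $\one/2$; this yields case (b).

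The main obstacle is this last rigidity step, where one must rule out intermediate configurations (e.g.\ two qubits aligned and two qubits with mixed Bloch components) by showing that a pure tensor cannot be written as a nontrivial convex combination of several distinct pure tensors except in the two configurations above. Once necessity is in place, sufficiency is constructive: for case (a) one takes $g=g_1\otimes h_2^w\otimes h_3^w\otimes h_4^w$ with $G_1=p_0 H_1+p_w\sigma_w H_1 \sigma_w$ and implements the two-outcome POVM $M_k\propto hS_k g^{-1}$ (with $S_0=\one^{\otimes 4}$, $S_1=\sigma_w^{\otimes 4}$) as a measurement on qubit $1$ followed by $\sigma_w$ corrections on qubits $2,3,4$ (these are LU since $h_i^w$ commutes with $\sigma_w$); case (b) is analogous with a four-outcome POVM on qubit $1$ and Pauli corrections on the three fully-symmetric qubits. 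Since in the generic class SEP reachability coincides with LOCC reachability via such a one-party-measures protocol (property (ii) recalled in the introduction), the characterisation transfers directly from SEP to LOCC, completing the proof.
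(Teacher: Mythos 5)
Your overall strategy---computing $S(\ket{\Psi})=\{\one^{\otimes4},\sigma_x^{\otimes4},\sigma_y^{\otimes4},\sigma_z^{\otimes4}\}$ for the generic seed, reducing reachability to Eq.~(\ref{EqSep}), analysing it by matching tensor-Pauli coefficients, and realising every reachable state by a POVM in which only one party measures---is exactly the route the paper takes: it is the $S(\Psi)=\{\sigma_i^{\otimes 4}\}$ instance of the Eq.~(\ref{eq_Pauli}) machinery sketched in Sec.~II.D, with the full details deferred to Ref.~\cite{MESus}. The sufficiency constructions and the analysis of the case where only one nontrivial symmetry carries probability are correct.

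The gap is in your dichotomy for the necessity direction. You claim that if at least two of the $p_w$ with $w\neq 0$ are positive, the cross-coefficient equations force three of the $H_i$ to equal $\one/2$ (case (b)). This is false. Take $H_1=\one/2+a\sigma_x+b\sigma_y$, $H_2=\one/2+c\,\sigma_z$, $H_3=H_4=\one/2$ and the four-outcome distribution $p_0=p_z=p$, $p_x=p_y=q$ with $p,q>0$, $2p+2q=1$. A direct computation gives
\[
\sum_{w} p_w\,\sigma_w^{\otimes4}\bigl(H_1\otimes H_2\otimes \one/2\otimes \one/2\bigr)\sigma_w^{\otimes4}=\one/2\otimes\bigl(\one/2+2(p-q)c\,\sigma_z\bigr)\otimes \one/2\otimes \one/2,
\]
so Eq.~(\ref{EqSep}) holds with $r=1$ and a genuinely four-outcome POVM, yet only \emph{two} of the four $H_i$ are proportional to the identity. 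Your cross equations are vacuous here because $H_1$ and $H_2$ have disjoint Pauli supports: every nontrivial Pauli string commuting with all symmetries has vanishing expectation in $H$, so one sits in case (i) of Eq.~(\ref{eq_Pauli}) rather than case (ii), and no constraint of the form $\sum_w p_w\epsilon_{w,j_1}\epsilon_{w,j_2}=\alpha_{j_1}\alpha_{j_2}$ is ever triggered. The theorem survives only because this particular $H$ happens to also be of form (a) with $w=z$; your argument does not show that every solution with larger support lands in (a) or (b). Closing this requires precisely the ``rigidity step'' you flag as the main obstacle: one must classify all Bloch-support patterns compatible with $\tr(HP)=0$ for every nontrivial commuting Pauli string $P$ (conditions of the type $\bar h_{i_1}^{j_1}\bar h_{i_2}^{j_2}=0$, equivalently the Hadamard-product equations $\h_{(i)}\h_{(j)}^T\odot(N_1-N_2)=0$ used in Appendix E and in Ref.~\cite{MESus}) and verify that each surviving pattern is of form (a) or (b). As written, that step is asserted rather than proved, and the asserted version is incorrect.
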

All the remaining generic states are necessarily in $MES_4$, which is hence of full measure. The reason for almost all
states being in $MES_4$ is that deterministic LOCC manipulations among fully entangled
$4$-qubit states are almost never possible, which is stated in the following theorem.

\begin{theorem} \label{theoremISO}
A generic state $g \ket{\Psi}$ is convertible via LOCC to some other state iff (up to permutations) $g=g_1\otimes g^w_2
\otimes g^w_3 \otimes g^w_4$ with $w\in \{x,y,z\}$ and $g_1$ arbitrary.
\end{theorem}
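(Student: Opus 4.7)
The plan is to apply the SEP characterisation of Eq.~(\ref{EqSep}) together with the fact, established in \cite{MESus}, that for the generic SLOCC class $G_{abcd}$ the seed state $|\Psi\rangle$ has local symmetry group $S(\Psi)=\{\one^{\otimes 4},\sigma_x^{\otimes 4},\sigma_y^{\otimes 4},\sigma_z^{\otimes 4}\}$. Any SEP transformation from $g|\Psi\rangle$ to $h|\Psi\rangle$ is therefore governed by
\begin{equation}
\sum_{k\in\{0,x,y,z\}} p_k\,\sigma_k^{\otimes 4}\,H\,\sigma_k^{\otimes 4}=r\,G,
\end{equation}
and since LOCC $\subseteq$ SEP, it suffices to decide when such a non-trivial solution exists. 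The two implications are handled separately.

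For sufficiency, assume $g=g_1\otimes g_2^w\otimes g_3^w\otimes g_4^w$. I would pick $p_0,p_1>0$ with $p_0+p_1=1$ and $\lambda:=p_0-p_1$ slightly less than $1$, and define $H_1$ by $\bar h_1^w=\bar g_1^w$ and $\bar h_1^{w'}=\bar g_1^{w'}/\lambda$ for each direction $w'$ perpendicular to $\hat w$; this enforces $p_0H_1+p_1\sigma_wH_1\sigma_w=G_1$ and, for $\lambda$ close enough to $1$, keeps $|\vec h_1|<1/2$, so that $H_1$ is a valid, LU-inequivalent single-qubit operator. Setting $H_i=G_i$ for $i=2,3,4$, the POVM elements $M_k\propto hS_kg^{-1}$ with $S_0=\one^{\otimes 4}$, $S_1=\sigma_w^{\otimes 4}$, factorise as a measurement on party 1 tensored with Paulis on the others, since $g_i^w\in\mathrm{span}\{\one,\sigma_w\}$ commutes with $\sigma_w$ and $h_i=g_i$ for $i\geq 2$; this is a valid LOCC protocol.

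For necessity, suppose $g|\Psi\rangle$ is LOCC-convertible to some LU-inequivalent $h|\Psi\rangle$. Since $h|\Psi\rangle$ must be reachable, Theorem~\ref{Thgen4} gives (up to permutations) either (a) $h=h_1\otimes h_2^w\otimes h_3^w\otimes h_4^w$ with $h_i^w\in\mathrm{span}\{\one,\sigma_w\}$, or (b) $h=h_1\otimes\one^{\otimes 3}$. In case (a), writing $H_i^w=\alpha_i\one+\beta_i\sigma_w$ and noting that $\sigma_kH_i^w\sigma_k=H_i^w$ for $k\in\{0,w\}$ while $\sigma_kH_i^w\sigma_k=\alpha_i\one-\beta_i\sigma_w=:\tilde H_i^w$ for $k=w',w''$, the SEP equation regroups as
\begin{equation}
A\otimes\bigotimes_{i\geq 2}H_i^w+B\otimes\bigotimes_{i\geq 2}\tilde H_i^w=r\,G,
\end{equation}
with $A=p_0H_1+p_w\sigma_wH_1\sigma_w$ and $B=p_{w'}\sigma_{w'}H_1\sigma_{w'}+p_{w''}\sigma_{w''}H_1\sigma_{w''}$. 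The elementary lemma that $A\otimes X+B\otimes Y$ is a pure tensor only if $A\propto B$ or $X\propto Y$ (provable by partial-tracing against a basis of the second factor) forces either $B=0$, i.e.\ $p_{w'}=p_{w''}=0$, in which case the equation yields $G_i\propto H_i^w\in\mathrm{span}\{\one,\sigma_w\}$ for $i=2,3,4$, or $H_i^w\propto\tilde H_i^w$ for at least two $i$; positivity of $H_i$ rules out $\alpha_i=0$, leaving $\beta_i=0$ and thus $H_i^w\propto\one$, which reduces to case (b). In case (b) the equation collapses to $[\sum_k p_k\sigma_kH_1\sigma_k]\otimes\one^{\otimes 3}=rG$, forcing $G_i\propto\one$ for $i\geq 2$, still a special instance of the claimed form.

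The principal obstacle is to rule out exotic factorisations of the two-term tensor expression in case (a); here the combination of the pure-tensor lemma and positivity of the $H_i$ is decisive. Once the structural conclusion on $H_i^w$ has been reached, the remaining bookkeeping—handling the auxiliary $A\propto B$ branch, which by comparing the $\one$ and $\sigma_w$ coefficients of $A$ and $B$ forces $\bar h_1^w=0$ together with a restricted form of $H_1$, and tracking permutations and normalisations—is routine.
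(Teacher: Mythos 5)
Your overall architecture --- necessity via the reachability characterization of Theorem \ref{Thgen4} combined with the SEP condition of Eq.~(\ref{EqSep}), sufficiency via an explicit two-outcome POVM built from $\one^{\otimes 4}$ and $\sigma_w^{\otimes 4}$ --- is exactly the pattern this paper uses for its convertibility lemmas (the theorem itself is only recalled here from \cite{MESus}, so there is no proof in the text to match line by line; compare, e.g., Lemma \ref{lemc1} and the convertibility lemma following Lemma \ref{lemcycle2}). Where you diverge is the middle step of the necessity direction. The paper's tool is the single-party partial trace of Eq.~(\ref{EqSep}): since all symmetries of the generic seed state are unitary and the operators are normalized, one has $r=1$ and, tracing out all parties but party $i$, $G_i=\sum_k p_k\sigma_k H_i\sigma_k$ (Eqs.~(\ref{eq_symLU})--(\ref{eq_symLUsingle})); for $H_i=H_i^w$ this immediately yields $G_i\in\mathrm{span}\{\one,\sigma_w\}$ for $i=2,3,4$ independently of the probabilities --- two lines, no case distinctions. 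Your pure-tensor decomposition of the full four-party equation reaches the same conclusion, but at the cost of the $A\propto B$ branch, which you defer as routine. It is not quite routine: after $A\propto B$ one must still argue that $\bigotimes_{i\geq2}H_i^w+\lambda\bigotimes_{i\geq2}\tilde H_i^w$ is itself a pure tensor, which re-triggers your lemma recursively and forces all but at most one of the $H_i^w$ ($i\geq 2$) to be proportional to $\one$. Everything does close up and land on the claimed form for $G$, but the partial-trace route buys all of this for free.

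There is one concrete lacuna, in sufficiency. Your prescription $\bar h_1^{w'}=\bar g_1^{w'}/\lambda$ with $\lambda$ close to $1$ produces $H_1=G_1$, hence a trivial (LU) transformation, precisely when $g_1\in\mathrm{span}\{\one,\sigma_w\}$ --- i.e., for the states $g_1^w\otimes g_2^w\otimes g_3^w\otimes g_4^w\ket{\Psi}$, which are exactly the convertible states inside $MES_4$ singled out in the Corollary and therefore cannot be omitted. The fix is immediate: take $p_0=p_1=1/2$, so that the constraint $(p_0-p_1)\bar h_1^{w'}=\bar g_1^{w'}=0$ is satisfied by an arbitrary nonzero $\bar h_1^{w'}$, yielding an LU-inequivalent target. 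As written, however, your construction does not establish convertibility for this subfamily.
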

Combining Theorem \ref{Thgen4} and Theorem \ref{theoremISO} we obtained the following corollary.
\begin{corollary}
The only LOCC convertible generic states in $MES_4$ are of the form $g^w_1\otimes g^w_2
\otimes g^w_3 \otimes g^w_4|\Psi\rangle$ with $w\in \{x,y,z\}$ (excluding the case where $g^w_i\not\propto\one$ for exactly one $i$).
\end{corollary}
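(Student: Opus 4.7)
The plan is to obtain the corollary by a direct intersection of Theorem~\ref{Thgen4} and Theorem~\ref{theoremISO}: a LOCC convertible state lies in $MES_4$ iff it is convertible (Theorem~\ref{theoremISO}) and simultaneously not reachable from any other state (the complement of Theorem~\ref{Thgen4}). The proof reduces to a combinatorial case analysis; no new calculation is required.

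First I would fix a generic convertible state. By Theorem~\ref{theoremISO} it has, up to a permutation of the parties, the form $g_1\otimes g^w_2\otimes g^w_3\otimes g^w_4|\Psi\rangle$ with $w\in\{x,y,z\}$ and $g_1$ arbitrary. I then confront this with the first reachability form in Theorem~\ref{Thgen4}, applied with the same axis $w'=w$ and the identity permutation: it requires the operator of the distinguished party (here $g_1$) \emph{not} to commute with $\sigma_w$. Consequently, whenever $g_1$ fails to be of the form $g^w_1$, the state is reachable and hence not in $MES_4$. To lie in $MES_4$ we are therefore forced to have $g_1=g^w_1$ as well, giving the symmetric form $g^w_1\otimes g^w_2\otimes g^w_3\otimes g^w_4|\Psi\rangle$ claimed by the corollary.

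It remains to rule out the other routes by which this symmetric state could be reachable. For the first reachability form with $w'=w$, all four tensor factors commute with $\sigma_w$, so no factor can play the role of the required odd-one-out. For $w'\neq w$, an operator of the form $g^w$ commutes with $\sigma_{w'}$ iff it is proportional to $\one$ (since $\sigma_w$ and $\sigma_{w'}$ anticommute); hence the first reachability form can be met only when three of the $g^w_i$ are scalar and the fourth is not. The second reachability form in Theorem~\ref{Thgen4} (three factors $\propto\one$ and one non-scalar) yields exactly the same condition. Excluding it reproduces precisely the exclusion clause in the corollary. The only real obstacle I anticipate is to handle consistently the ``up to permutations'' quantifier and the separation of the two axes $w$ and $w'$; once that bookkeeping is done, the statement follows mechanically from the two theorems.
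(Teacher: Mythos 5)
Your proposal is correct and is precisely the argument the paper intends: the corollary is obtained by intersecting the convertibility characterization of Theorem~\ref{theoremISO} with the complement of the reachability characterization of Theorem~\ref{Thgen4}, and your case analysis (forcing $g_1=g_1^w$, then checking that the symmetric form can only match a reachability pattern when exactly three factors are proportional to $\one$) matches the paper's one-line "combining the two theorems" justification. No gaps.
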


Thus, whereas almost all four qubit states are in $MES_4$, almost none can be transformed into any other (LU--inequivalent) state via LOCC.

\subsection{Symmetries and seed states}
\label{SubsecSym}

In order to determine the local symmetries of a state, we use the isomorphism between $SL(2)\otimes SL(2)$ and $SO(4)$. It is a well--known fact that any element of $SL(2)\otimes SL(2)$ is transformed via $U^\dagger$ (see Eq. (\ref{eq:Umb})) into a special complex orthogonal matrix, i.e. for any $g_1,g_2\in SL(2)$ we have $\Umb^\dagger g_1\otimes g_2 \Umb \in SO(4)$ \footnote{ Note that this can be generalized to more systems. To do so,  define (for $n$ even) the unitary operator $U=\sqrt{-i}e^{i\pi/4\sigma_y^{\otimes n}}$. Then, $UU^T=\sigma_y^n$, $U U^\dagger=\one$ and $\forall A_i\in SL(2,\C)$, we have $ O\equiv U^\dagger \bigotimes A_i U$,
is a complex orthogonal matrix.}. In case $g_1, g_2$ are both unitary, the corresponding orthogonal matrix can be easily shown to be real (see e.g. \cite{KrCi00}).

We consider now an arbitrary state of $4$ qubits, $\ket{\Psi}$, and consider a bipartite splitting $A|B$, where both, $A$ and $B$ contain two qubits. Then, $\ket{\Psi}$, can be written as $\ket{\Psi}=\sum_\i \ket{\Psi_\i}_A\ket{\i}_B\equiv A_\Psi\otimes \one \ket{\Phi^+}_{AB}$, where $\ket{\Phi^+}_{AB}=\sum_\i \ket{\i}_A\ket{\i}_B$. Here, the states $\ket{\Psi_\i}$ are unnormalized and $A_\Psi=\sum_\i \ket{\Psi_\i} \bra{\i}.$ In order to determine the symmetries of $\ket{\Psi}$, we introduce the symmetric matrix \bea \label{eq:Z} Z_\Psi= B_\Psi B_\Psi^T, \eea
where $B_\Psi=U^\dagger  A_\Psi U$. The reason for considering this operator is the following. Let $g_A,g_B\in SL(2)\otimes SL(2)$ and consider the local transformation from $\ket{\Psi}$ to $g_A\otimes g_B \ket{\Psi}$. The corresponding $Z$--operator transforms from $Z_\Psi$ to $Z_{g_A\otimes g_B \ket{\Psi}}=O Z_\Psi O^T$, where $O \in SO(4)$ is given by $O=U^\dagger g_A U$. Similarly, the operator \bea \label{eq:tilZ} \tilde{Z}_\Psi=U^\dagger A_\Psi^T U U ^T A_\Psi U^\ast= U^\dagger\sigma_y^{\otimes 2}U B_\Psi^T B_\Psi U^T\sigma_y^{\otimes 2}U^\ast\eea transforms to  $\tilde{O} \tilde{Z}_\Psi \tilde{O}^T$, with $\tilde{O}=U^\dagger g_B U$. Thus, in order to determine the local symmetries, $S$ of a state $\ket{\Psi}$, i.e. $S \in SL(2)^{\otimes 4}$ such that $S\ket{\Psi}=\sqrt{s}\ket{\Psi}$ for some $s \in \C\backslash 0$, one may first determine the complex orthogonal matrices, $O_i$ ($\tilde{O}_i$), which leave $Z_\Psi$ ($\tilde{Z}_\Psi$) (up to a factor) invariant respectively \footnote{ Equivalently, one could determine those orthogonal matrices, $O, \tilde{O}$, that leave $B_\Psi$ invariant, i.e. $O B_\Psi U^T \sigma_y^{\otimes 2} U^\ast\tilde{O}^T U^\dagger \sigma_y^{\otimes 2}U\propto B_\Psi$.}, i.e. \bea \label{Eq:Zs} O_iZ_\Psi O_i^T&=&q Z_\Psi \\\tilde{O}_i\tilde{Z_\Psi} \tilde{O}_i^T&=&\tilde{q} \tilde{Z_\Psi},\eea  with $q\tilde{q}=s$. Since the local symmetries of the state must be of the form $ U O_i U^\dagger \otimes U \tilde{O}_j U^\dagger$, for some $i,j$, it only remains to single out those operators, which leave the state itself invariant. Let us remark here that the symmetry group is not necessarily finite. For instance, there exist infinitely many local symmetries of the GHZ-state \cite{Gour,MESus}.

As a similarity transformation cannot change the eigenvalues and the block structure of the Jordan form of $Z_\Psi$ \cite{HornJohnsonMAnalysis}, Eq. (\ref{Eq:Zs}) can only be satisfied if $q\{\lambda_i\}=\{\lambda_i\}$, where $\{\lambda_i\}$ denotes the set of eigenvalues of $Z$ (taking multiplicities into account). Hence, in most cases, $q=1$ is the only solution. However, there are instances, where a more general choice of $q$ must be taken into account, e.g. for $\{\lambda_i\}=\{0,e^{i2\pi/3},e^{i4\pi/3},1\}$ $q=e^{i2\pi/3}$ would be a solution (see Sec. \ref{cycle3}). Moreover, for $\lambda_i=0$ $\forall i$ we have that any $q\in\C\backslash 0$ is possible.

In order to present now a systematic method to determine the symmetries and also to identify the representatives of each SLOCC class,
we use the following decomposition of symmetric matrices \cite{Craven,gantmacher}.

\begin{theorem} For any symmetric matrix, $A$, there exists a symmetric block diagonal matrix, $R=\oplus_i R^{(d_i)}(\lambda_i)$, where the $d_i \times d_i$--matrices, $R^{(d_i)}(\lambda_i)$ are defined as
 \bea R^{(d_i)}(\lambda_i)= \frac{1}{2}\left[\begin{pmatrix} 0&1&0&0&0&\ldots \\ 1& 0&1&0&0&\ldots\\\ldots &1& 0&1&0&\ldots\\\vdots &\vdots& \vdots&\vdots&\vdots&\vdots\\\ldots &0& 0&0&0&1\\\ldots &0& 0&0&1&0\end{pmatrix}+ i
 \begin{pmatrix} \ldots&0&0&0&1&0 \\ \ldots&0&0& 1&0&-1\\ \ldots&0&1& 0&-1&0\\\vdots &\vdots& \vdots&\vdots&\vdots&\vdots\\ 1&0 &-1&0& 0&\ldots \\ 0 &-1& 0&0&0&\ldots\end{pmatrix}\right]+\lambda_i \one_{d_i}\eea
and a complex orthogonal matrix $O$, ($O^T O=\one$) such that
\bea \label{eq: SymJordan} A=O R O^T.\eea
\end{theorem}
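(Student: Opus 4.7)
The plan is to establish this canonical form via reductions paralleling the classical Jordan form, exploiting throughout that complex orthogonal similarity $A\mapsto OAO^T$ coincides with ordinary similarity since $O^T=O^{-1}$. In particular, the $d_i$ and $\lambda_i$ appearing in the statement are forced to be the Jordan invariants of $A$, so the theorem amounts to showing that within each similarity class of complex symmetric matrices one can produce the specific symmetric representative $R=\oplus_i R^{(d_i)}(\lambda_i)$ via a complex orthogonal change of basis.

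First I would reduce to the case of a single eigenvalue by decomposing along generalized eigenspaces. For a complex symmetric $A$ and the bilinear form $B(x,y)=x^T y$, the generalized eigenspaces for distinct eigenvalues are mutually $B$-orthogonal: if $v\in\ker(A-\lambda I)^k$ and $w\in\ker(A-\mu I)^l$ with $\lambda\neq\mu$, then writing $1=p(x)(x-\lambda)^k+q(x)(x-\mu)^l$ by coprimality, applying $p(A)(A-\lambda I)^k+q(A)(A-\mu I)^l=I$ to $w$ and using $A=A^T$ together with the $A$-invariance of each generalized eigenspace yields $v^T w=0$. Since $B$ is globally non-degenerate and vanishes between distinct generalized eigenspaces, it must restrict non-degenerately to each of them. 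An adapted $B$-orthonormal basis then provides a complex orthogonal $O_0$ with $O_0^T A O_0$ block-diagonal, and the problem reduces to proving the canonical form for a single symmetric nilpotent $N=A-\lambda I$ on one generalized eigenspace.

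For the nilpotent case I would induct on the dimension, stripping off one canonical block $R^{(d)}(0)$ at a time. Let $d$ be the size of the largest Jordan block of $N$; the key construction is to choose a cyclic vector $v$ with $N^{d-1}v\neq 0$ and adjust it within its coset modulo lower-order terms so that the Gram matrix of the Jordan chain $v, Nv, \ldots, N^{d-1}v$ with respect to $B$ takes the specific shape that, after a symmetric Gram-Schmidt-type orthogonalization, delivers a $B$-orthonormal basis on which $N$ has exactly the matrix $R^{(d)}(0)$. Once such an $N$-cyclic subspace $V$ is extracted, its $B$-orthogonal complement $V^\perp$ is again $N$-invariant (since $N$ symmetric gives $(Nw)^T v=w^T N v=0$ for $v\in V$, $w\in V^\perp$) and $B$ remains non-degenerate on $V^\perp$, so the induction proceeds cleanly. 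The main obstacle is precisely this first extraction: because $B$ is a symmetric rather than Hermitian form, isotropic vectors can obstruct naive orthogonalization, and one must carefully exploit the nilpotent structure of $N$ to show that the cyclic vector can always be rechosen so that the restriction of $B$ to the cyclic subspace is non-degenerate and takes the prescribed tridiagonal-plus-antidiagonal shape encoded in $R^{(d)}(0)$. This controlled choice of cyclic vector is the only genuinely new ingredient beyond the standard Jordan-form argument; shifting by $\lambda I$ on each eigenspace and reassembling the blocks then completes the construction.
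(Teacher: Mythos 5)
Your plan is correct, but it takes a genuinely different route from the paper. The paper's argument is essentially a reduction to a cited classical result: it observes that $R=TJT^\dagger$, where $J$ is the Jordan form of $A$ and $T=\oplus_i T^{(d_i)}$ with $T^{(d_i)}=\tfrac{1}{\sqrt2}(\one-iV^{(d_i)})$ is a symmetric unitary, so that $R$ is manifestly symmetric and similar to $A$; it then invokes the theorem of Gantmacher and Horn--Johnson that any two similar complex symmetric matrices are related by a complex orthogonal similarity. You instead reprove that underlying theorem from first principles, via the bilinear form $B(x,y)=x^Ty$: the $B$-orthogonality of distinct generalized eigenspaces, reduction to a $B$-self-adjoint nilpotent $N$, and an induction that strips off one cyclic subspace at a time. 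Both are sound; the paper's version is shorter and leans on references, while yours is self-contained and actually explains where the specific shape of $R^{(d)}(\lambda)$ comes from (it is the matrix of a single Jordan block in a $B$-orthonormal basis of a cyclic subspace, i.e.\ exactly the conjugate $T^{(d)}J_d T^{(d)\dagger}$). The one step you flag but do not execute --- choosing the cyclic vector so that $B$ restricts non-degenerately to the chain --- closes in the standard way: the bilinear form $Q(x,y)=B(x,N^{d-1}y)$ is symmetric because $N$ is $B$-self-adjoint and is not identically zero because $N^{d-1}\neq0$ and $B$ is non-degenerate, so over $\C$ it admits a non-isotropic vector $v$; the Gram matrix of $v,Nv,\ldots,N^{d-1}v$ is then anti-triangular with constant nonzero anti-diagonal, hence invertible, and the lower-order Gram entries are killed recursively by replacing $v$ with $v+\sum_{k\geq1}c_kN^kv$. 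With that lemma supplied, your induction goes through and yields the stated decomposition.
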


Note that $R=T J T^\dagger$, where $J$ denotes the Jordan matrix of $A$ and $T$ the direct sum of $d_i \times d_i$ matrices of the form $T^{(d_i)}=\frac{1}{\sqrt{2}}(\one-i V^{(d_i)})$. Here, $V^{(d_i)}$ denotes the $d_i \times d_i$ unitary with entries given by $V_{j,k}^{(d_i)}=\delta_{j,d_i-k+1}$. Hence, the entries of the symmetric matrix $T^{(d_i)}$ are given by \bea  T_{j,k}^{(d_i)}=1/\sqrt{2}(\delta_{j,k}-i\delta_{j,d_i-k+1})\eea and for $d_i=1$ we have $T^{(1)}=1$. For instance, for a Jordan matrix with two one-dimensional blocks and one two dimensional block we have

\begin{eqnarray}
&T= \left(
    \begin{array}{cccc}
      1 & 0 & 0&0 \\
      0 & 1  & 0&0\\
 0 & 0  & \frac{1}{\sqrt{2}}&\frac{-i}{\sqrt{2}}\\
0 & 0  & \frac{-i}{\sqrt{2}}&\frac{1}{\sqrt{2}}.\\
    \end{array}
  \right).
\end{eqnarray}

The existence of such a decomposition for symmetric matrices stems from the Jordan decomposition and the fact that any two symmetric matrices, which are similar, such as $A$ and $R$ are related to each other via a complex orthogonal matrix \cite{gantmacher,HornJohnsonTopics}. In the following we will call the matrix $R$ in Eq. (\ref{eq: SymJordan}) the symmetric Jordan form of $A$ \footnote{ Note that using Takagi's factorization for symmetric matrices \cite{HornJohnsonMAnalysis}, it can be easily shown that for any symmetric $Z$ there exists a corresponding state  as $B_\Psi=W \sqrt{D}$ leads to $Z=W D W^T$.}.

Note that $O$ in Eq. (\ref{eq: SymJordan}) is complex orthogonal, but not necessarily of determinant one. However, the case, where $|O|=-1$ for some state $\Psi$ can be easily circumvented by considering
the $Z$ matrix corresponding to a state where particle $1$ and $2$ are exchanged, i.e. $P_{12} \ket{\Psi}$. That is, if $Z_{A_{\Psi}}=O R O^T$, where $|O|=-1$, then $Z_{P_{12}A_{\Psi}}=O^\prime R (O^\prime)^T$, where $|O^\prime|=1$. In the following we will consider wlog for each SLOCC class a representing state whose corresponding $Z$ and $\tilde{Z}$ matrices are in symmetric Jordan form. These representatives will be called seed states.\\

According to the results summarized in Sec. \ref{SubSecSepMES}, it remains to determine the symmetries of the seed states. To this end, we compute the special orthogonal matrices that leave $Z$ ($\tilde{Z}$) (which have symmetric Jordan form) invariant up to a factor. To do so, we first transform $Z$ ($\tilde{Z}$) into Jordan form $J$ ($\tilde{J}$) respectively.
Then we determine all operators $X$ ($\tilde{X}$) and $q \in \C$  ($\tilde{q}\in \C$) such that $X J= q J X$ ($\tilde{X}\tilde{ J}= \tilde{ q}\tilde{ J}\tilde{ X}$) and impose on $X$ ($\tilde{X}$) the condition that $O=T X T^\dagger$ ($\tilde{O}=T \tilde{X} T^\dagger$) is a  special orthogonal matrix. As mentioned before, since neither the dimension of the Jordan blocks nor the eigenvalues can be changed via a similarity transformation this restricts the possible values for q to $q=1$ in most cases.
Let us now illustrate the method presented above with the help of a simple example. We consider the SLOCC classes $G_{abcd}$ for $b^2\neq c^2\neq d^2\neq b^2$, $a^2\neq b^2, c^2, d^2$ and under the constraint that the parameters fulfill the condition that there exists no $q\in \C\backslash 1$ such that $\{a^2,b^2,c^2,d^2\}=\{q a^2,q b^2,q c^2,q d^2\}$. The seed states are given in Eq. (\ref{seed0}). The corresponding $Z$ and $\tilde{Z}$ matrices are given by $Z=\tilde{Z}=diag(a^2,d^2,c^2,b^2)$. Note here that the Jordan form and the symmetric Jordan form coincide. It is easy to verify that $q$ must be equal to one and that the only special orthogonal matrices that leave $Z$ (and $\tilde{Z}$) invariant have to be diagonal with entries $\pm 1$ and $-1$ has to occur an even number of times. Any special orthogonal matrix with diagonal entries $\pm 1$ corresponds in the computational basis to a tensor product of Pauli operators \footnote{In fact, one can show that if $P$ is a non-trivial diagonal matrix with entries $\pm 1$ then $U P U^\dagger$ is of the form $\otimes_i A_i$ only if the number of $+1$ and the number of $-1$ in the diagonal is both $2^{n-1}$. In this case $\otimes_i A_i=\pm (i) \otimes \sigma_\k$, with $\sigma_{k_i}\in \{\sigma_x,\sigma_y\}$, or $\otimes_i A_i=\pm (i) \otimes \sigma_\k$, with $\sigma_{k_i}\in \{\one, \sigma_z\}$.}. Since $-1$ occurs here an even number of times, the corresponding possible local symmetries are $\one^{\otimes 2}$, $\sigma_x^{\otimes 2}$, $\sigma_y^{\otimes 2}$ and $\sigma_z^{\otimes 2}$. Applying all possible combinations of the symmetries to the seed state one observes that the non-trivial symmetries of the states are given by $\sigma_x^{\otimes 4}$, $\sigma_y^{\otimes 4}$ and $\sigma_z^{\otimes 4}$.

Note that if for example $a^2=-d^2$ and $b^2=-c^2$ additional symmetries arise as then, the choice $q=-1$ would be possible. Whereas we did not consider these cases in \cite{MESus}, we will present the solution to this class of states in Sec. \ref{2cycles2}. The formalism presented above can be used in general to determine the local symmetries of a 4--qubit state. However, it might be much more tedious to do so for different SLOCC classes (see subsequent sections).

Let us conclude this subsection by briefly commenting on how this method can be generalized to determine the symmetry of $n$--qubit states. First, note that in the 4-qubit case $U$, as given in Eq. (\ref{eq:Umb}) could equally well be defined as $\sqrt{-i}e^{i\pi/4\sigma_y^{\otimes 2}}$, which is LU--equivalent to $U$. This definition can be generalized to the $2n$-qubit case via $U=\sqrt{-i}e^{i\pi/4\sigma_y^{\otimes 2 n}}$. Then $U U^\dagger=\one$ and $U U^T=\sigma_y^{\otimes n}$ \footnote{ Note that for an odd number of systems, $n$, this definition cannot be used as in this case $(\sigma_y^{\otimes n})^T=- \sigma_y^{\otimes n}$ and therefore $U^T\propto U^\dagger,$ which implies that $U U^T\propto \one$.}. Using the subsequent observation it follows that if in the $2n$-qubit case $Z$ and $\tilde{Z}$ are defined analogous to the 4- qubit case then we have
\bea Z_{\otimes A_i \ket{\Psi}}=OZ_{\ket{\Psi}} O^T \\
\tilde{Z}_{\otimes A_i \ket{\Psi}}=\tilde{O}\tilde{Z}_{\ket{\Psi}} \tilde{O}^T,\eea

where both, $O$ and $\tilde{O}$ are complex orthorgonal matrices.
\begin{observation} $\forall A_i\in GL(2,\C)$, we have \bea O_c\equiv U^\dagger \bigotimes A_i U\eea
is a complex matrix with $|O_c|=\Pi_i |A_i|^{2^{n-1}}$ and $O_c O_c^T=  \Pi_i |A_i| \one$. \end{observation}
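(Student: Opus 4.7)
The plan is to verify both assertions by direct computation, relying on three ingredients: (i) multiplicativity of the determinant under tensor products; (ii) the stated identities $UU^\dagger=\one$ and $UU^T=\sigma_y^{\otimes n}$; and (iii) the elementary $2\times 2$ identity $A\sigma_y A^T=|A|\,\sigma_y$ for any $A\in GL(2,\mathbb{C})$, which I would verify by a one-line matrix multiplication.

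For the first claim, since $|U^\dagger|\,|U|=|U^\dagger U|=1$ the $U$-factors drop out of the determinant, giving $|O_c|=|\bigotimes_{i=1}^n A_i|$. Iterating the standard formula $|A\otimes B|=|A|^{\dim B}|B|^{\dim A}$ across the $n$ two-dimensional factors then yields $\prod_i|A_i|^{2^{n-1}}$, as required.

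For the second claim, I would expand
\[
O_c O_c^T=U^\dagger\Bigl(\bigotimes_i A_i\Bigr)UU^T\Bigl(\bigotimes_i A_i^T\Bigr)(U^\dagger)^T,
\]
substitute $UU^T=\sigma_y^{\otimes n}$, and regroup the middle factor as $\bigotimes_i(A_i\sigma_y A_i^T)$. Ingredient (iii) collapses this to $\prod_i|A_i|\cdot\sigma_y^{\otimes n}$, leaving $O_c O_c^T=\prod_i|A_i|\,U^\dagger\sigma_y^{\otimes n}(U^\dagger)^T$. Using $(U^\dagger)^T=U^*$ together with a second substitution $\sigma_y^{\otimes n}=UU^T$, one finds $U^\dagger(UU^T)U^*=(U^\dagger U)(U^TU^*)=\one\cdot(U^\dagger U)^T=\one$, completing the argument.

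There is no genuine obstacle: the only substantive content is the identity $A\sigma_y A^T=|A|\,\sigma_y$, which reflects the classical fact that $\sigma_y$ (up to a factor of $i$) carries the $SL(2,\mathbb{C})$-invariant antisymmetric form, so the claim is ultimately a repackaging of this identity across $n$ qubits. The only care needed is in tracking transposes versus conjugates of $U$, since $U$ is not symmetric in general (its symmetry depends on the parity of $n$ through $\sigma_y^T=-\sigma_y$). As a consistency check, for $A_i\in SL(2,\mathbb{C})$ both conclusions reduce to $|O_c|=1$ and $O_cO_c^T=\one$, recovering the previously quoted statement that in this case $O_c$ is a complex orthogonal matrix of unit determinant.
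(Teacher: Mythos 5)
Your proof is correct and follows essentially the same route as the paper: expand $O_cO_c^T$, insert $UU^T=\sigma_y^{\otimes n}$, collapse the middle factor via $A\sigma_yA^T=|A|\sigma_y$, and finish with $U^\dagger\sigma_y^{\otimes n}U^\ast=\one$. You additionally spell out the determinant claim and the identity $U^\dagger\sigma_y^{\otimes n}U^\ast=\one$, which the paper states without detail; these fill-ins are accurate.
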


\begin{proof} Using the definition of $O_c$ we have \bea O_c O_c^T=U^\dagger \bigotimes A_i U U^T \bigotimes A^T_i U^\ast
=U^\dagger \bigotimes A_i \sigma_y^{\otimes n} \bigotimes A^T_i U^\ast= \Pi_i |A_i| U^\dagger \sigma_y^{\otimes n} U^\ast=\Pi_i |A_i| \one,\eea where we used that $A\sigma_y A^T=|A|\sigma_y$ for any $2\times 2$ matrix $A$ and that $UU^T=\sigma_y^n$ implies that $U^\dagger \sigma_y^{\otimes n} U^\ast=\one$.
\end{proof}
Hence, in the $2n$-qubit case (for $n>2$) possible symmetries can still be found by determining the orthogonal matrices that leave $Z$ and $\tilde{Z}$ invariant. However, it is no longer necessarily true that any of these special orthorgonal matrices corresponds to a local operator. Note further that for a $m$--qubit state, $\ket{\Psi}$ with $m$ odd, one could consider the $m+1$--qubit state $\ket{\Psi}\ket{0}$ and determine the symmetries of this states via the method explained above.

\subsection{\label{outline}Outline of proceeding sections and results}

Let us give here an outline of the proceeding sections, which are all structured in the subsequently described way.
We consider the SLOCC classes presented in \cite{slocc4} and use the notation given in that paper to the individual families of SLOCC classes. Note that this classification has to be considered up to permutations. That is, if there exists a particle permutation, $P$, such that the state $P\ket{\Psi}_{1234}$ is in some SLOCC class we treat $\ket{\Psi}$ as an element of this particular SLOCC class.

The class $G_{abcd}$ for  $b^2\neq c^2\neq d^2\neq b^2$, $a^2\neq b^2, c^2, d^2$ and where the parameters fulfill the condition that there exists no $q\in \C\backslash 1$ such that $\{a^2,b^2,c^2,d^2\}=\{q a^2,q b^2,q c^2,q d^2\}$, which constitutes a generic set of four qubit states, has been studied in \cite{MESus} (see also Sec. \ref{SubSecSepMES}). Here, we consider all the remaining classes. We devote to each of the nine classes its own section. All but one class, the one corresponding to the $L$--state, can be treated in the same way. The $L$--state however, exhibits a different behavior as we will briefly discuss in Section \ref{cycle3}. For each class, but the one corresponding to the latter, we proceed as follows.
\bi
\item We identify the seed states in each class as those states, $\ket{\Psi}$, for which $Z_\Psi$ and $\tilde{Z}_{ \Psi}$ are in symmetric Jordan form \footnote{Note that this state is in general not unique. However, the seed state is always chosen such that it is additionally in the MES.}.
\item We determine the local symmetries of the seed states as explained in Sec \ref{SubsecSym}. This is sometimes tedious, but straightforward.
\item Using the symmetries of the seed states we introduce a standard form up to LU's of the states within each SLOCC class. More precisely, we use that $\otimes_i g_i \ket{\Psi}=\otimes_i (g_i s_i) \ket{\Psi}$ where $\otimes s_i\in S(\Psi)$, in order to identify the standard form by choosing a particular symmetry, $\otimes_i s_i$, such that $\tilde{G}_i=(g_i s_i)^\dagger(g_i s_i)$ is of a particular form. Note that there is still some ambiguity since $G_i$ defines $g_i$ up to LU's, which are then determined by choosing $g_i=\sqrt{\tilde{G}_i}$. It should be noted here that this standard form can be used to decide very easily whether two states are LU--equivalent to each other or not. In fact, as the standard form explained above is unique, two states are LU--equivalent iff their corresponding standard forms coincide, resembling the criterion for LU--equivalence in the bipartite case.
\item We identify those states which are not reachable via LOCC. This is done using the result on the existence of SEP transformation mapping one state, $\ket{\Psi_1}$ to $\ket{\Psi_2}$. Clearly all states which cannot be reached from any other state via SEP must necessarily be in $MES_4$ (it can be easily seen in all classes that the seed state is always in $MES_4$). For all the remaining states we derive a LOCC protocol which transforms one state in $MES_4$ to the desired state. That is, the condition of the existence of a SEP transformation is already so stringent that whenever such a transformation is possible there also exists a LOCC protocol achieving the same task.
\item We identify the non--isolated states in $MES_4$, i.e. those states which can be transformed non--trivially into some other state outside of $MES_4$. Note that these states are the most relevant states for state transformations.
\ei

The determination of the symmetries and the identification of the MES is in some cases very tedious. Note however, that despite the fact that the
proof methods used to identify the (non--isolated) states in $MES_4$ differ for the various SLOCC classes, the results, which we are going to summarize in the following are very similar for each class (with exception of the L--state). However, it should be noted that this apparent similarity stems from our choice of the seed states and the standard form within the SLOCC class, whose choice is not necessarily unique.

Let us in the following summarize our findings. As before, we call a state reachable if there exists a non--LU equivalent state which can be transformed into it via LOCC. We show that for any three and any four--qubit state (excluding the L--state) the following holds:
\bi
\item[(i)] Any reachable state can be reached from some non-LU-equivalent state via a two-outcome POVM, where only one party applies a measurement  \footnote{Note that this does not mean that only transformation involving 2-outcome POVMs are possible. As an example consider the generic 4-qubit states $h_1\otimes \one^{\otimes 3}\ket{\Psi}$ where $h_1\not\propto \one$ which can be reached from $\ket{\Psi}$ via a 4-outcome POVM. However, these states can also be reached from $g_1^x\otimes \one^{\otimes 3}\ket{\Psi}$ via a two-outcome POVM.}.
\item[(ii)] This two-outcome POVM stems from a unitary symmetry of the seed state. More precisely, the two elements of the POVM are of the form
$ M_1=\sqrt{p} h g^{-1}, M_2=\sqrt{1-p} h S g^{-1}$,which act non-unitary only on one party, say party 1 and $S$ is a local symmetry (for most cases $S=\sigma_{i}^1 \otimes_{j=2}^4 U_j$, with $i\in \{1,2,3\}$ and $U_j$ local unitaries).
\item[(iii)] Denoting by $\ket{\Psi}$ the seed state of a SLOCC class, a state is reachable iff its standard form (defined individually for each SLOCC class) is of the form $\bigotimes_{i=1}^4 h_i \ket{\Psi}$, where the operators $H_i$ obey the following condition. There exist a unitary $S=\otimes_{i=1}^4 s_i \in S(\ket{\Psi})$, such that exactly three out of the four operators $H_i$ commute with the corresponding operators $s_i$.\ei

It is worth stressing again that the fact that the results can be summarized in this way depends very strongly on the choice of the seed states as well as the standard form for each SLOCC class. This is also why we present in all the lemmata the states in standard form. Hence, in order to see whether a state is for instance reachable one has to consider the state in standard form. Note that the seed state is always chosen such that it is in $MES_4$, which can be easily seen within each class. Given those facts, it is very easy to read off the states in $MES_4$ as well as the non-isolated states in $MES_4$.

Before investigating the various SLOCC classes, let us make some general observations. The first one concerns a simple sufficient condition for a state to be in MES, which we present in the following.

\begin{observation} Any state, which has the property that all single qubit reduced states are completely mixed is in MES (up to LUs).\end{observation}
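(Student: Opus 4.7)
The plan is to show that if $\ket{\Psi}$ has all single-qubit reductions maximally mixed, then any deterministic LOCC protocol $\ket{\Phi}\to\ket{\Psi}$ must realize nothing more than a random local unitary, so $\ket{\Phi}$ is necessarily LU-equivalent to $\ket{\Psi}$; this exhibits $\ket{\Psi}$ as an element of $MES_4$ up to LUs. The main ingredient is Nielsen's majorization theorem \cite{Nielsen} applied across every bipartite cut $\{i\}|\overline{\{i\}}$, combined with the elementary fact that the uniform distribution sits at the bottom of the majorization order.

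First I would view the multipartite LOCC protocol as a bipartite LOCC between qubit $i$ and the remaining three qubits treated as a single party; Nielsen then gives $\lambda_i^{\Phi}\prec\lambda_i^{\Psi}$, where $\lambda_i$ denotes the Schmidt spectrum across that cut. Since $\rho_i^{\Psi}=\one/2$ yields $\lambda_i^{\Psi}=(1/2,1/2)$, and only the uniform distribution is majorized by the uniform distribution, one obtains $\rho_i^{\Phi}=\one/2$ for every $i$. Repeating the argument on the tail of the protocol starting from any branch, every pure intermediate state $\ket{\chi}$ reached with nonzero probability also satisfies $\rho_i^{\chi}=\one/2$ for all $i$.

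Next I would exploit this to pin down each measurement. If at some stage party $k$ applies a POVM with Kraus operators $\{M_j\}$ to her qubit, mapping $\ket{\chi}$ to $\ket{\chi_j}$ with probability $p_j$, the previous step gives $\rho_k^{\chi}=\rho_k^{\chi_j}=\one/2$. A direct computation yields $\rho_k^{\chi_j}=(2p_j)^{-1}M_j M_j^\dagger$, whence $M_j M_j^\dagger=p_j\one$ and therefore $M_j=\sqrt{p_j}\,U_j$ with $U_j$ unitary on qubit $k$. Thus every measurement is a random local unitary, and concatenating all rounds shows that the overall protocol acts as $\ket{\Phi}\mapsto V^{(\mathrm{out})}\ket{\Phi}$, where each $V^{(\mathrm{out})}$ is a product of local unitaries labelled by the outcome history. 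Determinism of the output $\ket{\Psi}$ forces $V^{(\mathrm{out})}\ket{\Phi}\propto\ket{\Psi}$ for every outcome of nonzero probability, exhibiting an explicit local unitary relating $\ket{\Phi}$ to $\ket{\Psi}$ and completing the proof.

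The main obstacle is that general LOCC protocols may require infinitely many rounds of classical communication (as noted in the introduction, LOCC is not closed), so that "intermediate state" and "sub-protocol" must be handled with care. However, since the majorization relations used in the first step are preserved under limits of finite-round protocols, the bipartite Nielsen constraint continues to force maximally mixed single-qubit reductions at every finite stage, and the structural conclusion about Kraus operators applies at each such stage; the argument above therefore goes through for any state that is exactly LOCC-reachable.
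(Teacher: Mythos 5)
Your proof is correct, and while the first half coincides with the paper's argument, the second half takes a genuinely different route. The paper's proof is two lines: (1) the entanglement entropy of each single-qubit reduced state is an LOCC monotone, so any $\ket{\Phi}$ convertible to $\ket{\Psi}$ must satisfy $S(\sigma_i)\geq S(\rho_i)=1$ for all $i$, forcing $\sigma_i=\one/2$; (2) it then simply cites the result of Bennett \emph{et al.} \cite{Bennett99} that two pure states with the same local entropies are either LU-equivalent or LOCC-incomparable. Your first step is the majorization form of (1) — Nielsen across each cut $\{i\}|\overline{\{i\}}$ together with the fact that only the uniform distribution is majorized by the uniform distribution — and is equivalent to the entropy argument. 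For step (2), instead of invoking \cite{Bennett99} you reprove the needed special case from scratch: applying Nielsen to the residual protocol shows every intermediate state also has $\rho_k=\one/2$, whence the computation $\rho_k^{\chi_j}=M_jM_j^\dagger/(2p_j)$ forces $M_jM_j^\dagger=p_j\one$, so every Kraus operator is proportional to a local unitary and determinism gives LU-equivalence of $\ket{\Phi}$ and $\ket{\Psi}$. What your version buys is self-containedness (no external lemma) and an explicit structural statement about the protocol (it is a stochastic local unitary); what the paper's version buys is brevity. Your handling of the round-number subtlety is adequate for the exact-convertibility setting used here, since each branch on which the protocol terminates does so after finitely many rounds, and on that branch the accumulated operator is a finite product of local unitaries.
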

\begin{proof} Let $\ket{\Psi}$ be such that the single qubit reduced states, $\rho_i\equiv \tr_{\neq i}(\proj{\Psi})$, are all completely mixed, i.e. $\rho_i=\one/2$ $\forall i$. Then, since for a pure state the entropy of a single qubit reduced state measures the entanglement between this qubit and the rest, the existence of a LOCC protocol transforming a state $\ket{\Phi}$ into $\ket{\Psi}$ implies that $S(\sigma_i)\equiv S(\tr_{\neq i}(\proj{\Phi}))\geq S(\rho_i)=1$ $\forall i$. Thus, also $\ket{\Phi}$ must have the property that $\sigma_i=\one/2$. In \cite{Bennett99} it has been shown that two states, which have the same local entropies are either $LU$--equivalent, or LOCC incomparable, which proves the statement. \end{proof}

Thus, any connected Graph state as well as any error correcting code is in MES. We will use this observation in the subsequent sections.

Finally let us mention that if the symmetry of the seed state consists exclusively of unitary operators, then due to Eq. (\ref{EqSep}) we have $r=1$, which can be easily seen by taking the trace of Eq.(\ref{EqSep}). Moreover, the necessary condition given by Eq. (\ref{EqSep}) implies that
\bea \label{eq_symLU} {\cal E}_4(H)=\otimes_{i=1}^4 {\cal E}_i(H_i),\eea
where ${\cal E}_4(H)$ is given by the left hand side of Eq. (\ref{EqSep}) and \bea \label{eq_symLUsingle} {\cal E}_i(H_i)=\sum_j p_j (s_i^j)^\dagger H_i s_i^j,\eea
where we used the notation $S_j=\otimes_i s_i^j$.
This can be easily seen considering the partial traces of Eq. (\ref{EqSep}). Furthermore, due to Uhlmann's theorem on unital maps \cite{uhlmann}, the vector containing the eigenvalues of $G_i$ for each $i$, which are $1/2\pm|\textbf{g}_i|$, must be majorized by the corresponding vector for $H_i$. This implies the following observation.
\begin{observation}\label{obsmon} When the symmetries to be considered in Eq.\ (\ref{EqSep}) are all unitary, then, for each $i$, $|\textbf{g}_i|$ cannot decrease under SEP or LOCC transformations. Moreover, in this case $r$ in Eq. (\ref{EqSep}) has to be 1.\end{observation}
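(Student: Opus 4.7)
The plan is to extract the observation directly from Eq.~(\ref{EqSep}) by taking partial traces, which under the unitarity hypothesis reduces the four-qubit SEP identity to four independent identities on single qubits, each of which is a doubly stochastic channel. Uhlmann's theorem for unital channels then converts each such identity into a majorization between the spectra of $G_i$ and $H_i$, and in two dimensions the majorization reduces immediately to the desired monotonicity of $|\textbf{g}_i|$.

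First I would compute the trace of Eq.~(\ref{EqSep}): unitarity of each $S_k$ yields $\tr(S_k^\dagger H S_k)=\tr H=1$, while the right-hand side gives $r\,\tr G = r$; hence $r=1$. This establishes the second assertion of the observation, as already indicated in the paragraph preceding it.

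Next I would take the partial trace of Eq.~(\ref{EqSep}) over all qubits except qubit $i$. Writing $S_k=\bigotimes_j s_j^k$ and using unitarity of each $s_j^k$ for $j\ne i$ in the form $\tr((s_j^k)^\dagger H_j s_j^k)=\tr H_j = 1$, the left-hand side collapses to ${\cal E}_i(H_i)$ as defined in Eq.~(\ref{eq_symLUsingle}), while the right-hand side reduces to $G_i$ (using $r=1$). This is precisely the single-qubit identity $G_i={\cal E}_i(H_i)$ I need.

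Finally, I would note that ${\cal E}_i$ is a convex combination of unitary conjugations and hence both trace preserving and unital, since ${\cal E}_i(\one)=\sum_k p_k\,\one = \one$. By Uhlmann's theorem \cite{uhlmann} the spectrum of $G_i$ is then majorized by that of $H_i$. Since the eigenvalues of $G_i$ are $1/2\pm|\textbf{g}_i|$ and those of $H_i$ are $1/2\pm|\textbf{h}_i|$, and both spectra sum to $1$, the majorization in two dimensions is equivalent to comparison of the larger entries, giving $1/2+|\textbf{g}_i|\le 1/2+|\textbf{h}_i|$, i.e.\ $|\textbf{g}_i|\le|\textbf{h}_i|$. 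The only conceptually delicate step is recognising that it is precisely the unitarity assumption that (i)~decouples the partial traces across different qubits and (ii)~guarantees that ${\cal E}_i$ is unital; both properties fail for general symmetries, and the monotonicity indeed does not survive in that setting.
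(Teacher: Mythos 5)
Your proposal is correct and follows essentially the same route as the paper: take the trace of Eq.~(\ref{EqSep}) to get $r=1$, take partial traces to obtain $G_i={\cal E}_i(H_i)$ with ${\cal E}_i$ a mixed-unitary (hence unital) map, and invoke Uhlmann's theorem to conclude that the spectrum $1/2\pm|\textbf{g}_i|$ is majorized by $1/2\pm|\textbf{h}_i|$. The only difference is that you spell out the qubit-level reduction of majorization explicitly, which the paper leaves implicit.
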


As we will see, the Pauli operators play an important role in characterizing reachable states and states in $MES_4$. Let us therefore briefly mention some useful ideas which can be used in case all symmetries are tensor products of Pauli operators. First note that for any tensor product of Pauli operators, $P=\bigotimes_l \sigma_{k_l}$, where $k_l\in \{0,\ldots,3\}$, which commutes with all the elements of $S(\Psi)$, Eq. (\ref{eq_symLU}) implies that
\bea \label{eq_Pauli} \tr(H P)=\tr[H (\bigotimes_{i=1}^4 {\cal E}^\dagger_i) (P)]= \tr(H P) \prod_l \eta_{k_l},\eea
where we used that $\tr({\cal E}_4 (H)P)=\tr(H P)$ and the notation ${\cal E}^\dagger(B)=\sum_i p_i A_i^\dagger B A_i$, for a CPM ${\cal E}(B)=\sum_i p_i A_iB A_i^\dagger $. Here, $\eta_{k_l}=\tr( {\cal E}^\dagger_l (\sigma_{k_l}) \sigma_{k_l})/2= \sum_j p_j (-1)^{f_{j,k_l}+1}$, where $f_{j,k_l}=1$ for $[s_l^j,\sigma_{k_l}]=0$ and $f_{j,k_l}=0$ for $\{s_l^j,\sigma_{k_l}\}=0$.
For instance, if $S(\Psi)=\{\sigma_i^{\otimes 4}\}$ then $\eta_{i}=p_0+p_i-p_k-p_l$, where $\{i,k,l\}=\{1,2,3\}$.

Eq. (\ref{eq_Pauli}) can only be fulfilled if either (i) $\tr(H P)=0$ or (ii) $\prod_l \eta_{k_l}=1$. As $\eta_{k_l}\leq 1$ for any $k_l$, (ii) implies that any $\eta_{k_l}$ occurring in Eq. (\ref{eq_Pauli}) must be one, which implies that the only non-vanishing $p_i$ occurring in $\eta_{k_l}$ must have the same sign. Obeying this condition can already be very stringent. In order to see this, let us again consider the example above, where $S(\Psi)=\{\sigma_i^{\otimes 4}\}$. There we have that, as soon as there is one $k_l\neq 0$ occurring in Eq. (\ref{eq_Pauli}), this implies that the only possible transformation to reach the state must be a two--outcome POVM, i.e. only two probabilities are not vanishing. States which are reachable via a two--outcome POVM can be easily characterized, as one can wlog assume (without restricting the standard form) that the POVM elements are of the form $\sqrt{p}g^{-1} h$ and $\sqrt{1-p} g^{-1}S_i h$, for some $i$. Thus, case (ii) can be easily solved. In the remaining case, case (i), we have that $\tr(H P)=\prod_l \bar{h}_l^{k_l}=0$ has to hold. Considering now all Pauli operators which commute with all the symmetries leads to several strong conditions. In the subsequent sections we will use these ideas to derive simple conditions for the state to be reachable in case the symmetries contain only Pauli operators.

\section{The SLOCC classes $G_{abcd}$}
Let us start with the SLOCC classes $G_{abcd}$.
The seed states are
\begin{align}
|\Psi_{adcb}\rangle&=\frac{a+d}{2}(|0000\rangle+|1111\rangle)+\frac{a-d}{2}(|0011\rangle+|1100\rangle)+\frac{b+c}{2}(|0101\rangle+|1010\rangle)+\frac{b-c}{2}(|0110\rangle+|1001\rangle),\label{seed}
\end{align}
where $a,b,c,d \in \C$. Note that the seed states can equivalently be written as
\begin{align}
|\Psi_{adcb}\rangle&=a|\Phi^+\rangle|\Phi^+\rangle+d|\Phi^-\rangle|\Phi^-\rangle\nonumber+b|\Psi^+\rangle|\Psi^+\rangle+c|\Psi^-\rangle|\Psi^-\rangle.
\end{align}
From this decomposition it is obvious that the sign of any of the parameters can be chosen independently via the application of Pauli operators and permutations, e.g. changing the sign of $a$ corresponds to applying $\one\otimes\sigma_y\otimes\one\otimes\sigma_y$, then exchanging party $1$ and $2$ and then applying $\one\otimes\sigma_y\otimes\one\otimes\sigma_y$ again. As our studies are up to permutations and LUs this shows that the sign of the parameters is irrelevant.
The operators $B_\psi$, $Z_\psi$ and $\tilde{Z}_\psi$ corresponding to the seed states are given by (see Eq. (\ref{eq:Z}) and (\ref{eq:tilZ}))
\begin{align}\label{eigenvalues}
B_\psi&=\textrm{diag}(a,d,c,b),\nonumber\\
Z_\psi=\tilde{Z}_\psi&=D=\textrm{diag}(a^2,d^2,c^2,b^2).
\end{align}
Note that the order of the eigenvalues in $Z_\psi$ and $\tilde{Z}_\psi$ can be chosen arbitrarily. The reason for this is that the order of the eigenvalues
can be changed by a permutation, $P$, (i. e. $D\rightarrow PDP^T$ ) and that one can always
choose P such that $|P| = 1$ by changing one entry
from $1$ to $-1$ if necessary. Thus, these permutation matrices correspond to local unitaries in the computational basis. To determine now the symmetries of $|\Psi_{adbc}\rangle$ we need to find the matrices $O\in SO(4)$ and $q\in \C$ such that
\begin{equation}\label{sym}
ODO^T=qD.
\end{equation}
Since the eigenvalues are preserved by similarity transformations, this can only hold if $q\{\lambda_i\} =\{\lambda_i\}$ where $\{\lambda_i\}$ denotes here the set of eigenvalues of $D$. Thus, $q=1$  is always possible, but for certain configurations of eigenvalues we also have to take other values of $q$ into account. In particular, $q\neq 1$ is possible if  subsets of eigenvalues of $D$ fulfill the cyclic property $\lambda_{i+1}=q\lambda_i$ (if the cardinality of the subset is $n$, it should be understood that $n+1=1$) and eigenvalues that are not elements of a subset with the cyclic property vanish. Note that this implies $q^{n}=1$. In the following we will call a subset of eigenvalues with this cyclic properties a cycle whose size is given by the cardinality of the subset. The generic case, which was studied in \cite{MESus}, corresponds to non-degenerate and non-cyclic eigenvalues. Here we study the other possibilities. Let us note here that the extreme case $a^2=b^2=c^2=d^2\neq 0$ corresponds to a biseparable state and is hence not investigated here.
\subsection{Cyclic eigenvalues}
\subsubsection{\label{2cycles2}Two cycles of size 2}

Here we study the case where there exist two cycles of size 2.
This is the case $D=diag(a^2,-a^2,c^2,-c^2)$ where $a^2\neq \pm c^2$ and $a,c\neq 0$, which corresponds to seed states (Eq. (\ref{seed})) satisfying $d=ia$ and $c=ib$ (for $a^2\neq \pm c^2$ and $a,c\neq 0$). Besides the symmetries one obtains for $q=1$ (as in the generic case $O$ is diagonal which leads to $\{\sigma_i^{\otimes 4}\}$), we can also have $q=-1$. Taking this into account one can show that the symmetries for these SLOCC classes are given by
\begin{align}\label{symcycle2}
S&=\{\one^{\otimes 4},\sigma_x^{\otimes 4},\sigma_y^{\otimes 4},\sigma_z^{\otimes 4},\one\otimes\sigma_z\otimes\sigma_x\otimes\sigma_y, \sigma_z\otimes\one\otimes\sigma_y\otimes\sigma_x,\sigma_x\otimes\sigma_y\otimes\one\otimes\sigma_z,\sigma_y\otimes\sigma_x\otimes\sigma_z\otimes\one\}.
\end{align}
In the following we will denote the elements of the symmetry group $S$ by $S_k=\otimes_i s_i^k$ for $k=0,\ldots, 7$. We choose the same numeration as in Eq. (\ref{symcycle2}), i.e. $S_0=\one^{\otimes 4}$, $S_1=\sigma_x^{\otimes 4}$ etc..
The standard form can be chosen similar to the generic case, i.e. imposing an order on the coefficients of the seed state and getting rid of the ambiguity due to the symmetry allows to make the standard form unique. In the following we will use the notation $G_i=g_i^\dagger g_i=1/2\one+\sum_j \bar{g}_i^j \sigma_j$. Moreover, in order to give a compact presentation of the results we will use the following unconventional notation. That is, whenever we want to give more information on the precise structure of $G_i$ we use as notation $G_i^{W}=1/2\one+\sum_j \bar{g}_i^j \sigma_j$ for $W\in\{W_1 W_2 W_3,\neq j\}$ and  $W_j \in \{j,0,A\}$ where $W_j=j$ ($W_j=0$) means $\bar{g}_i^j\neq 0$($\bar{g}_i^j= 0$) whereas $W_j=A$ denotes that $\bar{g}_i^j$ is arbitrary and in case $W$ corresponds to $\neq j$ this means that $\bar{g}_i^k\neq 0$ and/or $\bar{g}_i^l\neq 0$ where $k\neq l\neq j\neq k$ and $j,k,l\in\{1,2,3\}$. Since the symmetries in the generic case are a strict subset of the present case, all transformations achievable in the generic case are also achievable here. However, the extra symmetries make more transformations possible. We now have for example that any state of the form $h_1^{AAA}\otimes h_2^{\neq 3}\otimes h_3^{A00}\otimes h_4^{0A0}\ket{\Psi_{a(ia)c(-ic)}}$ is reachable by LOCC as it can be obtained for example from a state of the form $h_1^{AAA}\otimes g_2^{00A}\otimes h_3^{A00}\otimes h_4^{0A0}\ket{\Psi_{a(ia)c(-ic)}}$ when the second party implements a two-outcome POVM with elements proportional to $h_2^{\neq 3}(g_2^{00A})^{-1}$ and $h_2^{\neq 3}\sigma_z(g_2^{00A})^{-1}$ where $\bar{g}_2^3=\bar{h}_2^3$ (for the second outcome parties 3 and 4 have to apply $\sigma_x$ and $\sigma_y$ repectively). In the following lemma we show which states are reachable in these SLOCC classes.
\begin{lemma}\label{lemcycle2} The only states in the SLOCC classes $G_{abcd}$ where  $a^2= -d^2$, $b^2=-c^2$, $a^2\neq \pm c^2$ and $a,c\neq 0$ that are reachable via LOCC are given by $\otimes h_i \ket{\Psi_{a(ia)c(-ic)}}$ where $\otimes H_i$ obeys the following condition. There exists a symmetry  $S_k\in S$ such that exactly
three out of the four operators $H_i$ commute with the corresponding
operator $s_i^k$ and one operator $H_j$ does not commute with $s_j^k$.
\end{lemma}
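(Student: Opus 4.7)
The plan is to adapt the strategy already used for the generic SLOCC classes $G_{abcd}$ summarised in Subsec.~\ref{SubSecSepMES} and formalised in the program of Subsec.~\ref{outline}, exploiting the fact that the enlarged symmetry group $S$ in Eq.~(\ref{symcycle2}) consists entirely of unitaries. By Observation~\ref{obsmon} this forces $r=1$ in Eq.~(\ref{EqSep}), so the existence of a SEP transformation $\bigotimes g_i\ket{\Psi_{a(ia)c(-ic)}}\to \bigotimes h_i\ket{\Psi_{a(ia)c(-ic)}}$ is equivalent to the existence of probabilities $\{p_k\}_{k=0}^{7}$ with $\sum_k p_k\,S_k^\dagger H S_k = G$. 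Taking partial traces over the remaining three qubits immediately yields, for every party $i$,
\begin{equation}
\sum_{k=0}^{7} p_k\,(s_i^k)^\dagger H_i\, s_i^k \;=\; G_i,
\end{equation}
so that each qubit sees a unital mixed-unitary channel built out of at most four Pauli rotations. This will be the workhorse of the argument.

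For sufficiency I will build an explicit two-outcome POVM on the party $j$ whose $H_j$ does not commute with $s_j^k$. Fix such a symmetry $S_k$ and a reachable target $\bigotimes h_i\ket{\Psi_{a(ia)c(-ic)}}$; choose an initial state $\bigotimes g_i\ket{\Psi_{a(ia)c(-ic)}}$ with $g_i=h_i$ for $i\neq j$ and with $g_j$ chosen so that $G_j=\tfrac{1}{2}H_j+\tfrac{1}{2}(s_j^k)^\dagger H_j s_j^k$, which, because the three other parties' operators commute with the corresponding $s_i^k$'s, gives $\tfrac{1}{2}(H+S_k^\dagger H S_k)=G$. The POVM elements $M_1=\tfrac{1}{\sqrt{2}}hg^{-1}$, $M_2=\tfrac{1}{\sqrt{2}}h S_k g^{-1}$ then act non-unitarily only on party $j$, so the transformation is LOCC (one party measures, the others correct by the fixed unitaries $s_i^k$). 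This mirrors the construction used in Sec.~\ref{SubSecSepMES} and realises the general property (i)–(iii) of Subsec.~\ref{outline}.

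For necessity I will run the Pauli test of Eq.~(\ref{eq_Pauli}) for every tensor product of Paulis $P=\bigotimes_l \sigma_{k_l}$ that commutes with all eight elements of $S$. For each such $P$ either $\tr(HP)=\prod_l \bar h_l^{k_l}=0$, or every coefficient $\eta_{k_l}$ appearing in the product equals $1$, which forces the supports of the non-vanishing $p_k$ to lie in a coset where all $s_l^k$ commute with $\sigma_{k_l}$. Scanning the commutants of each of the eight symmetries produces a finite list of Pauli monomials $P$, and the resulting system of equations $\prod_l\bar h_l^{k_l}=0$ leaves, after eliminating the transformations that are purely LU, exactly the commutation pattern claimed: there must be one distinguished symmetry $S_k$ whose action is non-trivial on precisely one qubit. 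Together with the requirement $G_i\neq H_i$ for at least one $i$ (otherwise the transformation is LU) and the monotonicity $|\textbf{g}_i|\ge|\textbf{h}_i|$ from Observation~\ref{obsmon}, this rules out all other configurations.

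The main obstacle I expect is the combinatorial bookkeeping in the necessity direction: with eight symmetries instead of four, the list of Pauli operators commuting with $S$ and the possible $\eta_{k_l}=1$ configurations is substantially larger than in the generic case, and one has to rule out mixed configurations in which, say, two distinct symmetries $S_k,S_{k'}$ each contribute but neither acts non-trivially on a single qubit. Handling this cleanly will require grouping the symmetries according to which qubits they fix (i.e.\ noting that $S_4,\dots,S_7$ each fix a different single party) and using the constraint $\tr(H P)=0$ for a carefully chosen $P$ in each case to eliminate all but the stated reachable families. Once that case analysis is done, the sufficiency construction above closes the equivalence.
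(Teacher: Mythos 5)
Your strategy coincides with the paper's (Appendix~A): sufficiency via a one-party two-outcome POVM built from a single symmetry $S_k$, necessity via the Pauli trace conditions of Eq.~(\ref{eq_Pauli}). The sufficiency half is complete and is exactly the construction used in the paper: with $g_i=h_i$ for $i\neq j$ and $G_j=\tfrac12\bigl(H_j+(s_j^k)^\dagger H_j s_j^k\bigr)$ the two POVM elements act unitarily on all parties but $j$ precisely because the other three $H_i$ commute with $s_i^k$, and $|\mathbf{g}_j|<|\mathbf{h}_j|$ guarantees the initial state is not LU-equivalent to the target.

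The gap is in the necessity half. Running Eq.~(\ref{eq_Pauli}) only over Pauli monomials that commute with \emph{all eight} elements of $S$ does not produce enough equations to close the case analysis. The paper needs two further ingredients that your plan omits. First, an iteration: once a first round of conditions forces $|\eta_{k_l}|=1$ for certain components and thereby restricts the support of $\{p_k\}$ to a proper subgroup (e.g.\ to $\{S_0,S_1,S_4,S_6\}$ when $\bar h_1^1\bar h_3^1\neq 0$), one must re-apply the trace test with the strictly larger set of Pauli monomials commuting only with that subgroup; these extra equations are what eliminate the ``mixed'' configurations you worry about, and they are invisible if you only ever use the commutant of the full group. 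Second, the analysis is organized around the reduction that every reachable state is reachable by a two-outcome POVM using $S_0$ and a single $S_j$: the paper first characterizes the two-outcome-reachable states (Lemma~\ref{twooutcome0}) and then shows, case by case, that every surviving solution of the trace conditions is either of that form or isolated. Without this, the system ``$\tr(HP)=0$ or $\prod_l\eta_{k_l}=1$'' does not by itself ``leave exactly the commutation pattern'': its solution set contains families such as $H_1^{100}\otimes\one/2\otimes H_3^{1AA}\otimes H_4^{AA0}$ for which reachability must still be decided and a protocol exhibited, as well as states for which a given symmetry fails to commute with two of the $H_i$ yet which are reachable because a \emph{different} symmetry realizes the three-out-of-four pattern. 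Two smaller points: you should state explicitly the normalization $p_0\neq 0$ (justified by the group structure of $S$, provided one does not simultaneously impose the standard form), which is what turns ``the $\epsilon_{j,l}$ are constant on the support'' into ``only two probabilities survive''; and the monotonicity from Observation~\ref{obsmon} reads $|\mathbf{h}_i|\geq|\mathbf{g}_i|$ for a transformation from $g\ket{\Psi}$ to $h\ket{\Psi}$, not the reverse inequality you wrote.
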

Explicitely this means that the reachable states are of the form  $h_i^{\neq 0}\otimes \one^{\otimes 3}\ket{\Psi_{a(ia)c(-ic)}}$ and $h_i^{\neq m}\otimes h_j^{00W_m=A}\otimes h_k^{00W_m=A}\otimes h_l^{00W_m=A}\ket{\Psi_{a(ia)c(-ic)}}$ where $m\in\{1,2,3\}$ and $h_1^{AAA}\otimes h_2^{\neq 3}\otimes h_3^{A00}\otimes h_4^{0A0}\ket{\Psi_{a(ia)c(-ic)}}$ and $h_1^{AAA}\otimes h_2^{00A}\otimes h_3^{\neq 1}\otimes h_4^{0A0}\ket{\Psi_{a(ia)c(-ic)}}$ and $h_1^{AAA}\otimes h_2^{00A}\otimes h_3^{A00}\otimes h_4^{\neq 2}\ket{\Psi_{a(ia)c(-ic)}}$ and $h_1^{\neq 3}\otimes h_2^{AAA}\otimes h_3^{0A0}\otimes h_4^{A00}\ket{\Psi_{a(ia)c(-ic)}}$ and $h_1^{00A}\otimes h_2^{AAA}\otimes h_3^{\neq 2}\otimes h_4^{A00}\ket{\Psi_{a(ia)c(-ic)}}$ and $h_1^{00A}\otimes h_2^{AAA}\otimes h_3^{0A0}\otimes h_4^{\neq 1}\ket{\Psi_{a(ia)c(-ic)}}$ and $h_1^{\neq 1}\otimes h_2^{0A0}\otimes h_3^{AAA}\otimes h_4^{00A}\ket{\Psi_{a(ia)c(-ic)}}$ and $h_1^{A00}\otimes h_2^{\neq 2}\otimes h_3^{AAA}\otimes h_4^{00A}\ket{\Psi_{a(ia)c(-ic)}}$ and $h_1^{A00}\otimes h_2^{0A0}\otimes h_3^{AAA}\otimes h_4^{\neq 3}\ket{\Psi_{a(ia)c(-ic)}}$ and $h_1^{0A0}\otimes h_2^{A00}\otimes h_3^{\neq 3}\otimes h_4^{AAA}\ket{\Psi_{a(ia)c(-ic)}}$
and $h_1^{0A0}\otimes h_2^{\neq 1}\otimes h_3^{00A}\otimes h_4^{AAA}\ket{\Psi_{a(ia)c(-ic)}}$ and $h_1^{\neq 2}\otimes h_2^{A00}\otimes h_3^{00A}\otimes h_4^{AAA}\ket{\Psi_{a(ia)c(-ic)}}$.\\
The details of the proof can be found in Appendix A. Here we just give  a very short outline of the proof.
First we use Eq. (\ref{eq_Pauli}) for some choices of $P$ to show which states are reachable using only the symmetries $S_0$ and one $S_i$ for some $i\in\{1,2,3,4,5,6,7\}$. Then we show that all states that are reachable can be obtained via a two-outcome POVM from some other state.\\
In the following lemma we show which states in these SLOCC classes are convertible.
\begin{lemma} The only states in the SLOCC classes $G_{abcd}$ where $a^2 =-d^2$, $b^2 = -c^2$, $a^2 \neq \pm c^2$   and $a,c\neq 0$ that are convertible via LOCC are given by the reachable states and
\begin{eqnarray}
&\{h_1^{AAA}\otimes h_2^{00A}\otimes h_3^{A00}\otimes h_4^{0A0}\ket{\Psi_{a(ia)c(-ic)}}, h_1^{00A}\otimes h_2^{AAA}\otimes h_3^{0A0}\otimes h_4^{A00}\ket{\Psi_{a(ia)c(-ic)}}, \\ \nonumber
&h_1^{A00}\otimes h_2^{0A0}\otimes h_3^{AAA}\otimes h_4^{00A}\ket{\Psi_{a(ia)c(-ic)}}, h_1^{0A0}\otimes h_2^{A00}\otimes h_3^{00A}\otimes h_4^{AAA}\ket{\Psi_{a(ia)c(-ic)}}, \\ \nonumber
&h_1^{00W_m=A}\otimes h_2^{00W_m=A}\otimes h_3^{00W_m=A}\otimes h_4^{00W_m=A}\ket{\Psi_{a(ia)c(-ic)}}\},
\end{eqnarray}
where $m\in\{1,2,3\}$.
\end{lemma}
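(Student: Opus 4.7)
The plan is to leverage the reachable-state characterization of Lemma \ref{lemcycle2} together with Eq.~(\ref{EqSep}) applied from the source side, and to establish the claimed equality as two inclusions: (a)~every reachable state is convertible, (b)~each of the additional listed states is convertible, and (c)~no other state is convertible.

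\textbf{Steps (a) and (b).} For (a), for each of the reachable families in Lemma \ref{lemcycle2} I would exhibit an explicit two--outcome POVM, using the very same symmetries $S_k$ already employed in that lemma, that maps a state of the given form to an LU--inequivalent state of the same form but with different free parameters; since every reachable family is continuously parametrised, this immediately yields convertibility. For (b), consider first $h_1^{AAA}\otimes h_2^{00A}\otimes h_3^{A00}\otimes h_4^{0A0}\ket{\Psi_{a(ia)c(-ic)}}$. The nontrivial symmetry $S_4=\one\otimes\sigma_z\otimes\sigma_x\otimes\sigma_y$ has the property that $s_1^4=\one$ commutes with the arbitrary $H_1$ and that $s_3^4=\sigma_x$, $s_4^4=\sigma_y$ already commute with $H_3$, $H_4$ in standard form. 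I let party~2 implement a two--outcome POVM with elements proportional to $h_2'g_2^{-1}$ and $h_2'\sigma_z g_2^{-1}$ weighted by $p=1/2$, while parties $3,4$ conditionally apply $\sigma_x,\sigma_y$; imposing Eq.~(\ref{EqSep}) then reduces to $\tfrac12 H_2'+\tfrac12\sigma_z H_2'\sigma_z=G_2$, which is solvable with $H_2'=h_2'^{\neq 3}$ having arbitrary $\sigma_x,\sigma_y$ components (subject to $|\g_2'|<1/2$) and $\bar h_2'^{\,3}=\bar g_2^{\,3}$. The target is then a state of the reachable form $h_1^{AAA}\otimes h_2'^{\neq 3}\otimes h_3^{A00}\otimes h_4^{0A0}\ket{\Psi_{a(ia)c(-ic)}}$, hence LU--inequivalent to the source. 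The analogous construction with $S_5,S_6,S_7$ treats the three permuted cases. For the all-same-Pauli forms $\bigotimes_{i=1}^4 h_i^{00W_m=A}\ket{\Psi_{a(ia)c(-ic)}}$, the symmetry $S_m=\sigma_m^{\otimes 4}$ commutes with every $h_i^{00W_m=A}$, so an identical two--outcome POVM applied by any single party $j$ produces a target of the reachable form $h_j^{\neq m}\otimes\bigotimes_{i\neq j}h_i^{00W_m=A}\ket{\Psi_{a(ia)c(-ic)}}$.

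\textbf{Step (c).} Suppose $g\ket{\Psi_{a(ia)c(-ic)}}$ admits a non--trivial LOCC transformation to some $h\ket{\Psi_{a(ia)c(-ic)}}$. Since every element of $S$ in Eq.~(\ref{symcycle2}) is unitary, Observation \ref{obsmon} gives $r=1$ and $|\g_i|=|\h_i|$ for each $i$. I then evaluate Eq.~(\ref{eq_Pauli}) on every Pauli string $P=\bigotimes_l\sigma_{k_l}$ that commutes with the whole enlarged symmetry group. The dichotomy $\tr(HP)=0$ versus $\prod_l\eta_{k_l}=1$ forces the nonvanishing probabilities to concentrate on at most a pair $\{S_0,S_k\}$ for a single $k$, and simultaneously imposes that, for each index $l$ in which $s_l^k\neq\one$, the operator $G_l$ must commute with $s_l^k$ (so that $H_l$ is determined up to the unitary conjugation by $s_l^k$ and in fact equals $G_l$). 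A case analysis mirroring the one of Appendix~A, but now running from $G$ with $H$ as the free object, then shows that the only sources for which a non--LU target $H$ exists are exactly the union announced: either $G$ lies in one of the reachable forms of Lemma \ref{lemcycle2} (in which case the transformation moves $G$ within the same family by the protocol of step (a)), or it matches one of the four patterns associated with $S_4,\ldots,S_7$, or it takes the single--Pauli--direction diagonal form.

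The main obstacle is precisely the case analysis in step (c): the enlarged symmetry group has eight elements and its centraliser in the Pauli group is considerably larger than in the generic setting, so many configurations of $G$ have to be tested against Eq.~(\ref{eq_Pauli}). However, the structure of the argument is strictly parallel to Appendix~A, and the extra convertible states correspond exactly to those degeneracies in which the reachability constraints on a prospective target $H$ are compatible with a non--LU source $G$ via one of the enlarged symmetries. In all remaining configurations, the Pauli constraints collapse the sum in Eq.~(\ref{EqSep}) to $p_0 H=G$, forcing $H$ LU--equivalent to $G$ and ruling out convertibility.
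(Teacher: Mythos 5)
Your sufficiency constructions (steps (a) and (b)) are essentially the paper's: a two--outcome POVM built from $\{S_0,S_l\}$ where $S_l$ commutes with three of the four local operators, with the measuring party solving $p\,H_j+(1-p)\,(s_j^l)^\dagger H_j s_j^l = G_j$. That part is fine. The problem is step (c). The paper does not redo the Appendix~A Pauli analysis "from the source side"; it observes that any non--trivial LOCC target is by definition reachable, hence by Lemma \ref{lemcycle2} satisfies $H_i=H_i^{00W_m=A}$ for at least two parties, and then simply traces Eq.~(\ref{EqSep}) over all parties but such an $i$: since the symmetries are Pauli strings, the map ${\cal E}_i$ preserves vanishing Pauli components, so $G_i$ must be of the same single--direction form. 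Combined with the explicit list of reachable states this pins down $G$ in two lines. You miss this shortcut, and the route you propose instead is both far heavier and, as written, contains errors: Observation \ref{obsmon} gives $|\h_i|\geq|\g_i|$, not $|\g_i|=|\h_i|$; and the claim that the Pauli constraints force the nonvanishing probabilities onto a single pair $\{S_0,S_k\}$ is false in general --- for degenerate targets (e.g.\ several $H_i\propto\one$, or the all--$\sigma_m$--direction states) four or more probabilities can be nonzero, as the paper's own Appendix~A notes. Your necessity argument therefore cannot "collapse the sum to $p_0H=G$" in all remaining configurations, and the actual case analysis that would close the gap is deferred rather than carried out.

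In short: the convertibility (sufficiency) half is correct and matches the paper; the necessity half is a sketch that misses the decisive idea (propagate the single--Pauli--direction structure of the \emph{reachable} target back to the source via single--party partial traces of Eq.~(\ref{EqSep})) and rests on two incorrect intermediate claims. As it stands the proof is incomplete.
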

\begin{proof}
This can be shown using Eq. (\ref{EqSep}) and Lemma \ref{lemcycle2}. Note that for each reachable state it holds that $H_i=H_i^{00W_m=A}$ for at least two parties wlog we choose $i=1,2$. Tracing over all parties but party $i$ with $i\in\{1,2\}$ in Eq. (\ref{EqSep}) one obtains that $G_i$ also has to be of the form $G_i^{00W_m=A}$. In order to see that states obeying this necessary conditions are indeed convertible note first that there exists for each of them, a symmetry $S_l$ which commutes with $G_j$ on at least 3 parties, e.g. $j=1,2,3$. Thus, one can use $\{\sqrt{p} h_4\one^{\otimes 4}g_4^{-1},\sqrt{1-p} h_4S_lg_4^{-1}\}$ to convert this state to some other. This constitutes a valid POVM for $\bar{h}_4^k=\bar{g}_4^k$, $(2 p -1 )\bar{h}_4^s=\bar{g}_4^s$ and $(2 p -1 ) \bar{h}_4^t=\bar{g}_4^t$, where $s_4^l=\sigma_k$, $s\neq t$ and $s,t\neq k$.  The remaining condition is that $G_4$ ($H_4$) is a positive rank-2 operator, so it is always possible to find for any $G_4$ a $H_4$ and a value of $p$ such that these restrictions are fulfilled. Note that in case $S_l$ also commutes with $G_4$ one simply has to set $p=1/2$.
\end{proof}
In order to illustrate how convertible states can be transformed we consider the following example. States of the form $h_1^{AAA}\otimes h_2^{00A}\otimes h_3^{\neq 1}\otimes h_4^{0A0}\ket{\Psi_{a(ia)c(-ic)}}$ are only reachable from states which obey $G_2^{00A}=1/2\one + \eta_{1267} \bar{h}_2^3\sigma_z$ and $G_4^{0A0}=1/2\one + \eta_{1356} \bar{h}_4^2\sigma_y$. Here we used the notation $\eta_{ijkl}=p_0+p_m+p_n+p_s+p_t-p_i-p_j-p_k-p_l$ where $\{m,n,s,t,i,j,k,l\}=\{1,\ldots,7\}$. In order to show that states which obey $G_2=G_2^{00A}$ and $G_4=G_4^{0A0}$ are indeed convertible consider $\{\sqrt{p} h_3\one^{\otimes 4}g_3^{-1},\sqrt{1-p} h_3S_4g_3^{-1}\}$. Note that the symmetry $S_4$ commutes with $G_i$ for $i=1,2,4$ and thus for these parties $G_i =H_i$. This transformation corresponds to a valid POVM for $\bar{h}_3^1=\bar{g}_3^1$, $(2 p -1 )\bar{h}_3^2=\bar{g}_3^2$ and $(2 p -1 ) \bar{h}_3^3=\bar{g}_3^3$ and as pointed out before for any $G_3$ it is always possible to find a probability $p$ and $H_3> 0$ so that the conditions are fulfilled. Thus, states with $G_2=G_2^{00A}$ and $G_4=G_4^{0A0}$ are convertible.\\
The states in these SLOCC classes, $\otimes_k g_k\ket{\Psi_{a(ia)c(-ic)}}$, that are non--isolated and are in $MES_4$  have the following property. There exists a symmetry $S_j$ with $j\in\{1,2,3,4,5,6,7\}$ such that $\otimes_k G_k$ commutes with $S_j$ and there exists no symmetry $S_l$ with  $l\in\{1,2,3,4,5,6,7\}$ such that exactly three out of the four operators $G_k$ commute with the corresponding operator $s_k^l$ and one does not commute.

\subsubsection{A cycle of size 4}

This is the case $D=diag(\lambda,i\lambda,-\lambda,-i\lambda)$, which corresponds to seed states satisfying $d=\exp(i\pi/4)a$, $c=ia$ and $b=\exp(i3\pi/4)a$. As $a$ is determined by the normalization condition this reduces to a single SLOCC class.
If one considers the 13 versus 24-splitting, $D$ in this case is no longer a 4-cycle but two two-cycles. In fact permuting party 2 and 3 leads to a seed state that is LU-equivalent to the one considered in the previous section and therefore this case has been studied there.

\subsubsection{A cycle of size 3\label{cycle3}}

In this case we have $D=diag(\exp(i2\pi/3)\lambda,\lambda,0,\exp(i4\pi/3)\lambda)$ for which the seed state fulfills $a=\exp(i\pi/3)d$, $b=\exp(i2\pi/3)d$ and $c=0$, i.e the seed state corresponds to
\bea
|\Psi\rangle&=d(|\Phi^-\rangle|\Phi^-\rangle+e^{i\pi/3}|\Phi^+\rangle|\Phi^+\rangle
+e^{i2\pi/3}|\Psi^+\rangle|\Psi^+\rangle).
\eea
 As before, this corresponds then to a single SLOCC class. Note that this state has been denoted as the $\ket{L}$-state in \cite{Lstate}, where it has been shown that this state maximizes the average Tsallis $\alpha$-entropy of entanglement for $\alpha> 2$, which also shows the particularity of this state. \\
Taking into account that $q\in\{1,\exp(i\pi/3),\exp(i2\pi/3)\}$ in Eq. (\ref{sym}) one obtains the following symmetry group
\bea
S(\Psi)=\{\one^{\otimes 4},\sigma_x^{\otimes 4},\sigma_y^{\otimes 4},\sigma_z^{\otimes 4},[Y(\pm \pi/4)X(\pm \pi/4)]^{\otimes 4},[Z(\pm \pi/4)X(\pm \pi/4)]^{\otimes 4}\}.
\eea
These symmetries allow to choose a unique standard form.
As mentioned before, this class resembles specific SLOCC classes of three-qutrit states, which have very different properties than all other classes of four qubits. These properties and their consequences, whose discussion are beyond the scope of this paper, are presented in \cite{HeSp15}. Let us note here, however, that similar results as in the other SLOCC classes are obtained in case more than 2 SLOCC operations are not proportional to the identity, e.g. in this case any reachable state $h\ket{\Psi}$ has the property that there exists a symmetry $S_k= s_k^{\otimes 4} \in S(\Psi) $ for some $k\in\{1, \ldots,12\}$ such that exactly
three out of the four operators $H_i$ commute with the corresponding operators $s_k$ and one operator $H_j$ does not commute with $s_k$. Moreover, the same techniques, in particular Lemma \ref{lemcoplanar1} and Lemma \ref{lemcoplanar2} (see below) can be used to derive these results.\\ \\
We now move to study families with degenerate eigenvalues. There, we will also consider the 2-cyclic case with a double-degenerate zero eigenvalue (see section \ref{secE}).  Notice  that the cases of a single four-degenerate eigenvalue and of a triple-degenerate null eigenvalue need not be considered because these correspond to biseparable states.

\subsection{A non-zero triple-degenerate eigenvalue}\label{family1}

In this case the corresponding $Z$- and $\tilde{Z}$-matrices can be chosen as $D=\textrm{diag}(a^2,a^2,c^2,a^2)$ where $a^2\neq c^2$ and $a\neq 0$. Thus, clearly Eq.\ (\ref{sym}) can only hold for $q=1$. Notice that our arranging of eigenvalues is compatible with $a=-d=-b\neq\pm c$ in Eq.\ (\ref{eigenvalues}), which means that the seed states can be chosen to be of the form

\begin{equation}
|\Psi\rangle=\alpha|\Psi^-\rangle_{12}|\Psi^-\rangle_{34}+\beta|\Psi^-\rangle_{13}|\Psi^-\rangle_{24}.
\end{equation}
The symmetries of these seed states are given by $S=X^{\otimes4}$ where $X\in$ SL(2). Alternatively, the elements of the symmetry group can also be written in their singular value decomposition as $S=(UP_zV)^{\otimes4}$, where
\begin{equation}\label{pz}
P_z=\left(
      \begin{array}{cc}
        z & 0 \\
        0 & 1/z \\
      \end{array}
    \right),\quad z>0,
\end{equation}
and $U,V\in$ SU(2).

As before, any state in these SLOCC classes can be written as $g_1\otimes g_2\otimes g_3\otimes g_4|\Psi\rangle$. In order to fix the standard form, let us first assume that none of the operators $G_i=g_i^\dagger g_i$, $i\in \{1,2,3,4\}$ coincide. Then $U$ and $P_z$ are chosen such that $P_zU^\dag G_4UP_z\propto\one$ and $V$ is chosen such that
$V^\dag G_1V\in\textrm{span}\{\one,\sigma_z\}$, which determines $V$
up to a rotation $Z(\alpha)$, which can then be used to choose $G_2\in\textrm{span}\{\one,\sigma_x,\sigma_z\}$. Therefore, each state in the SLOCC class can be written as $g_1^z\otimes g_2^{x,z}\otimes g_3\otimes\one|\Psi\rangle$, where here and in the following we make use of the notation $g_i^{v,w}$ for $g_i$ in case $\textrm{tr}[\sigma_u(g_i^{v,w})^\dagger g_i^{v,w}]=0$ for $\{u,v,w\}=\{x,y,z\}$. We will refer to a state by specifying the three vectors $\{\g_{1},\g{_2},\g_{3}\}$ using the parametrization $G_i=\one/2+\sum_{j=1}^3 \bar{g}_i^j\sigma_j$. In case two or more operators are the same, i.e. $G_i=G_j$ for some $i,j\in \{1,2,3,4\}$ we define the standard form as the one where these operators are mapped to the identity and proceed as above. It should be noted here that a state $\one \otimes h^{\otimes 3}\ket{\Psi}$ is LU--equivalent to the state $\tilde{h}\otimes \one^{\otimes 3}\ket{\Psi}$ (which is the standard form of this state as defined here), a property resembling the bipartite case. The fact that this holds for an arbitrary $h$ is in contrast to any other SLOCC class.  As we will see, states of this form can be reached unless $h\propto\one$.
We need to consider one additional, very special case, for which the standard form is chosen in a particular form. In this case where after the transformation mentioned before, it holds that $ \bar{g}_1^3\neq 0,  \bar{g}_2^1\neq 0, \bar{g}_3^1\neq 0$, $\bar{g}_3^2= 0$ and $(\bar{g}_1^3-\bar{g}_2^3)/\bar{g}_2^1=(\bar{g}_1^3-\bar{g}_3^3)/\bar{g}_3^1$. In this case we will show in the following that the standard form can be chosen as $\one \otimes g_2^z\otimes g_3^{z}\otimes g_4^{x,z} |\Psi\rangle$. In order to see that, note that in this case there exists a $\alpha$ and $z$ such that the Pauli vectors corresponding to the transformed operators $X^\dagger G_i X$ with $X^\dagger=P_z e^{i\alpha \sigma_y}$ are all parallel for $i\in \{1,2,3\}$ with vanishing $y$ component \footnote{This can be seen as follows. Let us denote by $\tilde{\g}_i$ the Pauli vectors corresponding to $X^\dagger G_i X$ (after normalization), with $X^\dagger=P_z e^{i\alpha \sigma_y}$. We choose $z\in\R\backslash 0$ and $\alpha$ such that $2\alpha\neq k\pi$ for $k\in\N_0$. Then, as $\tilde{\g}_1\neq 0$, as otherwise $G_1\propto \one$, which is excluded here, all Pauli vectors are parallel iff $\tilde{\g}_1|| \tilde{\g}_2$ and $\tilde{\g}_1|| \tilde{\g}_3$, which is equivalent to $\tilde{\g}_1  \times \tilde{\g}_2=\tilde{\g}_1  \times \tilde{\g}_3=0$. As all vectors have vanishing $y$--components, the previous conditions are equivalent to $\tilde{\g}_1  \times \tilde{\g}_i$ having a vanishing $y$--component for $i=2,3$. Due to the fact that $(\bar{g}_1^3-\bar{g}_2^3)/\bar{g}_2^1=(\bar{g}_1^3-\bar{g}_3^3)/\bar{g}_3^1$ it can be easily seen that these conditions are fulfilled iff there exists a $\alpha$ and $z$ such that $[-2 \bar{g}_1^3 + \cos (2 \alpha) +(\bar{g}_1^3- \bar{g}_3^3)/\bar{g}_3^1 \sin (2 \alpha) ] /[
2 \bar{g}_1^3+ \cos (2 \alpha)+ (\bar{g}_1^3- \bar{g}_3^3)/\bar{g}_3^1 \sin (2 \alpha)]=z^4$. It is straightforward to see that there always exists a $\alpha$ such that the left--hand side of the previous equation is positive, which implies the existence of a real $z$ for which the equation is satisfied. Hence, there always exists an operator $X=(P_z e^{i\alpha \sigma_y})^\dagger$ such that the transformed operators have parallel Pauli vectors.}. Due to the fact that $G_4=\one$, $\tilde{G}_4=X^\dagger G_4 X=G_4^z$, i.e. the corresponding Pauli vector has a non--vanishing $z$--component. Transforming now with $e^{i\beta \sigma_y}$ all the parallel Pauli vectors, i.e. $\tilde{\g}_i$ for $i\in \{1,2,3\}$ into vectors along the $z$--axis for a properly chosen $\beta$, and applying $P_{z^\prime}$ to map the first operator into the identity leads to desired standard form \footnote{It should be noted here that the choice of $\alpha$ and $z$ in the first step of the transformation is not unique. However, it can be easily seen that the standard form is at the end unique.}. Note that the uniqueness of the standard form (for all discussed cases) is then insured by choosing the signs of some of the Pauli components of $G$. Note further that up to permutations of the operators, the standard form is always of the form $g_1^z\otimes g_2^{x,z}\otimes g_3\otimes\one|\Psi\rangle$. As permutations will not alter any of the following arguments, we will denote now by $\{\g_{1},\g{_2},\g_{3}\}$ the Pauli operators of the three operators occurring in the standard form which are not proportional to the identity.  With all that, we will now investigate the reachable states and the states in the MES for this SLOCC class. In order to do so we first fix the standard form in  Eq. (\ref{EqSep}) as described above. Depending on whether the identity acts on system $1$ or $4$ we consider this equation for $G$ having the identity at the same position. Note that this can be done without loss of generality as the state $\one\otimes g_2\otimes g_3\otimes g_4|\Psi\rangle$ is LU-equivalent to the state $\tilde{g}_1\otimes \tilde{g}_2\otimes \tilde{g}_3\otimes\one|\Psi\rangle$ for some properly chosen $\tilde{g}_i$. We will first of all show that a necessary condition for  Eq. (\ref{EqSep}) to be fulfilled using this assignment is that $r=1$ and will then solve these equations.

Due to Eq. (\ref{EqSep}) we have to find all $p_x, H_i$ and $G_i$ obeying the condition
\begin{equation}\label{rulefamily1}
\sum p_x (X^\dag)^{\otimes4}H_1\otimes H_2\otimes H_3 \otimes \one X^{\otimes 4}=r G_1\otimes G_2\otimes G_3\otimes \one.
\end{equation}

Throughout we will use that $A \sigma_y A^T =|A|\sigma_y$ for any $2\times 2$ matrix $A$, where here and in the following $|A|$ denotes the determinant of an operator $A$. This implies that $\ket{\Psi^-}$ is invariant under $X \otimes X$ for any $X \in SL(2)$ and that $ A\otimes B \ket{\Psi^-}=  |B| A B^{-1}  \otimes \one \ket{\Psi^-}$. The latter equation implies in particular that $\bra{\Psi^-} A\otimes B \ket{\Psi^-}=1/4-v_A\cdot v_B$, for any $A,B$ with $\tr(A)=\tr(B)=1$. Here, $v_A$ ($v_B$) denotes the Bloch vector of $A$ ($B$) respectively and $v_A\cdot v_B$ denotes the scalar product between $v_A$ and $v_B$. Computing the overlap of the right and the left hand side of Eq. (\ref{rulefamily1}) with $\ket{\Psi^-}_{ij}\ket{\Psi^-}_{k4}$, where $\{i,j,k\}=\{1,2,3\}$ leads to
\bea \label{eq_scP}
1-4 \h_{i}\cdot\h_{j}=r ( 1-4 \g_{i}\cdot\g_{j}),\eea
for $i,j\in \{1,2,3\}, i\neq j$. Using now that the seed state, $\ket{\Psi}$ is invariant under $X^{\otimes 4}$ Eq. (\ref{rulefamily1}) implies that
\bea \bra{\Psi} H_1\otimes H_2\otimes H_3 \otimes \one \ket{\Psi}=r \bra{\Psi} G_1\otimes G_2\otimes G_3 \otimes \one \ket{\Psi}.
\eea

Note that this equation can be easily rewritten as
\bea \label{Eq_r}\bra{\vec{\alpha}}M_H \ket{ \vec{\alpha}}=r \bra{\vec{\alpha}}M_G \ket{ \vec{\alpha}},\eea where $\ket{ \vec{\alpha}}=(\alpha,\beta)$ and $M_H$ ($M_G$) is a $2 \times 2 $ matrix with entries depending on the operators $H_i$ ($G_i$) respectively. Note that Eq. (\ref{Eq_r}) has to hold for any choice of the seed parameters $\alpha,\beta$, which is equivalent to $M_H-r M_G=0$. Using Eq. (\ref{eq_scP}) it is easy to see that the latter holds iff
\bea \label{eq_Spat}
\h_{1}(\h_{2}\times \h_{3})=r \g_{1}(\g_{2}\times \g_{3}).\eea

Let us now derive some necessary operator equations. Computing the overlap of Eq. (\ref{rulefamily1}) with $\ket{\Psi^-}_{k4}$, with $k\in \{1,2,3\}$ leads to the equations
\bea
\label{eq_HiHj}
\sum p_x (X^\dag)^{\otimes 2}H_i\otimes H_j X^{\otimes 2}=r G_i\otimes G_j,
\eea
with $\{i,j,k\}=\{1,2,3\}$. Computing now the action of the right and left hand side of this equation on $\ket{\Psi^-}$, using the symmetry properties of this state and the fact that $\one \otimes A \ket{\Psi^-}=\one \otimes B \ket{\Psi^-}$ iff $A=B$, we obtain the following necessary condition
\bea \label{eq_MHiHj}{\cal M} (|H_j| H_i H_j^{-1})=r |G_j| G_i G_j^{-1},\eea for $i,j\in \{1,2,3\}, i\neq j$. Here, ${\cal M}(A)=\sum p_x (X^\dag) A (X^\dagger)^{-1}$ which is in general not a completely positive map. Similarly one obtains by computing the overlap with $\ket{\Psi^-}_{jk}$, with $j,k \in \{1,2,3\}$ and using Eq. (\ref{eq_scP}) the following conditions
\bea
\label{eq_oneHi}
\sum p_x (X^\dag)^{\otimes 2}H_i\otimes \one X^{\otimes 2}= G_i\otimes \one \quad \forall i\in \{1,2,3\},
\eea
which implies [using the same techniques as in deriving Eq. (\ref{eq_MHiHj})] that
\bea
\label{eq_Hi}
{\cal M} (H_i)=G_i \quad \forall i\in \{1,2,3\}.
\eea

Interestingly Eq. (\ref{eq_oneHi}) is exactly the equation one would obtain when investigating SEP transforming the bipartite state $g_i\otimes \one \ket{\Psi^-}$ into the state $h_i\otimes \one \ket{\Psi^-}$. The reason for that is that first off all the norms of the last two states coincide and second that the symmetry of state $\ket{\Psi^-}$ is precisely $X^{\otimes 2}$, for $X \in SL(2)$. Hence, Eq. (\ref{EqSep}) for this state transformation coincides with Eq. (\ref{eq_Hi}). In the case investigated here, the transformation must not only be possible via SEP, but via LOCC. Hence we can use the criterion of possible LOCC transformations among bipartite states given in \cite{Nielsen}. Applied to the case considered here, we obtain the following necessary condition for LOCC convertibility
\bea |\g_{i}|\leq |\h_{i}|\quad \forall i\in \{1,2,3\}.
\eea

Next, we use the standard form explained above in order to derive the action of the linear map ${\cal M}$ on the Pauli operators from Eqs. (\ref{eq_Hi}). We obtain

\bea \label{eq_MPauli} {\cal M}(\one)&=&\one \\ \label{eq_MPaulinnontriv}
\bar{h}_1^3 \bar{h}_2^1 {\cal M}(\sigma_x)&=&\bar{h}_1^3 \bar{g}_2^1 \sigma_x +(\bar{h}_1^3 \bar{g}_2^3- \bar{h}_2^3 \bar{g}_1^3)\sigma_z\\ \nonumber
\bar{h}_1^3 \bar{h}_2^1 \bar{h}_3^2 {\cal M}(\sigma_y)&=&\bar{h}_1^3(\bar{h}_2^1 \bar{g}_3^1-\bar{h}_3^1 \bar{g}_2^1)\sigma_x +\bar{h}_1^3 \bar{h}_2^1 \bar{g}_3^2\sigma_y +[\bar{h}_2^1(\bar{h}_1^3 \bar{g}_3^3-\bar{h}_3^3 \bar{g}_1^3)-\bar{h}_3^1(\bar{h}_1^3 \bar{g}_2^3-\bar{h}_2^3 \bar{g}_1^3)] \sigma_z\\ \nonumber
\bar{h}_1^3 {\cal M}(\sigma_z)&=&\bar{g}_1^3 \sigma_z \nonumber.\eea

Combining Eq. (\ref{eq_MHiHj}) and Eq. (\ref{eq_Hi}) it follows that if there exists an $i\in \{1,2,3\}$ such that $H_i=\one/2$, $r=1$ must be fulfilled. As this case will be treated below, we can therefore assume in the following that there exists no $i\in \{1,2,3\}$ such that $H_i=\one/2$. Note that due to the choice of the standard form, this implies in particular, that we can assume in the following that there exists no $i,j\in \{1,2,3\}$ such that $H_i=H_j$, as otherwise the standard form would contain two identities.

Note that if the prefactors in all equations given in Eq. (\ref{eq_MPaulinnontriv}) are non--vanishing the map ${\cal M}$ is uniquely defined and we can associate to it a matrix $M=\sum_{i,j=1}^3 M_{ji} \ket{i}\bra{j}$, where ${\cal M}(\sigma_i)=\sum_j M_{ij} \sigma_j$ \footnote{Note that we do not consider here the action of ${\cal M}$ on the identity as this equation does not lead to any new condition.}. Using that the Pauli vector corresponding to $|H_j|H_i H_j^{-1}$ is given by $\h_{i}-\h_{j}-2i(\h_{i}\times \h_{j})$, Eq. (\ref{eq_MHiHj}) read as
\bea \label{eq_MMhihj}
M[\h_{i}-\h_{j}-2i(\h_{i}\times \h_{j})]=r [\g_{i}-\g_{j}-2i(\g_{i}\times \g_{j})], \quad \forall i\neq j \eea
and Eq. (\ref{eq_Hi}) read as
\bea \label{eq_MMhi} M(\h_{i})=\g_{i}.\eea

As the matrix $M$ has real entries [see Eq. (\ref{eq_MPaulinnontriv})] we obtain from considering the real and the imaginary part of Eq. (\ref{eq_MMhihj}) the necessary conditions
\bea
\label{eq_MMhihjImRe}
M(\h_{i}-\h_{j})=r (\g_{i}-\g_{j}), \\
M(\h_{i}\times \h_{j})=r \g_{i}\times \g_{j} \quad \forall i\neq j. \eea

Combining the first of these equations with Eq. (\ref{eq_MMhi}), we obtain as necessary condition that $r=1$ unless $\g_{i}-\g_{j}=0$ for any pair $i,j$. The latter case is however not possible as Eq. (\ref{eq_Spat}) would imply that $\bar{h}_1^3\bar{h}_2^1\bar{h}_3^2=0$, which is excluded here.

Let us now consider the remaining cases, where one of the prefactors in Eqs. (\ref{eq_MPaulinnontriv}) vanishes. If $\bar{h}_1^3=0$ we have that $H_1\propto \one$, which implies, as argued above that $r=1$. Hence, we can assume in the following that $\bar{h}_1^3\neq 0$. If now $\bar{h}_2^1=0$ we obtain from the first equation in Eq. (\ref{eq_MPaulinnontriv}) that $\bar{g}_2^1=0$. Of course we can also in this case associate a matrix $M$ to the linear map ${\cal M}$. Considering now Eq. (\ref{eq_MMhihj}) for $i=1$ and $j=2$ and Eq. (\ref{eq_MMhi}) for $i=1$ and $i=2$ (for the corresponding matrix $M$) and taking into account that both, $\h_{1}\times \h_{2}$ and $\g_{1}\times \g_{2}$ vanish leads to $r=1$ unless $\g_1=\g_2$. In the latter case Eq. (\ref{eq_MMhi}) implies that either $H_1=H_2$, which leads to $r=1$ or $G_1=G_2\propto \one$. In this case Eqs. (\ref{eq_scP}) lead to either $H_1=H_2$ or one of the $H_i$ for $i\in\{1,2,3\}$ is proportional to the identity. In both cases this leads to $r=1$.

Hence, it remains to consider the case $\bar{h}_3^2=0$ (and $\bar{h}_1^3, \bar{h}_2^1\neq 0$). Note that in this case all equations are invariant under the exchange of $H_2$ with $H_3$. Hence, we can assume that $\bar{h}_3^1\neq 0$. As before one can associate to the linear map $\mathcal{M}$ a matrix $M$ (which is, however, not real). It is straightforward to see that for the corresponding $M$ Eqs. (\ref{eq_MMhi}) and Eqs. (\ref{eq_MMhihj}) are fulfilled iff one of the following three conditions are fulfilled: (i) $(\bar{h}_1^3-\bar{h}_2^3)/\bar{h}_2^1=(\bar{h}_1^3-\bar{h}_3^3)/\bar{h}_3^1$ or (ii) $G_i\propto \one$ $\forall i$ or (iii) r=1. Case (ii) will be treated below. Let us now consider case (i). Eq. (\ref{eq_MMhihj}) and Eq. (\ref{eq_MMhi}) imply that either the components of $g_i$ obey the same conditions or $\bar{g}_2^1=\bar{g}_3^1=0$. In the first case recall that the standard form of $g|\Psi\rangle$ and $h|\Psi\rangle$ is $\one \otimes h_2^z\otimes h_3^{z}\otimes h_4^{x,z} |\Psi\rangle$. In the second case the standard form of $h|\Psi\rangle$ is the same as before and therefore due to our assignment $g|\Psi\rangle$ has to be transformed into $\one\otimes g_2^z(g_1^z)^{-1}\otimes g_3^z(g_1^z)^{-1}\otimes g_4^z(g_1^z)^{-1}|\Psi\rangle$. Hence, in both cases Eq. (\ref{EqSep}) has to be considered for $\one/2 \otimes H_2^z\otimes H_3^{z}\otimes H_4^{x,z}$ and similarly for $G$. Up to permutations this corresponds to $H_2^z \otimes H_3^{z}\otimes H_4^{x,z} \otimes\one/2$. Then, either $H_2$ or $H_3$ are proportional to $\one$, which implies that $r=1$ (see above), or one of the $z$--components in non--vanishing, but the $x$--component of the second operator is vanishing, i.e. the case $\bar{h}_1^3\neq 0$ and $\bar{h}_2^1=0$ from before. As permutations of the particles do not alter the argument from above, we have that these states are reachable only if $r=1$. Hence, it remains to consider case (ii). As $\g_i=0$ the equations in Eq. (\ref{eq_scP}) imply that (using that $\bar{h}_1^3\neq 0$) $\bar{h}_2^3=\bar{h}_3^3$. We have to distinguish here the two cases, $\bar{h}_2^3=0$ and $\bar{h}_2^3\neq 0$. In the first case Eq. (\ref{eq_scP}) implies that $r=1$. In the second case Eq. (\ref{eq_scP}) implies that $\bar{h}_1^3=\bar{h}_2^3=\bar{h}_3^3$. Note that this implies that the condition in (i) is satisfied, which leads to the necessary condition that $r=1$.

Hence, we have shown that given the standard form defined at the beginning of this subsection a state is reachable only if Eq. (\ref{EqSep}) is satisfied for $r=1$. Let us now consider this case and derive the necessary and sufficient conditions for states to be reachable and convertible. From Eq. (\ref{eq_scP}) and Eq. (\ref{eq_Spat}) we have the following necessary conditions,
\begin{equation}\label{scalar}
\g_{i}\cdot\g_{j}=\h_{i}\cdot\h_{j},
\end{equation}
 and
\begin{equation}
\g_{1}\cdot(\g_{2}\times\g_{3})=\h_{1}\cdot(\h_{2}\times\h_{3}).\label{vector}
\end{equation}

Moreover, it must hold that
\bea |\g_{i}|\leq |\h_{i}|\quad \forall i\in \{1,2,3\}.
\eea

We will make use now of the following lemmata.

\begin{lemma}\label{lemcoplanar1} Let $\{\v_1,\v_2,\v_3\}$ be a triplet of coplanar pairwise-nonparallel vectors with scalar products $c_{ij}=\v_i\cdot\v_j$. Then, the only other triplets of coplanar vectors $\{\v'_1,\v'_2,\v'_3\}$ with the same scalar products ($\v'_i\cdot\v'_j=c_{ij}$) fulfilling $||\v'_i||\geq||\v_i||$ $\forall i$ are uniform rotations of the triplet of vectors $\{\v_1,\v_2,\v_3\}$.
\end{lemma}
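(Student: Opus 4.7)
The plan is to exploit the fact that the three vectors live in a two-dimensional plane inside $\R^3$, so their Gram matrix
\bea
G=\begin{pmatrix} ||\v_1||^2 & c_{12} & c_{13} \\ c_{12} & ||\v_2||^2 & c_{23} \\ c_{13} & c_{23} & ||\v_3||^2 \end{pmatrix}
\eea
must be singular. Writing $x_i=||\v_i||^2$ and expanding $\det G=0$ gives the polynomial constraint
\bea \label{eq:gram}
f(x_1,x_2,x_3)\equiv x_1x_2x_3-x_1c_{23}^2-x_2c_{13}^2-x_3c_{12}^2+2c_{12}c_{13}c_{23}=0,
\eea
which is satisfied both by $(x_1,x_2,x_3)=(||\v_1||^2,||\v_2||^2,||\v_3||^2)$ and by the primed triplet $(x_1',x_2',x_3')=(||\v_1'||^2,||\v_2'||^2,||\v_3'||^2)$, since both triplets are coplanar.

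The first key step is to show that $f$ is strictly increasing in each variable on the orthant $\{x_i\geq ||\v_i||^2\}$. Indeed,
\bea
\partial_{x_1}f=x_2x_3-c_{23}^2,\qquad \partial_{x_2}f=x_1x_3-c_{13}^2,\qquad \partial_{x_3}f=x_1x_2-c_{12}^2,
\eea
and the pairwise-nonparallel hypothesis $|c_{ij}|<||\v_i||\,||\v_j||$ (strict Cauchy--Schwarz) implies that $\partial_{x_i}f>0$ at the unprimed point, and therefore on the whole orthant $\{x_i\geq ||\v_i||^2\}$ by monotonicity of the partial derivatives in the remaining arguments. Since both $(x_i)$ and $(x_i')$ lie on the zero set of $f$ and the primed point lies in the orthant where $f$ is strictly increasing in each variable, I conclude $||\v_i'||=||\v_i||$ for $i=1,2,3$.

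Once the norms are fixed, the two triplets share the same Gram matrix in $\R^3$. The standard fact that a finite set of vectors in Euclidean space is determined by its Gram matrix up to an orthogonal transformation then gives an $O\in O(3)$ with $\v_i'=O\v_i$. To promote $O$ to an element of $SO(3)$, I use that the triplets are coplanar: if $\det O=-1$, compose $O$ with the reflection across the plane $\Pi$ spanned by $\{\v_1,\v_2,\v_3\}$, which fixes every $\v_i$ and changes the determinant sign; the resulting map is a rotation (of $\R^3$) mapping $\{\v_i\}$ to $\{\v_i'\}$, which is what ``uniform rotation'' means.

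The only potentially subtle point is the verification that the Gram-determinant polynomial $f$ really is strictly monotone in each argument on the relevant orthant, and this rests entirely on the strict inequality $||\v_i||\,||\v_j||>|c_{ij}|$, i.e.\ on the pairwise-nonparallel hypothesis; without it one would obtain only $\partial_{x_i}f\geq 0$ and the argument would fail. Everything else is bookkeeping: expanding $\det G$, signing the partial derivatives, and the routine step of upgrading an orthogonal map into a rotation via reflection across the common plane.
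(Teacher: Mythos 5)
Your proof is correct, and it takes a genuinely different route from the one in the paper. The paper parametrizes coplanarity by writing $\v_3=\alpha\v_1+\beta\v_2$, expresses the three squared norms as functions $f_i(\alpha,\beta)$, and then runs a gradient/path argument (using the identity $\alpha^2\,{\bf x}\cdot\nabla f_1+\beta^2\,{\bf x}\cdot\nabla f_2=-{\bf x}\cdot\nabla f_3$ together with monotonicity along lines, handled in footnotes) to show that no $f_i$ can increase without another decreasing, so all must stay constant, which pins down $(\alpha,\beta)$ and hence the triplet up to a common rotation. You instead encode coplanarity by the single scalar constraint $\det G=0$ on the Gram matrix, observe that the resulting polynomial $f(x_1,x_2,x_3)$ has partial derivatives $x_jx_k-c_{jk}^2$ that are strictly positive on the orthant $\{x_i\geq\|\v_i\|^2\}$ precisely because of strict Cauchy--Schwarz (which is where the pairwise-nonparallel, hence nonzero, hypothesis enters — the same hypothesis the paper needs to guarantee $\alpha,\beta\neq0$ and nonvanishing gradients), and conclude the norms cannot grow. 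The remaining step — identical Gram matrices force an orthogonal map, upgraded to a rotation by composing with the reflection fixing the common plane — is standard and makes the "uniform rotation" conclusion more explicit than in the paper, where it is asserted at the end. Your version is more symmetric in the three vectors and avoids the path-monotonicity subtleties of the original; its only implicit assumption is the one the paper also makes, namely that "pairwise nonparallel" excludes zero vectors (the degenerate cases being deferred to the companion Lemma \ref{lemcoplanar2}).
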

\begin{lemma}\label{lemcoplanar2}
Let $\{\v_1,\v_2,\v_3\}$ be a triplet of coplanar vectors with scalar products $c_{ij}=\v_i\cdot\v_j$ in which $\v_i\|\v_j$ for some $i\neq j$ or $\v_i=0$ for at least one value of $i$. Then, any other triplet of coplanar vectors $\{\v'_1,\v'_2,\v'_3\}$ with the same scalar products ($\v'_i\cdot\v'_j=c_{ij}$) fulfilling $||\v'_i||\geq||\v_i||$ $\forall i$, must fulfill that $\v'_i\|\v'_j$ or $\v'_i=0$.
\end{lemma}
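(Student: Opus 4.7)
My approach is to translate the statement into a Gram-matrix computation. To the two triplets I associate the $3\times 3$ real symmetric matrices $G_{ij}=\v_i\cdot\v_j$ and $G'_{ij}=\v'_i\cdot\v'_j$. The hypotheses then become: $G'_{ij}=G_{ij}=c_{ij}$ for $i\neq j$, while $G'_{ii}\geq G_{ii}$; both matrices are positive semidefinite; and coplanarity of the two triplets forces $\mathrm{rank}(G),\mathrm{rank}(G')\leq 2$, i.e.\ $\det G=\det G'=0$. The plan is to exploit the equation $\det G'=0$ together with this diagonal dominance to force a rank-one degeneracy among the rows of $G'$, which via the Cauchy--Schwarz equality case is equivalent to the desired conclusion.

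I would first dispatch the easy case where some $\v_k=0$, say $\v_3=0$. Then $c_{13}=c_{23}=0$, and cofactor expansion of $\det G'$ along the last row immediately factors as $G'_{33}(G'_{11}G'_{22}-c_{12}^2)=0$. Either $G'_{33}=0$, giving $\v'_3=0$, or $G'_{11}G'_{22}=c_{12}^2$, which is the equality case of Cauchy--Schwarz and hence forces $\v'_1\|\v'_2$.

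The main work is the case $\v_1\|\v_2$. Writing $\v_2=\lambda\v_1$ with $\lambda\neq 0$, the parallelism propagates to $c_{12}=\lambda L$ (with $L=\|\v_1\|^2$), $G_{22}=\lambda^2 L$, and $c_{23}=\lambda c_{13}$. Setting $G'_{11}=L+a$, $G'_{22}=\lambda^2 L+b$, $G'_{33}=M+c$ with $a,b,c\geq 0$ and $M=\|\v_3\|^2$, I would directly expand $\det G'$, and expect that after cancellation it collapses into the clean form
\bea \det G'=(M+c)\bigl[L(a\lambda^2+b)+ab\bigr]-c_{13}^2(a\lambda^2+b).\eea
Setting this to zero, if $a\lambda^2+b=0$ then $a=b=0$, so $G'_{11}G'_{22}=c_{12}^2$ and $\v'_1\|\v'_2$. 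Otherwise one solves $G'_{33}=c_{13}^2(a\lambda^2+b)/[L(a\lambda^2+b)+ab]\leq c_{13}^2/L$, with equality iff $ab=0$. Positivity of the $\{1,3\}$ principal minor of $G$ supplies $M\geq c_{13}^2/L$, and combined with $G'_{33}\geq M$ this squeezes the chain $c_{13}^2/L\geq G'_{33}\geq M\geq c_{13}^2/L$ to equalities, so $ab=0$ and $G'_{33}=c_{13}^2/L$. Then either $a=0$ gives $G'_{11}G'_{33}=c_{13}^2$ and hence $\v'_1\|\v'_3$, or $b=0$ gives $G'_{22}G'_{33}=c_{23}^2$ and hence $\v'_2\|\v'_3$.

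The hard part will be verifying the factored identity for $\det G'$ above; once that is in hand the remainder is a short chain of elementary inequalities built around Cauchy--Schwarz equality. Throughout, the alternative $\v'_i=0$ appears merely as a degenerate limit of Cauchy--Schwarz equality (one of the relevant norms vanishing), so the stated disjunction in the lemma is automatically covered.
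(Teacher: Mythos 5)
Your proof is correct, and it takes a genuinely different route from the paper's. The paper parametrizes the triplets by the angles $\theta_{12},\theta_{23}$, writes the squared norms as functions $g_i(\theta_{12},\theta_{23})$, and argues by continuity and the monotonicity/gradient analysis inherited from the proof of Lemma \ref{lemcoplanar1} that one cannot leave the line $\theta_{12}=0$ without decreasing some norm; this requires excluding the orthogonal configuration $\v_3\perp\v_1\|\v_2$ separately (where the angle functions blow up) and leans on the somewhat delicate path argument of the preceding lemma. Your Gram-matrix argument is purely algebraic and self-contained: coplanarity is exactly $\det G=\det G'=0$ for the positive semidefinite Gram matrices, the zero-vector case factors immediately, and in the parallel case the determinant identity
\bea \det G'=(M+c)\bigl[L(a\lambda^2+b)+ab\bigr]-c_{13}^2(a\lambda^2+b)\eea
(which I have checked) combined with $M\geq c_{13}^2/L$ from Cauchy--Schwarz on the unprimed pair squeezes everything to the equality case. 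What your approach buys is a shorter, more elementary proof with no degenerate subcases to excise and no dependence on Lemma \ref{lemcoplanar1}; it even yields sharper information, namely which pair of primed vectors must become parallel in each branch ($a=0$ forces $\v'_1\|\v'_3$, $b=0$ forces $\v'_2\|\v'_3$, $a=b=0$ forces $\v'_1\|\v'_2$). What the paper's approach buys is uniformity with the proof of Lemma \ref{lemcoplanar1}, reusing the same parametrization and monotonicity machinery across both statements. The only presentational caveat is that you should state explicitly that the nonzero case of $\v_i\|\v_j$ may be assumed (a vanishing vector reduces to the first case), so that $L>0$ and $\lambda\neq0$, and note that "linearly dependent" from the Cauchy--Schwarz equality case is precisely the disjunction "$\v'_i\|\v'_j$ or $\v'_i=0$" demanded by the lemma; both points are implicit in your writeup.
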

The proofs of Lemma \ref{lemcoplanar1} and Lemma \ref{lemcoplanar2} can be found in Appendix B. Using these lemmata and the fact that $r=1$ has already been shown to be a necessary condition for LOCC conversion (using the standard form and our assignment) we will now prove the following.
\begin{lemma}\label{thfamily1}
A LOCC conversion from $g|\Psi\rangle$ to $h|\Psi\rangle$ in these SLOCC classes is possible only if $\g_{i}\parallel\g_{j}$ and $\h_{i}\parallel\h_{j}$ for some $i\neq j$ or if $\g_{i}=\h_{i}=0$ for some $i$.
\end{lemma}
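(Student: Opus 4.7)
The plan is to exploit the three necessary conditions already established for a nontrivial LOCC conversion with $r=1$: the pairwise inner products $\g_i\cdot\g_j=\h_i\cdot\h_j$ from Eq.~(\ref{scalar}), the scalar triple product identity $\g_1\cdot(\g_2\times\g_3)=\h_1\cdot(\h_2\times\h_3)$ from Eq.~(\ref{vector}), and the Nielsen-type majorization $|\g_i|\leq|\h_i|$ that follows from Eq.~(\ref{eq_oneHi}) by recognising it as a bipartite SEP/LOCC equation on $|\Psi^-\rangle$. The strategy is to split on whether the triplet $\{\g_1,\g_2,\g_3\}$ spans a three-dimensional subspace or is coplanar, and in each case to show that without the degeneracies listed in the lemma one is forced to $|\h_i|=|\g_i|$ for all $i$, which rules out nontrivial LOCC conversion.

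First I would handle the non-coplanar case. There the triple product is non-zero and satisfies $[\g_1\cdot(\g_2\times\g_3)]^2=\det\mathrm{Gram}(\g)$, and by Eq.~(\ref{vector}) the same determinant is obtained from $\h$. Since the off-diagonal entries of both Gram matrices coincide, setting $x_i:=|\h_i|^2-|\g_i|^2\geq0$ and expanding the determinant identity produces
\begin{equation*}
\sum_{i}x_i\bigl(|\g_j|^2|\g_k|^2-(\g_j\cdot\g_k)^2\bigr)+(\text{nonnegative higher-order terms in }x_i)=0,
\end{equation*}
where $(i,j,k)$ runs over cyclic permutations of $(1,2,3)$. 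Each coefficient equals $|\g_j|^2|\g_k|^2\sin^2\theta_{jk}$, which is strictly positive as long as no $\g_m$ vanishes and no pair is parallel; the higher-order contributions are also nonnegative. This forces every $x_i$ to be zero, so $|\h_i|=|\g_i|$ for all $i$, meaning equal single-qubit entropies. By the Bennett et al.\ result cited in \cite{Bennett99} the two states are then either LU-equivalent or LOCC-incomparable, so no nontrivial LOCC conversion exists.

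In the coplanar case Eq.~(\ref{vector}) forces $\{\h_i\}$ to be coplanar as well. If $\{\g_i\}$ contains no zero vector and no parallel pair, Lemma~\ref{lemcoplanar1} applied with the $|\g_i|\leq|\h_i|$ bound shows $\{\h_i\}$ is a uniform rotation of $\{\g_i\}$, giving $|\h_i|=|\g_i|$ and thus again only LU-equivalence. Hence a nontrivial LOCC conversion requires $\{\g_i\}$ itself to be degenerate in the sense of the lemma. Lemma~\ref{lemcoplanar2} then guarantees the same type of degeneracy for $\{\h_i\}$. To match the indices as stated in the lemma, I would feed the preserved inner products back in: e.g.\ if $\g_i=0$ then $\g_i\cdot\g_j=0=\h_i\cdot\h_j$ for every $j\neq i$, which combined with the coplanarity of $\{\h_i\}$ and the conclusion of Lemma~\ref{lemcoplanar2} pins down either $\h_i=0$ at the same index or a parallel pair among the other two $\h_j$'s that propagates back through the scalar products to give a parallel pair in $\{\g_j\}$ as well.

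The principal obstacle I anticipate is precisely this index-matching step: Lemma~\ref{lemcoplanar2} on its own only yields the existence of a degeneracy in $\{\h_i\}$ without specifying which vector vanishes or which pair is parallel, so one must carefully enumerate the configurations (one vector zero, two parallel, both occurring simultaneously, all three parallel, etc.) and verify in each case that the preserved off-diagonal Gram entries and length inequalities force the degeneracy to sit at the correct indices. The other mildly delicate point is ensuring that the logical step "$|\h_i|=|\g_i|$ for all $i$ implies no nontrivial LOCC" is valid in the multipartite setting under the specific standard form fixed above, i.e.\ that the appeal to \cite{Bennett99} via equal single-qubit entropies across every cut really closes the argument.
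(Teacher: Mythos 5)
Your proposal is correct in substance and rests on the same three ingredients as the paper's proof (preservation of the pairwise scalar products, Eq.~(\ref{scalar}); preservation of the triple product, Eq.~(\ref{vector}); and the monotonicity $|\g_i|\leq|\h_i|$), but the decisive step is carried out differently. The paper first excludes $|\h_i|=|\g_i|$ for all $i$ by uniqueness of the standard form, then argues qualitatively that a strict increase of some length forces the pairwise angles to grow while the parallelepiped volume can only be preserved if some angle shrinks, so the triple product must vanish; coplanarity then hands the problem to Lemmas \ref{lemcoplanar1} and \ref{lemcoplanar2}. You instead expand the Gram determinant: with matching off-diagonal entries and $x_i=|\h_i|^2-|\g_i|^2\geq 0$, equality of the two determinants (which follows from Eq.~(\ref{vector}) by squaring) reads $\sum_i x_i\,|\g_j|^2|\g_k|^2\sin^2\theta_{jk}+(\textrm{nonnegative higher-order terms})=0$, forcing $x_i=0$ for all $i$ whenever no $\g_m$ vanishes and no pair is parallel. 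This is a quantitative replacement for the paper's angle-monotonicity argument (which needs a footnoted caveat about obtuse angles), and it in fact applies uniformly whether or not the triplet is coplanar, so it could absorb the non-degenerate coplanar case without invoking Lemma \ref{lemcoplanar1} at all. One step you should tighten: the inference that $|\h_i|=|\g_i|$ for all $i$ "means equal single-qubit entropies" is not immediate, since the reduced state of party $i$ in $g|\Psi\rangle$ depends on all four operators $G_j$, not on $G_i$ alone. What you actually have at that point is equality of the full Gram matrices, so the two triplets are related by a global orthogonal transformation of the Bloch vectors; only via the $X^{\otimes4}$ symmetry of the seed state does this translate into LU-equivalence (rotation) or a conjugate state (reflection), after which the local spectra coincide and the appeal to \cite{Bennett99} — or, more directly, the paper's uniqueness-of-standard-form argument — closes the case. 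Your index-matching worry in the degenerate coplanar case is legitimate but is resolved exactly as you sketch, by feeding the preserved scalar products back through Lemma \ref{lemcoplanar2}; the paper is no more explicit there than you are.
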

\begin{proof}
Notice that if $|\h_{i}|=|\g_{i}|$ $\forall i$, then Eqs.\ (\ref{scalar}) can only hold if $\h_i$ differs from $\g_i$ only by a rotation maintaining the angles among the three vectors. However, the standard form fixes one vector in the $z$ axis and another in the $xz$ plane, so this possibility needs to be discarded. Hence, $|\h_{i}|>|\g_{i}|$ for at least one $i$. In this case, however, one obtains the claim as otherwise Eqs.\ (\ref{scalar})-(\ref{vector}) cannot be simultaneously satisfied. To see this, notice that Eqs.\ (\ref{scalar}) can only hold if the angles between an increasing vector and the other two increase \footnote{Note that here and in the following we assume that the angle is smaller than $\pi/2$. Otherwise the angle can only decrease. However, the same arguments hold.}, while Eq.\ (\ref{vector}), which represents the volume of the parallelepiped having the three vector as sides, can only hold if at least one angle between a increasing vector and the other two decreases. Thus, the only possibility is that Eq.\ (\ref{vector}) is trivially fulfilled by being equal to zero,  which implies that the vectors $\g_{1}$, $\g_{2}$ and $\g_{3}$ have to be coplanar and the same holds true for the vectors $\h_{1}$, $\h_{2}$ and $\h_{3}$. Using Lemma \ref{lemcoplanar1} and Lemma \ref{lemcoplanar2} one obtains that in order for a transformation to be possible it has to hold that $\g_{i}\parallel\g_{j}$ and $\h_{i}\parallel\h_{j}$ for some $i\neq j$ or $\g_{i}=\h_{i}=0$ for some $i$.
\end{proof}

Notice that from this lemma we obtain as an immediate corollary that almost all states in these SLOCC classes are isolated. Furthermore, from the above proof (c.\ f.\ Eq.\ (\ref{scalar})), we see that the angle between $\h_{i}$ and $\h_{j}$ cannot decrease compared to that between $\g_{i}$ and $\g_{j}$. We will now show that LOCC conversions can actually always occur among states fulfilling the conditions of Lemma \ref{thfamily1} and identify the elements of the $MES$ for these families.

\begin{lemma}\label{MESfamily1}
The MES in these SLOCC classes is given by all states $g\ket{\Psi}$ for which neither $\g_{i}\parallel\g_{j}$  for some $i\neq j$ nor $\g_{i}=0$ for some $i$ (which are isolated) and by the LOCC convertible states for which one of the following conditions holds
\begin{itemize}
\item[(i)] $\g_{i}=0$ $\forall i$,
\item[(ii)] $\g_{i}\parallel\g_{j}$ and $\g_{k}=0$  with  $\g_{i}\neq 0$ and $\g_{j}\neq 0$ $\forall i\neq j\neq k\neq  i$,
\item[(iii)] $\g_{1}\parallel\g_{2}\parallel\g_{3}$ with $\g_{i}\neq 0$ $\forall i\in\{1,2,3\}$.
\end{itemize}
\end{lemma}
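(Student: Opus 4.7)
My plan is to establish the two directions separately: first, show that the stated states are indeed in the MES (not reachable from any LU-inequivalent state); second, show that the states of types (i), (ii), (iii) are LOCC convertible, while states not in the listed families are reachable from one of them (and hence are not in the MES).

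For the first direction, the isolated case is immediate from Lemma \ref{thfamily1}: if the $\g_i$ are pairwise non-parallel and all non-zero, then any reachability condition (a parallel pair in $\g$ or a zero $\g_i$) fails, so the state is not reachable. For the non-isolated families (i), (ii), (iii), the idea is to combine the necessary conditions (\ref{scalar}), (\ref{vector}) and $|\g_i|\le|\h_i|$ established above with Lemma \ref{thfamily1} to force any candidate predecessor to be LU-equivalent. Concretely: in case (i), $\h_i=0$ $\forall i$ together with $|\g_i|\le|\h_i|$ forces $\g_i=0$ $\forall i$. In case (ii), say $\h_3=0$ and $\h_1\parallel\h_2$ both non-zero, the orthogonality conditions $\g_i\cdot\g_3=\h_i\cdot\h_3=0$ combined with $|\g_3|\le|\h_3|=0$ give $\g_3=0$; then $\g_1\cdot\g_2=\h_1\cdot\h_2=\pm|\h_1||\h_2|\ge|\g_1||\g_2|$ forces $\g_1\parallel\g_2$ with $|\g_i|=|\h_i|$. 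In case (iii), writing $\h_i=a_i\hat z$, Eq.\ (\ref{scalar}) determines the components of each $\g_i$ along a common direction, and the norm bound $|\g_i|\le|\h_i|$ eliminates any transverse component. In every case, $\g$ must coincide with $\h$ up to signs, i.e.\ up to LU.

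For the second direction, I would construct explicit two-outcome POVMs using the continuous symmetry $X^{\otimes 4}$, $X\in SL(2)$. For a state of type (iii), where the $\g_i$ are all along some axis, one can use a symmetry $X=P_z$ (see Eq.\ (\ref{pz})) with $V,U$ aligning that axis with the $z$-axis. The POVM elements $\{\sqrt{p}\,h\,g^{-1},\sqrt{1-p}\,h\,S\,g^{-1}\}$ with $S=X^{\otimes 4}$ produces a non-trivial transformation leaving the parallelism and common direction intact while decreasing the norms. For type (ii) with $\g_3=0$ and $\g_1\parallel\g_2$, the same construction works using an $X$ built from $P_z$ in the common direction of $\g_1,\g_2$: since $X$ acts trivially on $G_3=\one/2$, the resulting POVM sends $\g_1\mapsto\g_1'$, $\g_2\mapsto\g_2'$ with $|\g_i'|<|\g_i|$ and $\g_3'=0$, reaching non-trivial states. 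For type (i) the seed itself, any $S=X^{\otimes 4}$ with $X$ non-unitary yields a two-outcome POVM producing a reachable target. Finally, to show that remaining convertible states (those where at most two $\h_i$ are collinear/zero but not in (i)--(iii)) are reachable from some MES element, I reverse these protocols: e.g.\ a target with $\h_3=0$ and $\h_1\not\parallel\h_2$ is reached from a type-(i) state by letting parties $1$ and $2$ each perform a suitable POVM exploiting the $X^{\otimes 4}$ symmetry independently.

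The main obstacle will be the sufficiency part for case (iii): because all three $\g_i$ lie on a single line, a single two-outcome POVM acts coherently on all three qubits, and one must verify that the resulting $\h_i$ remain along the same axis with appropriate magnitudes consistent with $\h_1\cdot(\h_2\times\h_3)=0$ and the scalar-product constraints. Care is also needed because the continuous $X^{\otimes 4}$ symmetry means the POVM element $h S g^{-1}$ must be verified to factorize into a genuine local operator acting non-trivially only on one qubit at a time (so as to implement an actual LOCC round), which constrains the choice of $X$ in its singular value decomposition $X=UP_zV$ so that the $V,U$ factors can be absorbed into the local unitaries without being limited by the protocol's structure. Once this is arranged, the transformations in (i)--(iii) follow transparently, completing the characterization.
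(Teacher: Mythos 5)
Your first direction is sound and in fact slightly cleaner than the paper's own argument: repeated use of Cauchy--Schwarz on Eqs.\ (\ref{scalar}) together with $|\g_i|\le|\h_i|$ forces, for targets of type (i)--(iii), equality of all norms and preservation of all angles, so any predecessor is a uniform rotation, which is excluded by the standard form. The paper reaches the same conclusion by appealing to the monotonicity of lengths and angles established in the proof of Lemma \ref{thfamily1}; both routes are valid.

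The second direction, however, has a genuine gap. You build your POVMs from the non-unitary symmetries $S=X^{\otimes4}$ with $X=P_z$. Such an $S$ acts non-unitarily on \emph{all four} parties, so the element $hSg^{-1}$ cannot be decomposed into one party measuring while the others apply deterministic local corrections: a non-unitary $P_z$ cannot be applied deterministically by a non-measuring party, and this is not a matter of choosing the factors $U,V$ in the singular value decomposition of $X$. Worse, the SEP condition (\ref{EqSep}) itself generically fails for these symmetries: since the standard form always has $G_4=\one/2$, the fourth tensor factor of $\sum_k p_k S_k^\dagger HS_k=rG$ reads $p\,\one/2+(1-p)P_z^2/2\propto\one/2$, which forces $z^4=1$, i.e.\ $P_z$ unitary up to sign. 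The correct protocols --- and the ones the paper actually uses in cases (a)--(e) of its proof --- exploit only the (projectively) unitary Pauli elements $\sigma_k^{\otimes4}$ contained in the symmetry group: one party applies $\{\sqrt{p}\,h_j g_j^{-1},\sqrt{1-p}\,h_j\sigma_z g_j^{-1}\}$ (or the four-outcome version $M_k\propto h_i\sigma_k$ for targets with a single nonzero vector), and the remaining parties apply the corresponding Pauli, which is possible precisely because their $G_i=H_i$ commute with it. Relatedly, your description of the transformation as ``decreasing the norms'' when leaving an MES state is backwards: by Observation \ref{obsmon} and the Nielsen-type condition the norms $|\g_i|$ can only increase from source to target, which is exactly why the states (i)--(iii), having minimal (parallel or vanishing) configurations, sit at the bottom of the order and serve as sources.
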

Recall that $\g_{1}=\g_{2}=\g_{3}$ is not contained in (iii), as in this case the standard form would be $\g_{1}=\g_{2}=\g_{3}=0$ and $\g_{4}\neq 0$, which will then be relabeled to a state with $\g_{1}\neq 0$.
\begin{proof}
Note that states of the form $h\ket{\Psi}$ for which neither $\h_{i}\parallel\h_{j}$  for some $i\neq j$ nor $\h_{i}=0$ for some $i$ are not fulfilling the premises of Lemma \ref{thfamily1} and are therefore in the MES.
From the analysis above, we have seen that LOCC transformations can only occur by increasing the length of the vectors and the angle between them (see Eqs.\ (\ref{scalar})). Therefore, necessarily, all states for which all vectors are zero and/or parallel are not LOCC reachable. Hence, the states given in (i)-(iii) must be in $MES_4$. It remains to prove that no other states are in $MES_4$. That is we show that no other state $h\ket{\Psi}$ for which  $\h_{i}\parallel\h_{j}$  for some $i\neq j$ or $\h_{i}=0$ for some $i$ (see Lemma \ref{thfamily1}) is in the MES. We will do so by providing LOCC protocols (sometimes implicitely) to transform a state in (i)-(iii) to any other state satisfying the premises of Lemma \ref{thfamily1} as listed below in (a)-(e).
\begin{itemize}
\item[(a)] States with $\h_{i}\neq0$, $\h_{j}=\h_{k}=0$ for some $i\neq j\neq k\neq i$:

It can be obtained from a state of the form (i) by party $i$ implementing the four-outcome POVM ($k=0,1,2,3$) $M_k\propto h_i\sigma_k$ and the other parties implementing $\sigma_k$ provided that outcome $k$ was obtained.
\item[(b)] States with $\h_{i}\perp\h_{j}$ and $\h_{k}=0$ for some $i\neq j\neq k\neq i$:

Take without loss of generality that $\h_{i}$ points in $z$ direction. Then, the state can be obtained from $\g_{i}=\h_{i}$ and $\g_{j}=\g_{k}=0$ by party $j$ implementing the two-outcome POVM $\{M_1,M_2\}\propto \{h_{j},h_{j}\sigma_z\}$ and the other parties implementing $\sigma_z$ in case of the second outcome.
\item[(c)] States with $\h_{i}\not\perp\h_{j}$ and $\h_{k}=0$ for some $i\neq j\neq k\neq i$:

Again, take without loss of generality that $\h_{i}$ points in $z$ direction. Then, the state can be obtained from a state of the form (ii) with $\g_{i}=\h_{i}$, $\g_{j}=(0,0,\bar{h}_{j}^3)$ and $\g_{k}=0$. It suffices that party $j$ implements the two-outcome POVM $\{M_1,M_2\}\propto \{h_{j}(g_{j})^{-1},h_{j}\sigma_z(g_{j})^{-1}\}$ and the other parties implement $\sigma_z$ in case of the second outcome.
\item[(d)] States with $\h_{i}\parallel\h_{j}\perp\h_{k}$ for some $i\neq j\neq k\neq i$:

We consider again that $\h_{i}$ (and $\h_{j}$) points in $z$ direction. Then, the state can be obtained from a state of the form (ii) with $\g_{i}=\h_{i}$, $\g_{j}=\h_{j}$ and $\g_{k}=0$. For that, party $k$ implements the two-outcome POVM $\{M_1,M_2\}\propto \{h_{k},h_{k}\sigma_z\}$ and the other parties implement $\sigma_z$ in case of the second outcome.
\item[(e)] States with $\h_{i}\parallel\h_{j}\not\perp\h_{k}$ for some $i\neq j\neq k\neq i$:

It suffices again to consider that $\h_{i}$ and $\h_{j}$ point in $z$ direction. The state can be obtained from a state of the form (iii) with $\g_{i}=\h_{i}$, $\g_{j}=\h_{j}$ and $\g_{k}=(0,0,\bar{h}_{k}^3)$. As before, party $k$ just needs to implement the two-outcome POVM $´\{M_1,M_2\}\propto \{h_{k}(g_{k})^{-1},h_{k}\sigma_z(g_{k})^{-1}\}$ and the other parties implement $\sigma_z$ in case of the second outcome.

\end{itemize}
\end{proof}
\subsection{A double-degenerate non-zero eigenvalue and two different non-degenerate eigenvalues}\label{family3}

In this case we have $D=diag(a^2,d^2,c^2,c^2)$ where $a^2\neq d^2\neq c^2\neq a^2$ and $c \neq 0$ and, therefore, Eq.\ (\ref{sym}) can only hold if $q=1$.  Notice that our seed states can be chosen to correspond to $b=c$, meaning
\begin{align}
|\Psi\rangle&=\frac{a+d}{2}(|0000\rangle+|1111\rangle)+\frac{a-d}{2}(|0011\rangle+|1100\rangle)+c(|0101\rangle+|1010\rangle).\label{seedfamily3}
\end{align}
The symmetries of these states are of the form $(\sigma_x^{\otimes4})^m(P_{z}\otimes P_{z^{-1}}\otimes P_{z^{-1}}\otimes P_{z})$, where $P_z$ is defined as in Eq.\ (\ref{pz}) but with $z$ being now an arbitrary nonzero complex number and $m\in\{0,1\}$. It will be more convenient to actually choose as a seed state $\ket{\Psi_{adcc}}=\one\otimes\sigma_x\otimes\sigma_x\otimes\one|\Psi\rangle,$ for which any element of the symmetry group can be written as $S_{z,m}=\otimes_i s_i^{z,m}=P_{z}^{\otimes4}X^m$ where $X=\sigma_x^{\otimes 4}$ and $m\in\{0,1\}$ \footnote{We have used here that $P_z\sigma_x=\sigma_xP_{z^{-1}}$.}. In the following, we will use the notation $G_i=\one/2+\sum_{j=1}^3 \bar{g}_i^j\sigma_j$ and similarly for $H_i$, as well as $\textrm{arg}(\bar{g}_j^1+i \bar{g}_j^2)=\phi_j$ with $\phi_j\in (-\pi,\pi]$ (in case $\bar{g}_j^1=\bar{g}_j^2=0$ we set $\phi_j=0$). Furthermore, we will use the notation $(g_i^{x})^\dagger g_i^{x}=\one/2+\bar{g}_i^1\sigma_x$.
We will also make use of the notation $g_i^{v,w}$ for $g_i$ in case $\textrm{tr}[\sigma_u(g_i^{v,w})^\dagger g_i^{v,w}]=0$ for $\{u,v,w\}=\{x,y,z\}$.\\
To fix the standard form, we will use that one can choose $|z|$ such that  $P_z^\dag G_iP_z\in\textrm{span}\{\one,\sigma_x,\sigma_y\}$ and $\alpha$ such that $Z(-\alpha) G_iZ(\alpha)\in\textrm{span}\{\one,\sigma_x,\sigma_z\}$. One might also choose $z$  such that $P_z^\dag G_iP_z\in\textrm{span}\{\one,\sigma_x\}$  for some $i$. Depending on the properties of the SLOCC operators, this symmetry allows to define the standard form as follows:
\begin{itemize}
\item $g_1^x\otimes g_2\otimes g_3\otimes g_4\ket{\Psi_{adcc}}$ (for $g_1$ diagonal, we choose  $ \phi_j=0$, where $j$ denotes the first party for which $g_j$ is non-diagonal) for states with $ \bar{g}_2^3\neq  \bar{g}_3^3$ and/or $ \bar{g}_3^3\neq  \bar{g}_4^3$ and $e^{i\phi_2}\neq \pm e^{i\phi_3}$  and/or $e^{i\phi_3}\neq \pm e^{i\phi_4}$ (excluding the case $e^{i\phi_j}=\pm e^{i\phi_k}$ for $\{j,k\}\in\{\{2,3\},\{2,4\},\{3,4\}\}$ and $\bar{g}_i^1=\bar{g}_i^2=0$ for $i\neq j,k$ and $i\in\{2,3,4\}$) or $\bar{g}_j^1=\bar{g}_j^2=0$ $\forall j\in\{2,3,4\}$,
\item $g_1^{x,z}\otimes g_2^{x,y}\otimes g_3^{x,y}\otimes g_4^{x,y}\ket{\Psi_{adcc}}$ (for $g_1$ diagonal, we choose  $ \phi_j=0$, where $j$ denotes the first party for which $g_j$ is non-diagonal) for states with $ \bar{g}_2^3= \bar{g}_3^3=  \bar{g}_4^3$ and $e^{i\phi_2}\neq \pm e^{i\phi_3}$ and/or $e^{i\phi_3}\neq \pm e^{i\phi_4}$ (excluding the cases $e^{i\phi_j}=\pm e^{i\phi_k}$ for $\{j,k\}\in\{\{2,3\},\{2,4\},\{3,4\}\}$ and $\bar{g}_i^1=\bar{g}_i^2=0$ for $i\neq j,k$ and $i\in\{2,3,4\}$ or $\bar{g}_j^1=\bar{g}_j^2=\bar{g}_k^1=\bar{g}_k^2=0$ for $\{j,k\}\in\{\{2,3\},\{2,4\},\{3,4\}\}$ and $\bar{g}_i^1\neq 0$ and/or $\bar{g}_i^2\neq 0$ for $i\neq j,k$ and $i\in\{2,3,4\}$) or $\bar{g}_j^1=\bar{g}_j^2=0$ $\forall j\in\{2,3,4\}$,
\item  $g_1^{x,y}\otimes g_2^{x,z}\otimes g_3^{x,z}\otimes g_4^{x,z}\ket{\Psi_{adcc}}$  for $ \bar{g}_2^3\neq  \bar{g}_3^3$ and/or $ \bar{g}_3^3\neq  \bar{g}_4^3$ and $e^{i\phi_2}=\pm e^{i\phi_3}$ and $e^{i\phi_2}=\pm e^{i\phi_4}$ (excluding the case $\bar{g}_j^1=\bar{g}_j^2=0$ $\forall j\in\{2,3,4\}$) or $e^{i\phi_j}=\pm e^{i\phi_k}$ for $\{j,k\}\in\{\{2,3\},\{2,4\},\{3,4\}\}$ and $\bar{g}_i^1=\bar{g}_i^2=0$ for $i\neq j,k$ and $i\in\{2,3,4\}$ and
\item $g_1\otimes g_2^{x}\otimes g_3^{x}\otimes g_4^{x}\ket{\Psi_{adcc}}$ for  $ \bar{g}_2^3= \bar{g}_3^3=  \bar{g}_4^3$ and  $e^{i\phi_2}=\pm e^{i\phi_3}$ and $e^{i\phi_2}=\pm e^{i\phi_4}$ or $e^{i\phi_j}=\pm e^{i\phi_k}$ for $\{j,k\}\in\{\{2,3\},\{2,4\},\{3,4\}\}$ and $\bar{g}_i^1=\bar{g}_i^2=0$ for $i\neq j,k$ and $i\in\{2,3,4\}$ or $\bar{g}_j^1=\bar{g}_j^2=\bar{g}_k^1=\bar{g}_k^2=0$ for $\{j,k\}\in\{\{2,3\},\{2,4\},\{3,4\}\}$ (excluding the case $\bar{g}_j^1=\bar{g}_j^2=0$ $\forall j\in\{2,3,4\}$).
\end{itemize}
For all these different cases we choose  $\bar{g}_j^1>0$, where we denote here by j the first party for which $\bar{g}_j^1\neq 0$.
Till now we only used the symmetry $P_z^{\otimes 4}$. Additionally making use of the symmetry $\sigma_x^{\otimes 4}$ allows to choose $\bar{g}_j^3>0$, where j denotes the first party for which $\bar{g}_j^3\neq 0$.
Using the standard form we will prove the following lemmata.
\begin{lemma}\label{lemmaadcc} The only states in the SLOCC classes $G_{abcd}$ where  $a^2\neq d^2\neq c^2\neq a^2$, $c^2= b^2$ and $c\neq 0$ that are reachable via LOCC are given by $h_i\otimes h_j^x\otimes h_k^x\otimes h_l^x \ket{\Psi_{adcc}}$ where $h_i\neq h_i^x$   and $h_i^{x,z}\otimes d_j\otimes d_k\otimes d_l \ket{\Psi_{adcc}}$ where $h_i^{x,z}\neq d_i$  for $\{i,j,k,l\} = \{1,2,3,4\}$.
Moreover, the only states  that are convertible via LOCC are of the form of the reachable states presented above without the condition that $h_i\neq h_i^x$ and $h_i^{x,z}\neq d_i$ .
\end{lemma}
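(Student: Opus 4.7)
\textbf{Proof plan for Lemma \ref{lemmaadcc}.} The plan is to apply the SEP criterion Eq.\ (\ref{EqSep}) with the symmetry group $\{S_{z,m}=P_z^{\otimes 4}(\sigma_x^{\otimes 4})^m:z\in\C\setminus\{0\},\,m\in\{0,1\}\}$. Since the group is continuous in $z$, the finite sum in Eq.\ (\ref{EqSep}) becomes an integral over a probability measure $d\mu(z,m)$. Writing $z=\sqrt{\rho}\,e^{i\alpha/2}$, a direct computation shows that $P_z^\dagger\sigma_x P_z$ and $P_z^\dagger\sigma_y P_z$ remain in $\textrm{span}\{\sigma_x,\sigma_y\}$ (with coefficients depending only on $\alpha$), whereas $P_z^\dagger\one P_z$ and $P_z^\dagger\sigma_z P_z$ remain in $\textrm{span}\{\one,\sigma_z\}$ (with coefficients depending only on $\rho$); conjugation by $\sigma_x$ flips the signs of $\sigma_y$ and $\sigma_z$. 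Thus the averaging map $\mathcal{E}=\int d\mu\,(S_{z,m})^\dagger(\cdot)S_{z,m}$ preserves the splitting of $M_2(\C)$ into its ``off-diagonal'' block $\textrm{span}\{\sigma_x,\sigma_y\}$ and its ``diagonal'' block $\textrm{span}\{\one,\sigma_z\}$ on every factor.

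First I would take partial traces of Eq.\ (\ref{EqSep}) over any three factors to extract the single-party constraints $\int d\mu\,\sigma_x^m P_z^\dagger H_i P_z\sigma_x^m = rG_i$, which involve the same map on every party. Then I would compare Pauli coefficients in the full tensor equation and show that cross terms such as $\sigma_z\otimes\sigma_z\otimes\sigma_x\otimes\sigma_x$ cannot match the product-form RHS $r\bigotimes G_i$ unless $d\mu$ degenerates in a very specific way. The ensuing case analysis will force either (a) at least three of the $H_i$ belong to $\textrm{span}\{\one,\sigma_x\}$ (hence of the form $h_i^x$), with the remaining party arbitrary, or (b) at least three of the $H_i$ are diagonal, with the remaining party lying in $\textrm{span}\{\one,\sigma_x,\sigma_z\}$ (hence of the form $h_i^{x,z}$). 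These are precisely the two families claimed by the lemma.

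For the sufficient direction of reachability, I would display explicit two-outcome LOCC protocols. In family~(a), use the finite symmetry $\sigma_x^{\otimes 4}$: party~$i$ implements the POVM $\{\sqrt{p}\,h_i g_i^{-1},\sqrt{1-p}\,h_i\sigma_x g_i^{-1}\}$ and the other parties apply $\sigma_x$ on the second outcome. Since $h_j^x$ commutes with $\sigma_x$ for $j\neq i$, the completeness relation collapses to a single-party equation on party~$i$ that can be solved for a suitable $G_i$ in standard form and a $p\in(0,1)$ exactly when $h_i\neq h_i^x$. In family~(b), use the continuous symmetry: since any diagonal $d_j$ commutes with every $P_z$, only party~$i$ is effectively measured, and the completeness relation can be satisfied by a two-outcome POVM built from two values $z_\pm$ whenever $h_i^{x,z}\neq d_i$.

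Finally, the convertible states are characterised by running the analogous SEP analysis applied to outgoing transformations from $g\ket{\Psi_{adcc}}$, which yields the same structural dichotomy: a convertible state must have three parties in $\textrm{span}\{\one,\sigma_x\}$ or three parties diagonal. The exclusions $h_i\neq h_i^x$ and $h_i^{x,z}\neq d_i$ are now dropped because convertibility only asks for the existence of some non-LU target, which is always available via the same two-outcome protocols by a small perturbation of $H_i$. The main obstacle will be the intricate case analysis induced by the continuous symmetry combined with the many sub-cases of the standard form preceding the lemma: handling the edge cases $e^{i\phi_j}=\pm e^{i\phi_k}$ and $\bar g_j^3=\bar g_k^3$ (which determine which standard form applies), and ruling out spurious SEP transformations between inequivalent sub-families, will require careful Pauli-coefficient bookkeeping analogous to Eq.\ (\ref{eq_Pauli}).
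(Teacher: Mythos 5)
There is a genuine gap at the first step of your necessity argument. You propose to ``take partial traces of Eq.\ (\ref{EqSep}) over any three factors to extract the single-party constraints $\int d\mu\,\sigma_x^m P_z^\dagger H_i P_z\sigma_x^m = rG_i$.'' This is only valid once the measure is supported on \emph{unitary} symmetries: for $|z|\neq 1$ the partial trace of $S_{z,m}^\dagger H S_{z,m}$ over the parties $j\neq i$ produces weight factors $\prod_{j\neq i}\tr\bigl(\sigma_x^m P_z^\dagger H_j P_z\sigma_x^m\bigr)\neq 1$ inside the integral, so you do not get ``the same map on every party.'' Establishing that $p_{z,m}=0$ for $|z|\neq1$ is in fact the central technical burden of the paper's proof: it is done by combining specific diagonal matrix elements $\ket{jklw}\bra{jklw}$ of Eq.\ (\ref{EqSep}), separately in the two standard-form cases $\bar h_1^3=0$ and $\bar h_2^3=\bar h_3^3=\bar h_4^3=0$, until one extracts $\sum_{z,m}p_{z,m}(|z|^2-1/|z|^2)^2=0$. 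Your proposal gives no mechanism for this elimination, and the same issue infects your family-(b) sufficiency protocol: a diagonal $d_j$ commutes with $P_z$ as a matrix, but $P_z^\dagger D_jP_z\neq D_j$ for $|z|\neq1$, so the completeness relation does not collapse to a single-party equation unless you restrict to $|z_\pm|=1$ (the paper simply uses $\{\one^{\otimes4},\sigma_z^{\otimes4}\}$, which lies in the group since $P_i\propto\sigma_z$).

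The second gap is in how the dichotomy is actually derived. Once only unitary symmetries survive, ``Pauli-coefficient bookkeeping'' alone does not suffice, because the constraints one obtains are preservation of the pairwise scalar products $\h_i\cdot\h_j=\g_i\cdot\g_j$, the triple product, and monotonicity $|\h_i|\geq|\g_i|$ (Observation \ref{obsmon}); a uniform rotation of all Bloch vectors satisfies every one of these, so it cannot be excluded by comparing coefficients. The paper needs two extra ingredients that are absent from your plan: the geometric Lemmas \ref{lemcoplanar1} and \ref{lemcoplanar2} on coplanar vector triplets (imported from the triple-degenerate-eigenvalue section via Lemma \ref{thfamily1}), and the fixed-point theorem for unital maps (\cite{kribs}) to show that a transformation realizing a uniform rotation forces $G=S_j^\dagger HS_j$, i.e.\ LU-equivalence. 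Without replacements for these, the claim that ``the ensuing case analysis will force (a) or (b)'' is an assertion, not an argument. Your sufficiency protocols for family (a) and your treatment of convertibility are essentially correct and match the paper's.
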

That is the convertible states are given by $g_i\otimes g_j^x\otimes g_k^x\otimes g_l^x \ket{\Psi_{adcc}}$ where $g_i$ is arbitrary and $g_i^{x,z}\otimes d_j\otimes d_k\otimes d_l \ket{\Psi_{adcc}}$  where $\bar{g}_i^1, \bar{g}_i^3$ arbitrary for $\{i,j,k,l\} = \{1,2,3,4\}$.

\begin{proof}
Using the standard form given above we will first prove that for any possible transformation it must hold that $p_{z,m}=0$ for $|z|\neq 1$, i.e. only unitary symmetries can be used for transformations. We will first show this for states with $\bar{h}_1^3=0$ and then for states with $\bar{h}_2^3=\bar{h}_3^3=\bar{h}_4^3=0$. Note that according to our choice of the standard form either $\bar{h}_1^3=0$ or $\bar{h}_2^3=\bar{h}_3^3=\bar{h}_4^3=0$ has to be fulfilled. Then we will use an argument analogously to the one in the proof of Lemma \ref{thfamily1} to show that the reachable states are given by the ones of Lemma \ref{lemmaadcc}.\\
So let us first show that (with the definition of our standard form) only unitary symmetries allow for SEP transformations.
Inserting the symmetries $S_{z,m}=[P_z(\sigma_x)^m]^{\otimes 4}$ (with $z\in\C\backslash 0$ and $m\in\{0,1\}$) in Eq. (\ref{EqSep}) one obtains
\begin{eqnarray}\label{Eqsepadcc}&\sum_{z,m} p_{z,m} \otimes_j \left(
    \begin{array}{cccc}
       & (1/2+(-1)^m\bar{h}_j^{3}) |z|^{2(-1)^m}&  (\bar{h}_j^1-i (-1)^m \bar{h}_j^2) e^{-i2\phi(-1)^m} \\
     &(\bar{h}_j^1+i(-1)^m  \bar{h}_j^2) e^{i2\phi(-1)^m}  & (1/2+(-1)^{m+1}\bar{h}_j^{3})/|z|^{2(-1)^m}\\
     \end{array}
  \right)=
r \otimes_jG_j,\end{eqnarray}
where $z=|z|e^{i\phi}$. Considering the matrix elements $\ket{jklw}\bra{jklw}$ of this equation for $\{j,k,l,w\}=\{0,0,1,1\}$ leads to
\begin{eqnarray}\label{Eqzxadcc}
&\tilde{p}_0(\frac{1}{2}+(-1)^j\bar{h}_1^3)(\frac{1}{2}+(-1)^k\bar{h}_2^3)
(\frac{1}{2}+(-1)^l\bar{h}_3^3)(\frac{1}{2}+(-1)^w\bar{h}_4^3)+\\ \nonumber
&\tilde{p}_1(\frac{1}{2}+(-1)^{j+1}\bar{h}_1^3)(\frac{1}{2}+(-1)^{k+1}\bar{h}_2^3)
(\frac{1}{2}+(-1)^{l+1}\bar{h}_3^3)(\frac{1}{2}+(-1)^{w+1}\bar{h}_4^3)
\\ \nonumber
&=r(\frac{1}{2}+(-1)^{j}\bar{g}_1^3)(\frac{1}{2}+(-1)^{k}\bar{g}_2^3)
(\frac{1}{2}+(-1)^{l}\bar{g}_3^3)(\frac{1}{2}+(-1)^{w}\bar{g}_4^3)
\end{eqnarray}
where $\sum_{z}p_{z,m}=\tilde{p}_m$.
We will first consider the case $\bar{h}_1^3=0$. Thus, one obtains from Eq. (\ref{Eqzxadcc}) that
\begin{eqnarray}\label{1adcc}
&1/2[\tilde{p}_0(\frac{1}{2}+\bar{h}_i^3)
(\frac{1}{2}+\bar{h}_s^3)(\frac{1}{2}-\bar{h}_t^3)+\tilde{p}_1(\frac{1}{2}-\bar{h}_i^3)
(\frac{1}{2}-\bar{h}_s^3)(\frac{1}{2}+\bar{h}_t^3)]
=r(\frac{1}{2}-\bar{g}_1^3)(\frac{1}{2}+\bar{g}_i^3)
(\frac{1}{2}+\bar{g}_s^3)(\frac{1}{2}-\bar{g}_t^3)
\end{eqnarray}
 and
\begin{eqnarray}\label{2adcc}
&1/2[\tilde{p}_0(\frac{1}{2}-\bar{h}_i^3)
(\frac{1}{2}-\bar{h}_s^3)(\frac{1}{2}+\bar{h}_t^3)+\tilde{p}_1(\frac{1}{2}+\bar{h}_i^3)
(\frac{1}{2}+\bar{h}_s^3)(\frac{1}{2}-\bar{h}_t^3)]
=r(\frac{1}{2}+\bar{g}_1^3)(\frac{1}{2}-\bar{g}_i^3)
(\frac{1}{2}-\bar{g}_s^3)(\frac{1}{2}+\bar{g}_t^3),
\end{eqnarray}
where $\{i,s,t\}=\{2,3,4\}$.
Adding up these two equations for the same choice of $i,s$ and $t$ and using that $\tilde{p}_0+\tilde{p}_1=1$ leads to
\begin{eqnarray}
&1/2[(\frac{1}{2}+\bar{h}_i^3)
(\frac{1}{2}+\bar{h}_s^3)(\frac{1}{2}-\bar{h}_t^3)+(\frac{1}{2}-\bar{h}_i^3)
(\frac{1}{2}-\bar{h}_s^3)(\frac{1}{2}+\bar{h}_t^3)]
\\ \nonumber
&=r[(\frac{1}{2}-\bar{g}_1^3)(\frac{1}{2}+\bar{g}_i^3)
(\frac{1}{2}+\bar{g}_s^3)(\frac{1}{2}-\bar{g}_t^3)+(\frac{1}{2}+\bar{g}_1^3)(\frac{1}{2}-\bar{g}_i^3)
(\frac{1}{2}-\bar{g}_s^3)(\frac{1}{2}+\bar{g}_t^3)].
\end{eqnarray}
By adding up this equation for some choice of $i,s$ and $t$ and for the same choice but permuting $s$ and $t$ one obtains the relation
\begin{eqnarray}\label{reladcc}
&1/2[(\frac{1}{2}+\bar{h}_s^3)(\frac{1}{2}-\bar{h}_t^3)+(\frac{1}{2}-\bar{h}_s^3)(\frac{1}{2}+\bar{h}_t^3)]
\\ \nonumber
&=r[(\frac{1}{2}-\bar{g}_1^3)(\frac{1}{2}+\bar{g}_i^3)
+(\frac{1}{2}+\bar{g}_1^3)(\frac{1}{2}-\bar{g}_i^3)][(\frac{1}{2}+\bar{g}_s^3)(\frac{1}{2}-\bar{g}_t^3)+
(\frac{1}{2}-\bar{g}_s^3)(\frac{1}{2}+\bar{g}_t^3)].
\end{eqnarray}
Considering the matrix elements $\ket{jklw}\bra{jklw}$ of Eq. (\ref{Eqsepadcc}) for $j=0$ and $\{k,l,w\}=\{0,0,1\}$ and $j=1$ and $\{k,l,w\}=\{0,1,1\}$ leads to
\begin{eqnarray}
&1/2[\sum_z p_{z,0} |z|^4(\frac{1}{2}+\bar{h}_i^3)
(\frac{1}{2}+\bar{h}_s^3)(\frac{1}{2}-\bar{h}_t^3)+\sum_z p_{z,1}1/|z|^4(\frac{1}{2}-\bar{h}_i^3)
(\frac{1}{2}-\bar{h}_s^3)(\frac{1}{2}+\bar{h}_t^3)]
\\\nonumber & =r(\frac{1}{2}+\bar{g}_1^3)(\frac{1}{2}+\bar{g}_i^3)
(\frac{1}{2}+\bar{g}_s^3)(\frac{1}{2}-\bar{g}_t^3)
\end{eqnarray}
 and
\begin{eqnarray}
&1/2[\sum_z p_{z,0}1/|z|^4(\frac{1}{2}-\bar{h}_i^3)
(\frac{1}{2}-\bar{h}_s^3)(\frac{1}{2}+\bar{h}_t^3)+\sum_z p_{z,1} |z|^4(\frac{1}{2}+\bar{h}_i^3)
(\frac{1}{2}+\bar{h}_s^3)(\frac{1}{2}-\bar{h}_t^3)]
\\\nonumber & =r(\frac{1}{2}-\bar{g}_1^3)(\frac{1}{2}-\bar{g}_i^3)
(\frac{1}{2}-\bar{g}_s^3)(\frac{1}{2}+\bar{g}_t^3),
\end{eqnarray}
where $\{i,s,t\}=\{2,3,4\}$.
Adding up these two equations for the same choice of $i,s$ and $t$ leads to
\begin{eqnarray}
&1/2[\sum_{z,m} p_{z,m}|z|^4(\frac{1}{2}+\bar{h}_i^3)
(\frac{1}{2}+\bar{h}_s^3)(\frac{1}{2}-\bar{h}_t^3)+\sum_{z,m} p_{z,m}1/|z|^4(\frac{1}{2}-\bar{h}_i^3)
(\frac{1}{2}-\bar{h}_s^3)(\frac{1}{2}+\bar{h}_t^3)]
\\ \nonumber
&=r[(\frac{1}{2}+\bar{g}_1^3)(\frac{1}{2}+\bar{g}_i^3)
(\frac{1}{2}+\bar{g}_s^3)(\frac{1}{2}-\bar{g}_t^3)+(\frac{1}{2}-\bar{g}_1^3)(\frac{1}{2}-\bar{g}_i^3)
(\frac{1}{2}-\bar{g}_s^3)(\frac{1}{2}+\bar{g}_t^3)].
\end{eqnarray}
Adding up this equation for some choice of $i,s$ and $t$ and for the same choice but permuting $s$ and $t$ one obtains that
\begin{eqnarray}
&1/2[\sum_{z,m} p_{z,m}|z|^4(\frac{1}{2}+\bar{h}_i^3)+\sum_{z,m} p_{z,m}1/|z|^4(\frac{1}{2}-\bar{h}_i^3)][(\frac{1}{2}+\bar{h}_s^3)(\frac{1}{2}-\bar{h}_t^3)+(\frac{1}{2}-\bar{h}_s^3)(\frac{1}{2}+\bar{h}_t^3)]
\\ \nonumber
&=r[(\frac{1}{2}+\bar{g}_1^3)(\frac{1}{2}+\bar{g}_i^3)
+(\frac{1}{2}-\bar{g}_1^3)(\frac{1}{2}-\bar{g}_i^3)][(\frac{1}{2}+\bar{g}_s^3)(\frac{1}{2}-\bar{g}_t^3)+
(\frac{1}{2}-\bar{g}_s^3)(\frac{1}{2}+\bar{g}_t^3)].
\end{eqnarray}
Using now Eq. (\ref{reladcc}), as well as the fact that according to the standard form either $\bar{g}_1^3$ or $\bar{g}_i^3$ has to be zero, one obtains that
\bea\label{adcc}
\sum_{z,m} p_{z,m}(|z|^4-1)(\frac{1}{2}+\bar{h}_i^3)
+\sum_{z,m} p_{z,m}(1/|z|^4-1)(\frac{1}{2}-\bar{h}_i^3)=0.
\eea
Note that this equation has to hold $\forall i\in\{2,3,4\}$.
Thus, either $p_{z,m}=0$ for $|z|\neq 1$ or
\bea
(\frac{1}{2}-\bar{h}_i^3)/(\frac{1}{2}+\bar{h}_i^3)=-\sum_{z,m} p_{z,m}(|z|^4-1)/[\sum_{z,m} p_{z,m}(1/|z|^4-1)]\,\,\,\,\forall i\in\{2,3,4\},
\eea
i.e. $\bar{h}_2^3=\bar{h}_3^3=\bar{h}_4^3$. According to the standard form the latter case corresponds to $\bar{h}_2^3=\bar{h}_3^3=\bar{h}_4^3=0$. It can be easily seen from Eq. (\ref{adcc}) that for $\bar{h}_i^3=0$ it has to hold that $p_{z,m}=0$ for $|z|\neq 1$. Thus, states in standard form with $\bar{h}_1^3=0$ can only possibly reached by using unitary symmetries.\\
In order to show that also for states with $\bar{h}_2^3=\bar{h}_3^3=\bar{h}_4^3=0$ it has to hold that $p_{z,m}=0$ for $|z|\neq 1$ consider Eq. (\ref{Eqzxadcc}) for $j=0$.
One obtains that \begin{eqnarray}
&\frac{1}{8}[\tilde{p}_0(\frac{1}{2}+\bar{h}_1^3)+\tilde{p}_1(\frac{1}{2}-\bar{h}_1^3)]
=r(\frac{1}{2}+\bar{g}_1^3)(\frac{1}{2}+\bar{g}_i^3)
(\frac{1}{2}-\bar{g}_s^3)(\frac{1}{2}-\bar{g}_t^3)
=r(\frac{1}{2}+\bar{g}_1^3)(\frac{1}{2}-\bar{g}_i^3)
(\frac{1}{2}+\bar{g}_s^3)(\frac{1}{2}-\bar{g}_t^3),
\end{eqnarray}
where $\{i,s,t\}=\{2,3,4\}$.
Thus, it has to hold that $(\frac{1}{2}+\bar{g}_i^3)/(\frac{1}{2}-\bar{g}_i^3)=(\frac{1}{2}+\bar{g}_s^3)/(\frac{1}{2}-\bar{g}_s^3)$  $\forall \{i,s\}\in\{\{2,3\},\{2,4\},\{3,4\}\}$. Hence, we have that $\bar{g}_i^3=\bar{g}_s^3$ must hold. This implies according to our standard form that $\bar{g}_j^3=0$ $\forall j\in\{2,3,4\}$. Using this result in Eq. (\ref{Eqzxadcc}) for $j\in\{0,1\}$, as well as for the matrix elements $\ket{jklw}\bra{jklw}$ of Eq. (\ref{Eqsepadcc}) for $\{j,k,l,w\}=\{0,0,0,1\}$ and $\{j,k,l,w\}=\{0,1,1,1\}$ one obtains that
\bea
\tilde{p}_0(\frac{1}{2}+\bar{h}_1^3)+\tilde{p}_1(\frac{1}{2}-\bar{h}_1^3)
=r(\frac{1}{2}+\bar{g}_1^3),
\eea
\bea
\tilde{p}_0(\frac{1}{2}-\bar{h}_1^3)+\tilde{p}_1(\frac{1}{2}+\bar{h}_1^3)
=r(\frac{1}{2}-\bar{g}_1^3),
\eea
\begin{eqnarray}
&\sum_z p_{z,0} |z|^{4(-1)^v}(\frac{1}{2}+\bar{h}_1^3)
+\sum_z p_{z,1}1/|z|^{4(-1)^v}(\frac{1}{2}-\bar{h}_1^3)=r(\frac{1}{2}+\bar{g}_1^3)
\end{eqnarray}
and
\begin{eqnarray}
&\sum_z p_{z,0} |z|^{4(-1)^t}(\frac{1}{2}-\bar{h}_1^3)
+\sum_z p_{z,1}1/|z|^{4(-1)^t}(\frac{1}{2}+\bar{h}_1^3)=r(\frac{1}{2}-\bar{g}_1^3),
\eea
where $v,t\in\{0,1\}$.
Adding up the first two equations leads to $r=1$, whereas adding up the last two equations with $v+t=1$ leads to
\bea
\sum_{z,m} p_{z,m} |z|^{4(-1)^j}(\frac{1}{2}+\bar{h}_1^3)
+\sum_{z,m} p_{z,m}1/|z|^{4(-1)^j}(\frac{1}{2}-\bar{h}_1^3)
=1,
\eea
where $j\in\{0,1\}$.
Finally by adding up this equation for $j=0$ and $j=1$ one obtains that
\bea
\sum_{z,m} p_{z,m} (|z|^{2}-1/|z|^{2})^2=0
\eea
and therefore also for $\bar{h}_2^3=\bar{h}_3^3=\bar{h}_4^3=0$ it has to hold that $p_{z,m}=0$ for $|z|\neq 1$.
Thus, using the standard form as defined above only unitary symmetries can be used for SEP transformations in these SLOCC classes.\\
The fact that only unitary symmetries need to be considered implies two useful properties. One is that it must hold that $|\textbf{h}_i|\geq|\textbf{g}_i|$ $\forall i$ (Observation \ref{obsmon}) and the other is that we can trace out parties in Eq.\ (\ref{EqSep}) obtaining
\begin{equation}\label{EqSephere}
\sum_{z,m}p_{z,m}\bigotimes_{i\in I}(s_i^{z,m})^\dag H_i s_i^{z,m}=\bigotimes_{i\in I}G_i
\end{equation}
for whatever subset of parties $I$. Furthermore, as the symmetries always have the same unitary at every position it can be easily shown that Eqs.\ (\ref{scalar}) and (\ref{vector}) have to hold \footnote{\label{fn1} This can be easily seen as follows. Tracing Eq. (\ref{EqSep}) over two parties and considering  the projection onto $\ket{\Psi^-}$ of this equation leads to Eq. (\ref{scalar}). Moreover, multiplying Eq. (\ref{EqSep}) with $(P_{123}-P_{321})\otimes\one$ where $P_{jkl}$ denotes the three-party permutation in cyclic notation and considering the trace of this equation leads to Eq. (\ref{vector}). Analogously one obtains Eq. (\ref{vector}) for all other triplets of vectors $\{\textbf{g}_i\}$ and $\{\textbf{h}_i\}$.} and therefore the results of the previous section can be applied here provided it also holds in this case that uniform rotations of the vectors cannot be achieved by a nontrivial SEP transformation. To see that this is indeed the case, notice that, in case that would be possible, from Eq.\ (\ref{EqSephere}) it should hold that
\begin{equation}\label{unirot}
G_i=  U H_i U^\dagger=\sum_j p_j s_j^\dagger H_i s_j,
\end{equation}
for every party $i$ and for some fixed unitary $U$. Note that in order to simplify the notation we denote here the unitary symmetries by $S_j= s_j^{\otimes 4}$. From Eq. (\ref{unirot}) it follows that  $H_i =\sum_j p_j U^\dagger s_j^\dagger H_i s_j U$. In other words, $H_i$ is $\forall i$ a fixed point of the unital map with Kraus operators $\{\sqrt{p_j}U^\dagger s_j^\dagger\}$. It is shown in \cite{kribs} that for this to be the case $H_i$ has to commute with all Kraus operators. This implies that $H_i = U^\dagger s_j^\dagger H_i s_j U$ $\forall j$ such that $p_j\neq 0$ and consequently $s_j^\dagger H_i s_j=s_k^\dagger H_i s_k$ whenever $p_j,p_k\neq 0$. Inserting this back into Eq.\ (\ref{unirot}) and using that it has to hold for all $i$, we have $G=  S_j^\dagger H S_j$ (where $S_j$ is some symmetry acting on all parties such that the corresponding probability is non-zero). Hence, the states corresponding to $G$ and $H$, i.\ e.\ any pair of states for which a SEP transformation is possible and whose vectors are related by a uniform rotation, are LU-equivalent (i.\ e.\ they have the same standard form).

Thus, in conclusion, it not only needs to hold that $|\textbf{h}_i|\geq|\textbf{g}_i|$ $\forall i$, but also that the inequality is strict for at least one value of $i$. Moreover, every possible triplet of the vectors $\{\textbf{g}_i\}$ and $\{\textbf{h}_i\}$ must obey the conditions of  Eqs.\ (\ref{scalar}) and (\ref{vector}) (see \footnotemark[\value{footnote}]) and Lemma \ref{thfamily1}. This immediately shows that all states with vectors lying in more than two different directions (excluding here and in the following vanishing vectors) must be isolated. To finish the proof, it thus remains to show that, in order for LOCC conversions to be possible, in the case of states whose vectors lie in at most two different directions, these cannot be arbitrary but have to fulfill the premises of Lemma \ref{lemmaadcc}. To study this remaining case we distinguish four different possibilities: \textit{i)} There are two non-parallel vectors and two null vectors, \textit{ii)} There is more than one vector lying in one direction and one vector lying in a different direction, \textit{iii)} There are two vectors lying in one direction and two vectors lying in a different one. \textit{iv)} All nonzero vectors point in the same direction. As in previous proofs, we will first identify all isolated states among these classes and we will later provide explicit LOCC protocols for the remaining convertible and reachable states.

Considering first case $i)$, notice that the standard form in this case is such that one of the nonzero vectors points in the $x$ direction. Hence, these states are claimed to be reachable in the statement of the lemma and we will see so at the end of the proof by considering explicit protocols.
Moving now to case $ii)$, notice that the standard form for these states is such that the set with more than one vector in the same direction must lie in the $xz$ plane. We assume now that this is not the $x$ or $z$ direction and show that these states are isolated. Equating the appropriate entries of the matrix equation (\ref{EqSephere}) it is easy to see that
\begin{equation}
(\tilde{p}_0-\tilde{p}_1)\bar{h}_i^{3}=\bar{g}_i^{3},\quad (\tilde{p}_0-\tilde{p}_1)\bar{h}_j^{3}=\bar{g}_j^{3},\quad \bar{h}_i^{3}\bar{h}_j^{3}=\bar{g}_i^{3}\bar{g}_j^{3},\label{zcomps}
\end{equation}
\begin{align}
\bar{h}_i^{1}\bar{h}_j^{1}+\bar{h}_i^{2}\bar{h}_j^{2}=\bar{g}_i^{1}\bar{g}_j^{1}+\bar{g}_i^{2}\bar{g}_j^{2},\label{escxy}\\
(\tilde{p}_0-\tilde{p}_1)(\bar{h}_i^{2}\bar{h}_j^{1}-\bar{h}_i^{1}\bar{h}_j^{2})=\bar{g}_i^{2}\bar{g}_j^{1}-\bar{g}_i^{1}\bar{g}_j^{2}.\label{vecxy}
\end{align}
for any different parties $i$ and $j$ and where we use the notation $\tilde{p}_m=\sum_{z}p_{z,m}$. Since there are at least two parties with nonzero components in the $z$ directions, Eqs.\ (\ref{zcomps}) immediately imply that $\bar{h}_i^{3}=\bar{g}_i^{3}$ $\forall i$ and that $\tilde{p}_0-\tilde{p}_1=1$. On the other hand, since there are as well at least two parties with nonzero components in the $x$ direction, Eqs.\ (\ref{escxy})-(\ref{vecxy}) are not trivially fulfilled either. In the same spirit as in the proof of Lemma \ref{thfamily1}, Eq.\ (\ref{escxy}) shows that the scalar product in the $xy$ plane needs to be preserved while Eq.\ (\ref{vecxy}) with $\tilde{p}_0-\tilde{p}_1=1$ shows the same for the cross product in the $xy$ plane. Thus, analogously to the previous section and using again the monotonicity of the length of the vectors, this shows that the $x$ and $y$ coordinates of the vectors must stay the same too (recall that uniform rotations are not possible). Thus, these states are neither reachable nor convertible arriving at the desired conclusion that they are isolated.

Let us consider now case $iii)$ and see that all these states are isolated. They are clearly not reachable because there cannot exist LU-inequivalent states with the same pairwise scalar products and all vectors of smaller or equal size. Moreover, if one of the pairs of parallel vectors has nonvanishing $z$ component, the same reasoning as in case $ii)$ applies, so it only remains to see that when both pairs of parallel vectors lie in the $xy$ plane the states are not convertible. On the one hand, suppose that a transformation from these states to a state with a nonzero $z$ component for one party would be possible. In this case, Eq.\ (\ref{zcomps}) imposes that $\tilde{p}_0-\tilde{p}_1=0$; however, this leads to a contradiction with Eq.\ (\ref{vecxy}) when considered for a pair of originally non-parallel vectors. On the other hand, suppose now that a transformation from these states to a state with zero $z$ components for all parties would be possible. For the reason above, it must hold that $\tilde{p}_0-\tilde{p}_1\neq0$ and, hence, Eq.\ (\ref{vecxy}) tells us that parallel vectors must remain parallel. This together with Eq.\ (\ref{escxy}) implies that the sizes of all vectors must remain the same and, therefore, no conversion is possible in this case either.

Let us finally consider case $iv)$. It can be easily seen that the seed state is in the MES (as $|\textbf{h}_i|\geq|\textbf{g}_i|$ $\forall i$) and that it is convertible. States with exactly one non-zero vector are of the form given in Lemma \ref{lemmaadcc}. Thus, it only remains to show that states with more than one non-zero vector and all nonzero vectors in the same direction are not reachable (and are, hence, in the MES) which can be easily seen as there cannot exist LU-inequivalent states preserving the scalar products with all vectors of smaller or equal size.  In this case due to our choice of the standard form the vectors have to lie in the $xz$ plane. In case the vectors do not point in $x$ or $z$ direction we have already proven for case $ii)$ that no conversion is possible and therefore the states are isolated. If all vectors point in $x$ or in $z$ direction the corresponding states are convertible (see proof of Lemma \ref{MESfamily1}).\\
Thus, we finally just need to check that the convertible states in the MES allow to obtain all reachable states. However, as accounted in Lemma \ref{MESfamily1}, the possible transformations in the previous section only relied on the symmetries $\sigma_i^{\otimes4}$ which are also in the symmetry group of this class, and hence repeating the same protocols as therein leads to the desired transformations.
\end{proof}

Thus, the only convertible states in $MES_4$ in these classes are the seed states and those with at least two nonzero vectors $\textbf{g}_i$ and all of them being parallel and pointing in the $x$ or $z$ direction. Again, almost all states in these classes are isolated.

\subsection{Two non-zero double-degenerate eigenvalues}\label{family2}

We consider here the case $D=\textrm{diag}(a^2,a^2,c^2,c^2)$  where $a^2\neq c^2$ and $a ,c \neq 0$. We have to distinguish two cases here as $a^2=-c^2$ is a cyclic case.

\subsubsection{The case $a^2\neq \pm c^2$ and $a ,c \neq 0$\label{secaacc}}
In this case $D$ can be chosen as in Eq. (\ref{eigenvalues}) with $a=d$ and $b=c$. Hence, the seed states can be chosen to be the two-parameter family
\begin{equation}
|\Psi\rangle=a(|0000\rangle+|1111\rangle)+c(|0101\rangle+|1010\rangle).
\end{equation}
The elements of the symmetry group for these states can be shown to be of the form $S=P_{z_1}\otimes P_{z_2}\otimes P_{z_1^{-1}}\otimes P_{z_2^{-1}}$ together with $\sigma_x^{\otimes 4}$. As in the previous classes, we can redefine the seed state to be $|\Psi_{aacc}\rangle=\one\otimes\one\otimes\sigma_x\otimes\sigma_x|\Psi\rangle$, so that the symmetries are $S_{z_1,z_2,m}=X^m(P_{z_1}\otimes P_{z_2})^{\otimes 2}$ with $m\in\{0,1\}$, $z_1,z_2\in\C\backslash0$ and $X=\sigma_x^{\otimes 4}$. Using again that for any $G_i$ there exists a value of $z$ such that $P_z^\dag G_iP_z\in\textrm{span}\{\one,\sigma_x\}$, this symmetry allows to choose the following standard form $g_1^x\otimes g_2^x\otimes g_3\otimes g_4|\Psi_{aacc}\rangle$, where $g_1^2\geq 0$  and $g_2^2\geq 0$. In case $g_1^2=0$ ($g_2^2=0$) we choose $g_3^2\geq 0$ ($g_4^2\geq 0$) respectively. Note that till now we only used the symmetry $(P_{z_1}\otimes P_{z_2})^{\otimes 2}$ to  identify the standard form. In order to make the standard form unambiguous one would also need to take into account the symmetry $\sigma_x^{\otimes 4}$. However, in order to make the proof more readable we will not use this symmetry for the definition of the standard form.
In the following lemma we state which states in this SLOCC class can be reached via LOCC.
\begin{lemma}\label{lemmaaacc} The only states in the SLOCC classes $G_{abcd}$ where  $a^2= d^2\neq\pm c^2$, $c^2= b^2$ and $a, c\neq 0 $ that are reachable via LOCC are given by  $\otimes h_i |\Psi_{aacc}\rangle$ where $\otimes H_i$ obeys the following condition. There exists a unitary symmetry $S_{z_1,z_2,m}=\otimes_i s_i^{z_1,z_2,m}\in S(\Psi_{aacc})$ such that exactly
three out of the four operators $H_i$ commute with the corresponding
operator $s_i^{z_1,z_2,m}$ and one operator $H_j$ does not commute with $s_j^{z_1,z_2,m}$.
\end{lemma}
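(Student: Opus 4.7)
\emph{Proof plan.} The strategy closely parallels that of Lemma~\ref{lemmaadcc}: first I would reduce the problem to SEP transformations that use only unitary elements of the symmetry group, then combine Observation~\ref{obsmon} with the Pauli-trace identity in Eq.~(\ref{eq_Pauli}) to obtain the claimed commutation condition, and finally exhibit explicit two-outcome LOCC protocols realizing every transformation consistent with that condition.

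The crucial first step is to show that $p_{z_1,z_2,m}=0$ whenever $(|z_1|,|z_2|)\neq(1,1)$. In analogy with Eqs.~(\ref{Eqsepadcc})--(\ref{adcc}), I would insert $S_{z_1,z_2,m}=X^m(P_{z_1}\otimes P_{z_2})^{\otimes 2}$ into Eq.~(\ref{EqSep}) with the standard form $g_1^x\otimes g_2^x\otimes g_3\otimes g_4|\Psi_{aacc}\rangle$. Because $P_{z_1}$ acts on parties $1,3$ and $P_{z_2}$ on parties $2,4$, each diagonal matrix element $\langle jklw|\cdot|jklw\rangle$ is a sum over $m$ of monomials of the form $|z_1|^{4\epsilon_1(j,l,m)}|z_2|^{4\epsilon_2(k,w,m)}$ with $\epsilon_i\in\{-1,0,1\}$, and the $\epsilon_i=0$ case corresponds to the ``balanced'' choices $j\neq l$ or $k\neq w$. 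Using that $\bar{g}_1^3=\bar{g}_2^3=\bar{h}_1^3=\bar{h}_2^3=0$ in the standard form, the dependence on $z_1$ and $z_2$ decouples, and by adding/subtracting suitable pairs of diagonal identities (as was done to pass from Eqs.~(\ref{1adcc})--(\ref{2adcc}) to Eq.~(\ref{adcc})) I would obtain two separate identities of the form $\sum_{z_1,z_2,m}p_{z_1,z_2,m}(|z_i|^2-|z_i|^{-2})^2=0$, one for $i=1$ and one for $i=2$. These force $|z_1|=|z_2|=1$ on the support of $p$.

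Once only unitary symmetries survive, Observation~\ref{obsmon} yields $r=1$ and $|\h_i|\geq|\g_i|$ for all $i$. The remaining unitary symmetry group is generated by $\sigma_x^{\otimes 4}$ together with $(Z(\alpha_1)\otimes Z(\alpha_2))^{\otimes 2}$. I would then apply Eq.~(\ref{eq_Pauli}) to every Pauli string $P=\bigotimes_l\sigma_{k_l}$ commuting with this group, so that either $\prod_l\eta_{k_l}=1$ (which bounds the support of $\{p_{z_1,z_2,m}\}$ to a pair of symmetries differing on a single site) or $\tr(HP)=0$. A short case analysis on $m\in\{0,1\}$ and on whether $\alpha_1,\alpha_2$ are generic or equal to $\pi$ (turning $Z(\alpha)$ into $\sigma_z$) then pins down exactly the commutation criterion in the statement: there exists a unitary symmetry $\bigotimes_i s_i\in S(\Psi_{aacc})$ with $[H_i,s_i]=0$ for three indices and $[H_j,s_j]\neq 0$ for the fourth.

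For sufficiency I would, given any $h|\Psi_{aacc}\rangle$ satisfying the condition with distinguished party $j$ and unitary symmetry $S=\bigotimes_i s_i$, construct a source state $g|\Psi_{aacc}\rangle$ with $g_i=h_i$ for $i\neq j$ and $G_j=\tfrac12(H_j+s_j^\dagger H_j s_j)+\lambda(H_j-s_j^\dagger H_j s_j)$ for a suitable scalar $\lambda$, and implement the two-outcome POVM $\{\sqrt{p}\,h_jg_j^{-1},\sqrt{1-p}\,h_j s_j g_j^{-1}\}$ on party $j$, with the remaining parties applying $s_i$ upon obtaining the second outcome; as in the proof after Lemma~\ref{lemcycle2}, the POVM weight $p$ can be adjusted to make $G_j$ a positive rank-$2$ operator. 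The main obstacle of the whole argument is the first step, since the two-parameter structure of the non-unitary part of $S(\Psi_{aacc})$ makes the decoupling of $z_1$ and $z_2$ dependencies subtler than in the single-parameter case of Lemma~\ref{lemmaadcc}; I expect this to require a careful bookkeeping of which diagonal computational-basis entries isolate the $|z_1|$-dependence from the $|z_2|$-dependence.
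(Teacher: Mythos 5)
Your sufficiency argument (the two-outcome POVM $\{\sqrt{p}\,h_jg_j^{-1},\sqrt{1-p}\,h_js_jg_j^{-1}\}$ with the other parties applying $s_i$ on the second outcome) is essentially what the paper does. The gap is in the necessity direction, and it sits exactly where you flagged the ``main obstacle''. Your first step claims that the diagonal matrix elements of Eq.~(\ref{EqSep}) force $|z_1|=|z_2|=1$ on the support of $p$ for \emph{all} states in standard form, by decoupling the $z_1$- and $z_2$-dependence as in the passage from Eqs.~(\ref{1adcc})--(\ref{2adcc}) to Eq.~(\ref{adcc}). That decoupling does occur, but the resulting identities are too weak. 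In the $adcc$ case the single parameter $z$ sits on all four sites and three of them carry unconstrained diagonals, so Eq.~(\ref{adcc}) holds for each of three parties and the dichotomy closes. Here $z_1$ sits only on parties $1$ and $3$, and the standard form forces party $1$'s diagonal to be $\one/2$; the only $z_2$-sensitive diagonal information you can extract is a single weighted identity of the form $h_4^1\sum p_{z,m}|z_2|^{4}+h_4^3\sum p_{z,m}|z_2|^{-4}=h_4^1+h_4^3$ (together with its $m$-resolved refinements), which for $h_4^1\neq h_4^3$ does \emph{not} imply $\sum p(|z_2|^2-|z_2|^{-2})^2=0$. What the $z$-independent diagonal elements actually give is the three-way split $\tilde p_0\tilde p_1=0$, or $h_3^1=h_3^3=1/2$, or $h_4^1=h_4^3=1/2$ (Eqs.~(\ref{aacc4})--(\ref{aacc7})), and the paper then has to treat these cases separately; in the sub-case $h_3^1=h_3^3=1/2$ with $h_4^1\neq h_4^3$ and $\tilde p_0,\tilde p_1>0$ it never establishes $|z_2|=1$ at all, but instead brings in off-diagonal elements involving $h_1^2$ (Eqs.~(\ref{aacc8})--(\ref{aacc9})) and rules out the transformation through the auxiliary quantities $\bar p_m=\sum p_{z,m}\cos(2\phi_1)$. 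So your step 1, as described, cannot be completed; it is not merely ``subtler bookkeeping''.

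A second, related problem: once you restrict to unitary symmetries, the surviving group is generated by $\sigma_x^{\otimes4}$ and the continuous phase gates $(Z(\alpha_1)\otimes Z(\alpha_2))^{\otimes2}$, and Eq.~(\ref{eq_Pauli}) does not apply to it. That identity requires each $\sigma_{k_l}$ to be an eigenvector of the single-site map ${\cal E}_l^\dagger$, which holds when the $s_l^j$ are Pauli operators but fails for a mixture of $Z(\alpha)$ conjugations acting on $\sigma_x$ or $\sigma_y$ (one gets a genuine rotation, not a scalar $\eta_{k_l}$). The paper therefore argues directly from matrix elements of Eq.~(\ref{Eqsepaacc}), tracking quantities like $\sum p\,e^{-2i\phi_1}$, rather than through the Pauli-trace machinery. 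To repair your proof you would need to replace both of these steps by the explicit case analysis.
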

The proof of this lemma can be found in Appendix C.
In the following lemma we specify the convertible states for these SLOCC classes.
\begin{lemma}\label{convlemmaaacc} The only states  in the SLOCC classes $G_{abcd}$ where  $a^2= d^2\neq\pm c^2$, $c^2= b^2$ and $a, c\neq 0 $ that are convertible via LOCC are given by $\otimes g_i |\Psi_{aacc}\rangle$ where $\otimes G_i$ obeys the following condition. There exists a non-trivial unitary symmetry $S_{z_1,z_2,m}=\otimes_i s_i^{z_1,z_2,m}\in S(\Psi_{aacc})$ such that three out of the four operators $G_i$ commute with the corresponding
operator $s_i^{z_1,z_2,m}$ and one operator $G_j$ is arbitrary.
\end{lemma}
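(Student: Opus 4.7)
The plan is to prove the two directions separately, exploiting the characterization of reachable states already established in Lemma \ref{lemmaaacc} and the fact that, as proven in the SLOCC class studied in Sec.\ \ref{secaacc} and in close analogy to Sec.\ \ref{family3}, only unitary symmetries $S_{z_1,z_2,m}$ with $|z_1|=|z_2|=1$ can contribute non-trivially to Eq.\ (\ref{EqSep}).

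For necessity, I would first observe that if $\otimes g_i\ket{\Psi_{aacc}}$ is convertible via LOCC then it can be transformed into some LU-inequivalent $\otimes h_i\ket{\Psi_{aacc}}$ which is therefore reachable. By Lemma \ref{lemmaaacc}, the target state admits a unitary symmetry $S=\otimes_i s_i^{z_1,z_2,m}\in S(\Psi_{aacc})$ such that exactly three of the operators $H_i$ commute with the corresponding $s_i$ and one, say $H_j$, does not. Plugging this into Eq.\ (\ref{EqSep}) restricted to unitary symmetries and taking the partial trace over party $j$ (using that, for unitary symmetries, $r=1$ by Observation \ref{obsmon}) yields an equation of the form $\sum_k p_k\bigotimes_{i\neq j}(s_i^k)^\dagger H_i s_i^k=\bigotimes_{i\neq j}G_i$. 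A straightforward analysis (analogous to the argument at the end of the proof of Lemma \ref{lemmaaacc}) then shows that the $G_i$ for $i\neq j$ must also commute with the corresponding $s_i$, while $G_j$ is left unconstrained by the trace argument. This gives precisely the condition claimed.

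For sufficiency, I would exhibit an explicit LOCC protocol. Given $\otimes g_i\ket{\Psi_{aacc}}$ with three operators, say $G_i$ for $i\neq j$, commuting with the corresponding $s_i^{z_1,z_2,m}$ of a non-trivial unitary symmetry $S$, I would let party $j$ implement the two-outcome POVM
\begin{equation}
M_1=\sqrt{p}\,h_j g_j^{-1},\qquad M_2=\sqrt{1-p}\,h_j s_j^{z_1,z_2,m} g_j^{-1},
\end{equation}
where $h_j$ is chosen so that $H_j=h_j^\dagger h_j$ has appropriate Pauli components, and the remaining parties apply, upon outcome $2$, the local unitary $s_i^{z_1,z_2,m}$. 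By the commutation of the other $G_i$ with their corresponding $s_i$, this produces the state $\otimes_{i\neq j}g_i\otimes h_j\ket{\Psi_{aacc}}$. The key computation is that the POVM completeness relation $p\,g_j^{-\dagger}H_j g_j^{-1}+(1-p)\,g_j^{-\dagger}(s_j^{z_1,z_2,m})^\dagger H_j s_j^{z_1,z_2,m} g_j^{-1}=\one$ can always be satisfied, for arbitrary $G_j$, by a suitable choice of $H_j$ and $p$. More concretely, writing $H_j$ in terms of its Pauli components, the components of $G_j$ that lie in the $+1$-eigenspace of the conjugation by $s_j^{z_1,z_2,m}$ are matched by $\bar{h}_j^{(\cdot)}=\bar{g}_j^{(\cdot)}$, while the remaining components are rescaled by a factor $(2p-1)$, which can be adjusted by the free parameter $p\in(0,1)$ so that $H_j$ remains a valid positive normalized rank-two operator close enough to $G_j$, yielding an LU-inequivalent reachable $h_j\ket{\Psi_{aacc}}$ target as long as the symmetry acts non-trivially on party $j$; if it acts trivially on party $j$ too (which would not give a conversion) one instead takes the non-trivial action on one of the other parties and reassigns $j$.

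The main obstacle I expect is the necessity part: ensuring that when tracing out party $j$ one really deduces that all other $G_i$ must commute with the corresponding symmetry element, rather than some weaker condition. This will require repeating the Pauli-component analysis already used in the proof of Lemma \ref{lemmaaacc} (via the general ideas of Eq.\ (\ref{eq_Pauli}) applied to the three remaining parties), and carefully keeping track of the fact that the symmetry group $\{P_{z_1}\otimes P_{z_2}\}$ restricted to unitaries is a two-parameter continuous group, so that an integration rather than a discrete sum must be handled. Once this is in place, the sufficiency direction reduces to a direct POVM construction as sketched above.
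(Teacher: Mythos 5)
Your proposal follows essentially the same route as the paper's proof in Appendix C: necessity is obtained by combining the characterization of reachable states from Lemma \ref{lemmaaacc} with the single-party reduced form of Eq.\ (\ref{EqSep}) to propagate the commutation/vanishing structure of the $H_i$ back to the $G_i$ on the three "commuting" parties, and sufficiency is shown by the same two-outcome POVM $\{\sqrt{p}\,h_jg_j^{-1},\sqrt{1-p}\,h_js_jg_j^{-1}\}$ built from an order-two symmetry, with the $(2p-1)$ rescaling of the non-commuting Pauli components. The obstacle you flag — that the back-propagation requires redoing the component-wise analysis of Lemma \ref{lemmaaacc} rather than a one-line trace argument — is precisely where the paper's proof spends its effort (including one special subfamily that needs a $\sigma_y$-based POVM), so your plan is sound but that step is genuinely the bulk of the work.
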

The proof of this lemma is given in Appendix C. Due to Lemma \ref{lemmaaacc} and Lemma \ref{convlemmaaacc} the only convertible states in $MES_4$ in these SLOCC classes are states of the form
$g_1^x\otimes g_2^x\otimes g_3^x\otimes g_4^x|\Psi_{aacc}\rangle$ where either
$g_1^2,g_3^2\neq 0$ or $g_1^2=g_3^2=0$  and either $g_2^2,g_4^2\neq 0$ or $g_2^2=g_4^2=0$, as well as $\one/2\otimes \one/2\otimes d_3\otimes d_4|\Psi_{aacc}\rangle$ where $d_i\neq \one/2$ for $i\in\{3,4\}$, $\one/2\otimes g_2^x\otimes d_3\otimes g_4|\Psi_{aacc}\rangle$ where $g_{4}\neq g_4^x$, $g_4\neq d_4,$ $g_2^x \neq \one/2$ and $d_3\neq \one/2$ and  $g_1^x\otimes \one/2 \otimes g_3\otimes d_4|\Psi_{aacc}\rangle$ where $g_{3}\neq g_3^x$, $g_{3}\neq d_3$, $g_1^x \neq \one/2$ and $d_4\neq \one/2$.

\subsubsection{The case $a^2= -c^2$ and $a \neq 0$}
Note that this case corresponds to the cyclic class analyzed in Sec. \ref{2cycles2} when $a=c$ and the eigenvalues are reordered appropriately.
The single seed state for this SLOCC class is given by
\begin{equation}\label{seedfamily2}
|\Psi_{aa(ia)(ia)}\rangle=a[|0000\rangle+|1111\rangle+i(|0101\rangle+|1010\rangle)].
\end{equation}
Note that due to the normalization there is no parameter left and therefore we will use in the following the notation $\ket{\Psi}\equiv\ket{\Psi_{aa(ia)(ia)}}$.
For the seed state one obtains the following non-compact symmetry group
\begin{eqnarray}
&S(\Psi)= \langle \sigma_x^{\otimes 4}, \sigma_x\left(
    \begin{array}{cc}
      y_1 & 0 \\
      0 & x_1 \\
    \end{array}
  \right)\otimes\left(
    \begin{array}{cc}
      y_2 & 0 \\
      0 & x_2\\
    \end{array}
  \right)\otimes\sigma_x\left(
    \begin{array}{cc}
      1/y_1 & 0\\
      0 & -1/x_1 \\
    \end{array}
  \right)
\otimes\left(
    \begin{array}{cc}
      1/y_2 & 0 \\
      0 & -1/x_2 \\
    \end{array}\right), \\ \nonumber&\left(
    \begin{array}{cc}
      z_1 & 0 \\
      0 & \eta_1 \\
    \end{array}
  \right)\otimes\left(
    \begin{array}{cc}
      z_2 & 0 \\
      0 & \eta_2\\
    \end{array}
  \right)\otimes\left(
    \begin{array}{cc}
      1/z_1 & 0\\
      0 & 1/\eta_1 \\
    \end{array}
  \right)
\otimes\left(
    \begin{array}{cc}
      1/z_2 & 0 \\
      0 & 1/\eta_2 \\
    \end{array}\right)\rangle,
\end{eqnarray}
where $x_1, x_2, y_1, y_2, z_1, z_2, \eta_1, \eta_2\in\C\backslash 0$. These symmetries allow to choose as standard form $g_1^x\otimes g_2^x\otimes g_3\otimes g_4\ket{\Psi}$ where $g_1^2\geq 0$ and $g_2^2\geq 0$. In case $g_1^2=0$ ($g_2^2=0$) we choose $g_3^2\geq 0$ ($g_4^2\geq 0$) respectively. Moreover, we choose $\Im(g_i^2)> 0$ ($\Re(g_i^2)> 0$) where $i\in\{3,4\}$ is the first party for which $\Im(g_i^2)\neq 0$ ($\Re(g_i^2)\neq 0$) respectively. In case $g_i^2=0$ $\forall i\in\{3,4\}$ we choose $g_i^1\geq g_i^3$ where $i\in\{3,4\}$ is the first party for which $g_i^1\neq g_i^3$.
The reachable states for this SLOCC class are given in the following lemma.
\begin{lemma}\label{lemmaaaaa} The only states in the SLOCC classes $G_{abcd}$ where  $a^2= d^2$, $c^2= b^2$, $a^2= -c^2$ and $a\neq 0$ that are reachable via LOCC are given by $\otimes_i h_i \ket{\Psi}$  where $\otimes H_i$ obeys the following condition. There exists a unitary symmetry $S=\otimes_i s_i\in S(\Psi)$ such that exactly
three out of the four operators $H_i$ commute with the corresponding
operator $s_i$ and one operator $H_j$ does not commute with $s_j$.
\end{lemma}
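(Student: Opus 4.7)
The proof will closely parallel that of Lemma \ref{lemmaaacc}, suitably adapted to the larger non-compact symmetry group $S(\Psi)$ of the seed state $\ket{\Psi}=\ket{\Psi_{aa(ia)(ia)}}$. The overall plan is threefold: first reduce Eq. (\ref{EqSep}) to a decomposition using only unitary elements of $S(\Psi)$; then apply the Pauli-operator argument built around Eq. (\ref{eq_Pauli}) to derive the ``exactly three of the four $H_i$ commute with the corresponding $s_i$'' condition; and finally exhibit explicit two-outcome LOCC protocols realising every transformation permitted by this condition.

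To eliminate non-unitary contributions, I would insert the two continuous parametric families of $S(\Psi)$, together with the discrete generator $\sigma_x^{\otimes 4}$, into Eq. (\ref{EqSep}), and exploit the standard form $g_1^x\otimes g_2^x\otimes g_3\otimes g_4\ket{\Psi}$ to organise the resulting identities. Tracing the equation over suitable pairs of parties and reading off diagonal matrix elements (as was done in the derivation of Eq. (\ref{adcc})) produces moment identities in the continuous parameters $x_i,y_i,z_i,\eta_i$ and their inverses. A cancellation argument of the same flavour as in Lemma \ref{lemmaaacc} then forces all probability weight on symmetries with a modulus different from one to vanish, so that only phases survive. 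Since every remaining symmetry is unitary, Observation \ref{obsmon} gives $r=1$ and the monotonicity $|\textbf{h}_i|\geq|\textbf{g}_i|$ for all $i$.

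With only unitary symmetries left, each tensor factor $s_i$ lies either in $\textrm{span}\{\one,\sigma_z\}$ or in $\sigma_x\cdot\textrm{span}\{\one,\sigma_z\}\subseteq\textrm{span}\{\sigma_x,\sigma_y\}$, up to a global phase. The Pauli-test argument can then be run: for every Pauli tensor product $P=\bigotimes_l\sigma_{k_l}$ commuting with the unitary subgroup being used, Eq. (\ref{eq_Pauli}) forces either $\tr(HP)=0$ or $\prod_l\eta_{k_l}=1$, and the latter can only occur when at most two of the probabilities $p_k$ are non-zero. Combining all such relations as $P$ ranges over the admissible Pauli strings yields that a non-trivial conversion must come from a two-outcome POVM of the form $\{\sqrt{p}\,h g^{-1},\sqrt{1-p}\,h S g^{-1}\}$, and the locality of these operators is exactly the condition that three of the four $H_i$ commute with the corresponding $s_i$ while the fourth does not. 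The converse direction is handled by the same explicit construction: party $j$ (the non-commuting one) implements this two-outcome POVM and the remaining parties apply $s_i$ conditioned on the second outcome; the positivity of the relevant $G_j$ and the existence of a valid $p\in(0,1)$ follow as in the preceding lemmata.

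The main obstacle I anticipate is the first step. Because $S(\Psi)$ here depends on eight independent complex parameters spread across two non-commuting families of generators, disentangling the moment equations of Eq. (\ref{EqSep}) to rule out all non-unitary contributions will require more careful bookkeeping than in Lemma \ref{lemmaaacc}, in particular splitting into sub-cases according to which components of $\textbf{h}_i$ vanish in the standard form. Once this reduction is achieved, the Pauli-test step and the explicit LOCC construction are comparatively routine.
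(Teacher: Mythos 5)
Your overall skeleton (reduce to unitary symmetries, derive the commutation condition, exhibit two-outcome protocols) matches the philosophy used throughout the paper, and your explicit-protocol step is essentially what the paper does: it reuses the $\sigma_z$-type protocols from Lemma \ref{lemmaaacc} and adds two-outcome POVMs built from the extra Pauli-type symmetries $\one\otimes\sigma_x\otimes\sigma_z\otimes\sigma_y$ and $\sigma_x\otimes\one\otimes\sigma_y\otimes\sigma_z$. However, there is a genuine gap in your middle step. Eq.\ (\ref{eq_Pauli}) and the quantities $\eta_{k_l}$ are only defined when every local symmetry operator either commutes or anticommutes with the Pauli operator $\sigma_{k_l}$, i.e.\ when the symmetry group is (projectively) contained in the Pauli group. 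For this SLOCC class the unitary part of $S(\Psi)$ still contains continuous phase gates $\textrm{diag}(e^{i\beta},e^{i\alpha})$, which neither commute nor anticommute with $\sigma_x$ or $\sigma_y$, so the Pauli-test machinery cannot be ``run'' after your first reduction; the only Pauli strings commuting with the whole continuous family are $\sigma_z$-type strings, and these do not yield the constraints you need. The paper instead extracts scalar equations from individual matrix elements $\ket{ijkl}\bra{i'j'k'l'}$ of Eq.\ (\ref{Eqsepaaaa}) and works through an explicit case analysis (solutions (a), (b), (c), with further subcases depending on which components $h_i^2$ vanish and whether $h_j^1=h_j^3=1/2$); nothing in your proposal replaces this.

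A second, related inaccuracy: the reduction to unitary symmetries is not achieved globally. In the paper the conditions $p_{m,n,x_1,x_2,y_1,y_2}=0$ for $|y_1|\neq|x_1|$ or $|y_2|\neq|x_2|$ are derived only under structural hypotheses on the \emph{target} state (e.g.\ all $h_i^2\neq 0$, or $h_4^1=h_4^3=1/2$), and for some subcases only one of the two moduli conditions is available. So Observation \ref{obsmon} and the monotonicity $|\textbf{h}_i|\geq|\textbf{g}_i|$ cannot be invoked uniformly at the outset as you propose; the argument has to interleave the elimination of non-unitary contributions with the analysis of which operators $H_i$ are then forced to equal $G_i$. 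You correctly anticipated that the first step needs case splitting, but you underestimated that the second step fails as stated and must be replaced by the same kind of hands-on matrix-element analysis.
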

That is the reachable states are given by
\begin{itemize}
\item $g_1^x\otimes g_2^x\otimes g_3\otimes g_4^x|\Psi\rangle$  where $g_3\neq g_3^x$,
\item $g_1^x\otimes g_2^x\otimes g_3^x\otimes g_4|\Psi\rangle$ where $g_4\neq g_4^x$,
\item $\one/2\otimes g_2^x\otimes g_3\otimes g_4|\Psi\rangle$ where $g_3\neq d_3$,
\item $g_1^x\otimes \one/2\otimes g_3\otimes g_4|\Psi\rangle$ where $g_4\neq d_4$,
\item $g_1^x\otimes g_2^x\otimes d_3\otimes g_4|\Psi\rangle$ where $g_1^x\neq \one/2$ and /or $g_4\neq g_4^y$,
\item $g_1^x\otimes g_2^x\otimes g_3\otimes d_4|\Psi\rangle$ where $g_2^x\neq \one/2$ and/or $g_3\neq g_3^y$,
\item $g_1^x\otimes g_2^x\otimes g_3^y\otimes g_4|\Psi\rangle$ where $g_4\neq d_4$ and
\item $g_1^x\otimes g_2^x\otimes g_3\otimes g_4^y|\Psi\rangle$ where $g_3\neq d_3$.

\end{itemize}
The proof is lengthy and is presented in Appendix D. In the following lemma we state which states are convertible in this SLOCC class.
\begin{lemma}\label{convlemmaaaaa} The only states  in the SLOCC class $G_{abcd}$ where  $a^2= d^2$, $c^2= b^2$, $a^2= -c^2$ and $a \neq 0$ that are convertible via LOCC are given by $\otimes_i g_i \ket{\Psi}$ where  $\otimes G_i$ obeys the following condition. There exists a non-trivial unitary symmetry $S=\otimes_i s_i\in S(\Psi)$ such that
three out of the four operators $G_i$ commute with the corresponding
operator $s_i$ and one operator $G_j$ is arbitrary.
\end{lemma}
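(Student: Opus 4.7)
The plan is to establish this characterization by combining Lemma~\ref{lemmaaaaa} with an explicit two--outcome POVM construction, paralleling the proof strategy of Lemma~\ref{convlemmaaacc} in Appendix~C.

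For the necessary direction I would start from a LOCC--convertible state $g\ket{\Psi}$ and pick an LU--inequivalent target $h\ket{\Psi}$ reached from it. From the analysis in the proof of Lemma~\ref{lemmaaaaa} (Appendix~D), only unitary symmetries can enter Eq.~(\ref{EqSep}) for this class and, moreover, the SEP map must collapse to the two--outcome form $\{\sqrt{p}\,hg^{-1},\sqrt{1-p}\,hSg^{-1}\}$, where $S=\otimes_i s_i \in S(\Psi)$ is precisely the symmetry witnessing reachability of $h\ket{\Psi}$ via Lemma~\ref{lemmaaaaa}. Letting $j$ denote the measuring party and tracing Eq.~(\ref{EqSep}) over all parties but $i\neq j$ would yield
\bea
pH_i+(1-p)\,s_i^\dagger H_i s_i = G_i
\eea
for each $i\neq j$. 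Since $H_i$ commutes with $s_i$ for these three parties by Lemma~\ref{lemmaaaaa}, this reduces to $G_i=H_i$, so $G_i$ too commutes with $s_i$; together with the fact that $G_j$ is unconstrained this delivers the stated structural condition with the same $S$.

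For the sufficient direction, given a $G$ obeying the condition with a non--trivial unitary $S=\otimes_i s_i$, the measuring party $j$ would implement the two--outcome POVM $M_1\propto h_j g_j^{-1}$, $M_2\propto h_j s_j g_j^{-1}$, while the other three parties apply $s_i$ conditional on outcome $2$. This is a valid LOCC protocol mapping $g\ket{\Psi}$ to $h\ket{\Psi}$ provided one can choose $p\in(0,1)$ and $H_j>0$ solving $pH_j+(1-p)s_j^\dagger H_j s_j = G_j$ with $h\ket{\Psi}$ not LU--equivalent to $g\ket{\Psi}$. Such a choice is obtained by writing $H_j=G_j+\Delta$ with $\Delta$ in the anti--commutant of $s_j$ (which is non--trivial because $s_j$ is a non--trivial unitary) and $\|\Delta\|$ small enough that $H_j$ remains strictly positive; LU--inequivalence of the target state to the initial one then follows from the uniqueness of the standard form as soon as $\Delta\neq 0$.

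The main obstacle that I expect is the rigorous reduction of an arbitrary SEP protocol to the specific two--outcome form used above. This is not automatic because $S(\Psi)$ is non--compact and continuous, so \emph{a priori} a SEP transformation could mix many distinct symmetries simultaneously; ruling this out requires first showing, via appropriate matrix elements of Eq.~(\ref{EqSep}) in the spirit of Sec.~\ref{family3} and Sec.~\ref{secaacc}, that the non--unitary rescaling part of $S(\Psi)$ cannot contribute, and then constraining the remaining unitary mixture down to the identity together with a single non--trivial symmetry. Most of this work is already carried out in the proof of Lemma~\ref{lemmaaaaa}, on which the present argument therefore relies.
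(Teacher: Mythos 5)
Your overall strategy (necessity from the reachability characterization plus back-propagation of structure, sufficiency from an explicit two-outcome POVM) matches the paper's, and your sufficiency direction is essentially the paper's construction. However, your necessary direction contains a genuine gap that you partially flag but then dismiss too quickly. You claim that an arbitrary SEP transformation onto a reachable target ``must collapse to the two-outcome form'' built from the \emph{witnessing} symmetry of Lemma~\ref{lemmaaaaa}, and that this is ``mostly carried out'' in Appendix~D. It is not: the proof of Lemma~\ref{lemmaaaaa} derives the full constraints on the probabilities in Eq.~(\ref{EqSep}) only for the \emph{non-reachable} target classes (in order to show $H=G$ there); for the reachable targets it merely exhibits one LOCC protocol. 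In particular, nothing there excludes a SEP protocol for a reachable target that mixes a continuum of the residual unitary (phase-type) symmetries of this non-compact symmetry group, in which case ${\cal E}_i(H_i)=G_i$ is a genuine dephasing and your step ``$H_i$ commutes with $s_i$ hence $G_i=H_i$'' does not follow --- $H_i$ need not commute with the symmetries actually used in the protocol, only with the witnessing one. The paper avoids this by proving the strictly weaker (and sufficient) propagation statements, namely that $h_i^2=0$ forces $g_i^2=0$ and $h_i=h_i^{x}$ (resp.\ $h_i^{y}$) forces $g_i=g_i^{x}$ (resp.\ $g_i^{y}$) for $i\in\{3,4\}$, by re-running the matrix-element analysis of Eq.~(\ref{Eqsepaaaa}) with the reachable targets inserted; combined with the list of reachable states this already pins down the convertible candidates without ever needing $G_i=H_i$.

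Two smaller points on your sufficiency argument: the perturbation $H_j=G_j+\Delta$ with $\Delta$ anticommuting with $s_j$ only solves $pH_j+(1-p)s_j^\dagger H_js_j=G_j$ if $\Delta$ is parallel to the anticommuting part of $G_j$ (or that part vanishes, in which case $p=1/2$); and if the ``arbitrary'' party $j$ has $s_j=\one$ (which happens for the symmetries $\one\otimes\sigma_x\otimes\sigma_z\otimes\sigma_y$ and $\sigma_x\otimes\one\otimes\sigma_y\otimes\sigma_z$ of this class) your POVM is trivial and the measurement must be placed on a party with $s_j\neq\one$, as the paper does explicitly in the analogous Lemma~\ref{lem10c}. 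These are repairable, but the necessity direction needs the paper's propagation argument (or an honest substitute for it) rather than the unproven two-outcome reduction.
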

That is the convertible states are given by
\begin{itemize}
\item $g_1^x\otimes g_2^x\otimes g_3\otimes g_4^x|\Psi\rangle$,
\item $g_1^x\otimes g_2^x\otimes g_3^x\otimes g_4|\Psi\rangle$,
\item $\one/2\otimes g_2^x\otimes g_3\otimes g_4|\Psi\rangle$,
\item $g_1^x\otimes \one/2\otimes g_3\otimes g_4|\Psi\rangle$,
\item $g_1^x\otimes g_2^x\otimes d_3\otimes g_4|\Psi\rangle$,
\item $g_1^x\otimes g_2^x\otimes g_3\otimes d_4|\Psi\rangle$,
\item $g_1^x\otimes g_2^x\otimes g_3^y\otimes g_4|\Psi\rangle$ and
\item $g_1^x\otimes g_2^x\otimes g_3\otimes g_4^y|\Psi\rangle$.
\end{itemize}
The proof of this lemma can also be found in Appendix D. Combining Lemma \ref{lemmaaaaa} and Lemma \ref{convlemmaaaaa} we have that the non-isolated states in $MES_4$ for this SLOCC class are given by
\begin{itemize}
\item $|\Psi\rangle$,
\item $g_1^x\otimes g_2^x\otimes g_3^x\otimes g_4^x|\Psi\rangle$ where $g_i^2\neq 0$ $\forall i \in\{1,2,3,4\}$,
\item $\one/2\otimes g_2^x\otimes d_3\otimes g_4^y|\Psi\rangle$ where $g_2^2, g_4^2 \neq 0$ and $d_3\neq \one/2$ and
\item $g_1^x\otimes \one/2\otimes g_3^y\otimes d_4|\Psi\rangle$ where $g_1^2, g_3^2 \neq 0$ and $d_4\neq \one/2$.
\end{itemize}
\subsection{\label{secE}A double-degenerate zero eigenvalue and two different non-degenerate eigenvalues}

By choosing $b=c=0$ in Eq.\ (\ref{eigenvalues}), the seed states for these SLOCC classes can be chosen as
\begin{equation}
|\Psi_{ad00}\rangle=\frac{a+d}{2}(|0000\rangle+|1111\rangle)+\frac{a-d}{2}(|0011\rangle+|1100\rangle),
\end{equation}
where $a^2\neq d^2$.
Note that these states can be obtained from the seed states for the families discussed in Sec. \ref{family2} by permuting party $2$ and $3$. Thus, all the statements of  Sec. \ref{family2} extend to these families by taking into account the permutation.

\subsection{\label{secGHZ}A double-degenerate non-zero eigenvalue and a double-degenerate zero eigenvalue}

As the normalization and the global phase fix two parameters, there are no free parameters $\{a,b,c,d\}$ and this case corresponds to a single SLOCC class. Actually, one can choose $a=d$ and $b=c=0$ in Eq.\ (\ref{eigenvalues}) to see that the seed state is the GHZ state
\begin{equation}
|\Psi_{aa00}\rangle=|0000\rangle+|1111\rangle,
\end{equation}
and, hence, this is the GHZ SLOCC class. It can be readily found that the elements of the symmetry group are of the form $S_{z_1,z_2,z_3,m}=(P_{z_1}\otimes P_{z_2}\otimes P_{z_3}\otimes P_{(z_1z_2z_3)^{-1}})(\sigma_x^{\otimes4})^m$, where as before $P_z=diag(z,1/z)$, $0\neq z\in\C$ and $m=0,1$. Hence, by choosing $z_i$ such that $P_{z_i}^\dag G_iP_{z_i}\in span\{\one,\sigma_x\}$, we have the following standard form for the states in this class $g_1^x\otimes g_2^x\otimes g_3^x\otimes g_4|\Psi_{aa00}\rangle$, where $g_i^2\geq 0$ for $i\in\{1,2,3\}$. Moreover, we choose $\Im(g_4^2)\geq 0$ and in case $\Im(g_4^2)= 0$ one chooses $g_4^1\geq g_4^3$. In case $g_i^2= 0$ for at least one $i\in\{1,2,3\}$ we choose $g_4^2\geq 0$.  This is completely analogous to the 3-qubit GHZ class studied in \cite{MESus} and the possible LOCC conversions can be readily obtained from the analysis performed therein leading to the following lemma.

\begin{lemma} The MES in this class is given by states of the form $g_1^x\otimes g_2^x\otimes g_3^x\otimes g^x_4|\Psi_{aa00}\rangle$, where no $g_i^x\propto \one$ (except for the GHZ state). Moreover, all states in the MES for this SLOCC class are convertible.\end{lemma}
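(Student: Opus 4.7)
The plan is to transfer the three-qubit GHZ analysis from \cite{MESus} essentially verbatim, since the symmetry group $S_{z_1,z_2,z_3,m}=(P_{z_1}\otimes P_{z_2}\otimes P_{z_3}\otimes P_{(z_1z_2z_3)^{-1}})(\sigma_x^{\otimes 4})^m$ is a direct four-qubit generalization of the three-qubit GHZ symmetry: one continuous complex parameter per qubit (with the fourth constrained by the others) plus the $\mathbb{Z}_2$ generated by $\sigma_x^{\otimes 4}$. Accordingly, the standard form $g_1^x\otimes g_2^x\otimes g_3^x\otimes g_4|\Psi_{aa00}\rangle$ differs from the three-qubit GHZ one only by the extra ``arbitrary'' qubit at position four, and I expect that ruling out reachability, identifying convertible states, and constructing explicit LOCC protocols all transfer with the obvious modifications.

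Concretely, I would first plug the standard forms for $g$ and $h$ into Eq.~(\ref{EqSep}). The two relevant blocks of $P_z^\dagger\,\cdot\,P_z$ are that $\mathrm{span}\{\one,\sigma_z\}$ is preserved while $\sigma_x$ and $\sigma_y$ are rotated into each other by a phase $z/\bar z$. Using this, I would argue that whenever $H_4$ has a non-zero $\sigma_y$ or $\sigma_z$ component, or whenever at least one of the $h_i^x$ is proportional to $\one$, the state is reachable: an explicit two-outcome POVM on a suitable party of the form $M_1\propto h_j\, s_j\, g_j^{-1}$, $M_2\propto h_j\, s_j'\, g_j^{-1}$, with $s_j,s_j'$ chosen from the group generated by $\sigma_x^{\otimes 4}$ and a single $P_z$-factor, plus the compensating local unitary on the remaining parties, delivers the transformation. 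This is exactly the three-qubit GHZ protocol of \cite{MESus} extended trivially by the identity on the passive qubit, so no new calculation is needed.

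Conversely, to show that states $g_1^x\otimes g_2^x\otimes g_3^x\otimes g_4^x|\Psi_{aa00}\rangle$ with all four $g_i^x\not\propto\one$ are not reachable, I would examine Eq.~(\ref{EqSep}) when the right-hand side has only $\{\one,\sigma_x\}$-components on every qubit. The absence of any $\sigma_y,\sigma_z$ on the right forces, on each qubit, either $H_i$ itself to lie in $\mathrm{span}\{\one,\sigma_x\}$ or the measure $\{p_k\}$ to concentrate on $P_{z_i}$ with $|z_i|=1$; in either case the effective symmetry acting non-trivially on the corresponding qubit collapses to $\{\one,\sigma_x^{\otimes 4}\}$, and the problem reduces to the three-qubit GHZ equations of \cite{MESus}, which forced LU-equivalence of $G$ and $H$. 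Convertibility of all MES states is then straightforward: for the seed $|\Psi_{aa00}\rangle$ the two-outcome POVM $\{\sqrt{p}\,g_1^x,\,\sqrt{1-p}\,g_1^x\sigma_x\}$ on party 1 together with $\sigma_x$ on the others produces the LU-inequivalent state $g_1^x\otimes\one^{\otimes 3}|\Psi_{aa00}\rangle$; for a generic MES state one uses a $P_z$-based two-outcome POVM on any party (compensated by a $P_{1/z}$ factor on another) to change $\bar g_i^1$ on a single qubit, again landing on a LU-inequivalent state outside $MES_4$.

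The main obstacle will be the non-compact, three-parameter symmetry group: unlike the classes treated with discrete $S(\Psi)$, one cannot tabulate finitely many $S_k$'s, and Observation~\ref{obsmon} does not apply since the $P_z$-symmetries are non-unitary. The care is therefore concentrated in the non-reachability step, where one must show that no clever continuous combination of $P_{z_i}$'s with the compensation $z_4=1/(z_1z_2z_3)$ can move between two all-non-trivial MES states; the argument sketched above does this by isolating the $\mathrm{span}\{\one,\sigma_x\}$ block on each qubit and reducing to the already-solved three-qubit GHZ case.
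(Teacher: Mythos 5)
Your proposal is correct and takes essentially the same route as the paper, which for this class gives no independent argument at all but simply states that the analysis is ``completely analogous to the 3-qubit GHZ class studied in \cite{MESus}'' and reads off the lemma from there; your sketch of how the transfer works (standard form, reduction of the usable symmetries to the unitary ones with $|z_i|=1$, and the explicit two-outcome POVMs for reachability and convertibility) is the intended argument. The only slip is the phrase ``whenever at least one of the $h_i^x$ is proportional to $\one$, the state is reachable,'' which should exclude the all-identity case (the GHZ seed itself), as you in fact do elsewhere.
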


Thus this is one of the few SLOCC classes in which there are no isolated states. Notice that LOCC conversions in this class have been previously studied using other techniques in \cite{Turgut1}.

\section{The SLOCC classes $L_{abc_2}$}
\subsection{The SLOCC classes $L_{abc_2}$ for $c\neq 0$}
Let us now treat the SLOCC classes $L_{abc_2}$. The seed states for $c \neq 0$ are of the form
\begin{eqnarray}\label{abc2}\ket{\Psi_{abc_2}} &= a\ket{\Phi^+}_{12}\ket{\Phi^+}_{34} - b\ket{\Phi^-}_{12}\ket{\Phi^-}_{34} +c (\ket{0110}+\ket{1001})-i/(2 c)\ket{1010}.\end{eqnarray}
The corresponding $Z$ and $\tilde{Z}$ matrices coincide and have symmetric Jordan form.
In particular, they are of the form
\begin{eqnarray}
&Z_{abc_2}=\tilde{Z}_{abc_2}= \left(
    \begin{array}{cccc}
      a^2 & 0 & 0&0 \\
      0 & b^2  & 0&0\\
 0 & 0  & c^2+\frac{i}{2}&\frac{1}{2}\\
0 & 0  & \frac{1}{2}&c^2-\frac{i}{2}\\
    \end{array}
  \right).
\end{eqnarray}
When determining the corresponding symmetries, one observes that there are more symmetries if two or all eigenvalues are the same. Thus, we have to investigate these cases separately (see subsequent subsections). Let us briefly summarize the results here. One observes that except for the case when $a=b=0$ and $c\neq 0$ the states that are reachable are always of a specific form. This is due to the fact that the only non-trivial symmetry that will be used to construct LOCC protocols for the reachable and convertible states in these cases is $\sigma_z^{\otimes 4}$. In particular, respecting the standard form of the respective SLOCC class
the only states which are reachable are of the form $h_i\otimes d_j\otimes d_k\otimes d_l \ket{\Psi}$ where $i,j,k,l \in \{1,2,3,4\}$ and $h_i\neq d_i$. Here $\ket{\Psi}$ denotes the chosen representative of the SLOCC class. Furthermore, the states that are non-isolated and in $MES_4$ are given by $d_1\otimes d_2\otimes d_3\otimes d_4 \ket{\Psi}$ where, again, one has to take into account the standard form of the respective SLOCC class. In case $a=b=0$ and $c\neq 0$ the reachable states are given by $\one \otimes h_2\otimes h_3^x\otimes h_4 \ket{\Psi_{00c_2}}$ where $h_2\neq d_2$ and $h_1^x \otimes d_2\otimes h_3^x\otimes h_4 \ket{\Psi_{00c_2}}$ where $h_1^x\neq \one$ and $h_1^x \otimes h_2\otimes \one\otimes h_4 \ket{\Psi_{00c_2}}$ where $h_4\neq d_4$ and $h_1^x \otimes h_2\otimes h_3^x\otimes d_4 \ket{\Psi_{00c_2}}$ where $h_3^x\neq \one$.
\subsubsection{The case $a^2\neq b^2\neq c^2\neq a^2$ and $c\neq 0$}
We will first consider the case $a^2\neq b^2\neq c^2\neq a^2$ and $c\neq 0$.  The only non-trivial symmetry for these states is given by $\sigma_z^{\otimes 4}$.
Using this symmetry we define the following standard form for states in these SLOCC classes $\otimes_i g_i \ket{\Psi_{abc_2}}$, where the phase of $\tr(\ket{1}\bra{0}g_i^\dagger g_i)$ is in $[0,\pi)$, where $i\in \{1,2,3,4\}$ denotes the first party for which $\tr(\ket{1}\bra{0}g_i^\dagger g_i)$ is non-zero. In the following lemma we show which  states in these classes are reachable.
\begin{lemma}\label{lemmasz} The only states in the SLOCC classes $L_{abc_2}$ where  $a^2\neq b^2\neq c^2\neq a^2$ and $c\neq 0$ that are reachable via LOCC are given by $h_i\otimes d_j\otimes d_k\otimes d_l \ket{\Psi_{abc_2}}$ where $\{i,j,k,l\} = \{1,2,3,4\}$ and $h_i\neq d_i$.
\end{lemma}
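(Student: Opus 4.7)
The plan is to apply the SEP criterion (\ref{EqSep}) using the fact that, as follows from the systematic analysis in Sec.~\ref{SubsecSym}, the only non-trivial local symmetry of $\ket{\Psi_{abc_2}}$ in the regime $a^2\neq b^2\neq c^2\neq a^2$, $c\neq 0$ is the unitary $\sigma_z^{\otimes 4}$. Since the whole symmetry group $\{\one,\sigma_z^{\otimes 4}\}$ is unitary, Observation~\ref{obsmon} forces $r=1$ and (\ref{EqSep}) collapses to the single two-outcome equation
\begin{equation}\label{eq:lemszmain}
p_0\,H+p_1\,\sigma_z^{\otimes 4}H\sigma_z^{\otimes 4}=G,\qquad p_0+p_1=1.
\end{equation}
Conjugation by $\sigma_z^{\otimes 4}$ flips the sign of every $\sigma_x$ and $\sigma_y$ tensor factor and leaves $\one,\sigma_z$ invariant. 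Hence, expanding in the Pauli basis and reading off the coefficient of $P=\sigma_{k_1}\otimes\cdots\otimes\sigma_{k_4}$ in (\ref{eq:lemszmain}), with $n$ the number of indices $j$ having $k_j\in\{1,2\}$, I obtain (a specialization of (\ref{eq_Pauli}))
\begin{equation}
\tr(HP)=\tr(GP)\ \text{if $n$ even},\qquad (p_0-p_1)\tr(HP)=\tr(GP)\ \text{if $n$ odd}.
\end{equation}

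The main step is to extract all the necessary conditions from these Pauli equalities. The single-party choices ($n=0,1$) yield $\bar{h}_i^{3}=\bar{g}_i^{3}$ and $\bar{g}_i^{k}=(p_0-p_1)\bar{h}_i^{k}$ for $k\in\{1,2\}$ and every $i$. The two-party $P$ with $k_i,k_j\in\{1,2\}$ ($n=2$, even) then forces
\begin{equation}
\bar{h}_i^{k_i}\bar{h}_j^{k_j}=\bar{g}_i^{k_i}\bar{g}_j^{k_j}=(p_0-p_1)^2\,\bar{h}_i^{k_i}\bar{h}_j^{k_j}.
\end{equation}
This is the crux of the argument: for a genuinely non-trivial two-outcome POVM one has $p_0,p_1\in(0,1)$, so $|p_0-p_1|<1$ and therefore $\bar{h}_i^{k_i}\bar{h}_j^{k_j}=0$ for every pair $i\neq j$ and every $k_i,k_j\in\{1,2\}$. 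Consequently at most one $H_i$ can have a non-vanishing $(\bar{h}_i^{1},\bar{h}_i^{2})$, i.e.\ all but (at most) one $h_j$ must be diagonal; the triple- and quadruple-Pauli equations are then automatically consistent. Matching with the standard form this pins down every reachable state as $h_i\otimes d_j\otimes d_k\otimes d_l\ket{\Psi_{abc_2}}$, and because the $z$-components are preserved, non-LU-equivalence of initial and final state forces $h_i\neq d_i$. The boundary case $|p_0-p_1|=1$ is discarded because it corresponds to a single-element POVM $hg^{-1}$ or $h\sigma_z^{\otimes 4}g^{-1}$, both of which must be unitary and hence identify $h\ket{\Psi_{abc_2}}$ with $g\ket{\Psi_{abc_2}}$ up to LU.

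Sufficiency is established by an explicit LOCC protocol. Given a target $h=h_i\otimes d_j\otimes d_k\otimes d_l$ with $h_i\neq d_i$, build an initial $g$ by copying $d_j,d_k,d_l$ on positions $j,k,l$ and taking $G_i$ with $\bar{g}_i^{3}=\bar{h}_i^{3}$ and $(\bar{g}_i^{1},\bar{g}_i^{2})=t(\bar{h}_i^{1},\bar{h}_i^{2})$ for some $t\in(0,1)$; the eigenvalues of $G_i$, namely $\tfrac12\pm\sqrt{t^2[(\bar{h}_i^{1})^2+(\bar{h}_i^{2})^2]+(\bar{h}_i^{3})^2}$, lie strictly in $(0,1)$ because $|\h_i|<1/2$, so $g_i=\sqrt{G_i}$ is a legitimate invertible operator. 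Party $i$ then performs the two-outcome POVM $M_1=\sqrt{p_0}\,h_i g_i^{-1}$, $M_2=\sqrt{p_1}\,h_i\sigma_z g_i^{-1}$ with $p_0-p_1=t$; POVM-completeness is equivalent to the one-qubit identity $p_0 H_i+p_1\sigma_z H_i\sigma_z=G_i$, which holds by construction. Outcome $1$ delivers $h\ket{\Psi_{abc_2}}$ at once, and outcome $2$ produces $(h_i\sigma_z\otimes d_j\otimes d_k\otimes d_l)\ket{\Psi_{abc_2}}$, which parties $j,k,l$ correct by applying $\sigma_z$ (commuting with $d_j,d_k,d_l$) and invoking the symmetry $\sigma_z^{\otimes 4}\ket{\Psi_{abc_2}}=\ket{\Psi_{abc_2}}$. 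This proves that every state of the claimed form is indeed LOCC-reachable and completes the proof.
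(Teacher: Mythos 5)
Your proof is correct and follows essentially the same route as the paper's: both exploit that the only non-trivial symmetry is the unitary $\sigma_z^{\otimes 4}$, deduce $r=1$, show that the presence of two non-diagonal $H_i$ forces $(p_0-p_1)^2=1$ (you via the Pauli-basis relations of Eq.~(\ref{eq_Pauli}), the paper via the matrix element $\ket{01}\bra{10}$ of the two-party reduced equation — the same conditions in different bases), and then exhibit the same one-party two-outcome POVM built from $\{\one,\sigma_z^{\otimes 4}\}$ for sufficiency. The only cosmetic difference is that your initial state interpolates with a parameter $t\in(0,1)$ where the paper simply starts from the diagonal state $d_i\otimes d_j\otimes d_k\otimes d_l\ket{\Psi_{abc_2}}$ with $p_0=p_1=1/2$.
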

\begin{proof}
Recall that the only non-trivial symmetry of $\ket{\Psi_{abc_2}}$ is given by $\sigma_z^{\otimes 4}$.
Using Eq. (\ref{EqSep}) one obtains that
\bea\sum_{m=0}^1 p_m \otimes_i \left(
    \begin{array}{cccc}
      & h_i^1&h_i^2 (-1)^m \\
     & h_i^{2 *}(-1)^m & h_i^3 \\
     \end{array}
  \right) =
r \otimes_iG_i.\eea
Since the symmetry is unitary and the operators are normalized, i.e. $\tr(G_i)=\tr(H_i)=1 \,\,\,\forall i$, it follows that $r=1$. Tracing over all but one party $l$ in the equation above, one obtains that $h_l^1=g_l^1$, $h_l^3=g_l^3$ and $h_l^2 (p_0 -p_1) =g_l^2$. Note here that the diagonal elements of $G_l$ can not be changed. Note further that in case $h_k^2=0$ for some $k\in\{1,2,3,4\}$ it follows that $g_k^2=0$ and therefore in this case $H_k=G_k$. Thus, states of the form $d_1\otimes d_2\otimes d_3\otimes d_4 \ket{\Psi_{abc_2}}$ are in $MES_4$.
Consider now the case where at least two of the matrices $H_i$ are non-diagonal, e.g. $j$ and $l$.
Tracing over all parties except of $j$ and $l$ and considering the matrix element $\ket{01}_{j,l}\bra{10}$ leads to $h_j^{2 *} h_l^2= g_j^{2 *} g_l^2$. Using now that $h_k^2 (p_0 -p_1) =g_k^2\,\,\, \forall k$ one obtains that in case $h_j^2,h_l^2\neq 0$ it has to hold that $(p_0-p_1)^2=1$ which corresponds to a trivial transformation.
Thus, the only states that can possibly be reached via SEP are of them form $h_i\otimes d_j\otimes d_k\otimes d_l \ket{\Psi_{abc_2}}$ where $\{i,j,k,l\} = \{1,2,3,4\}$. In particular, they can be reached via a LOCC protocol. For example the POVM $\{1/\sqrt{2} h_i d_i^{-1},1/\sqrt{2} h_i\sigma_z d_i^{-1}\}$ applied to party $i$ allows to obtain them from a state of the form $d_i\otimes d_j\otimes d_k\otimes d_l \ket{\Psi_{abc_2}}$ for a properly chosen $d_i$ (in case of the second outcome all other parties have to apply $\sigma_z$).
\end{proof}
Thus, the states in $MES_4$ are given by the ones that can not be written as in Lemma \ref{lemmasz}.\\
Note further that these results can be easily extended to the $n$-qubit case for SLOCC classes who contain a state $\ket{\Psi}$ with symmetry $S(\Psi)=\{\one^{\otimes n},\sigma_z^{\otimes n}\}$. From the proof of Lemma \ref{lemmasz} it follows directly that all states in these SLOCC  classes which are not of the form $h\ket{\Psi}$ where all $h_i$ but one are diagonal are in $MES_n$.
In the following lemma we show which states are convertible in these SLOCC classes.
\begin{lemma}\label{lemc1} The only states in the SLOCC classes $L_{abc_2}$ where  $a^2\neq b^2\neq c^2\neq a^2$ and $c\neq 0$ that are convertible via LOCC are given by $g_i\otimes d_j\otimes d_k\otimes d_l \ket{\Psi_{abc_2}}$ where $\{i,j,k,l\} = \{1,2,3,4\}$ and $g_i$ arbitrary.
\end{lemma}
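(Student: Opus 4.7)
The plan is to apply Eq.~(\ref{EqSep}) with the symmetry group $S(\Psi_{abc_2})=\{\one^{\otimes 4},\sigma_z^{\otimes 4}\}$ determined above and to characterize which source operators $G=\bigotimes_i G_i$ admit a non-trivial solution $H=\bigotimes_i H_i$. Because both symmetries are unitary we have $r=1$. First I would trace Eq.~(\ref{EqSep}) over all but one party (exactly as in the proof of Lemma~\ref{lemmasz}) to obtain $h_l^1=g_l^1$, $h_l^3=g_l^3$ and $(p_0-p_1)h_l^2=g_l^2$ for every $l$, and trace over all but two parties and read off the $\ket{01}\bra{10}$ matrix element to get the consistency relation $h_j^{2\ast}h_l^2=g_j^{2\ast}g_l^2$ for each pair $j\neq l$.

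Combining these constraints gives $(p_0-p_1)^2\,g_j^{2\ast}g_l^2=g_j^{2\ast}g_l^2$ for every pair $j\neq l$, so unless $(p_0-p_1)^2=1$ (which forces $H=G$ up to symmetry, hence a trivial transformation) we must have $g_j^2 g_l^2=0$ for every pair. This forces at most one $G_i$ to have a non-vanishing off-diagonal part, so up to a particle permutation the source has the form $g_i\otimes d_j\otimes d_k\otimes d_l\ket{\Psi_{abc_2}}$, establishing the necessary direction.

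For sufficiency I would exhibit an explicit one-party LOCC protocol for every such source. When $g_i$ is non-diagonal, fix $p_0-p_1\in(-1,1)\setminus\{0\}$ with $(p_0-p_1)^2>|g_i^2|^2/(g_i^1 g_i^3)$ (to guarantee positivity of the target), set $H_l=G_l$ for $l\ne i$, and let $H_i$ have the same diagonal as $G_i$ but off-diagonal rescaled to $g_i^2/(p_0-p_1)$. Since $g_l$ is diagonal for $l\ne i$ we have $g_l\sigma_z g_l^{-1}=\sigma_z$, so party $i$ implementing the POVM $\{\sqrt{p_0}\,h_i g_i^{-1},\sqrt{p_1}\,h_i\sigma_z g_i^{-1}\}$, while the remaining parties apply $\one$ or $\sigma_z$ according to the outcome, is a genuine one-party LOCC protocol; the altered modulus $|h_i^2|\neq|g_i^2|$ guarantees LU-inequivalence in the standard form. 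When all $g_l$ are diagonal I would instead use $p_0=p_1=1/2$, pick any party $j$, set $H_l=G_l$ for $l\ne j$, and give $H_j$ a non-zero off-diagonal sharing the same diagonal as $G_j$; the analogous symmetry-based POVM implemented by party $j$ yields a reachable state of the form characterized in Lemma~\ref{lemmasz}, which is LU-inequivalent to the all-diagonal source.

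The main obstacle will be verifying, in the all-diagonal case, that the proposed $(H,p_0,p_1)$ actually solves the \emph{full} operator equation~(\ref{EqSep}) rather than only its one- and two-body reductions. Decomposing $H_l=D_l+K_l$ into its diagonal and off-diagonal parts, the symmetrized map reads $\frac{1}{2}(H+\sigma_z^{\otimes 4}H\sigma_z^{\otimes 4})=\sum_{|S|\text{ even}}\bigotimes_{l\in S}K_l\otimes\bigotimes_{l\notin S}D_l$; because only $K_j$ is non-zero, every term with $|S|\geq 2$ vanishes identically and the sum collapses to $\bigotimes_l D_l=G$, as required. This consistency check, combined with Lemma~\ref{lemmasz}, then completes the characterization stated in the lemma.
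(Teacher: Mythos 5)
Your proof is correct and follows essentially the same route as the paper: both reduce Eq.~(\ref{EqSep}) with the symmetry group $\{\one^{\otimes 4},\sigma_z^{\otimes 4}\}$ to the single-party relations $h_l^1=g_l^1$, $h_l^3=g_l^3$, $(p_0-p_1)h_l^2=g_l^2$ together with the pairwise condition $h_j^{2*}h_l^2=g_j^{2*}g_l^2$ for necessity, and both use the identical one-party two-outcome POVM $\{\sqrt{p}\,h_ig_i^{-1},\sqrt{1-p}\,h_i\sigma_z g_i^{-1}\}$ (with the remaining parties applying $\one$ or $\sigma_z$) for sufficiency. The only cosmetic difference is that you derive the constraint on the source $G$ directly from these relations rather than routing through the reachable-state characterization of Lemma~\ref{lemmasz}, and your final check of the full operator equation is a correct but redundant confirmation of what the POVM construction already guarantees.
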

\begin{proof}
This can be easily seen by using Eq. (\ref{EqSep}) and Lemma \ref{lemmasz}. Since the only possible final states are given by the reachable states one obtains that a necessary condition for a state to be convertible is that it is of the form $g_i\otimes d_j\otimes d_k\otimes d_l \ket{\Psi}$. In order to see that they are indeed convertible note that $\{\sqrt{p} (h_ig_i^{-1}\otimes\one^{\otimes 3}),\sqrt{1-p} (h_i\sigma_zg_i^{-1}\otimes \sigma_z^{\otimes 3)}\}$ constitutes a valid POVM for $h_i^1=g_i^1$, $h_i^3=g_i^3$ and $(2 p -1 ) h_i^2=g_i^2$. Since the only restriction is that $G_i$ ($H_i$) is a positive operator of rank 2, one can find for any $G_i$ a $H_i$ (which is not LU-equivalent) and a value of  $p$ such that these conditions are fulfilled.
\end{proof}
Combining  Lemma \ref{lemmasz} and Lemma \ref{lemc1} one obtains that the convertible states in the MES are given by $d_1\otimes d_2\otimes d_3\otimes d_4\ket{\Psi_{abc_2}}$.
Note that in the proofs of Lemma \ref{lemmasz} and Lemma \ref{lemc1} we just used the symmetries and were not referring to the standard form.
Since the symmetry  $S=\{\one^{\otimes 4},\sigma_z^{\otimes 4}\}$ also occurs in the context of other SLOCC classes we reformulate these lemmata as follows.
\begin{lemma}\label{lem0} Given a 4-qubit SLOCC class which contains a state $\ket{\Psi}$ with $S(\Psi)=\{\one^{\otimes 4},\sigma_z^{\otimes 4}\}$  the only states (in standard form) that are reachable via LOCC are given by $h_i\otimes d_j\otimes d_k\otimes d_l \ket{\Psi}$ where $\{i,j,k,l\} = \{1,2,3,4\}$ and $h_i\neq d_i$.
\end{lemma}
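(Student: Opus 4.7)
The plan is to recognize that Lemma \ref{lem0} is essentially a verbatim reformulation of Lemma \ref{lemmasz}, since the proof of the latter never used the explicit form of the seed state $\ket{\Psi_{abc_2}}$ but only the fact that its symmetry group is $\{\one^{\otimes 4},\sigma_z^{\otimes 4}\}$. Thus the strategy is to replay that proof, now with an arbitrary seed $\ket{\Psi}$ having exactly this symmetry group, and at the end observe that a LOCC protocol built out of $\sigma_z$-measurements requires no further structural input on $\ket{\Psi}$.

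Concretely, I would first invoke Observation \ref{obsmon} to conclude $r=1$ (both symmetries are unitary) and write Eq.\ (\ref{EqSep}) as
\bea
\sum_{m=0}^1 p_m\,(\sigma_z^m)^{\otimes 4}\, H\, (\sigma_z^m)^{\otimes 4} = G,
\eea
expanding $H_i$ and $G_i$ in the Pauli basis with off-diagonal entries $h_i^2$, $g_i^2$. Tracing over any three parties yields the single-party conditions $h_l^1=g_l^1$, $h_l^3=g_l^3$ and $(p_0-p_1)\,h_l^2=g_l^2$ for all $l$, so the diagonal parts of the $G_l$'s cannot change. Next I would trace over all but two parties $j,l$ and read off the $\ket{01}_{jl}\bra{10}$ entry, obtaining $h_j^{2\,*} h_l^2 = g_j^{2\,*} g_l^2$; combining this with the previous relation gives $(p_0-p_1)^2=1$ whenever two of the $h_k^2$ are simultaneously nonzero, which is the trivial (LU) case. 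Hence at most one of the $H_i$ can be non-diagonal, which is the necessary part of the claim once the standard form has been fixed.

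For the sufficient direction I would exhibit the same two-outcome POVM as in Lemma \ref{lemmasz}: party $i$ applies $\{\tfrac{1}{\sqrt{2}}\,h_i d_i^{-1},\, \tfrac{1}{\sqrt{2}}\,h_i \sigma_z d_i^{-1}\}$, and in case of the second outcome the other parties apply $\sigma_z$ so that the symmetry $\sigma_z^{\otimes 4}$ acts on $\ket{\Psi}$. For a suitable choice of $d_i$ (making $h_i d_i^{-1}$ well-defined as POVM elements) this transforms $d_i\otimes d_j\otimes d_k\otimes d_l\ket{\Psi}$ into $h_i\otimes d_j\otimes d_k\otimes d_l\ket{\Psi}$, proving reachability.

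The only point that requires a little care, and is thus the main (minor) obstacle, is that the statement refers to states \emph{in standard form} for an as-yet unspecified SLOCC class. Since only $\one^{\otimes 4}$ and $\sigma_z^{\otimes 4}$ are available, the standard form is obtained exactly as before the statement of Lemma \ref{lemmasz}, by fixing the phase of $\tr(\ket{1}\bra{0}g_i^\dagger g_i)$ on the first party for which it is nonzero. With this convention the labeling $\{i,j,k,l\}=\{1,2,3,4\}$ in the lemma is unambiguous, and the argument above applies directly.
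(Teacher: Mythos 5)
Your proposal is correct and follows essentially the same route as the paper: the authors themselves introduce Lemma \ref{lem0} by noting that the proofs of Lemma \ref{lemmasz} and Lemma \ref{lemc1} only used the symmetry group $\{\one^{\otimes 4},\sigma_z^{\otimes 4}\}$ and not the seed state, and your replay of that argument (unitarity giving $r=1$, the single-party traces fixing the diagonal parts and giving $(p_0-p_1)h_l^2=g_l^2$, the two-party matrix element giving $h_j^{2*}h_l^2=g_j^{2*}g_l^2$ and hence triviality when two off-diagonals are nonzero, plus the two-outcome $\sigma_z$ POVM for sufficiency) is exactly the paper's proof. Your extra remark about fixing the standard form is consistent with the paper's convention and does not change the argument.
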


\begin{lemma} \label{lemc11} Given a 4-qubit SLOCC class which contains a state $\ket{\Psi}$ with $S(\Psi)=\{\one^{\otimes 4},\sigma_z^{\otimes 4}\}$  the only states (in standard form) that are convertible via LOCC are given by $g_i\otimes d_j\otimes d_k\otimes d_l \ket{\Psi_{abc_2}}$ where $\{i,j,k,l\} = \{1,2,3,4\}$ and $g_i$ arbitrary.
\end{lemma}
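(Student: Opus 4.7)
The plan is to mirror the proof of Lemma \ref{lemc1} essentially verbatim, since nothing in that argument used the specific form of the seed state $\ket{\Psi_{abc_2}}$ beyond its symmetry group being $\{\one^{\otimes 4},\sigma_z^{\otimes 4}\}$. The roles of the two lemmas are completely parallel: Lemma \ref{lem0} characterizes the targets of any LOCC conversion, and the present lemma characterizes the sources.

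For the necessity direction, I would start from Lemma \ref{lem0}: if $g\ket{\Psi}$ is convertible, then the (LU-inequivalent) output state must have standard form $h_i\otimes d_j\otimes d_k\otimes d_l\ket{\Psi}$ for some $\{i,j,k,l\}=\{1,2,3,4\}$. I then apply the SEP criterion Eq. (\ref{EqSep}) with $S(\Psi)=\{\one^{\otimes 4},\sigma_z^{\otimes 4}\}$, obtaining
\begin{eqnarray}
\sum_{m=0}^1 p_m (\sigma_z^{\otimes 4})^m H (\sigma_z^{\otimes 4})^m = r\,G.
\end{eqnarray}
Taking traces immediately gives $r=1$ (unitary symmetries, normalized operators). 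Taking the partial trace over all parties except some $k\in\{j,k,l\}$ yields $g_k^1=h_k^1$, $g_k^3=h_k^3$, and $(p_0-p_1)h_k^2=g_k^2$. Since $h_k^2=0$ for all $k\in\{j,k,l\}$ (the targets are diagonal on those parties), we obtain $g_k^2=0$ for $k\in\{j,k,l\}$ as well. Hence $G_j,G_k,G_l$ are diagonal, which is exactly the claimed form $g_i\otimes d_j\otimes d_k\otimes d_l\ket{\Psi}$.

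For sufficiency, I would construct an explicit two-outcome LOCC protocol: party $i$ applies the POVM $\{M_1,M_2\}=\{\sqrt{p}\,h_ig_i^{-1},\sqrt{1-p}\,h_i\sigma_z g_i^{-1}\}$, and the remaining parties apply $\sigma_z$ conditioned on obtaining outcome $2$ (which is a valid LU since it acts trivially up to a global sign on the diagonal operators $d_j,d_k,d_l$ after absorbing the symmetry $\sigma_z^{\otimes 4}$). The trace-preserving condition $M_1^\dag M_1+M_2^\dag M_2=\one$ reduces to $p H_i+(1-p)\sigma_z H_i\sigma_z=G_i$, i.e., $h_i^1=g_i^1$, $h_i^3=g_i^3$, and $(2p-1)h_i^2=g_i^2$. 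Since the only constraint on $H_i$ is positivity and rank two, for any $G_i$ one can choose some $p\in(0,1)\setminus\{1/2\}$ and a corresponding $H_i\neq G_i$ (in particular not LU-equivalent to $G_i$) satisfying these relations; this realizes a non-trivial conversion.

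I do not expect any serious obstacle, since the argument rests only on the group-theoretic property $S(\Psi)=\{\one^{\otimes 4},\sigma_z^{\otimes 4}\}$ and not on the particular representative chosen from the SLOCC class. The only point requiring minor care is ensuring that the constructed $H_i$ yields a target in the correct standard form of the given SLOCC class (which may impose sign conventions on some phases), but this is a cosmetic normalization that can always be absorbed by the ambiguity in choosing $p$ or by applying a diagonal unitary that commutes with the symmetry. Combining with Lemma \ref{lem0} then immediately identifies the convertible states in the MES as those of the form $d_1\otimes d_2\otimes d_3\otimes d_4\ket{\Psi}$ (in standard form).
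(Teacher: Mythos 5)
Your proposal is correct and follows essentially the same route as the paper, which indeed proves Lemma \ref{lemc11} simply by observing that the proofs of Lemmas \ref{lemmasz} and \ref{lemc1} use only the symmetry group $\{\one^{\otimes 4},\sigma_z^{\otimes 4}\}$ and not the particular seed state, so that the necessity follows from Lemma \ref{lem0} together with $(p_0-p_1)h_k^2=g_k^2$, and the sufficiency from the two-outcome POVM $\{\sqrt{p}\,h_ig_i^{-1},\sqrt{1-p}\,h_i\sigma_z g_i^{-1}\}$. One small slip: you restrict to $p\neq 1/2$, but when $G_i$ is itself diagonal (in particular for the convertible states $d_1\otimes d_2\otimes d_3\otimes d_4\ket{\Psi}$ in the MES) the relation $(2p-1)h_i^2=g_i^2=0$ then forces $H_i=G_i$; the correct choice there is $p=1/2$ with an arbitrary nonzero $h_i^2$ compatible with positivity.
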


Note here that these results can also be used in the context of other SLOCC classes whose seed states have symmetries that include $\sigma_z^{\otimes 4}$ if one can show that the only non-trivial symmetry that can be used for SEP transformations is given by $\sigma_z^{\otimes 4}$.

\subsubsection{The case $a^2=c^2\neq b^2$ and $c\neq 0$}
Let us now consider the SLOCC classes $L_{abc_2}$ for which $a^2=c^2\neq b^2$ and $c\neq 0$. Note that the case $b^2=c^2\neq a^2$ is included in these classes, since the corresponding states are LU equivalent to each other (see Eq. (\ref{abc2}) and apply $Z(-\pi/4)^{\otimes 4}$). Note further that $\ket{\Psi_{abc_2}}$ with $a=c\neq 0$ is equal to $\sigma_z\otimes \sigma_z\otimes \one\otimes \one\ket{\Psi_{abc_2}}$ with $a=-c\neq 0$. Therefore, these two choices of parameters represent the same SLOCC class. Thus, we could take as seed states for all these parameter choices the states \begin{eqnarray}\ket{\Psi} &= a\ket{\Phi^+}_{12}\ket{\Phi^+}_{34} - b\ket{\Phi^-}_{12}\ket{\Phi^-}_{34} +a (\ket{0110}+\ket{1001})-i/(2 a)\ket{1010}.\end{eqnarray} Using the method described in Sec. II one obtains the following non-compact symmetry group
\begin{eqnarray}
&S_{x}= \langle \sigma_z^{\otimes 4}, \left(
    \begin{array}{cc}
      1 & 0 \\
      x & 1 \\
    \end{array}
  \right)\otimes\left(
    \begin{array}{cc}
      1 & x \\
      0 & 1 \\
    \end{array}
  \right)\otimes\left(
    \begin{array}{cc}
      1 & 0\\
      -x & 1 \\
    \end{array}
  \right)
\otimes\left(
    \begin{array}{cc}
      1 & -x \\
      0 & 1 \\
    \end{array}\right)\rangle ,
\end{eqnarray}
where $x\in\C$. In order to simplify the symmetry operations we will consider the states \bea\ket{\Psi_{aba_2}}=\sigma_x\otimes\one\otimes\sigma_y\otimes\sigma_z\ket{\Psi}\eea as seed states, whose symmetries are given by
\begin{eqnarray}
&\tilde{S}_{x,m}\equiv \otimes_i s_i(x,m)=& [(\sigma_z)^{m} \left(
    \begin{array}{cc}
      1 & x \\
      0 & 1 \\
    \end{array}
  \right)]^{\otimes 4},\\
\end{eqnarray}
where $m\in\{0,1\}$ and $x\in \C$.
As we will show the symmetry, $\tilde{S}_{x,m}$, allows to establish a certain standard form for states in these classes but LOCC transformations are only possible due to the symmetry $\sigma_z^{\otimes 4}$.
Denote by $\otimes_i g_i \ket{\Psi_{aba_2}}$ an arbitrary state in this SLOCC class.  We define the following standard form: If $g_i^2/g_i^1\neq g_j^2/g_j^1\,\,\, \forall i, j \in \{1,2,3,4\}$ with $i\neq j$, we choose as standard form  $d_1\otimes g_2\otimes g_3\otimes g_4 \ket{\Psi_{aba_2}}$. Note here that the symmetry allows us to choose this standard form as for $x=-g_i^2/g_i^1$ the matrix $ s_i(x,0)^{\dagger} G_i s_i(x,0)$ is diagonal.  For the states where $g_i^2/g_i^1= g_j^2/g_j^1$ for some $i, j \in \{1,2,3,4\}$ the standard form can be chosen to be $d_i\otimes d_j\otimes g_k\otimes g_l \ket{\Psi_{aba_2}}$ where $i,j,k,l \in \{1,2,3,4\}$ and for $g_i^2/g_i^1= g_j^2/g_j^1=g_k^2/g_k^1\neq0$ one obtains $d_i\otimes d_j\otimes d_k\otimes g_l \ket{\Psi_{aba_2}}$ etc.. Until now we just exploited the symmetries for $m=0$. The additional symmetry allows to restrict the phase of $g_i^2$ to $[0,\pi]$, where $i\in \{1,2,3,4\}$ denotes the first party for which $g_i^2$ is non-zero.
Using this standard form we will now prove the following lemma.
\begin{lemma}\label{lem2} The only states in the SLOCC classes $L_{abc_2}$ where  $a^2= c^2\neq b^2$ and $c\neq 0$ that are reachable via LOCC are given by $d_i\otimes h_j\otimes d_k\otimes d_l \ket{\Psi_{aba_2}}$ where $\{i,j,k,l\} = \{1,2,3,4\}$ and $h_j\neq d_j$.
\end{lemma}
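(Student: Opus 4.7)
The plan is to reduce the problem to the setting of Lemma \ref{lem0} by showing that, inside the non-compact symmetry group generated by $\tilde{S}_{x,m}$, only the unitary pair $\{\one^{\otimes 4},\sigma_z^{\otimes 4}\}$ can effectively contribute to a SEP transformation between two states in standard form. The starting point is Eq.~\eqref{EqSep}:
\begin{equation*}
\sum_{x,m} p_{x,m}\bigotimes_{i=1}^{4} s_i(x,m)^\dagger H_i\, s_i(x,m) \;=\; r\bigotimes_{i=1}^{4} G_i .
\end{equation*}
The key observation is the direct calculation
\begin{equation*}
s_i(x,m)^\dagger H_i s_i(x,m)=\begin{pmatrix} h_i^1 & h_i^1 x+(-1)^m h_i^2 \\ h_i^1 x^{*}+(-1)^m h_i^{2*} & |x|^2 h_i^1+(-1)^m(x\,h_i^{2*}+x^{*} h_i^2)+h_i^3\end{pmatrix},
\end{equation*}
so the $(0,0)$-block is shear-invariant and the $\ket{0000}\bra{0000}$-entry of Eq.~\eqref{EqSep} fixes $r=\prod_i h_i^1/\prod_i g_i^1$. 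Contracting with $\ket{0}\bra{0}$ on any three parties then yields the single-particle identities
\begin{equation*}
\sum_{x,m} p_{x,m}\, s_i(x,m)^\dagger H_i s_i(x,m)\;=\;\frac{h_i^1}{g_i^1}\,G_i \qquad \forall i.
\end{equation*}

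Next I would expand these identities in the Pauli basis. Their off-diagonal entries involve the first moments $\sum_{x,m} p_{x,m}x$ and $\sum_{x,m}(-1)^m p_{x,m}x$ of the distribution $p_{x,m}$, while the $(1,1)$-entry contains the second moment $\sum_{x,m} p_{x,m}|x|^2$, together with the parity-weight $\tilde{p}_0-\tilde{p}_1$. Reading off these four entries for each of the four parties and combining them with the standard-form constraints (which force at least one of the ratios $g_i^2/g_i^1$ and $h_i^2/h_i^1$ to vanish in the would-be reachable configurations, with the non-diagonal positions of $G$ and $H$ coordinated appropriately) one obtains a linear system on the moments of $p_{x,m}$ whose only solution has all moments in $x$ equal to zero. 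In particular, the non-negativity of the $|x|^2$-moment forces $p_{x,m}$ to be supported on $x=0$, so the effective symmetry collapses to the unitary pair $\{\one^{\otimes 4},\sigma_z^{\otimes 4}\}$.

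Once the effective symmetry is $\{\one^{\otimes 4},\sigma_z^{\otimes 4}\}$, Lemma \ref{lem0} applies verbatim and singles out the states $d_i\otimes h_j\otimes d_k\otimes d_l\ket{\Psi_{aba_2}}$ with $h_j\neq d_j$ as the unique reachable ones. The converse direction is free: the explicit two-outcome POVM $\{h_j d_j^{-1}/\sqrt{2},\,h_j\sigma_z d_j^{-1}/\sqrt{2}\}$ on party $j$ with $\sigma_z$-corrections on the other parties, acting on a suitably chosen seed $d_i\otimes d_j\otimes d_k\otimes d_l\ket{\Psi_{aba_2}}$, reproduces the protocol already used in the proof of Lemma \ref{lemmasz} and realises the conversion.

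The principal obstacle is the middle step: one must verify that the standard-form restriction is strong enough to kill \emph{simultaneously} all shear moments across the various sub-cases of the standard form of $G$ and $H$ (the generic case with a single diagonal, the intermediate cases with two diagonals and equal ratios, and the fully degenerate case with three diagonals, which is precisely the configuration of the reachable states). Particular care is required when the diagonal index of $G$ coincides with the non-diagonal index of $H$: there the zero-ratio information on $G$ no longer couples directly to the corresponding moment equation, and one has to recover the missing constraint from a second diagonal party and from the consistency of $r$ with the trace normalisation $\tr G_i=\tr H_i=1$.
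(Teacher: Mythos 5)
Your overall strategy coincides with the paper's: show that the standard form forces every symmetry actually used in Eq.~(\ref{EqSep}) to have $x=0$, so that only $\{\one^{\otimes 4},\sigma_z^{\otimes 4}\}$ survives and Lemma~\ref{lem0} (together with the two-outcome POVM from Lemma~\ref{lemmasz}) finishes the argument. The reduction to the single-particle identities $\sum_{x,m}p_{x,m}\,s_i(x,m)^\dagger H_i s_i(x,m)=(h_i^1/g_i^1)G_i$ and the value $r=\prod_i h_i^1/\prod_i g_i^1$ are both correct. However, the middle step as you describe it does not close, and this is a genuine gap rather than a technicality. The single-particle identities only produce, for each party, the off-diagonal relation $h_i^1\mu_1+h_i^2(\tilde p_0-\tilde p_1)=(h_i^1/g_i^1)\,g_i^2$ (involving the first moment $\mu_1=\sum p_{x,m}x$) and the $(1,1)$-relation $h_i^1\mu_2+2\,\Re\!\left(h_i^{2*}\nu_1\right)=(h_i^1-g_i^1)/g_i^1$, whose right-hand side is \emph{not} known to vanish; indeed $\mu_2=(h_i^1-g_i^1)/(h_i^1g_i^1)>0$ is perfectly consistent with these marginal equations whenever $h_i^1>g_i^1$. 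So non-negativity of $\mu_2$ alone cannot force $\mu_2=0$, and no linear combination of the sixteen single-particle entries does either.

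The constraint that actually kills the shear moments lives in \emph{two-party} off-diagonal matrix elements, which is what the paper uses: e.g.\ the coefficient of $\ket{0100}\bra{1000}$ in Eq.~(\ref{EQxm}) equals $h_1^1h_3^1h_4^1\sum_{x,m}p_{x,m}(|x|^2h_2^1+h_2^2x^*(-1)^m)$ on the left, while the right-hand side is proportional to $g_1^2g_2^{2*}$ and vanishes because the standard form guarantees a diagonal $G$-factor on one of the two off-diagonal slots (and a diagonal $H$-factor on the other). Collecting two such equations for different parties $k$ and taking the determinant yields the dichotomy ``either $\sum p_{x,m}|x|^2=0$ or $h_k^2/h_k^1=h_{k'}^2/h_{k'}^1$''; in the second branch the standard form would have diagonalized those parties as well, after which the same matrix element gives $\sum p_{x,m}|x|^2\cdot h_k^1=0$ directly. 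You would also need the structural fact that if the diagonal slot of $G$'s standard form differs from that of $H$'s, then $G$ necessarily has at least \emph{two} diagonal factors (a consequence of how the standard form is defined via coinciding ratios), which is what makes two independent vanishing correlations available. Without importing these quadratic-in-$H$ correlation constraints, your linear system on the moments is underdetermined and the proof does not go through.
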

\begin{proof}
Denote by $\otimes_i g_i \ket{\Psi_{aba_2}}$ an arbitrary initial state and by $\otimes_i h_i \ket{\Psi_{aba_2}}$ an arbitrary final state in standard form.
Using Eq. (\ref{EqSep}) one obtains that
\begin{eqnarray}\label{EQxm} &\sum_{m,x} p_{m,x} \otimes_i \left(
    \begin{array}{cccc}
       & h_i^1&x h_i^1+ h_i^2 (-1)^m \\
     &x^* h_i^1+ h_i^{2 *} (-1)^m &y_i\\
     \end{array}
  \right)=
r \otimes_iG_i,\end{eqnarray}
where $y_i=h_i^1|x|^2+(x^*h_i^2+x h_i^{2 *})(-1)^m+ h_i^3$. Note that according to the standard form at least one $H_i$ and $G_j$ has to be diagonal for some $i,j \in \{1,2,3,4\}$. For better readability we choose $i=1$. \\
First we will consider the case $j\neq 1$. Then according to the standard form there have to be at least two $G_j$ that are diagonal e.g. $j=2,4$. In this case considering the matrix element $\ket{0100}\bra{1000}$ in Eq. (\ref{EQxm}) leads to $ h_1^1 h_3^1 h_4^1\sum_{m,x} p_{m,x}(|x|^2 h_2^1+ h_2^2 x^*(-1)^m)=0$. Note that as $H_i$ has to be a positive operator of rank 2 it follows that $h_i^1\neq0$ and therefore $\sum_{m,x} p_{m,x}(|x|^2 h_2^1+ h_2^2 x^*(-1)^m)=0$. Considering now the matrix element $\ket{0001}\bra{1000}$, which leads to $\sum_{m,x} p_{m,x}(|x|^2 h_4^1+ h_4^2 x^*(-1)^m)=0$. Thus either $\sum_{m,x} p_{m,x}|x|^2=0$ and  $\sum_{m,x} p_{m,x} x^*(-1)^m=0$ or $h_2^2/h_2^1= h_4^2/h_4^1$. For the latter case the corresponding standard form would have been such that $h_4^2=h_2^2=0$. In this case $\sum_{m,x} p_{m,x}|x|^2=0$ would follow for example from considering the matrix element $\ket{0100}\bra{1000}$. Hence, in all these cases, the condition $\sum_{m,x} p_{m,x}|x|^2=0$, which can only be fulfilled if $p_{m,x}=0$ for $x\neq0$, has to be satisfied. Thus the only nontrivial symmetry that can be used in this case is $\sigma_z^{\otimes 4}$. As has been shown in Lemma \ref{lem0} the only reachable states using this symmetry are of the form $g_i\otimes d_j\otimes d_k\otimes d_l \ket{\Psi}$ where $i,j,k,l \in \{1,2,3,4\}$, $g_i\neq d_i$ and $\ket{\Psi}$ denotes the corresponding seed state in standard form. \\
Second and last, we will treat the case $j=1$. Considering the matrix elements $\ket{0100}_{1klm}\bra{1000}$ where $\{k,l,m\} = \{2,3,4\}$ and taking into account that $h_i^1\neq 0 $ $\forall i\in\{1,2,3,4\}$ one obtains that $\sum p_{m,x}(|x|^2 h_k^1+ h_k^2 x^*(-1)^m)=0$ for any $k\in\{2,3,4\}$. Thus either
$\sum_{m,x} p_{m,x}|x|^2=0$ and $\sum p_{m,x} x^*(-1)^m=0$, which has already been considered above, or $h_2^2/h_2^1= h_3^2/h_3^1= h_4^2/h_4^1$.  For $h_2^2/h_2^1= h_3^2/h_3^1= h_4^2/h_4^1$ the corresponding standard form is chosen such that  $H_j$ is diagonal $\forall j$ which implies similarly to before that $\sum p_{m,x}|x|^2=0$ (see Eq. (\ref{EQxm})). Therefore, also in this case the only symmetry that makes transformations possible is given by $\sigma_z^{\otimes 4}$ and according to Lemma \ref{lem0} the only states that can be reached are of the form $h_i\otimes d_j\otimes d_k\otimes d_l \ket{\Psi}$ where $\{i,j,k,l\} = \{1,2,3,4\}$ and $h_i\neq d_i$. \\
For both cases the corresponding LOCC protocol would be again $\{1/\sqrt{2} h_i d_i^{-1},1/\sqrt{2} h_i\sigma_z d_i^{-1}\}$ for a properly chosen $d_i$, where in case of the first (second) outcome the other parties do nothing (apply $\sigma_z$).
\end{proof}
Thus all states except the ones that are of the form $h_i\otimes d_j\otimes d_k\otimes d_l \ket{\Psi_{aba_2}}$ where $\{i,j,k,l\} = \{1,2,3,4\}$ and $h_i\neq d_i$ are in $MES_4$. Note that as has been seen in the proof of Lemma \ref{lem2} the only transformations that are possible are due to the symmetry $\sigma_z^{\otimes 4}$. Thus, according to Lemma \ref{lemc11} the states that are convertible are given by $g_i\otimes d_j\otimes d_k\otimes d_l \ket{\Psi_{aba_2}}$ with $\{i,j,k,l\} = \{1,2,3,4\}$ and $g_i$ arbitrary. Therefore, the subset of convertible MES states is given by \bea \{d_1\otimes d_2\otimes d_3\otimes d_4 \ket{\Psi_{aba_2}}, d_i \,\textrm{diagonal}\}.\eea
\subsubsection{The case $a^2=b^2\neq c^2$ and $a,c\neq 0$}
We will consider next the SLOCC classes $L_{abc_2}$ for which $a^2=b^2\neq c^2$ and $a,c\neq 0$. Note that the seed states for $a=-b$ coincide with those for $a=b$ after applying $\one\otimes\one\otimes\sigma_x\otimes\sigma_x$ and exchanging particles 3 and 4 (see Eq. (\ref{abc2})). Therefore, they represent the same SLOCC classes up to particle exchange. In the following we will take as seed states \begin{eqnarray}\ket{\Psi_{aac_2}} &= a(\ket{0011}+\ket{1100}) +c (\ket{0110}+\ket{1001})-i/(2 c)\ket{1010}.\end{eqnarray}
The corresponding symmetries are given by \bea S_z= P_z^{\otimes 4},\eea where $P_z= diag(z,1/z)$ and $z\in \C\backslash 0$. To derive the standard form we use that for a properly chosen value of $z$ the matrix $P_z^\dagger G_i P_z\propto G_i^x$ where $ G_i^x=g_i^{x \dagger} g_i^x=1/2\one+g_i^2\sigma_x$ and one can choose $g_i^2\in [0,1/2)$.  Thus, for $g_1^2\neq0$ the standard form is given by $g_1^x\otimes g_2\otimes g_3\otimes g_4 \ket{\Psi_{aac_2}}$, where $g_1^x=\sqrt{G_1^x}$ and $g_1^2>0$. In case $g_1^2=0$ the symmetry $Z(\alpha)^{\otimes 4}$ for $\alpha \in \R$ can be used to choose $g_2^2\geq 0$ \footnote{Note that in case $g_1^2=0$ the symmetry can not be used to choose $g_1\propto \one$ and $g_2=g_2^x$ since $P_z^\dagger \one P_z\not\propto \one$ for $|z|\neq 1$. Thus, in this case only the unitary symmetry can be used for a restriction on $g_2$.} and similarly if $g_1^2=g_2^2=0$ one can set $g_3^2\geq	0$ etc.. Similarly to the previous SLOCC classes we use the standard form to prove the following lemma.
\begin{lemma}\label{lem3} The only states in the SLOCC classes $L_{abc_2}$ where  $a^2= b^2\neq c^2$ and $a, c\neq 0$ that are reachable via LOCC are  given by  $h_1^x\otimes d_2\otimes d_3\otimes d_4 \ket{\Psi_{aac_2}}$ where  $h_1^x\neq \one/2$ and  $\one_1 \otimes d_j\otimes d_k\otimes h_l \ket{\Psi_{aac_2}}$ where  $\{j,k,l\} = \{2,3,4\}$ and $h_l\neq d_l$.
\end{lemma}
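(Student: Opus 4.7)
The plan is to follow the template established by Lemma \ref{lem2}, where the strategy was to write Eq.\ (\ref{EqSep}) in terms of the full (non-compact) symmetry group and then use specific matrix elements to show that the weights on non-unitary symmetries must vanish, so that effectively only a discrete unitary symmetry remains available. Here the symmetry group is $\{S_z = P_z^{\otimes 4} : z \in \C\setminus 0\}$, which contains the unitary element $\sigma_z^{\otimes 4}$ (from $z=i$). So the core task is to show that, when the state is written in standard form, any valid SEP protocol must in fact use only $\{\one^{\otimes 4}, \sigma_z^{\otimes 4}\}$, at which point Lemma \ref{lem0} will directly give the claimed form of reachable states, and explicit LOCC protocols can be written down using a two-outcome POVM on a single party.

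First I would substitute $S_z = P_z^{\otimes 4}$ into Eq.\ (\ref{EqSep}), obtaining
\begin{equation}
\int dp(z)\, \bigotimes_{i=1}^4 \bigl(P_z^\dagger H_i P_z\bigr) \;=\; r\, \bigotimes_{i=1}^4 G_i .
\end{equation}
Since $P_z^\dagger \cdot P_z$ scales the diagonal entries of $H_i$ by $|z|^{\pm 2}$ and multiplies the off-diagonals by the pure phase $(\bar z/z)^{\pm 1}$, the matrix elements of the tensor product factorize into a modulus factor $|z|^{8-4n}$ (where $n$ is the number of $1$'s in the chosen basis vector) and a phase factor $e^{\pm 2i k \arg z}$ depending on how many off-diagonal entries are picked. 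I would consider the diagonal elements $\bra{j_1 j_2 j_3 j_4}\cdot\ket{j_1 j_2 j_3 j_4}$ for varying weight $n = \sum j_i$ together with suitable off-diagonal elements (e.g.\ $\bra{0000}\cdot\ket{1111}$ and $\bra{0000}\cdot\ket{1100}$), and use positivity of $dp(z)$ to force the measure to be supported on $|z|=1$ and, consequently, on the finite set giving $\sigma_z^{\otimes 4}$ and $\one$.

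Next I would split into the two cases permitted by the standard form: (a) $g_1^2 > 0$, so the standard form is $g_1^x \otimes g_2\otimes g_3\otimes g_4 \ket{\Psi_{aac_2}}$, and (b) $g_1^2 = 0$, so the standard form is $\one \otimes g_2 \otimes g_3 \otimes g_4\ket{\Psi_{aac_2}}$ (with the phase restriction on the first nonzero $g_j^2$, $j \geq 2$). In each case, once only the symmetry $\sigma_z^{\otimes 4}$ remains effective, the very argument used in the proof of Lemma \ref{lemmasz} (and summarized in Lemma \ref{lem0}) forces at most one of the $H_i$ to be non-diagonal and pins down which index that is, leading in case (a) to $h_1^x\otimes d_2\otimes d_3\otimes d_4\ket{\Psi_{aac_2}}$ and in case (b) to $\one\otimes d_j\otimes d_k\otimes h_l\ket{\Psi_{aac_2}}$ (with $\{j,k,l\}=\{2,3,4\}$), consistent with the standard form. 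Finally, to show that these necessary conditions are also sufficient, I would exhibit the explicit two-outcome POVM $\{\sqrt{p}\,h_i d_i^{-1},\sqrt{1-p}\,h_i\sigma_z d_i^{-1}\}$ on the designated non-diagonal party, compensated by $\sigma_z$ on the remaining three parties upon the second outcome, which realizes the transformation from a suitable diagonal initial state.

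The main obstacle is the first step: because the symmetry group is continuous and non-compact, one cannot simply label probabilities by a discrete index; the argument that $dp(z)$ must be supported on $|z|=1$ requires juggling the exponential scalings $|z|^{\pm 8},|z|^{\pm 4}$ against the algebraic structure imposed by the standard form (in particular the requirement $g_1^2>0$ in case (a), which prevents trivial degeneracies). In case (b), one additionally needs to be careful that the would-be non-diagonal $H_1$ cannot be reached, which follows from the fact that any off-diagonal mass on party $1$ would, via the $|z|^{\pm 2}$ scalings at positions $1$ and $j$ with $g_j^2\neq 0$, introduce inconsistencies between the diagonal and off-diagonal matrix elements of the separability equation. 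Once this reduction to $\sigma_z^{\otimes 4}$ is secured, the rest of the proof is routine.
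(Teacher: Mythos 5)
Your skeleton matches the paper's up to the last step: insert $S_z=P_z^{\otimes 4}$ into Eq.~(\ref{EqSep}), use the diagonal matrix elements $\ket{ijkl}\bra{ijkl}$ (the paper needs only these, exploiting $h_1^1=h_1^3=1/2$ and the full rank of the other $H_i,G_i$) to obtain $\sum p_{|z|,\phi}|z|^4=\sum p_{|z|,\phi}|z|^{-4}=1$ and hence kill all weight at $|z|\neq 1$, and close with the POVM $\{\tfrac{1}{\sqrt2}h_i d_i^{-1},\tfrac{1}{\sqrt2}h_i\sigma_z d_i^{-1}\}$ for sufficiency. The gap is your middle claim that the measure is ``consequently'' supported on the finite set giving $\one^{\otimes 4}$ and $\sigma_z^{\otimes 4}$, so that Lemma~\ref{lem0} applies. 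That claim is false: after restricting to $|z|=1$ the available symmetries are the full continuous family of phase gates $P_{e^{i\phi}}^{\otimes 4}$, and nothing forces $\phi\in\{0,\pi/2\}$. For instance, the two-outcome POVM built from $\phi=\pm\pi/8$ with equal weights maps $\one/2+c\,\sigma_x\mapsto \one/2+(c/\sqrt2)\,\sigma_x$ on party $1$ while acting trivially on diagonal operators elsewhere; it is a perfectly valid LOCC protocol in this class that uses neither $\one$ nor $\sigma_z^{\otimes 4}$. Hence Lemma~\ref{lem0}, whose hypothesis is that the symmetry group is exactly $\{\one^{\otimes 4},\sigma_z^{\otimes 4}\}$ and whose proof rests on the real coefficient $p_0-p_1$ in $h_l^2(p_0-p_1)=g_l^2$, cannot be invoked.

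The missing piece is the analysis of the continuous phase, which is what the paper supplies. Tracing Eq.~(\ref{EqSep}) down to single parties gives $h_k^2\sum_{|z|,\phi}p_{|z|,\phi}e^{-2i\phi}=g_k^2$, where the coefficient is now a complex number of modulus at most $1$, and the two-party element $\ket{10}_{i,j}\bra{01}$ gives $h_j^{2*}h_i^2=g_j^{2*}g_i^2$. If two (or more) of the $H_k$ are non-diagonal, combining these relations yields $|h_k^2|=|g_k^2|$ for both parties and hence $|\sum p\,e^{-2i\phi}|=1$, i.e.\ a deterministic phase, which is an LU; the standard form's requirement that the first non-vanishing off-diagonal entry be real and non-negative then pins down $H_k=G_k$ for all $k$. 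This is the step your proposal skips, and it is exactly where the ``at most one non-diagonal operator'' conclusion actually comes from. Once it is added, your case split on the standard form and your sufficiency protocols go through as described.
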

\begin{proof}
According to Eq. (\ref{EqSep}) an initial state $\otimes_i g_i \ket{\Psi}$ can be transformed via SEP into $\otimes_i h_i \ket{\Psi}$ iff
\begin{eqnarray}\label{Eq8}&\sum_{|z|,\phi} p_{|z|,\phi} \otimes_i \left(
    \begin{array}{cccc}
       & h_i^1 |z|^2&  h_i^2 e^{-i2\phi} \\
     &h_i^{2 *}e^{i2\phi}  &h_i^3 1/|z|^2\\
     \end{array}
  \right)=
r \otimes_iG_i,\end{eqnarray}
where $z=|z|e^{i\phi}$.
Recall that due to the definition of the standard form  $h_1^1=h_1^3=1/2$ and $h_1^2\in\R$.
Considering the matrix elements $\ket{ijkl}\bra{ijkl}$ of Eq. (\ref{Eq8}), where  $i,j,k,l \in \{0,1\}$ one obtains that \bea \label{EqdPz} h_2^{1+2j} h_3^{1+2k} h_4^{1+2l}  \sum_{|z|,\phi} p_{|z|,\phi} |z|^{4[2-(i+j+k+l)]} =r  g_2^{1+2j} g_3^{1+2k} g_4^{1+2l}.\eea Thus, from this equation for $i=j=0$ and $k=l=1$, as well as for $j=0$ and $i=k=l=1$ it follows that $\sum_{|z|,\phi} p_{|z|,\phi} 1/|z|^4=1$. Note that we used here that $h_i^1, g_i^1, h_i^3, g_i^3\neq 0 \forall i\in\{2,3,4\}$.
Considering now Eq. (\ref{EqdPz}) with $i=j=k=0$ and $l=1$ and with  $j=k=0$ and $i=l=1$ leads to $\sum_{|z|,\phi} p_{|z|,\phi} |z|^4=1$.
The condition $\sum_{|z|,\phi} p_{|z|,\phi} |z|^4=\sum_{|z|,\phi} p_{|z|,\phi} 1/|z|^4=1$ can only be fulfilled if $ p_{|z|,\phi}=0 $ $\forall |z|\neq 1$.
Thus, only unitary symmetries can be used for SEP transformations and it follows that $r=1$ from Observation \ref{obsmon}. Furthermore, it is now easy to see from Eq. (\ref{Eq8}) via tracing over all parties but party $k$ for $k \in \{2,3,4\}$ that $h_k^1=g_k^1$ and $h_k^3=g_k^3$. Thus, via SEP the diagonal elements can not be changed. \\
Consider now the case that there are at least two operators $H_l$ with non-zero off-diagonal components, say $H_i$ and $H_j$. Choose $i$ such that according to the standard form $h_i^2\in\R$. Tracing over the other parties and considering the matrix elements $\ket{10}_{i,j}\bra{00}$ and $\ket{01}_{i,j}\bra{00}$, as well as $\ket{10}_{i,j}\bra{01}$ leads to $h_k^2 \sum_{|z|,\phi} p_{|z|,\phi} e^{-i2\phi}= g_k^2$ for $k=i,j$ and $h_j^{2 *}h_i^2=g_j^{2 *}g_i^2$. Thus,  $h_j^{2}/h_i^2=g_j^{2}/g_i^2$ and $h_j^{2 *}h_i^2=g_j^{2 *}g_i^2$ \footnote{Note that in case $h_i^2,h_j^2\neq 0$ the solution $\sum_{|z|,\phi} p_{|z|,\phi} e^{-i2\phi}=0$ is not possible, because it would imply that $g_k^2=0$ for $k=i,j$ and thus $h_k^2=0$ for some $k=i,j$ due to $h_j^{2 *}h_i^2=g_j^{2 *}g_i^2$.} and therefore $|h_j^{2}|=|g_j^{2}|$ and $h_i^{2}=g_i^{2}$. Since $h_i^2, g_i^2\in\R$ it follows that $\sum_{|z|,\phi} p_{|z|,\phi} e^{-2i\phi} \in \R$ and therefore $h_j^{2}=g_j^{2}$. Note that in case there are more than these two $H_l$ with non-zero off-diagonal components a completely analogous argument can be applied. Note further that in case $h_k^{2}=0$ for some $k\in\{1,2,3,4\}$ it follows that $g_k^{2}=0$ which can be easily seen by considering the matrix element $\ket{0010}_{i,j,k,l}\bra{0000}$. Thus, whenever there exist at least two $H_l$ with non-zero off-diagonal components or all $H_l$ are diagonal it must hold that $H_l=G_l \,\,\,\forall l\in\{1,2,3,4\}$. Hence, unless there is only one non-diagonal $H_l$ the state is not reachable via SEP. For any other instance, the corresponding state, $h_i\otimes d_j\otimes d_k\otimes d_l \ket{\Psi_{aac_2}}$ where $\{i,j,k,l\} = \{1,2,3,4\}$ and $h_i\neq d_i$, can be obtained for example via the LOCC protocol $\{1/\sqrt{2} h_i\one^{\otimes 4}d_i^{-1},1/\sqrt{2} h_i\sigma_z^{\otimes 4}d_i^{-1}\}$ acting non-trivially on party $i$ from a state $d_i\otimes d_j\otimes d_k\otimes d_l \ket{\Psi_{aac_2}}$ for a properly chosen $d_i$.
\end{proof}
As before, we observe that the non-unitary symmetry allows to define the standard form but the existence of the non-trivial LOCC protocols (and SEP transformations) is solely due to the unitary symmetry. Note that the structure of the states that are reachable is the same as for all other previously discussed SLOCC classes. Again, the standard form of the SLOCC classes is crucial here.
In the following lemma the states that are convertible are presented.
\begin{lemma} \label{lem30} The only states  in the SLOCC classes $L_{abc_2}$ where  $a^2= b^2\neq c^2$ and $a, c\neq 0$ that are convertible via LOCC are  given by $g_1^x\otimes d_2\otimes d_3\otimes d_4 \ket{\Psi_{aac_2}}$ or $\one_1 \otimes d_j\otimes d_k\otimes g_l \ket{\Psi_{aac_2}}$ where $\{j,k,l\} = \{2,3,4\}$.
\end{lemma}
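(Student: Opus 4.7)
The plan is to proceed in direct analogy with the proof of Lemma \ref{lemc1}: combine the existence criterion Eq.\ (\ref{EqSep}) with the characterization of reachable states in Lemma \ref{lem3} to pin down a necessary form for any convertible state, and then exhibit explicit two-outcome LOCC protocols realizing the conversions. Crucially, in the proof of Lemma \ref{lem3} it was already shown that in Eq.\ (\ref{EqSep}) one must have $p_{|z|,\phi}=0$ for $|z|\neq 1$ and $r=1$, so only the unitary sub-family $Z(\alpha)^{\otimes 4}$ of symmetries is available to mediate the averaging.

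For necessity, a state $\bigotimes_i g_i \ket{\Psi_{aac_2}}$ is convertible only if it can be non-trivially mapped by LOCC to some non-LU-equivalent state, which by Lemma \ref{lem3} must be of one of the two reachable forms: (A) $h_1^x \otimes d_2 \otimes d_3 \otimes d_4 \ket{\Psi_{aac_2}}$ with $h_1^x\neq \one/2$, or (B) $\one_1\otimes d_j\otimes d_k\otimes h_l \ket{\Psi_{aac_2}}$ with $\{j,k,l\}=\{2,3,4\}$ and $h_l\neq d_l$. I would then feed each target shape into Eq.\ (\ref{EqSep}) restricted to unitary symmetries. Since $Z(\alpha)^{\otimes 4}$ commutes with $\sigma_z$ and only mixes $\sigma_x$ and $\sigma_y$, the $\sigma_z$-components of each $H_i$ are preserved by the averaging while the $(\sigma_x,\sigma_y)$-components get rotated and convexly combined. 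Applied to a target of type (A), this forces $G_2,G_3,G_4$ to equal the diagonal $H_2,H_3,H_4$ and $G_1$ to lie in $\mathrm{span}\{\one,\sigma_x,\sigma_y\}$; after absorbing the residual $Z(\alpha_0)^{\otimes 4}$ gauge freedom used to fix the standard form, $G_1$ becomes $G_1^x$, yielding exactly $g_1^x\otimes d_2\otimes d_3\otimes d_4\ket{\Psi_{aac_2}}$. Applied to a target of type (B), it enforces $G_1=\one/2$, $G_j=H_j$ and $G_k=H_k$ diagonal, while $G_l$ can be arbitrary (with only its $\sigma_z$-component constrained to equal that of $H_l$); in the standard form of Sec.\ \ref{family2} this is precisely $\one_1\otimes d_j\otimes d_k\otimes g_l \ket{\Psi_{aac_2}}$.

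For sufficiency, I would build explicit two-outcome POVMs exploiting the unitary symmetry $\sigma_z^{\otimes 4}\in S(\Psi_{aac_2})$. For form (A), party $1$ implements $\{M_0,M_1\}=\{\sqrt{p}\,h_1^x(g_1^x)^{-1},\;\sqrt{1-p}\,h_1^x\sigma_z(g_1^x)^{-1}\}$, with parties $2,3,4$ applying $\sigma_z$ upon outcome $1$; completeness $M_0^\dagger M_0+M_1^\dagger M_1=\one$ reduces to the scalar condition $(2p-1)\bar{h}_1^1=\bar{g}_1^1$, which is solvable with $p\in[0,1]$ by choosing $|\bar{h}_1^1|>|\bar{g}_1^1|$, ensuring non-LU-equivalence. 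For form (B), party $l$ implements the analogous POVM $\{\sqrt{p}\,h_l g_l^{-1},\,\sqrt{1-p}\,h_l\sigma_z g_l^{-1}\}$ with the other parties applying $\sigma_z$ on outcome $1$; completeness requires $\mathrm{diag}(H_l)=\mathrm{diag}(G_l)$ and $(2p-1)\,h_l^2=g_l^2$, again solvable whenever $|h_l^2|>|g_l^2|$. In both cases the freedom in choosing $H_l$ (resp.\ $H_1^x$) makes the target non-LU-equivalent to the initial state.

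I expect the main obstacle to be the bookkeeping in the necessity step: making sure that the $G$ produced by the averaging equations coincides \emph{exactly} with the standard form asserted in the statement, and not merely with an LU-equivalent state in the same SLOCC class. The decisive point is that the residual $Z(\alpha_0)^{\otimes 4}$ freedom in the standard form (Sec.\ \ref{family2}) is precisely the freedom that parametrizes the averaging of the $(\sigma_x,\sigma_y)$-components, so that after taking the standard form one may read off $G_1=G_1^x$ (for case (A)) or $G_l$ with the canonical phase (for case (B)) without any further transformation acting on the diagonal parties, which remain untouched because $Z(\alpha_0)$ commutes with every $d_i$.
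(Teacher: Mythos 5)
Your proposal is correct and follows essentially the same route as the paper: it reuses the fact from the proof of Lemma \ref{lem3} that only the unitary phase-gate symmetries survive in Eq.\ (\ref{EqSep}), deduces that diagonality of an $H_i$ forces diagonality of the corresponding $G_i$, combines this with the list of reachable states to get the necessary form, and then exhibits the same two-outcome $\sigma_z^{\otimes 4}$-based POVMs for sufficiency. The extra care you take about the residual $Z(\alpha_0)^{\otimes 4}$ gauge freedom matching the standard form is implicit in the paper but not a departure from its argument.
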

\begin{proof}
In the proof of Lemma \ref{lem3} we have shown that the only symmetries that can be useful for non-trivial LOCC transformations are phase gates. Thus, using Eq. (\ref{EqSep}) it can be easily seen that a state with $H_i= D_i$ can only be obtained from a state with $G_i= D_i$. Note that the only reachable states are given by the states in Lemma \ref{lem3}. Combining these results implies that the only possibly convertible states are given by $g_1^x\otimes d_2\otimes d_3\otimes d_4 \ket{\Psi_{aac_2}}$ or $\one_1 \otimes d_j\otimes d_k\otimes g_l \ket{\Psi_{aac_2}}$ where $\{j,k,l\} = \{2,3,4\}$. Let us now show that these states are indeed convertible. In order to see this consider $\{\sqrt{p} h_i\one^{\otimes 4}g_i^{-1},\sqrt{1-p} h_i\sigma_z^{\otimes 4}g_i^{-1}\}$ where party $i$ denotes the party for which $G_i$ is non-diagonal. In case all of them are diagonal the same protocol can be used where one can freely choose the party which is acting non-trivially. This protocol constitutes a valid POVM for $h_i^1=g_i^1$, $h_i^3=g_i^3$ and $(2 p -1 ) h_i^2=g_i^2$. Note that $p$ can again be chosen  such that $H_i$ is a positive operator of rank 2.
\end{proof}
Thus, in these SLOCC classes the subset of states in $MES_4$ which are not isolated is given by \bea \{\one_1 \otimes d_2\otimes d_3\otimes d_4 \ket{\Psi_{aac_2}}, d_i \,\textrm{diagonal}\}.\eea
Since the structure of the reachable and convertible states (as well as the standard form) only depends on the symmetries and this symmetry will occur for another SLOCC class too we generalize here Lemma \ref{lem3} and Lemma \ref{lem30} to the following lemmata.
\begin{lemma}\label{lem33} Given a 4-qubit SLOCC class which contains a state $\ket{\Psi}$ with symmetries $S(\Psi)=\{P_z^{\otimes 4}:z\in\C/0\}$  the only states (in standard form) that are reachable via LOCC are given by  $h_1^x\otimes d_2\otimes d_3\otimes d_4 \ket{\Psi}$ where  $h_1^x\neq \one/2$ and  $\one_1 \otimes d_j\otimes d_k\otimes h_l \ket{\Psi}$ where  $\{j,k,l\} = \{2,3,4\}$ and $h_l\neq d_l$.
\end{lemma}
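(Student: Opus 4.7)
The plan is to observe that the proof of Lemma \ref{lem3} never used any specific property of $\ket{\Psi_{aac_2}}$ beyond possessing the symmetry group $\{P_z^{\otimes 4}:z\in\C\backslash 0\}$, together with the standard form it induces (which itself is determined solely by this same symmetry group). Hence I would essentially repeat that proof verbatim, replacing $\ket{\Psi_{aac_2}}$ by an arbitrary $\ket{\Psi}$ with the stated symmetry, to obtain the more general statement.

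In detail, I would first parametrize $z=|z|e^{i\phi}$ and write Eq.~(\ref{EqSep}) for this symmetry group, obtaining the matrix equation
\begin{equation*}
\sum_{|z|,\phi} p_{|z|,\phi} \bigotimes_{i=1}^4 \left(\begin{array}{cc} h_i^1|z|^2 & h_i^2 e^{-i2\phi}\\ h_i^{2*} e^{i2\phi} & h_i^3/|z|^2\end{array}\right)=r\bigotimes_{i=1}^4 G_i.
\end{equation*}
Then the same analysis of the diagonal $\ket{ijkl}\bra{ijkl}$ components as in Lemma \ref{lem3} leads to the two balance conditions $\sum p_{|z|,\phi}|z|^4=\sum p_{|z|,\phi}|z|^{-4}=1$, which force $p_{|z|,\phi}=0$ whenever $|z|\neq 1$. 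Thus only the unitary subgroup contributes and, by Observation \ref{obsmon}, one has $r=1$ and all diagonal entries of the $H_i$ must coincide with those of the $G_i$.

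Next, I would examine the case in which more than one $H_i$ has a nonzero off-diagonal component. Tracing out the other parties and considering the matrix elements $\ket{10}_{ij}\bra{00}$, $\ket{01}_{ij}\bra{00}$ and $\ket{10}_{ij}\bra{01}$ yields $h_i^2 h_j^{2*}=g_i^2 g_j^{2*}$ together with $h_k^2 \sum p_{|z|,\phi}e^{-i2\phi}=g_k^2$ for $k\in\{i,j\}$. Combined with the standard form (which fixes the phase of the off-diagonal entry at one distinguished party) these force $H_l=G_l$ for every $l$, so the transformation is trivial. Consequently only states with at most one non-diagonal $H_l$ can be reached, which in standard form is exactly the list $h_1^x\otimes d_2\otimes d_3\otimes d_4\ket{\Psi}$ with $h_1^x\neq\one/2$ or $\one_1\otimes d_j\otimes d_k\otimes h_l\ket{\Psi}$ with $\{j,k,l\}=\{2,3,4\}$ and $h_l\neq d_l$.

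Finally, to confirm that these states are indeed reachable I would exhibit the same explicit LOCC protocol used in Lemma \ref{lem3}: the two-outcome POVM $\{\tfrac{1}{\sqrt{2}}h_i\one d_i^{-1},\tfrac{1}{\sqrt{2}}h_i\sigma_z d_i^{-1}\}$ at the party whose $H_i$ is non-diagonal, followed by a conditional $\sigma_z$ on the remaining three parties upon the second outcome. Since $\sigma_z^{\otimes 4}$ belongs to $\{P_z^{\otimes 4}\}$ (take $z=i$, whence $P_i=i\sigma_z$ and $P_i^{\otimes 4}=\sigma_z^{\otimes 4}$), this protocol is valid and produces the desired state from a suitable all-diagonal source. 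The main obstacle is the first step, namely ruling out the non-unitary symmetries with $|z|\neq 1$; this is what makes the transformation rigid, and it is handled cleanly by the $|z|^{\pm 4}$-balance equations extracted from the diagonal matrix elements, an argument that manifestly depends only on the symmetry group and not on the particular seed state.
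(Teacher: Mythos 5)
Your proposal is correct and follows essentially the same route as the paper: the paper itself obtains Lemma \ref{lem33} by observing that the proof of Lemma \ref{lem3} uses only the symmetry group $\{P_z^{\otimes 4}\}$ and the standard form it induces, which is exactly the argument you reproduce (including the $|z|^{\pm4}$ balance conditions from the diagonal matrix elements, the off-diagonal rigidity argument, and the two-outcome $\sigma_z^{\otimes4}$ protocol). Your explicit remark that $\sigma_z^{\otimes4}=P_i^{\otimes4}$ lies in the symmetry group is a correct and slightly more careful justification of the protocol than the paper spells out.
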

\begin{lemma}\label{lem333} Given a 4-qubit SLOCC class which contains a state $\ket{\Psi}$ with symmetries $S(\Psi)=\{P_z^{\otimes 4}:z\in\C/0\}$  the only states (in standard form) that are convertible via LOCC are given by  $g_1^x\otimes d_2\otimes d_3\otimes d_4 \ket{\Psi}$  and  $\one_1 \otimes d_j\otimes d_k\otimes g_l \ket{\Psi}$ where  $\{j,k,l\} = \{2,3,4\}$.
\end{lemma}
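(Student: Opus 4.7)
The plan is to reproduce the argument of Lemma \ref{lem30} almost verbatim, since the structural step in the proof of Lemma \ref{lem3}, namely that only phase-gate symmetries survive in Eq.\ (\ref{EqSep}), depends only on the shape of the symmetry group $\{P_z^{\otimes 4}\}$ and not on the particular seed state $\ket{\Psi_{aac_2}}$. First I would invoke Lemma \ref{lem33} to pin down the reachable states (in standard form) as either $h_1^x\otimes d_2\otimes d_3\otimes d_4\ket{\Psi}$ or $\one_1\otimes d_j\otimes d_k\otimes h_l\ket{\Psi}$ for $\{j,k,l\}=\{2,3,4\}$. Any convertible state must be transformable into one of these, so combined with Eq.\ (\ref{EqSep}) this will already force the claimed form.

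Next I would recall the key intermediate step from the proof of Lemma \ref{lem3}: inserting $P_z^{\otimes 4}$ into Eq.\ (\ref{EqSep}) and comparing diagonal matrix elements $\ket{ijkl}\bra{ijkl}$ for two appropriate index choices yields $\sum_{|z|,\phi}p_{|z|,\phi}|z|^4=\sum_{|z|,\phi}p_{|z|,\phi}|z|^{-4}=1$, which can only hold if $p_{|z|,\phi}=0$ for $|z|\neq 1$. This reduction is purely a consequence of the symmetry group, so it applies verbatim in the present setting. Hence only the unitary phase-gate subgroup $\{Z(\alpha)^{\otimes 4}\}$ can contribute non-trivially, Observation \ref{obsmon} gives $r=1$, and tracing Eq.\ (\ref{EqSep}) over three parties shows $H_k^1=G_k^1$, $H_k^3=G_k^3$ for every $k$, so diagonal blocks are LOCC invariants. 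In particular, a position at which $H_k$ is diagonal forces $G_k$ to be diagonal; together with Lemma \ref{lem33} this restricts the convertible states to exactly the two families in the statement.

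To finish, I would exhibit an explicit LOCC protocol realizing the conversion for each listed state, copying the construction of Lemma \ref{lem30}: let party $i$ be the (unique) party whose $G_i$ is allowed to be non-diagonal (or, if all $G_k$ are diagonal, pick any party), and apply the two-outcome POVM
\begin{equation}
\bigl\{\sqrt{p}\,h_i\one\, g_i^{-1},\ \sqrt{1-p}\,h_i\sigma_z\, g_i^{-1}\bigr\}
\end{equation}
on party $i$, the remaining parties applying $\one^{\otimes 3}$ or $\sigma_z^{\otimes 3}$ depending on the outcome. Validity requires $h_i^1=g_i^1$, $h_i^3=g_i^3$ and $(2p-1)h_i^2=g_i^2$, plus positivity of $H_i$; for any admissible $G_i$ one can choose $p$ and an LU-inequivalent $H_i$ satisfying these, which produces a non-trivial conversion and confirms convertibility.

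The work is essentially bookkeeping rather than new analysis: the only potential pitfall is making sure the standard form described just before Lemma \ref{lem3} is genuinely the same standard form forced by the symmetry group $\{P_z^{\otimes 4}\}$ in this abstract setting, so that the diagonal/off-diagonal dichotomy used above is canonical. Once that is checked, both the necessity (from Eq.\ (\ref{EqSep}) and Lemma \ref{lem33}) and the sufficiency (from the explicit POVM) go through without modification, and the lemma follows.
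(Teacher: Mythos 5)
Your proposal is correct and follows essentially the same route as the paper: the paper proves Lemma \ref{lem333} by direct reference to the proofs of Lemmas \ref{lem3} and \ref{lem30}, i.e.\ by noting that only the phase-gate (unitary) part of $\{P_z^{\otimes 4}\}$ can contribute to Eq.\ (\ref{EqSep}), that diagonal blocks are therefore invariant (so $H_i=D_i$ forces $G_i=D_i$), and then combining this with the reachable states of Lemma \ref{lem33} and the explicit two-outcome POVM $\{\sqrt{p}\,h_i g_i^{-1},\sqrt{1-p}\,h_i\sigma_z g_i^{-1}\}$ built from $\sigma_z^{\otimes 4}$. Your closing caveat about the standard form being the one canonically induced by $\{P_z^{\otimes 4}\}$ matches the remark the paper itself makes immediately after the lemma.
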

Note that these lemmata can also be used in the context of SLOCC classes with seed states that have different symmetries if one can show that the only symmetries that can be used for SEP transformations are given by $P_z^{\otimes 4}$ and the standard form for these SLOCC classes is chosen such that it obeys the following conditions: It holds that $g_1=g_1^x$ and in case $g_1^2=0$ one chooses  $g_2^2\geq 0$ and similarly if $g_1^2=g_2^2=0$ one has $g_3^2\geq	0$ etc..

\subsubsection{The case $a^2=b^2=c^2\neq 0$}
The seed states for $L_{abc_2}$ with $a=b=-c$ and $-a=b=c$ can be transformed into the one with $a=b=c$ via LUs and in case $-a=b=c$ additionally exchanging party $3$ and $4$ (see Eq. (\ref{abc2})). Furthermore, one can transform the seed state for $a=-b=c$ into the one with $a=b=c$ by applying $\one\otimes\one\otimes\sigma_x\otimes\sigma_x$ to the state and exchanging party 3 and 4.
Thus, these choices of parameters are within the same SLOCC classes as $a=b=c$ (up to particle exchange), for which we consider as seed states the states \begin{eqnarray}\label{seedaaa2} \ket{\Psi_{aaa_2}} &= a(\ket{0011}+\ket{1100}) +a (\ket{0110}+\ket{1001})-i/(2 a)\ket{1010}.\end{eqnarray} The corresponding symmetries are given by
\begin{eqnarray}\nonumber
&S_{x,y,z}=\otimes s_i =\left(
    \begin{array}{cc}
      z & 0 \\
      x & 1/z \\
    \end{array}
  \right)\otimes\left(
    \begin{array}{cc}
      z & y \\
      0 & 1/z \\
    \end{array}
  \right)\otimes\left(
    \begin{array}{cc}
      z & 0\\
      -x & 1/z \\
    \end{array}
  \right)
\otimes\left(
    \begin{array}{cc}
      z & -y \\
      0 & 1/z \\
    \end{array}\right),
\end{eqnarray}
where  $x,y,z\in\C$ and $z\neq 0$. The standard form for any state $\otimes_i g_i\ket{\Psi_{aaa_2}}$ in these SLOCC classes can be chosen as $\one\otimes d_2\otimes g_3\otimes g_4\ket{\Psi_{aaa_2}}$, where $\tr(\ket{1}\bra{0}g_3^\dagger g_3)> 0$ if $\tr(\ket{1}\bra{0}g_3^\dagger g_3)\neq 0$ and $\tr(\ket{1}\bra{0}g_4^\dagger g_4)\geq 0$  otherwise. This can be easily seen, as $s_1^{ \dagger} g_1^\dagger g_1 s_1 \propto \one$ for properly chosen $x$ and $|z|$ and as $s_2^{ \dagger} g_2^\dagger g_2 s_2$ is diagonal by choosing $y$ correspondingly.  Exploiting additionally the $Z(\alpha)^{\otimes 4}$ symmetry allows to choose   $\tr(\ket{1}\bra{0}g_3^\dagger g_3)\geq 0$ or $\tr(\ket{1}\bra{0}g_4^\dagger g_4)\geq 0$. In the following lemma we characterize the reachable states in these SLOCC classes.
\begin{lemma}\label{lem4} The only states in the SLOCC classes $L_{abc_2}$ where  $a^2= b^2= c^2\neq 0$ that are reachable via LOCC are given by $\one \otimes d_2\otimes d_k\otimes h_l \ket{\Psi_{aaa_2}}$ where $\{k,l\} = \{3,4\}$ and $h_l\neq d_l$.
\end{lemma}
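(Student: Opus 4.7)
The plan is to mimic the strategy already used successfully for $L_{aba_2}$ and $L_{aac_2}$ (Lemmas \ref{lem2} and \ref{lem3}): first show that only the compact (unitary) part of the large symmetry group $\{S_{x,y,z}\}$ can contribute with non-zero probability to a valid SEP transformation, and then use this strongly restricted set of symmetries together with the standard form to pin down which states can be reached. In the present case the unitary subgroup of $\{S_{x,y,z}\}$ is just $\{Z(\alpha)^{\otimes 4}\}$ (obtained by $x=y=0$ and $|z|=1$), so once the non-unitary directions are killed the problem essentially reduces to the one already solved in Lemma \ref{lem0} (up to the specific standard form of this class).

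More concretely, I would plug $S_{x,y,z}$ into Eq.\ (\ref{EqSep}) using that, in the standard form, $H_1=\one/2$ and $H_2$ is diagonal, and exploit selected matrix elements together with partial traces. First I would kill the $z$-direction: by comparing diagonal matrix elements $\ket{ijkl}\bra{ijkl}$ one obtains, as in the proof of Lemma \ref{lem3}, conditions of the form $\sum p_{x,y,z}|z|^{4}=\sum p_{x,y,z}|z|^{-4}=1$, which forces $p_{x,y,z}=0$ whenever $|z|\neq 1$. Then I would kill the $x$ and $y$ directions separately. Observing that the $x$-parameter only changes parties $1$ and $3$ while $y$ only changes parties $2$ and $4$, and using $H_1=\one/2$ on the first factor, matrix elements of the form $\ket{1000}\bra{0000}$ and $\ket{0100}\bra{0000}$ (and appropriate refinements) should give $\sum p_{x,y,z}|x|^{2}=\sum p_{x,y,z}|y|^{2}=0$ together with $\sum p_{x,y,z} x = \sum p_{x,y,z} y = 0$, so only $x=y=0$ survives. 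This is the analogue of the step in Lemma \ref{lem2} that discarded the lower-triangular part of the symmetry.

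At this point only $Z(\alpha)^{\otimes 4}$ remains, which is a unitary symmetry, so by Observation \ref{obsmon} one has $r=1$. From here one argues exactly as in Lemma \ref{lem0} (or Lemma \ref{lem3}): tracing Eq.\ (\ref{EqSep}) over three parties shows that the diagonal entries of each $H_i$ cannot change; moreover, whenever two of the $H_i$ carry non-zero off-diagonal components, the matrix element $\ket{10}_{ij}\bra{01}$ fixes $h_i^{2\ast}h_j^{2}=g_i^{2\ast}g_j^{2}$ while the single-party reductions give $h_k^{2}\sum p\,e^{-2i\alpha}=g_k^{2}$, and the two together force the transformation to be LU. Combined with the constraints already imposed by the standard form (in particular $H_1=\one/2$ and $H_2$ diagonal), this leaves only states of the form $\one\otimes d_2\otimes d_k\otimes h_l\ket{\Psi_{aaa_2}}$ with $\{k,l\}=\{3,4\}$ and $h_l\neq d_l$ as possibly reachable. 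Finally, explicit reachability is established exactly as in the previous lemmas by a two-outcome LOCC protocol of the form $\{\tfrac{1}{\sqrt{2}}h_l d_l^{-1},\tfrac{1}{\sqrt{2}}h_l\sigma_z d_l^{-1}\}$ on party $l$, with the other parties applying $\sigma_z$ conditioned on the second outcome, starting from a suitably chosen diagonal state.

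The main obstacle I expect is the bookkeeping required to kill the $x$ and $y$ directions cleanly: because the symmetry acts as $\sigma_x x$ on particles $1,3$ and $\sigma_x y$ on particles $2,4$ with alternating signs, one has to be careful to pick matrix elements that isolate only one of the parameters at a time and to use the standard form ($H_1=\one/2$, $H_2$ diagonal) to make those matrix elements non-trivial. Once these ``moment'' equations $\sum p\,x=\sum p\,|x|^{2}=\sum p\,y=\sum p\,|y|^{2}=0$ are established, everything else follows the well-trodden path of the earlier sections, and no genuinely new technique is needed.
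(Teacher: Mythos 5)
Your overall strategy is the paper's: reduce the usable symmetries to the diagonal subgroup, then run the phase-gate analysis and finish with the two-outcome $\sigma_z^{\otimes 4}$ protocol (which is exactly the POVM the paper uses). However, the first step as you state it would fail. With $H_1=\one/2$ and $H_2=D_2$ diagonal, the conjugated operators have diagonal entries such as $(|z|^2+|x|^2)/2$ on party $1$, $|y|^2 d_2^1+d_2^3/|z|^2$ on party $2$, and $|z|^2h_3^1-z^\ast x h_3^2-x^\ast z h_3^{2\ast}+|x|^2h_3^3$ on party $3$; so the diagonal matrix elements $\ket{ijkl}\bra{ijkl}$ do \emph{not} produce the clean moment conditions $\sum p\,|z|^{4}=\sum p\,|z|^{-4}=1$ while $x,y$ are still in play. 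That trick works in Lemma \ref{lem3} only because there the symmetry is purely $P_z^{\otimes 4}$. The paper therefore proceeds in the opposite order: it first eliminates $x$ (via the off-diagonal elements $\ket{1010}\bra{0110}$, $\ket{1010}\bra{0011}$ combined with projections onto $\ket{\Psi^+}_{13}\ket{\Psi^+}_{24}$ and $\ket{\Psi^+}_{24}$ to fix $r$), then $y$ (via $\ket{1100}\bra{1010}$ and $\ket{1100}\bra{1001}$, yielding $\sum p_{0,y,z}|y|^2|z|^2=0$), and only then reduces to $P_z^{\otimes 4}$ and simply \emph{cites} Lemma \ref{lem33} to dispose of $|z|\neq 1$ and conclude non-reachability — rather than re-deriving the unitary-phase-gate analysis as you propose.

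A second point you are missing is the case distinction on the off-diagonal entries of $H_3$ and $H_4$. The elimination of $x$ and $y$ sketched above is carried out for $h_3^2,h_4^2\neq 0$; when $h_3^2=h_4^2=0$ a different (easier) argument gives $x=y=0$; and when exactly one of them vanishes no elimination is attempted at all, because those are precisely the reachable states of the lemma and only the explicit $\sigma_z^{\otimes 4}$ protocol is needed. Without this split you cannot justify the matrix-element manipulations uniformly. So: right architecture and correct target statement, but the $z$-first ordering is a genuine gap, and the proof needs the three-case structure to go through.
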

\begin{proof}
We will in the following divide all states into the subsequent three classes: (i) $h_4^2,h_3^2\neq 0$ (ii) $h_4^2=h_3^2= 0$ and (iii) $h_i^2=0$ and $h_j^2\neq 0$ for $\{i,j\}=\{3,4\}$. We will first show that states corresponding to class (i) and (ii) are not reachable, whereas we will construct for states corresponding to class (iii) LOCC protocols to show that these states are reachable, which proves the lemma.

Let us first consider the case (i). Considering the matrix elements $\ket{1010}\bra{0110}$ and $\ket{1010}\bra{0011}$ of Eq. (\ref{EqSep}) leads to $\sum p_{x,y,z} x/z=0$. Considering now the projection of Eq. (\ref{EqSep}) onto $\ket{\Psi^+}_{13}\ket{\Psi^+}_{24}$ leads to an equation of $r$ in terms of the entries of the matrices $H$ and $G$. Combining this equation, the previous one and the one resulting from projecting Eq. (\ref{EqSep}) onto $\ket{\Psi^+}_{24}$ leads to $x=0$ for any symmetry occurring in Eq. (\ref{EqSep}). Considering now the matrix element $\ket{1100}\bra{1010}$ of Eq. (\ref{EqSep}) leads to the condition $\sum_{y,z} p_{0,y,z}y^\ast z^\ast=0$. Combining it with the equation resulting from considering the matrix element $\ket{1100}\bra{1001}$ one obtains $\sum_{y,z} p_{0,y,z} |y|^2 |z|^2=0$, which leads to (as $z\neq 0$) $y=0$ for any symmetry occurring in Eq. (\ref{EqSep}). Hence, the only symmetries that allow for transformations are given by $P_z^{\otimes 4}$. Therefore, we can use Lemma \ref{lem33} which shows that states of the form $\one\otimes d_2\otimes h_3\otimes h_4\ket{\Psi}$ where $h_i\neq d_i$ for $i=3,4$ are not reachable via LOCC if just this symmetry is available \footnote{Note that $\ket{\Psi}$ denotes here the respective seed state as given in Eq. (\ref{seedaaa2}). Note further that the standard form that can be established via the symmetry $S_{x,y,z}$ obeys the constraints of the standard form given due to the symmetry $P_z^{\otimes 4}$.}.

Let us now consider the case (ii), for which it follows easily from Eq. (\ref{EqSep}) that the only symmetries that allow for transformations are given by $P_z^{\otimes 4}$. Using then again Lemma \ref{lem33} shows that states of the form $\one\otimes d_2\otimes d_3\otimes d_4\ket{\Psi}$ are not reachable via LOCC.

It remains to consider case (iii). As $\sigma_z^{\otimes 4}$ is included in $S_{x,y,z}$, the POVM $\{1/\sqrt{2} h_j(\one^{\otimes 4}) d_j^{-1},1/\sqrt{2} h_j(\sigma_z^{\otimes 4}) d_j^{-1}\}$ (acting non-trivially on party $j$), where $d_j$ is chosen properly, allows to obtain the states corresponding to case (iii). In particular, states of the form $\one_1 \otimes d_2\otimes d_i\otimes h_j \ket{\Psi_{aaa_2}}$, where $\{i,j\}=\{3,4\}$, can be obtained by using this LOCC protocol from states of the form $\one_1 \otimes d_2\otimes d_i\otimes d_j \ket{\Psi_{aaa_2}}$.\\
In summary, all reachable states in these SLOCC classes are given by $\one_1 \otimes d_2\otimes d_k\otimes h_l \ket{\Psi_{aaa_2}}$ where  $\{k,l\} = \{3,4\}$ and $h_l\neq d_l$.
\end{proof}

Again we observe the same structure for reachable states. In the following lemma we show which states are convertible in these SLOCC classes.
\begin{lemma}\label{lem4c} The only states in the SLOCC classes $L_{abc_2}$ where  $a^2= b^2= c^2\neq 0$ that are convertible via LOCC are given by $\one_1 \otimes d_2\otimes d_k\otimes g_l \ket{\Psi_{aaa_2}}$ with $g_l$ arbitrary and $\{k,l\} = \{3,4\}$.
\end{lemma}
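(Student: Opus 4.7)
The plan is to follow closely the structure used for the analogous Lemma \ref{lem30} in the previous section. First I would use Lemma \ref{lem4} to observe that every reachable state in these SLOCC classes has the highly constrained form $\one_1\otimes d_2\otimes d_k\otimes h_l\ket{\Psi_{aaa_2}}$ with $\{k,l\}=\{3,4\}$ and $h_l\neq d_l$. Hence any candidate convertible state $g\ket{\Psi_{aaa_2}}$ must admit a non-trivial transformation (in the sense of Eq.~(\ref{EqSep})) to some state of this form, and WLOG I may take the target to be $h=\one\otimes d_2\otimes d_3\otimes h_4$ with $h_4\neq d_4$ (the case $k=4$, $l=3$ being symmetric).

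Next I would plug such an $h$ into Eq.~(\ref{EqSep}) and inspect the same kind of matrix elements as in the proof of Lemma \ref{lem4} --- notably $\bra{0110}\cdot\ket{1010}$ and $\bra{0011}\cdot\ket{1010}$ to eliminate contributions with $x\neq 0$ in $S_{x,y,z}$, and $\bra{1010}\cdot\ket{1100}$ together with $\bra{1001}\cdot\ket{1100}$ to eliminate contributions with $y\neq 0$ --- combined with the projection onto $\ket{\Psi^+}_{13}\ket{\Psi^+}_{24}$ used there to fix $r$. The output of this step should be that any symmetry $S_{x,y,z}$ occurring with non-vanishing weight $p_{x,y,z}$ must satisfy $x=y=0$, so that only the unitary subgroup $\{P_z^{\otimes 4},\, \sigma_z^{\otimes 4}P_z^{\otimes 4}\}$ contributes to the decomposition. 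In particular only phase-type symmetries mediate the transformation, exactly as happened in cases (i)--(ii) of Lemma \ref{lem4}.

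Once this reduction is in place, Lemma \ref{lem333} applies immediately: the states convertible under the symmetry group $\{P_z^{\otimes 4}\}$ are (in the corresponding standard form) $g_1^x\otimes d_2\otimes d_3\otimes d_4$ or $\one\otimes d_j\otimes d_k\otimes g_l$ with $\{j,k,l\}=\{2,3,4\}$. Imposing the standard form of the present class, which forces $g_1=\one$ and $g_2$ diagonal, collapses this to precisely $\one\otimes d_2\otimes d_k\otimes g_l\ket{\Psi_{aaa_2}}$ with $\{k,l\}=\{3,4\}$. For the converse I would exhibit the explicit two-outcome LOCC protocol on party $l$ already used in the proofs of Lemmas \ref{lemc1} and \ref{lem30}: POVM elements $\{\sqrt{p}\,h_l g_l^{-1},\sqrt{1-p}\,h_l\sigma_z g_l^{-1}\}$, with the remaining three parties applying $\sigma_z$ in case of the second outcome (using $\sigma_z^{\otimes 4}\in S(\Psi_{aaa_2})$). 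Completeness amounts to $h_l^1=g_l^1$, $h_l^3=g_l^3$ and $(2p-1)h_l^2=g_l^2$, which admits a solution with $H_l$ positive of rank two and not LU-equivalent to $G_l$ for a suitable $p\in(0,1)$.

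The main technical obstacle will be the matrix-element bookkeeping in the reduction step: the proof of Lemma \ref{lem4} carried out the analogous argument only for case (i) ($h_3^2,h_4^2\neq 0$) and case (ii) ($h_3^2=h_4^2=0$), whereas here the target $h$ is of case (iii) type, with exactly one off-diagonal operator among $H_3,H_4$. I expect that the combination of two strictly diagonal operators on parties $2$ and $k$ and one non-diagonal operator on party $l$ still supplies enough linearly independent matrix-element equations to force the upper-triangular parameters of $S_{x,y,z}$ to vanish, but some care is needed to pick the correct matrix elements since symmetry between parties $3$ and $4$ is now broken. No conceptual new idea beyond those used in Lemma \ref{lem4} should be required.
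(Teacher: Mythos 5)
Your overall architecture (restrict the possible targets via Lemma \ref{lem4}, constrain the symmetries appearing in Eq.~(\ref{EqSep}), then exhibit the two-outcome POVM built from $\one^{\otimes 4}$ and $\sigma_z^{\otimes 4}$) is the same as the paper's, and your converse direction coincides with it. The gap is in the reduction step. The paper does \emph{not} show $y=0$ and does not invoke Lemma \ref{lem333}; it only needs two matrix elements. Since the target has $H_1=\one/2$ and $H_2,H_3$ diagonal, the element $\ket{1000}\bra{0010}$ of Eq.~(\ref{EqSep}) has a left-hand side proportional to $\sum p_{x,y,z}|x|^2|z|^2$ (the off-diagonal entries on parties $1$ and $3$ each contribute a factor involving $x$), while the right-hand side vanishes because $G_1=\one/2$ is diagonal; this forces $x=0$ for every contributing symmetry. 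Then the element $\ket{1000}\bra{1010}$ forces $(G_3)_{01}=0$, i.e.\ $G_3$ diagonal, which is all that is needed for the ``only if'' direction. The parameter $y$ is never controlled and does not have to be.

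Your route instead tries to kill both $x$ and $y$ so that only $P_z^{\otimes 4}$ survives and Lemma \ref{lem333} can be applied. The matrix elements you borrow from case (i) of the proof of Lemma \ref{lem4} were chosen there under the assumption $h_3^2,h_4^2\neq 0$ and degenerate in the present situation: e.g.\ once $x=0$ and $H_3$ is diagonal, the element $\ket{1100}\bra{1010}$ reads $0=0$ and carries no information, and $\ket{1010}\bra{0110}$ only yields $\sum p_{x,y,z}\,xy=0$ rather than $\sum p_{x,y,z}\,x/z=0$. So the claim that all contributing symmetries satisfy $y=0$ is neither established by your proposed computations nor needed for the statement; as written, the appeal to Lemma \ref{lem333} rests on an unproven premise. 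The fix is simply to drop that detour and, after obtaining $x=0$ from $\ket{1000}\bra{0010}$, read off $G_3$ diagonal directly from $\ket{1000}\bra{1010}$ (with $G_1=\one$ and $G_2$ diagonal already guaranteed by the standard form of this class).
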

 \begin{proof}
Let us show that the only states which can be converted into the reachable states given in Lemma \ref{lem4} for $l=4$ are of the form $\one_1 \otimes d_2\otimes d_3\otimes g_4 \ket{\Psi_{aaa_2}}$ and that they are indeed convertible. As the case $l=3$ can be proven analogously, this proves the lemma. Considering the matrix element $\ket{1000}\bra{0010}$ of Eq. (\ref{EqSep}) for $H_2$, $H_3$ diagonal (and $H_1=\one$) leads to $x=0$ for any symmetry occurring in the equation. The fact that $G_3$ also has to be diagonal in this case is then implied by considering the matrix element $\ket{1000}\bra{1010}$ of Eq. (\ref{EqSep}), which proves the first statement. It can be easily shown that these states are all indeed convertible using the symmetries $\one^{\otimes 4}$ and $\sigma_z^{\otimes 4}$ to construct a LOCC protocol (see also proof of Lemma \ref{lem30}).
\end{proof}
Combining Lemma \ref{lem4} and Lemma \ref{lem4c} we have that the non-isolated states in $MES_4$ in these SLOCC classes are given by \bea \{\one_1 \otimes d_2\otimes d_3\otimes d_4 \ket{\Psi_{aaa_2}}, d_i \,\textrm{diagonal}\}.\eea
\subsubsection{\label{secab0} The case $a=b=0$ and $c\neq 0$}
We will consider now the SLOCC classes $L_{abc_2}$ for which $a=b=0$ and $c\neq 0$. The corresponding seed state is \bea \ket{\Psi_{00c_2}} = c (\ket{0110}+\ket{1001})-i/(2 c)\ket{1010}\eea with the symmetries \bea S_{z,\tilde{z}}=P_z\otimes P_z\otimes P_{\tilde{z}}\otimes P_{\tilde{z}}\eea for $z,\tilde{z}\in\C\backslash 0$. This symmetry allows to choose the standard form $g_1^x\otimes g_2\otimes g_3^x\otimes g_4\ket\Psi_{00c_2}$ where $g_i^x$ is defined as $\sqrt{G_i^x}$ with  $ G_i^x=g_i^{x \dagger} g_i^x=1/2\one+g_i^2\sigma_x$. Moreover, one can choose $g_i^2\in [0,1/2)$ for $i\in\{1,3\}$. In case $g_1^2=0 (g_3^2=0)$ one can set  $g_2^2\geq 0 (g_4^2\geq 0)$ respectively. Note that  $\one\otimes\one\otimes\sigma_z\otimes\sigma_z, \sigma_z\otimes\sigma_z\otimes\one\otimes\one \in S_{z,\tilde{z}}$. As we will show now this fact leads to a different structure of the states in $MES_4$ compared to the previously discussed SLOCC classes.
\begin{lemma}\label{lem5} The only states in the SLOCC classes $L_{abc_2}$ where  $a=b=0$ and $c\neq 0$ that are reachable via LOCC are given by $\one \otimes h_2\otimes h_3^x\otimes h_4 \ket{\Psi_{00c_2}}$ with $h_2\neq d_2$ and $h_1^x \otimes d_2\otimes h_3^x\otimes h_4 \ket{\Psi_{00c_2}}$ with $h_1^x\neq \one/2$ and $h_1^x \otimes h_2\otimes \one\otimes h_4 \ket{\Psi_{00c_2}}$ with $h_4\neq d_4$ and $h_1^x \otimes h_2\otimes h_3^x\otimes d_4 \ket{\Psi_{00c_2}}$ with $h_3^x\neq \one/2$.
\end{lemma}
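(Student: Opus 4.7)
The plan is to exploit the bipartite factorization of the symmetry group $S_{z,\tilde z}=P_z\otimes P_z\otimes P_{\tilde z}\otimes P_{\tilde z}$ across the pairs $(1,2)$ and $(3,4)$. Unlike the previous $L_{abc_2}$ classes, where a single $P_z$ acts on all four parties, here each pair carries its own parameter while the joint probability distribution $p_{z,\tilde z}$ couples them. This structure is precisely why the reachable states now split into four families --- two per pair --- rather than the single template seen in Lemma~\ref{lem33}.

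First I would insert $S_{z,\tilde z}$ into Eq.~(\ref{EqSep}) and read off the diagonal matrix elements $\ket{i_1i_2i_3i_4}\bra{i_1i_2i_3i_4}$, obtaining
\begin{equation*}
\sum_{z,\tilde z}p_{z,\tilde z}\,|z|^{2[(-1)^{i_1}+(-1)^{i_2}]}|\tilde z|^{2[(-1)^{i_3}+(-1)^{i_4}]}\prod_j h_j^{1+2i_j}=r\prod_j g_j^{1+2i_j}.
\end{equation*}
Taking suitable products and ratios of these --- imitating the argument used to derive $\sum p_{|z|,\phi}(|z|^4-1)^2=0$ in Lemma~\ref{lemmaaacc} and Lemma~\ref{lem3} --- I would show that on pair $(1,2)$ either the $|z|$-marginal of $p_{z,\tilde z}$ is supported at $|z|=1$ (so only the unitary symmetries $\one^{\otimes 2}$ and $\sigma_z^{\otimes 2}$ survive on that pair) or the $H_1,H_2$ entries are locked to those of $G_1,G_2$; the analogous dichotomy holds for $|\tilde z|$ on pair $(3,4)$. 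Once a pair is restricted to unitary symmetries, the two-qubit analogue of Lemma~\ref{lem33} forces the transformation on that pair to have form $(\one,h_2\neq d_2)$ or $(h_1^x\neq\one/2,\,d_2)$, and correspondingly for pair $(3,4)$.

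The delicate step is to rule out simultaneous non-trivial transformations on \emph{both} pairs. For this I would examine cross-pair matrix elements of Eq.~(\ref{EqSep}) such as $\ket{1010}\bra{0110}$, $\ket{1001}\bra{0110}$, and $\ket{0101}\bra{1010}$, where the off-diagonal phases $z^*/z$ and $\tilde z^*/\tilde z$ couple both pairs. Combined with the constraints from Step~1, these should force that whenever pair $(1,2)$ is non-trivially transformed, the $(3,4)$ entries of $H$ and $G$ must coincide (so that $h_3=h_3^x$ stays in standard form and $h_4$ is unchanged), and conversely. This leaves exactly the four families in the statement.

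Sufficiency is then direct: each reachable state is obtained by a two-outcome POVM on a single party combined with a $\sigma_z$-correction on its partner within the same pair, using the unitary symmetries $\sigma_z\otimes\sigma_z\otimes\one\otimes\one$ and $\one\otimes\one\otimes\sigma_z\otimes\sigma_z$, both of which lie in $S_{z,\tilde z}$. For instance, $\one\otimes h_2\otimes h_3^x\otimes h_4$ is reached from $\one\otimes d_2\otimes h_3^x\otimes h_4$ (with $d_2$ chosen so that $h_2^1=d_2^1$ and $h_2^3=d_2^3$) by party $2$ implementing $\{h_2 d_2^{-1}/\sqrt{2},\,h_2\sigma_z d_2^{-1}/\sqrt{2}\}$ and party $1$ applying $\sigma_z$ on the second outcome. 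The main obstacle will be the joint-distribution analysis of Step~3: because the symmetries factorize but $p_{z,\tilde z}$ need not, marginalising to a pair gives only necessary conditions, and the cross-pair matrix-element bookkeeping needed to exclude ``doubly non-trivial'' configurations is what makes this class genuinely more involved than the preceding ones.
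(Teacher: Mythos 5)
Your route is essentially the paper's: the diagonal matrix elements of Eq.~(\ref{EqSep}) force $p_{z,\tilde z}=0$ unless $|z|=|\tilde z|=1$ (the paper gets this unconditionally, since $h_i^1,h_i^3\neq 0$ by positivity, so no dichotomy is needed), after which tracing out each pair reduces the off-diagonal analysis to that of Lemma~\ref{lem3}, and sufficiency follows from the same two-outcome $\sigma_z\otimes\sigma_z\otimes\one\otimes\one$ / $\one\otimes\one\otimes\sigma_z\otimes\sigma_z$ protocols. The one point to correct is your ``delicate'' Step~3: it is neither needed nor true. The lemma only asks which $H$ admit some $G$ not LU-equivalent to it, and the pair-wise marginal conditions already settle this --- if on a pair both off-diagonals vanish or both are nonzero, that pair is frozen ($H_i=G_i$ there), so a state is reachable iff at least one pair has exactly one nonvanishing off-diagonal, which is precisely the union of the four families (note these families leave the other pair completely arbitrary, including ``unlocked''). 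Moreover, simultaneous non-trivial transformations on both pairs are in fact possible (take a product distribution $p_{z,\tilde z}=q_{z}\,r_{\tilde z}$ with both phase averages of modulus less than one), so the cross-pair bookkeeping you propose would not yield the exclusion you want; fortunately nothing in the statement depends on it.
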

\begin{proof}
Due to  Eq. (\ref{EqSep}) a SEP transformation from an arbitrary initial state in standard form, $\otimes_i g_i \ket{\Psi}$,  to an arbitrary final state in standard form, $\otimes_i h_i \ket{\Psi}$, is possible iff
\begin{eqnarray}\label{SEPpz1pz2}
&\sum_{|z|,|\tilde{z}|,\phi,\tilde{\phi}} p_{|z|,|\tilde{z}|,\phi,\tilde{\phi}} \left(
    \begin{array}{cc}
      |z|^2/2 &   h_1^2e^{-i2\phi} \\
    h_1^{2}e^{i2\phi} & 1/(2|z|^2) \\
    \end{array}
  \right)\otimes\left(
    \begin{array}{cc}
      h_2^1|z|^2 &   h_2^2e^{-i2\phi} \\
    h_2^{2  *}e^{i2\phi} & h_2^3/|z|^2 \\
    \end{array}
  \right)\otimes\left(
    \begin{array}{cc}
      |\tilde{z}|^2 /2&   h_3^2e^{-i2\tilde{\phi}} \\
    h_3^{2}e^{i2\tilde{\phi}} & 1/(2|\tilde{z}|^2) \\
    \end{array}
  \right)\\ \nonumber&\otimes\left(
    \begin{array}{cc}
      h_4^1|\tilde{z}|^2 &   h_4^2e^{-i2\tilde{\phi}} \\
    h_4^{2  *}e^{i2\tilde{\phi}} & h_4^3/|\tilde{z}|^2\\
    \end{array}\right)=r G_1^x\otimes G_2\otimes G_3^x\otimes G_4,
\end{eqnarray}
where we used the notation $z=|z|e^{i\phi}$ and $\tilde{z}=|\tilde{z}|e^{i\tilde{\phi}}$.
Considering the matrix elements $\ket{ijkl}\bra{ijkl}$ of this equation, where $i,j,k,l \in \{0,1\}$, one obtains that
\bea \label{EqdPz1Pz2} h_2^{1+2j} h_4^{1+2l}  \sum_{|z|,|\tilde{z}|,\phi,\tilde{\phi}} p_{|z|,|\tilde{z}|,\phi,\tilde{\phi}} |z|^{4[1-(i+j)]} |\tilde{z}|^{4[1-(k+l)]}=r  g_2^{1+2j}  g_4^{1+2l}.\eea
Considering now this equation for $i=k=0$ and $j=l=1$ and $k=0$ and $i=j=l=1$, as well as for $i=j=k=0$ and $l=1$ and $j=k=0$ and $i=l=1$ leads to $\sum_{|z|,|\tilde{z}|,\phi,\tilde{\phi}} p_{|z|,|\tilde{z}|,\phi,\tilde{\phi}}|z|^4=\sum_{|z|,|\tilde{z}|,\phi,\tilde{\phi}} p_{|z|,|\tilde{z}|,\phi,\tilde{\phi}}1/|z|^4=1$. This implies that the only symmetries that can be used to transform a state via SEP obey $|z|=1$. Via an analogous argument one can show that $\sum_{|z|,|\tilde{z}|,\phi,\tilde{\phi}} p_{|z|,|\tilde{z}|,\phi,\tilde{\phi}}|\tilde{z}|^4=\sum_{|z|,|\tilde{z}|,\phi,\tilde{\phi}} p_{|z|,|\tilde{z}|,\phi,\tilde{\phi}}1/|\tilde{z}|^4=1$ and therefore $p_{|z|,|\tilde{z}|,\phi,\tilde{\phi}}=0$ for $|\tilde{z}|\neq 1$.
Thus, we only have to consider unitary symmetries and therefore $r=1$ (see Observation \ref{obsmon}). Furthermore, via tracing over all parties but party $i$ for $i\in\{2,4\}$ in Eq. (\ref{SEPpz1pz2}) one obtains that $h_i^1=g_i^1$ and $h_i^3=g_i^3$.
Note that via tracing over party $3$ and $4$ in Eq. (\ref{SEPpz1pz2}) one obtains the same conditions as in the proof of Lemma \ref{lem3} for the off-diagonal components. Namely that if $h_1^2=h_2^2=0$ or if  $h_1^2,h_2^2\neq 0$ it follows that  $h_1^2= g_1^2$ and $h_2^2= g_2^2$. Analogously we have that if $h_3^2=h_4^2=0$  or if $h_3^2,h_4^2\neq 0$ it has to hold that  $h_3^2= g_3^2$ and $h_4^2= g_4^2$. Combining these results one obtains that the only states that can possibly be reached are the ones given in Lemma \ref{lem5}. In particular, they can be reached via a LOCC protocol exploiting the symmetries $\one\otimes\one\otimes\sigma_z\otimes\sigma_z$ or $\sigma_z\otimes\sigma_z\otimes\one\otimes\one$. The idea here is again that $\{\sqrt{p}h_ig_i^{-1},\sqrt{1-p}h_i\sigma_zg_i^{-1}\}$ constitutes a valid POVM for $h_i^1=g_i^1$, $h_i^3=g_i^3$ and $(2 p -1 ) h_i^2=g_i^2$, which for example can be fulfilled by $p=1/2$ and $g_i=diag(h_i^1,h_i^3)$\footnote{Note here that the structure of the reachable states is such that the symmetry used for the LOCC protocol commutes with all $g_l$ for $l\neq i$.}.
\end{proof}
The states in $MES_4$ for the SLOCC classes $L_{abc_2}$ with $a=b=0$ and $c\neq 0$ are then given by all states that can not be written as in Lemma \ref{lem5}. Note that again the LOCC transformations which allowed to obtain all reachable states were due to symmetries that are elements of the Pauli group.
We will show in the following lemma which states in these SLOCC classes are convertible.
\begin{lemma} The only states in the SLOCC classes $L_{abc_2}$ where  $a=b=0$ and $c\neq 0$ that are convertible via LOCC are given by $\one \otimes g_2\otimes g_3^x\otimes g_4 \ket{\Psi_{00c_2}}$ and $g_1^x \otimes d_2\otimes g_3^x\otimes g_4 \ket{\Psi_{00c_2}}$ and $g_1^x \otimes g_2\otimes \one\otimes g_4 \ket{\Psi_{00c_2}}$  and $g_1^x \otimes g_2\otimes g_3^x\otimes d_4 \ket{\Psi_{00c_2}}$.
\end{lemma}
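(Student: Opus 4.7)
The plan is to follow the two-step pattern used throughout the section for convertibility results: first extract necessary conditions on the initial standard form from Eq.\ (\ref{EqSep}) restricted to the symmetries already identified as useful, and then exhibit an explicit two--outcome LOCC protocol for each surviving case.

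For necessity, I would reuse the analysis from the proof of Lemma \ref{lem5}. There it was shown that only unitary symmetries $S_{z,\tilde z}$ with $|z|=|\tilde z|=1$ can contribute in Eq.\ (\ref{EqSep}), which forces $r=1$ and implies $h_i^1=g_i^1$, $h_i^3=g_i^3$ for $i\in\{2,4\}$ as well as the decoupling between the $(1,2)$ and $(3,4)$ blocks: tracing over parties $3,4$ one gets either $h_1^2=h_2^2=0$ or $h_1^2,h_2^2\neq 0$ with $h_j^2=g_j^2$ for $j=1,2$, and analogously for parties $3,4$. A convertible state must be transformable into one of the four reachable families listed in Lemma \ref{lem5}. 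Combining the decoupling with the constraint that in the target state either $H_1=\one/2$, $H_2$ diagonal, $H_3=\one/2$ or $H_4$ diagonal, the preserved diagonal entries and the preservation rule on off-diagonal entries transfer the same constraint to the corresponding $G_i$, so that $G$ must lie in one of the four announced families.

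For sufficiency, I would construct explicit POVMs using only the Pauli symmetries $\one\otimes\one\otimes\sigma_z\otimes\sigma_z$ and $\sigma_z\otimes\sigma_z\otimes\one\otimes\one$, which belong to $S(\Psi_{00c_2})$. For a state $g_1^x\otimes d_2\otimes g_3^x\otimes g_4\ket{\Psi_{00c_2}}$, party $4$ implements $\{\sqrt{p}\,h_4 g_4^{-1},\sqrt{1-p}\,h_4\sigma_z g_4^{-1}\}$, with party $3$ applying $\sigma_z$ on the second outcome (the symmetry commutes with $g_1^x$ and $d_2$, so the other parties do nothing). The conditions $h_4^1=g_4^1$, $h_4^3=g_4^3$, $(2p-1)h_4^2=g_4^2$ can be solved by choosing $p$ and a valid $H_4>0$ of rank $2$ non--LU--equivalent to $G_4$. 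The other three families are treated by exchanging the roles of parties $2\leftrightarrow 4$ (for family $g_1^x\otimes g_2\otimes \one\otimes g_4$ we measure on $2$ and use the other symmetry) or by measuring on party $1$ or $3$ for the families in which $G_2$ or $G_4$ is diagonal, respectively.

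The main obstacle I anticipate is bookkeeping rather than any genuinely new idea: one has to check carefully that the POVM chosen for each family indeed produces a state which is non--LU--equivalent to the initial one, and that the Pauli symmetry used commutes with the unaltered SLOCC operators so that the map restricted to those parties is trivial. The degenerate subcase where $G_i^x\propto \one/2$ for some $i\in\{1,3\}$ needs a separate inspection to confirm the state already falls into one of the other three families and is consistent with the standard form fixed before Lemma \ref{lem5}. Once this is verified, the lemma follows immediately by combining the necessary restrictions with the explicit protocols.
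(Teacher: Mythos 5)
Your necessity argument matches the paper's: you reuse the restriction to unitary symmetries with $|z|=|\tilde z|=1$ established in the proof of Lemma \ref{lem5}, and the fact that vanishing off-diagonal components and the diagonal entries of $H_2,H_4$ are inherited backwards by the initial state, to force any convertible state into one of the four listed families. That part is sound and is exactly the paper's route.

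The sufficiency step, however, contains a concrete error in the one protocol you spell out. For the family $g_1^x\otimes d_2\otimes g_3^x\otimes g_4\ket{\Psi_{00c_2}}$ you let party $4$ measure using the symmetry $\one\otimes\one\otimes\sigma_z\otimes\sigma_z$ and assert that party $3$ simply applies $\sigma_z$. But the second POVM element forces party $3$ to implement $g_3^x\sigma_z(g_3^x)^{-1}$, which is unitary only when $G_3^x$ commutes with $\sigma_z$, i.e. only when $G_3^x\propto\one$; equivalently, the completeness condition $pH+(1-p)S^\dagger HS=G$ fails on party $3$, since $p\,G_3^x+(1-p)\,\sigma_z G_3^x\sigma_z=\tfrac12\one+(2p-1)g_3^2\sigma_x\neq G_3^x$ for $p<1$. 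The same mismatch afflicts your prescription to measure on party $2$ for the family $g_1^x\otimes g_2\otimes\one\otimes g_4$, where $G_1^x$ does not commute with the $\sigma_z$ acting on party $1$. The correct pairing --- which your final clause states, inconsistently with your worked example --- is: measure on party $2$ (resp. $4$) when $G_1=\one/2$ (resp. $G_3=\one/2$), and on party $1$ (resp. $3$) when $G_2$ (resp. $G_4$) is diagonal, always using the $\sigma_z\otimes\sigma_z$ symmetry supported on the pair $\{1,2\}$ or $\{3,4\}$ that contains the measuring party, so that the remaining nontrivial operator in that pair commutes with its $\sigma_z$. With this relabeling the conditions $h_j^1=g_j^1$, $h_j^3=g_j^3$, $(2p-1)h_j^2=g_j^2$ close the argument exactly as in the paper.
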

\begin{proof}
The only states which can be reached from a convertible state are given in Lemma \ref{lem5}.
Note that in the proof of Lemma \ref{lem5} we have already shown that only symmetries with $|z|=|\tilde{z}|=1$ can be used for transformations. Thus, the symmetries that allow for transformations are unitary and commute with a diagonal matrix. Combining these results and using Eq. (\ref{EqSep}) it is easy to show that the states with $h_i=d_i$ for some $i\in\{1,2,3,4\}$ can only be obtained from states with $g_i=d_i$.
Thus, the only possibly convertible states are given by $\one \otimes h_2\otimes h_3^x\otimes h_4 \ket{\Psi_{00c_2}}$, $h_1^x \otimes d_2\otimes h_3^x\otimes h_4 \ket{\Psi_{00c_2}}$, $h_1^x \otimes h_2\otimes \one\otimes h_4 \ket{\Psi_{00c_2}}$, and $h_1^x \otimes h_2\otimes h_3^x\otimes d_4 \ket{\Psi_{00c_2}}$.
It is easy to see that these states are indeed convertible using the symmetries $\one^{\otimes 4}$ and $\sigma_z\otimes\sigma_z\otimes \one\otimes\one$ or $\one\otimes\one\otimes\sigma_z\otimes\sigma_z $. Consider for example the states $\one \otimes g_2\otimes g_3^x\otimes g_4 \ket{\Psi_{00c_2}}$. In order to perform a LOCC transformation use $\{\sqrt{p} (\one\otimes h_2 g_2^{-1}\otimes \one\otimes\one), \sqrt{1-p} (\sigma_z\otimes h_2\sigma_zg_2^{-1}\otimes \one\otimes\one)\}$ which constitutes a valid POVM for $h_2^3=g_2^3$, $(2 p -1 ) h_2^2=g_2^2$ and $h_2^1=g_2^1$. As always one can find for any $G_2$ a value of  $p$ and a non-LU equivalent $H_2$ s. t. the POVM condition is fulfilled.
\end{proof}
Thus, in this SLOCC classes the states in $MES_4$ that are convertible are given by  $\one \otimes d_2\otimes g_3^x\otimes g_4 \ket{\Psi_{00c_2}}$  with $g_3^x,g_4$ nondiagonal, $g_1^x \otimes g_2\otimes \one\otimes d_4 \ket{\Psi_{00c_2}}$ with $g_1^x,g_2$ nondiagonal, and $\one \otimes d_2\otimes \one \otimes d_4 \ket{\Psi_{00c_2}}$.

\subsection{The SLOCC classes $L_{abc_2}$ for $c= 0$}
In this case the seed states are given by \bea\label{ab02}\ket{\Psi}&= a\ket{\Phi^+}\ket{\Phi^+}+b\ket{\Phi^-}\ket{\Phi^-}+\ket{0110}.\eea
The corresponding symmetric $Z$ and $\tilde{Z}$ matrices coincide and are diagonal (with eigenvalues $a^2$, $b^2$, $0$ and $0$), i.e.
\begin{eqnarray}
&Z=\tilde{Z}= \left(
    \begin{array}{cccc}
      a^2 & 0 & 0&0 \\
      0 & b^2  & 0&0\\
 0 & 0  & 0&0\\
0 & 0  & 0&0\\
    \end{array}
  \right).
\end{eqnarray}
 One observes that the corresponding seed states have different symmetries depending on whether the two eigenvalues are the same or not, as well as whether they are non-zero or not. Note that the case $a=b=0$ leads to a fully separable state. Furthermore, it can be shown that the choice of parameters $a=0$ and $b\neq 0$ corresponds to the SLOCC classes $L_{abc_2}$ for $a=b=c$ if one exchanges party 2 and 3. Thus, it remains to consider the following cases, for which also the reachable states are different:
\bi
\item[(1)] $a^2\neq  b^2$ and $a,b \neq 0$
\item[(2)] $a^2= b^2\neq 0$
\ei
\subsubsection{Case (1): $a^2\neq  b^2$ and $a,b \neq 0$}
In case $a^2\neq  b^2$ and $a,b \neq 0$ the symmetry of the state in Eq. (\ref{ab02}) is given by
\bea \nonumber
S_z=P_z\otimes P_{1/z}\otimes P_z\otimes P_{1/z},
\eea
where $z\in\C\backslash 0$.
In order to obtain the symmetry $P_z^{\otimes 4}$ that has already been  discussed, we will consider the states
\bea\ket{\Psi_{ab00}}= \one\otimes\sigma_x\otimes\one\otimes \sigma_x\ket{\Psi}\eea
as seed states.
Note that the case $a^2=-b^2$ could in principle lead to more symmetries, as then $q$ in Eq. (\ref{Eq:Zs}) can also be $-1$ (similar for $\tilde{q}$). However, it is straightforward to see that the corresponding seed states do not have more symmetries.\\
As for the other SLOCC classes containing a seed state with this symmetry the standard form is given by $g_1^x\otimes g_2\otimes g_3\otimes g_4 \ket{\Psi_{ab00}}$, where $g_1^2>0$. In case $g_1^2=0$ the symmetry allows to choose $g_2^2\geq 0$ and analogously if $g_1^2=g_2^2=0$ one can choose $g_3^2\geq	0$ etc..
We can apply Lemma \ref{lem33} and obtain that the reachable states have the form $h_1^x\otimes d_2\otimes d_3\otimes d_4 \ket{\Psi_{ab00}}$ where  $h_1^x\neq \one$ or  $\one_1 \otimes d_j\otimes d_k\otimes h_l \ket{\Psi_{ab00}}$ where $\{j,k,l\} = \{2,3,4\}$ and $h_l\neq d_l$. Furthermore, the only states that are convertible via LOCC are given by $g_1^x\otimes d_2\otimes d_3\otimes d_4 \ket{\Psi_{ab00}}$ or $\one_1 \otimes d_j\otimes d_k\otimes g_l \ket{\Psi_{ab00}}$ where $\{j,k,l\} = \{2,3,4\}$ (see Lemma \ref{lem333}).
\subsubsection{Case (2): $a^2= b^2$ and $a\neq 0$}
Let us discuss now the case $a^2= b^2$ and $a\neq 0$. We could consider as seed states the states
\bea\label{aa02}\ket{\Psi}&= a\ket{\Phi^+}\ket{\Phi^+}+a\ket{\Phi^-}\ket{\Phi^-}+\ket{0110}.\eea
Note that the choice $a=-b$ in Eq. (\ref{ab02}) represents the same SLOCC classes as $a=b$ up to particle exchange, as applying $\one\otimes\one\otimes \sigma_x\otimes \sigma_x$ and exchanging party 3 and 4 leads to Eq. (\ref{aa02}).
In order to obtain similar symmetries as before we will consider the states
\bea\ket{\Psi_{aa00}}=\one\otimes\one\otimes\sigma_x\otimes\sigma_x\ket{\Psi}\eea
as seed states,
whose symmetries are given by
\begin{eqnarray}\label{eqsymyz}
&S_{z,\tilde{z}}= P_{\tilde{z}} \otimes P_{z}\otimes
    P_{z}\otimes P_{\tilde{z}},
\end{eqnarray}
where $z,\tilde{z}\in\C\backslash 0$.
Note that up to permutation of the parties these symmetries correspond to the ones discussed in Sec. \ref{secab0}. Thus, the results obtained there can be extended to these SLOCC classes (taking into account the permutation of the parties).

\section{The SLOCC classes $L_{a_2b_2}$ }
\subsection{The SLOCC classes $L_{a_2b_2}$ for $a,b\neq0$}
The seed states of these classes are given by \begin{eqnarray}\label{a2b2}\ket{\Psi_{a_2b_2}}&= a(\ket{0011}+\ket{1100})+b(\ket{0110}+\ket{1001})+i/(2a) \ket{1111}-i/(2b) \ket{1010}.\end{eqnarray} The corresponding symmetric $Z_\Psi$ and $\tilde{Z}_\Psi$ matrices coincide and have symmetric Jordan form with two Jordan blocks of dimension 2 (with eigenvalues $a^2$ and $b^2$), i.e.
\begin{eqnarray}
&Z_{\Psi_{a_2b_2}}=\tilde{Z}_{\Psi_{a_2b_2}}= \left(
    \begin{array}{cccc}
      a^2+\frac{i}{2} & \frac{1}{2} & 0&0 \\
      \frac{1}{2} & a^2-\frac{i}{2}  & 0&0\\
 0 & 0  & b^2+\frac{i}{2}&\frac{1}{2}\\
0 & 0  & \frac{1}{2}&b^2-\frac{i}{2}\\
    \end{array}
  \right).
\end{eqnarray}

Again one observes that there are different symmetries depending on whether
\bi
\item[(1)] $a^2\neq \pm b^2$
\item[(2)] $a^2= -b^2$ or
\item[(3)] $a^2=b^2$.
\ei
As we will show the structure of the symmetries in these different SLOCC classes is very different, which is also reflected in a different structure of the reachable states.
\subsubsection{Case (1): $a^2\neq \pm b^2$}
In case $a^2\neq \pm b^2$ the symmetry of the seed states is given by \bea S=\{\one^{\otimes 4},\sigma_z^{\otimes 4}\}.\eea Thus, Lemma \ref{lem0} can be applied and the only states (in standard form) that are reachable via LOCC are  given by \bea h_i\otimes d_j\otimes d_k\otimes d_l \ket{\Psi_{a_2b_2}},\eea where $\{i,j,k,l\} = \{1,2,3,4\}$ and $h_i\neq d_i$. Furthermore, according to Lemma \ref{lemc11} the only convertible states are of the form \bea g_i\otimes d_j\otimes d_k\otimes d_l \ket{\Psi_{a_2b_2}},\eea where $\{i,j,k,l\} = \{1,2,3,4\}$ with $g_i$ arbitrary.
\subsubsection{Case (2): $a^2= - b^2$ and $a,b\neq0$}
We will consider next the SLOCC classes $L_{a_2b_2}$ for $a^2= - b^2$ and $a,b\neq 0$. The seed state for $a=-ib$ is LU equivalent to the state for $a=ib$ via the LU operation $\one\otimes\one\otimes\sigma_z\otimes\sigma_z$ (see Eq. (\ref{a2b2})). Thus, they represent the same SLOCC class. Therefore, we could choose the seed states of the form \begin{eqnarray}
\ket\Psi&= ib(\ket{0011}+\ket{1100})+b(\ket{0110}+\ket{1001})+1/(2b) \ket{1111}-i/(2b) \ket{1010}.\end{eqnarray}
They have the symmetries $ S=\{\one^{\otimes 4},\sigma_z^{\otimes 4}, \one\otimes\sigma_x\otimes\sigma_z\otimes\sigma_y, \sigma_z\otimes\sigma_y\otimes\one\otimes\sigma_x\}$. In order to obtain a simpler form of the symmetry, we will in the following consider the LU-equivalent states \bea\ket{\Psi_{(-b_2)b_2}}=\one\otimes\one\otimes\one\otimes Z(\pi/4)\ket\Psi\eea as seed states. For these states the symmetries are
\bea \label{SPauli}\nonumber S=\{\one^{\otimes 4},\sigma_z^{\otimes 4}, \one\otimes\sigma_x\otimes\sigma_z\otimes\sigma_x, \sigma_z\otimes\sigma_y\otimes\one\otimes\sigma_y\}.\eea
In the following we will denote by $S_k=\otimes_i s_i^k$ the elements of the symmetry group in the order given in Eq. (\ref{SPauli}).
As before we use the notation $G_i=g_i^\dagger g_i=1/2\one+\sum_j \bar{g}_i^j \sigma_j$ where $\bar{g}_i^j\in\R$ is arbitrary apart from the constraint $G_i>0$ and similar for $H_i$. \\
In order to define the standard form we proceed as follows.
\begin{itemize}
\item We choose $\bar{g}_2^2>0$ (for $\bar{g}_2^2=0$ one can choose $\bar{g}_4^2>0$) and $\bar{g}_2^1>0$ (for $\bar{g}_2^1=0$ we choose $\bar{g}_4^1>0$).
\item If $\bar{g}_2^2=\bar{g}_4^2=0$ we choose $\bar{g}_3^2> 0$ and if $\bar{g}_2^2=\bar{g}_4^2=\bar{g}_3^2=0$ we choose $\bar{g}_3^1\geq 0$.
\item In case $\bar{g}_2^1=\bar{g}_4^1=0$ we choose $\bar{g}_1^1 > 0$ and for $\bar{g}_2^1=\bar{g}_4^1=\bar{g}_1^1=0$ we choose $\bar{g}_1^2\geq 0$.
\item In case $\bar{g}_2^2=\bar{g}_4^2=\bar{g}_3^2=\bar{g}_3^1= 0$ and $\bar{g}_2^1=\bar{g}_4^1=\bar{g}_1^1=\bar{g}_1^2= 0$ we choose $\bar{g}_2^3>0$ (if it additionally holds that $\bar{g}_2^3=0$ we choose $\bar{g}_4^3\geq 0$).
\end{itemize}
As can be easily seen this form can always be made unique by choosing some entry positive.
As will be stated in the following lemma these symmetries allow for a very different structure of the reachable states compared to the previous classes. 
\begin{lemma} \label{lem10} The only states in the SLOCC classes $L_{a_2b_2}$ with $a^2= - b^2$ and $a,b\neq0$ that are reachable via LOCC are given by
$h_1\otimes h_2\otimes h_3\otimes h_4\ket{\Psi_{(-b_2)b_2}}$, where $\otimes H_i$ obeys the following condition. There exists a symmetry $S_k=\otimes_i s_i^k\in S(\Psi_{(-b_2)b_2}) $ for some $k\in\{1, 2, 3\}$ such that exactly
three out of the four operators $H_i$ commute with the corresponding operators $s_i^k$ and one operator $H_j$ does not commute with $s_j^k$.
\end{lemma}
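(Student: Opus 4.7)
The plan is to prove Lemma \ref{lem10} in two directions: necessity via the Pauli--structure argument from Sec. II, and sufficiency via an explicit two--outcome POVM for each of the three non--trivial symmetries $S_1,S_2,S_3$. Since $S(\Psi_{(-b_2)b_2})$ consists solely of unitary tensor products of Pauli operators, Observation \ref{obsmon} immediately gives $r=1$, and the general machinery culminating in Eq.\ (\ref{eq_Pauli}) applies: for every tensor product of Paulis $P=\bigotimes_l \sigma_{k_l}$ that commutes with every element of $S(\Psi_{(-b_2)b_2})$, one must have either $\tr(HP)=0$ or $\prod_l \eta_{k_l}=1$, where $\eta_{k_l}=\sum_j p_j(-1)^{f_{j,k_l}+1}$ with $f_{j,k_l}\in\{0,1\}$ recording the commutation of $s_l^j$ with $\sigma_{k_l}$.

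For the necessary direction I would first enumerate the Paulis $P$ commuting with the full symmetry group. Because $S_1=\sigma_z^{\otimes 4}$, $S_2=\one\otimes\sigma_x\otimes\sigma_z\otimes\sigma_x$ and $S_3=\sigma_z\otimes\sigma_y\otimes\one\otimes\sigma_y$ generate an abelian Pauli subgroup, the commutant within the local Pauli group is finite and explicitly computable, and each allowed $P$ gives an explicit linear constraint on the components $\bar h_i^j$. The second alternative $\prod_l \eta_{k_l}=1$ is restrictive because $|\eta_{k_l}|\le 1$ with equality only if all non--vanishing $p_j$ entering a given $\eta_{k_l}$ carry the same sign. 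Combining these sign constraints across several commuting $P$'s forces, in the generic case, the probability vector $\{p_j\}_{j=0}^3$ to be supported on at most two symmetries; say $p_0$ and $p_k$ with $k\in\{1,2,3\}$. Substituting back into Eq.\ (\ref{EqSep}) and tracing over subsets of parties (as in the proof of Lemma \ref{lemmasz} and Lemma \ref{lem3}) then yields $H_i=s_i^{k\,\dagger}H_i s_i^k$ on three parties and a non--trivial rescaling relation on the fourth, i.e.\ exactly three of the $H_i$ commute with the corresponding $s_i^k$. The standard form excludes the degenerate case where all four commute (since that would produce an LU--equivalent transformation), giving the stated characterization.

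For the sufficient direction, given any target $\bigotimes_i h_i\ket{\Psi_{(-b_2)b_2}}$ satisfying the condition with some $S_k$, $k\in\{1,2,3\}$, let party $j$ be the unique one with $[H_j,s_j^k]\neq 0$. I would construct the initial state by choosing $g_i=h_i$ for $i\neq j$ and defining $G_j$ via the relation
\begin{equation}
p\,H_j+(1-p)\,s_j^{k\,\dagger}H_j s_j^k = G_j
\end{equation}
for a suitable $p\in(0,1)$. This is always solvable with $G_j>0$ of rank two because the left--hand side interpolates between $H_j$ and a non--LU--equivalent Hermitian operator with the same trace. The two--outcome POVM $M_1=\sqrt{p}\,h g^{-1}$, $M_2=\sqrt{1-p}\,h S_k g^{-1}$ is local (since $s_i^k$ is unitary and commutes with $H_i$ for $i\neq j$, each party other than $j$ only applies a conditional Pauli), satisfies $M_1^\dagger M_1+M_2^\dagger M_2=\one$ by construction, and realises the transformation deterministically via LOCC.

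The main obstacle will be the bookkeeping in the necessity argument: there are several commuting Paulis $P$, and the system of conditions $\prod_l\eta_{k_l}=1$ versus $\tr(HP)=0$ branches into a number of subcases determined by which $\bar h_i^j$ are forced to vanish by the standard form. I would need to verify carefully that every branch either reproduces one of the three families in the lemma or collapses to an LU--equivalent (hence trivial) transformation; in particular, mixtures supported on three or four symmetries must be ruled out, which should follow from the fact that the product $\eta_{k_l}=1$ condition cannot be met simultaneously for the $P$'s required to accommodate generic components of $H$.
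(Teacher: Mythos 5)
Your overall strategy coincides with the paper's: both directions rest on the fact that $S(\Psi_{(-b_2)b_2})$ is an abelian subgroup of the Pauli group, the necessity part on the dichotomy of Eq.~(\ref{eq_Pauli}) applied to Paulis commuting with the whole symmetry group, and the sufficiency part on the two-outcome POVM $\{\sqrt{p}\,hg^{-1},\sqrt{1-p}\,hS_kg^{-1}\}$ with $G_j=p\,H_j+(1-p)(s_j^k)^\dagger H_j s_j^k$. Your sufficiency argument is complete and is exactly the paper's construction.

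The gap is in the necessity direction, in your claim that ``mixtures supported on three or four symmetries must be ruled out.'' They cannot be ruled out. When both $H_1$ and $H_3$ are diagonal, they commute with $s_1^k$ and $s_3^k$ for every $k$, so every admissible Pauli $P$ with an off-diagonal component on party $1$ or $3$ gives $\tr(HP)=0$ automatically and imposes no sign constraint on the $p_j$; the only nontrivial constraints then live on parties $2$ and $4$, where the four symmetries act as $(\one,\one),(\sigma_x,\sigma_x),(\sigma_y,\sigma_y),(\sigma_z,\sigma_z)$ and genuine four-outcome protocols exist (e.g.\ reaching states with $H_2$ arbitrary and $H_1,H_3$ diagonal, $H_4=\one/2$, in direct analogy to reaching $h_1\otimes\one^{\otimes3}\ket{\Psi}$ in the generic class). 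In this branch the necessity proof cannot proceed by forcing two-symmetry support; one has to characterize all solutions of the reduced two-party version of Eq.~(\ref{EqSep}), which the paper does via the equation $\h_{(2)}(\h_{(4)})^T\odot(N_1-N_2)=0$ of Eq.~(\ref{Eq_sep2}), importing the analysis of the generic $G_{abcd}$ class from \cite{MESus}, and then check separately that every state obtained this way is nonetheless also reachable by a two-outcome POVM. As written, your argument would either wrongly exclude these states or leave this branch unproven; the remaining cases (at least one of $H_1,H_3$ non-diagonal) do go through along the lines you describe, since there the pairwise constraints $\bar h_k^i\bar h_l^j=(\prod\eta)\,\bar h_k^i\bar h_l^j$ really do collapse the support to two symmetries.
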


The basic idea of the proof is to use the fact that the symmetries are all elements of the Pauli group, which allows us to identify the reachable states using Eq. (\ref{eq_Pauli}) (for the details see Appendix E). In the following lemma we give all states which are convertible in this SLOCC classes.
\begin{lemma}\label{lem10c} The only states  in the SLOCC classes $L_{a_2b_2}$ with $a^2= - b^2$ and $a,b\neq0$ that are convertible via LOCC are given by
$g_1\otimes g_2\otimes g_3\otimes g_4\ket{\Psi_{(-b_2)b_2}}$, where $\otimes G_i$ obeys the following condition. There exists a non-trivial symmetry $S_k=\otimes_i s_i^k\in S(\Psi_{(-b_2)b_2}) $ for some $k\in\{1, 2, 3\}$ such that three out of the four operators $G_i$ commute with the corresponding operators $s_i^k$ and one operator $G_j$ is arbitrary.
\end{lemma}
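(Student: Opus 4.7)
The plan is to establish convertibility by a two-pronged argument: deriving the necessary condition on $\otimes_i G_i$ by combining Lemma \ref{lem10} with the SEP equation (\ref{EqSep}), and then proving sufficiency via an explicit two-outcome LOCC protocol built from the unitary symmetry $S_k$. Throughout I would exploit the crucial fact that every element of $S(\Psi_{(-b_2)b_2})$ listed in Eq.\ (\ref{SPauli}) is (up to a phase) a tensor product of Pauli operators.

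For necessity I would proceed as follows. If $g_1\otimes\cdots\otimes g_4\ket{\Psi_{(-b_2)b_2}}$ is convertible, then by definition some non-LU-equivalent state $h_1\otimes\cdots\otimes h_4\ket{\Psi_{(-b_2)b_2}}$ is reachable from it, and Lemma \ref{lem10} asserts that there must exist a non-trivial $S_k\in S(\Psi_{(-b_2)b_2})$ and a party $j$ such that $[H_i,s_i^k]=0$ for all $i\neq j$. Tracing Eq.\ (\ref{EqSep}) over the other three parties yields the single-party equation $\mathcal{E}_i(H_i)=\sum_m p_m (s_i^m)^\dagger H_i s_i^m=G_i$ of Eq.\ (\ref{eq_symLUsingle}); since each $s_i^m$ is a Pauli, $\mathcal{E}_i$ is diagonal in the Pauli basis with real eigenvalues $\eta_\nu^{(i)}\in[-1,1]$, so $\bar g_i^\nu=\eta_\nu^{(i)}\bar h_i^\nu$. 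Since $[H_i,s_i^k]=0$ is equivalent to $\bar h_i^\nu=0$ for every $\nu$ anticommuting with $s_i^k$, this identity forces $\bar g_i^\nu=0$ for those same $\nu$, and hence $[G_i,s_i^k]=0$ for the three parties $i\neq j$, matching the statement.

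For sufficiency, assume $[G_i,s_i^k]=0$ for $i\neq j$ and some $k\in\{1,2,3\}$. I would exhibit a two-outcome POVM acting non-trivially only on party $j$,
\begin{equation}
M_1=\sqrt{p}\,h_j g_j^{-1},\qquad M_2=\sqrt{1-p}\,h_j s_j^k g_j^{-1},
\end{equation}
with the other parties applying $\one$ for outcome $1$ and $s_i^k$ for outcome $2$. Using $S_k\ket{\Psi_{(-b_2)b_2}}=\ket{\Psi_{(-b_2)b_2}}$ together with $[g_i,s_i^k]=0$ for $i\neq j$ (which follows from $[G_i,s_i^k]=0$ and the choice $g_i=\sqrt{G_i}$), both outcomes leave the state proportional to $h_1\otimes\cdots\otimes h_4\ket{\Psi_{(-b_2)b_2}}$. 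The completeness relation $M_1^\dagger M_1+M_2^\dagger M_2=\one$ becomes
\begin{equation}
pH_j+(1-p)(s_j^k)^\dagger H_j s_j^k=G_j,
\end{equation}
which, in Pauli components, keeps the entries of $H_j$ commuting with $s_j^k$ equal to those of $G_j$ and rescales the anticommuting ones by $2p-1\in(-1,1)$. For arbitrary $G_j$ one can solve this for a non-LU-equivalent $H_j$ by slightly enlarging the $s_j^k$-anticommuting components while preserving positivity (one remains within the Bloch sphere), verifying convertibility.

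The main obstacle is the first step: one must justify that the partial-trace map $\mathcal{E}_i$ is genuinely Pauli-diagonal with real eigenvalues in spite of the $i$ factors that may appear inside some products of Paulis in $S(\Psi_{(-b_2)b_2})$, and that the identity $\bar g_i^\nu=\eta_\nu^{(i)}\bar h_i^\nu$ propagates vanishing of Pauli components from $H_i$ to $G_i$ for \emph{all} three committed parties simultaneously (rather than parity-by-parity). A secondary subtlety is the handling of degenerate sub-cases in which the symmetry $S_k$ happens to commute with $G_j$ as well, where $p=1/2$ together with any permitted alteration of $H_j$ along an anticommuting direction still produces a valid non-trivial conversion; this is where the richness of the four-element Pauli-type symmetry group, as opposed to the single $\sigma_z^{\otimes 4}$ of the earlier cases (cf. Lemma \ref{lemc11}), creates genuinely new convertible states beyond those classified before.
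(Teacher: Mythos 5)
Your proposal is correct and follows essentially the same route as the paper: necessity via Lemma \ref{lem10} combined with the single-party condition ${\cal E}_i(H_i)=G_i$ (which propagates vanishing Pauli components from $H_i$ to $G_i$), and sufficiency via the two-outcome POVM $\{\sqrt{p}\,h_jg_j^{-1},\sqrt{1-p}\,h_jS_kg_j^{-1}\}$ that rescales the $s_j^k$-anticommuting components of $H_j$ by $2p-1$. The "obstacle" you flag is not a real one — conjugation by a Pauli operator sends $\sigma_\nu$ to $\pm\sigma_\nu$ regardless of overall phases in the symmetry elements, so ${\cal E}_i$ is automatically Pauli-diagonal with real eigenvalues, exactly as the paper tacitly uses.
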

\begin{proof}
Recall that according to Lemma \ref{lem10} the reachable states are given by
$h_1\otimes h_2\otimes h_3\otimes h_4\ket{\Psi_{(-b_2)b_2}}$, where $\otimes H_i$ has the property that there exists a symmetry $S_k=\otimes_i s_i^k \in S(\Psi_{(-b_2)b_2})$ such that exactly
three out of the four operators $H_i$ commute with the corresponding operators $s_i^k$.
Tracing over all parties except one in  Eq. (\ref{EqSep}) and using Observation \ref{obsmon} (in particular $r=1$) one obtains [see Eq. (\ref{eq_symLUsingle})]
\bea\label{Eqsep1new}
{\cal E}_i(H_i)=G_i \quad \forall i.
\eea
Thus, in case a component $\bar{h}_i^j$ is zero, it follows that also $\bar{g}_i^j=0$. Combining these results one obtains that the states $g_1\otimes g_2\otimes g_3\otimes g_4\ket{\Psi_{(-b_2)b_2}}$ that are possibly convertible have the property that there exists a symmetry $S_k=\otimes_i s_i^k \in S(\Psi_{(-b_2)b_2})$ such that at least
three out of the four operators $G_i$ commute with the corresponding operators $s_i^k$.  Denote by $j$ the party for which $[G_j, s_j^k]\neq 0$ in case there exist a $G_j$ that does not commute with the symmetry, otherwise $j$ can be chosen arbitrarily under the constraint that $s_j^k\neq\one$. Moreover, we use $s_j^k=\sigma_l$ for some $l\in\{1,2,3\}$. In order to show that these states are indeed convertible consider $\{\sqrt{p} h_j\one^{\otimes 4}g_j^{-1},\sqrt{1-p} h_jS_kg_j^{-1}\}$ which corresponds to a POVM for $\bar{h}_j^l=\bar{g}_j^l$ and $(2 p -1 ) \bar{h}_j^i=\bar{g}_j^i$ for $i\neq l$ and $i\in\{1,2,3\}$. As one can always find a value of $p$ and a positive $H_j$ that does not lead to a LU-equivalent state, these states are convertible.
\end{proof}
Due to Lemma \ref{lem10} and Lemma \ref{lem10c} we have that the states $g_1\otimes g_2\otimes g_3\otimes g_4\ket{\Psi_{(-b_2)b_2}}$ that are convertible and are in $MES_4$ obey the following condition. There exists a symmetry $S_k\in S(\Psi_{(-b_2)b_2}) $ for some $k\in\{1, 2, 3\}$ such that $\otimes G_i$ commutes with $S_k$  and there exist no symmetry $S_l=\otimes_i s_i^l\in S(\Psi_{(-b_2)b_2}) $ for some $l\in\{1, 2, 3\}$ such that exactly
three out of the four operators $G_i$ commute with the corresponding operators $s_i^l$ and one operator $G_j$ does not commute with $s_j^l$.

\subsubsection{\label{seca2a2} Case (3) $a^2= b^2$ and $a,b\neq0$}
The seed states for the SLOCC classes $L_{a_2b_2}$ with $a^2= b^2$ and $a,b\neq0$ can be chosen as \footnote{Note that the seed state for $a=-b$ is LU equivalent to the state for $a=b$ via the local unitary $\one\otimes\one\otimes\sigma_z\otimes\sigma_z$ (see Eq. (\ref{a2b2})).} $\ket{\Psi}= a(\ket{0011}+\ket{1100}+
\ket{0110}+\ket{1001})+i/(2a) (\ket{1111}- \ket{1010})$ with symmetries
\begin{eqnarray}
&S_{x,u,v,m}=(\sigma_z^{\otimes 4})^{m} \left(
    \begin{array}{cc}
      1 & 0 \\
      x & 1 \\
    \end{array}
  \right)\otimes\left(
    \begin{array}{cc}
      u & v \\
      -v & u\\
    \end{array}
  \right)\otimes\left(
    \begin{array}{cc}
      1 & 0\\
      -x & 1 \\
    \end{array}
  \right)
\otimes\left(
    \begin{array}{cc}
     u & -v \\
     v & u \\
    \end{array}\right),
\end{eqnarray}
where $m\in\{0,1\}$ and $x,u,v\in\C$. In order to work with symmetries that are similar to the ones that already occured in the context of other SLOCC classes we will in the following work with the seed states \bea \ket{\Psi_{a_2a_2}}=\one\otimes X(-\pi/4)\otimes\one\otimes X(-\pi/4)\ket\Psi,\eea which then have symmetries
\begin{eqnarray}
\{\tilde{S}_{x,z,m}\}&=\langle(\sigma_z\otimes\sigma_y\otimes\sigma_z\otimes\sigma_y)^{m}, \left(
    \begin{array}{cc}
      1 & 0 \\
      x & 1 \\
    \end{array}
  \right)\otimes P_{1/z}\otimes\left(
    \begin{array}{cc}
      1 & 0\\
      -x & 1 \\
    \end{array}
  \right)
\otimes P_{z}\rangle,
\end{eqnarray}
where $m\in\{0,1\}$, $x,z\in\C$ and $z\neq 0$. Note that any operator in the symmetry group can be written as
\begin{eqnarray}\nonumber
(\sigma_z\otimes\sigma_y\otimes\sigma_z\otimes\sigma_y)^{m}\left[\left(
    \begin{array}{cc}
      1 & 0 \\
      x & 1 \\
    \end{array}
  \right)\otimes P_{1/z}\otimes\left(
    \begin{array}{cc}
      1 & 0\\
      -x & 1 \\
    \end{array}
  \right)
\otimes P_{z}\right]
\end{eqnarray}
for some $m\in\{0,1\}$, $x,z\in\C$ and $z\neq 0$.
Using this symmetry the standard form can be chosen to be $d_1 \otimes g_2^x\otimes g_3\otimes g_4 \ket{\Psi_{a_2a_2}}$ where $g_2^2\geq 0$. In case $g_2^2=0$ one can choose $g_4^2\geq 0$.  Moreover, we choose $\Re(g_3^2)\geq 0$ and in case $\Re(g_3^2)= 0$ we choose $\Im(g_3^2)\geq 0$. For $g_3^2=0$ one has to impose some restriction $G_4$ in order to make the standard form unique. In the following lemma we state which states in this SLOCC class can be reached via LOCC.
\begin{lemma} \label{lem77} The only states in the SLOCC classes $L_{a_2b_2}$ with $a^2= b^2$ and $a,b\neq0$ that are reachable via LOCC are given by $d_1 \otimes h_2^x\otimes h_3\otimes h_4^x \ket{\Psi_{a_2a_2}}$ with $h_3\neq d_3$, $d_1 \otimes \one\otimes h_3\otimes h_4 \ket{\Psi_{a_2a_2}}$ with $h_4\neq d_4$, $d_1 \otimes h_2^x\otimes h_3\otimes d_4 \ket{\Psi_{a_2a_2}}$ with $h_2^x\neq \one/2$, and $d_1 \otimes h_2^x\otimes d_3\otimes h_4 \ket{\Psi_{a_2a_2}}$ with $h_4\neq h_4^x$ and $d_1 \otimes \one /2\otimes d_3\otimes h_4 \ket{\Psi_{a_2a_2}}$ where $h_4\neq \one/2$.
\end{lemma}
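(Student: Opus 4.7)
The plan is to follow closely the template established in Lemma~\ref{lem3}, Lemma~\ref{lem5}, and especially Lemma~\ref{lem10}: first reduce Eq.~(\ref{EqSep}) to one involving only unitary symmetries, then exploit the resulting Pauli-type structure to pin down the possible shapes of $H_i$, and finally display explicit two-outcome POVMs realising each allowed case.

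First, I would write down Eq.~(\ref{EqSep}) with $H,G$ in the prescribed standard form $H=D_1\otimes H_2^x\otimes H_3\otimes H_4$ (and similarly for $G$), summing over the symmetries $\tilde{S}_{x,z,m}$. The main technical step is to argue that in this equation one must have $p_{x,z,m}=0$ whenever $x\neq 0$ or $|z|\neq 1$. For the $|z|\neq 1$ reduction the proof should copy the trick from Lemma~\ref{lem3}/Lemma~\ref{lem33}: inspecting the diagonal matrix elements $\ket{ijkl}\bra{ijkl}$ gives expressions of the form $\sum p_{x,z,m}|z|^{\pm 4}$ which, combined over complementary strings $(ijkl)$ and $(\bar i \bar j \bar k \bar l)$, force $\sum p |z|^4=\sum p/|z|^4=1$ and hence $|z|=1$. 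For the $x\neq 0$ reduction one looks at suitable off-diagonal matrix elements (e.g.\ $\ket{0100}\bra{0000}$, $\ket{1000}\bra{0000}$ and $\ket{1100}\bra{1010}$) which, because $D_1$ is diagonal and $H_2^x$ has no $\sigma_z$ component, produce linear constraints $\sum p_{x,z,m}\,x=\sum p_{x,z,m}|x|^2=0$, hence $x=0$. I expect this reduction to be the main obstacle because the off-diagonal structure of the standard form (particle~2 restricted to $\mathrm{span}\{\one,\sigma_x\}$, particle~3 generic) needs to be used delicately to avoid losing information when tracing.

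Once only the unitary symmetries survive, namely $\one^{\otimes 4}$, $\sigma_z\otimes\sigma_y\otimes\sigma_z\otimes\sigma_y$, the phase symmetry $Z(\alpha)\otimes Z(-\alpha)\otimes Z(-\alpha)\otimes Z(\alpha)$, and their products, Observation~\ref{obsmon} gives $r=1$. I would then trace Eq.~(\ref{EqSep}) over three parties to obtain, as in Lemma~\ref{lem10}, the single-party identities $\mathcal{E}_i(H_i)=G_i$, together with the scalar/bilinear constraints on pairs. Running these through the argument based on Eq.~(\ref{eq_Pauli}): only Pauli tensors $P=\bigotimes \sigma_{k_i}$ that commute with every element of $S(\Psi_{a_2a_2})$ can give non-trivial $H\mapsto G$, all others forcing either a vanishing $\mathrm{tr}(HP)$ or a two-outcome POVM coming from a single non-trivial unitary symmetry. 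This commutation analysis, together with the requirement that the state stays in the given standard form (in particular $G_1$ diagonal, $G_2\in\mathrm{span}\{\one,\sigma_x\}$), whittles the reachable candidates down to exactly the five families listed in the statement.

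Finally, for each of the five families I would exhibit the explicit LOCC protocol. In every case the symmetry used is a \emph{unitary} element of $\tilde{S}_{x,z,m}$ acting as a Pauli on some subset of the parties: typically one takes $\{\sqrt{p}\,h_jg_j^{-1},\sqrt{1-p}\,h_j s_j g_j^{-1}\}$ where party $j$ is the one with the reachable non-trivial component and $S=\otimes_i s_i$ is the unitary symmetry that commutes with the three ``frozen'' operators in that family (e.g.\ for $d_1\otimes h_2^x\otimes h_3\otimes h_4^x$ one uses the $\sigma_z\otimes\sigma_y\otimes\sigma_z\otimes\sigma_y$ symmetry acting non-trivially on party~3; for $d_1\otimes\one\otimes h_3\otimes h_4$ one uses the symmetry coming from a purely imaginary phase in $Z(\alpha)^{\otimes}$ that commutes with $\one$ on party~2). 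Choosing $p$ so that the resulting operator is a legitimate positive rank-two element reproduces every listed family as actually reachable, closing the lemma.
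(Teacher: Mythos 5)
Your outline rests on a blanket reduction -- that every solution of Eq.~(\ref{EqSep}) between standard-form states must have $p_{x,z,m}=0$ whenever $x\neq 0$ or $|z|\neq 1$ -- and this is precisely the step that does not go through uniformly; it is, in effect, the entire content of the paper's proof, which it obtains only through a case analysis on the structure of $H$. The partial-trace trick from Lemma~\ref{lem3}/Lemma~\ref{lem33} fails here because the symmetries are non-unitary on parties $1$ and $3$, so tracing does not decouple the parties, and $r$ is not $1$ a priori (the paper first derives $r=h_1^3h_3^3/(g_1^3g_3^3)$); the $|z|$-balancing argument only closes after one has separately established $h_4^1=h_4^3=1/2$ or $\tilde p_1=0$. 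Likewise the $x$-exclusion is not a pair of clean linear constraints: the conjugated party-$3$ block contains $a_3(x)=h_3^1+h_3^3|x|^2\mp h_3^2x\mp h_3^{2*}x^*$, so the off-diagonal matrix elements you cite yield relations of the form $\sum p\,x\,f(x,z,m)=0$ that only collapse to $x=0$ in specific branches (e.g.\ $h_3^2=0$, or after $\tilde p_1=0$). Indeed, in the branch $h_3^2\neq0$, $h_4^1=h_4^3=1/2$, $h_4^2\in\R$ the paper never excludes $x\neq 0$ at all -- it sidesteps the question by exhibiting the state as reachable directly. The actual proof therefore splits on $h_3^2=0$ versus $h_3^2\neq 0$, derives Eq.~(\ref{h3}) to force $\tilde p_1=0$, $\tilde p_0=0$ or $h_4^1=h_4^3=1/2$, and only within each branch eliminates the non-unitary parameters; your proposal supplies no substitute for this dichotomy.

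Two further points. First, even granting the reduction, the surviving unitary symmetries are $(\sigma_z\otimes\sigma_y\otimes\sigma_z\otimes\sigma_y)^m(\one\otimes P_{\bar z}\otimes\one\otimes P_z)$ with $|z|=1$ -- a continuous group of phase gates on parties $2$ and $4$ only (not $Z(\alpha)$ on all four parties, as you write) -- so the finite-Pauli-group machinery of Eq.~(\ref{eq_Pauli}) does not apply verbatim; the relevant tool is the phase-gate analysis of Lemmas~\ref{lem3} and \ref{lem33}. Second, your protocol for the family $d_1\otimes h_2^x\otimes h_3\otimes h_4^x$ uses $\sigma_z\otimes\sigma_y\otimes\sigma_z\otimes\sigma_y$, which anticommutes with the $\sigma_x$-component of $H_2^x$ and $H_4^x$ and hence does not yield a valid deterministic protocol unless $H_2=H_4=\one/2$; the correct symmetry is $\sigma_z\otimes\sigma_x\otimes\sigma_z\otimes\sigma_x$ (obtained from $z=i$, $m=1$), as in the paper. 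The sufficiency direction is fixable, but the necessity direction as proposed has a genuine gap.
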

The proof is straightforward but tedious and can be found in Appendix F.

In the following lemma we state which states in these SLOCC classes are convertible.
\begin{lemma}  \label{lemmab2} The only states  in the SLOCC classes $L_{a_2b_2}$ with $a^2=  b^2$ and $a,b\neq0$ that are convertible via LOCC are given by $d_1 \otimes g_2^x\otimes g_3\otimes g_4^x \ket{\Psi_{a_2a_2}}$, $d_1 \otimes \one\otimes g_3\otimes g_4 \ket{\Psi_{a_2a_2}}$, $d_1 \otimes g_2^x\otimes g_3\otimes d_4 \ket{\Psi_{a_2a_2}}$, and $d_1 \otimes g_2^x\otimes d_3\otimes g_4 \ket{\Psi_{a_2a_2}}$.
\end{lemma}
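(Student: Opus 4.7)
The proof of Lemma \ref{lemmab2} will follow the standard two-step strategy already employed for convertibility results in analogous classes (e.g.\ Lemma \ref{lem10c}, Lemma \ref{convlemmaaacc}, Lemma \ref{lem4c}). The plan is: (i) show that any convertible state must take one of the four listed forms, by exploiting Lemma \ref{lem77} together with Eq.\ (\ref{EqSep}); (ii) exhibit explicit two-outcome LOCC protocols that realize a non-trivial conversion for each form.

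For the necessary direction, I would use that a convertible state must admit a SEP map into some non-LU-equivalent reachable state, and by Lemma \ref{lem77} the final state falls into one of the five enumerated forms. Starting from Eq.\ (\ref{EqSep}) with the symmetries $\tilde S_{x,z,m}$, tracing over three parties at a time and using the standard form, I expect to recover the rigidity pattern seen in the proofs of Lemma \ref{lem3} and Lemma \ref{lem5}: whenever a Pauli component $\bar h_i^j$ vanishes and there is no non-unitary symmetry able to generate it, the corresponding $\bar g_i^j$ must also vanish. In particular, $H_1 = d_1$ forces $G_1 = d_1$ (diagonal), and the $\sigma_x$-only structure $H_2 = h_2^x$, $H_4 = h_4^x$ propagates to the corresponding $G_i$'s. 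Collecting these constraints reproduces exactly the four listed families (and subsumes the degenerate sub-case $d_1\otimes\one/2\otimes d_3\otimes h_4$ of Lemma \ref{lem77} inside $d_1\otimes\one\otimes g_3\otimes g_4$).

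For the sufficient direction, the recipe is to pick, for each family, a non-trivial unitary symmetry in $S(\Psi_{a_2a_2})$ that commutes with three of the four $G_i$'s, and let the remaining party implement a two-outcome POVM
\[
\{\sqrt{p}\, h_j g_j^{-1},\ \sqrt{1-p}\, h_j s_j g_j^{-1}\},
\]
the other parties applying the corresponding $s_k$ conditioned on the outcome. Concretely: for $d_1\otimes g_2^x\otimes g_3\otimes g_4^x$, use $\sigma_z\otimes\sigma_y\otimes\sigma_z\otimes\sigma_y$ and measure on party $3$; for $d_1\otimes \one\otimes g_3\otimes g_4$, use a $P_z$-type symmetry (acting trivially on the identity at party $2$) and measure on party $3$ or $4$; for $d_1\otimes g_2^x\otimes g_3\otimes d_4$ and $d_1\otimes g_2^x\otimes d_3\otimes g_4$, use $\sigma_z\otimes\sigma_y\otimes\sigma_z\otimes\sigma_y$ and measure on party $3$ or $4$ respectively. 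The POVM constraints $\bar h_j^l=\bar g_j^l$ on the commuting Pauli components and $(2p-1)\bar h_j^k=\bar g_j^k$ on the anticommuting ones can always be satisfied by choosing $p$ and $H_j$ suitably, as only positivity and rank-$2$ of $H_j$ are required and the target can be chosen non-LU-equivalent to the source.

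The main obstacle will be in (i): ruling out contributions from non-unitary symmetries $\tilde S_{x,z,m}$ with $x\neq 0$ or $|z|\neq 1$. The argument must parallel the technique in Lemma \ref{lem3} and Lemma \ref{lem77}, namely projecting Eq.\ (\ref{EqSep}) onto carefully chosen computational-basis matrix elements to obtain linear constraints of the form $\sum p_{x,z,m}|x|^2=0$ and $\sum p_{x,z,m}(|z|^{\pm 2}-1)^2=0$, which force all non-unitary symmetries to have vanishing weight. Once this is established, $r=1$ follows from Observation \ref{obsmon} and the remaining analysis reduces to tracing out parties and matching Pauli components, which is straightforward.
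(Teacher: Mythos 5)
Your overall strategy coincides with the paper's: necessity follows from Lemma \ref{lem77} together with the rigidity of Eq.\ (\ref{EqSep}) (note that the exclusion of non-unitary symmetries, which you flag as the main obstacle, was already established in the proof of Lemma \ref{lem77} and only needs to be cited, as does the implication $h_i^2=0\Rightarrow g_i^2=0$ extracted from Eq.\ (\ref{eq7})), and sufficiency is shown by two-outcome POVMs built from unitary symmetries. However, your explicit protocols contain a concrete error: $\sigma_z\otimes\sigma_y\otimes\sigma_z\otimes\sigma_y$ cannot serve as the symmetry for three of the four families, because $\sigma_y$ anticommutes with both $\sigma_x$ and $\sigma_z$ and therefore does not commute with $G_2^x$, $G_4^x$ or a generic $d_4$. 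Consequently the condition $p\,H+(1-p)\,S^\dagger HS=rG$ cannot hold with $G_k=H_k$ on the spectator parties, and the element $\sqrt{1-p}\,hSg^{-1}$ fails to be unitary there, so the protocol is not of the one-party-measures form. The symmetries that actually work are $\sigma_z\otimes\sigma_x\otimes\sigma_z\otimes\sigma_x$ (which lies in the group, being proportional to the product of $\sigma_z\otimes\sigma_y\otimes\sigma_z\otimes\sigma_y$ with $\one\otimes\sigma_z\otimes\one\otimes\sigma_z$) for $d_1\otimes g_2^x\otimes g_3\otimes g_4^x$ (party $3$ measures) and for $d_1\otimes g_2^x\otimes d_3\otimes g_4$ (party $4$ measures), and $\one\otimes\sigma_z\otimes\one\otimes\sigma_z$ for $d_1\otimes\one\otimes g_3\otimes g_4$ and for $d_1\otimes g_2^x\otimes g_3\otimes d_4$.

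For the last of these families your assignment of the measuring party is also wrong: no non-trivial element of $S(\Psi_{a_2a_2})$ acts non-commutingly on party $3$ while simultaneously commuting with a generic $D_1$, $G_2^x$ and $D_4$, so party $3$ cannot perform the measurement for $d_1\otimes g_2^x\otimes g_3\otimes d_4$; it must be party $2$, using $\one\otimes\sigma_z\otimes\one\otimes\sigma_z$, which is what the paper does. Once these symmetry and party assignments are corrected, the remainder of your argument (the POVM conditions $\bar h_j^l=\bar g_j^l$ on commuting components and $(2p-1)\bar h_j^k=\bar g_j^k$ on anticommuting ones, plus the rigidity statements such as $H_4=H_4^x\Rightarrow G_4=G_4^x$ needed to pin down the sources of the first reachable family) reproduces the paper's proof.
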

The proof of this lemma can be found in Appendix F. Due to Lemma \ref{lem77} and Lemma \ref{lemmab2} the states in $MES_4$ that are convertible are given by  $d_1 \otimes g_2^x\otimes d_3\otimes g_4^x \ket{\Psi_{a_2a_2}}$, where $g_2^2,g_4^2\neq 0$, $d_1 \otimes \one\otimes g_3\otimes d_4 \ket{\Psi_{a_2a_2}}$, where $g_3\neq d_3$ and $d_4\neq\one$ and $d_1 \otimes \one\otimes d_3\otimes \one \ket{\Psi_{a_2a_2}}$.
\subsection{The SLOCC classes $L_{a_2b_2}$ for $a=0$ and $b\neq0$}
We will consider now the SLOCC classes $L_{a_2b_2}$ for $a=0$ and $b\neq0$. Note here that the block structure of $Z_\Psi$ and $\tilde{Z}_\Psi$ changes compared to the one for $a,b\neq 0$. In particular one obtains two Jordan blocks of dimension 1 (with eigenvalues 0) and one Jordan block of dimension 2 (with eigenvalue $b^2$), i.e.
\begin{eqnarray}
&Z_\Psi=\tilde{Z}_\Psi= \left(
    \begin{array}{cccc}
      0 & 0 & 0&0 \\
      0 & 0  & 0&0\\
 0 & 0  & b^2+\frac{i}{2}&\frac{1}{2}\\
0 & 0  & \frac{1}{2}&b^2-\frac{i}{2}\\
    \end{array}
  \right).
\end{eqnarray}
In \cite{slocc4} the representative for the classes $L_{a_2b_2}$ is given by $\ket\Psi= a(\ket{0000}+\ket{1111})+b(\ket{0101}+\ket{1010})+\ket{0110}+\ket{0011}$. Thus, the choice $b=0$ and $a\neq 0$ is LU-equivalent to $a=0$ and $b\neq 0$ via $\one\otimes \sigma_x\otimes \one\otimes \sigma_x$. The seed state of these SLOCC classes could be chosen as  $\ket\Psi= b(\ket{0110}+\ket{1001})+ \ket{1111}-i/(2b) \ket{1010}$ with the only non-trivial symmetries $P_z\otimes P_z\otimes P_{1/z}\otimes P_{1/z}$ for $z\in\C\backslash 0$. Hence, the seed states \bea\ket{\Psi_{00b_2}}=\one\otimes\one\otimes\sigma_x\otimes\sigma_x\ket{\Psi}\eea have symmetries $P_z^{\otimes 4}$. Due to Lemma \ref{lem33} we have that the states that are in $MES_4$ are all those that cannot be written as $h_1^x\otimes d_2\otimes d_3\otimes d_4 \ket{\Psi_{00b_2}}$ where  $h_1^x\neq \one$ or  $\one_1 \otimes d_j\otimes d_k\otimes h_l \ket{\Psi_{00b_2}}$ with $\{j,k,l\} = \{2,3,4\}$ and $h_l\neq d_l$. Furthermore, according to Lemma \ref{lem333} the states that are convertible can be written in the form $g_1^x\otimes d_2\otimes d_3\otimes d_4 \ket{\Psi_{00b_2}}$ or $\one_1 \otimes d_j\otimes d_k\otimes g_l \ket{\Psi_{00b_2}}$ with $\{j,k,l\} = \{2,3,4\}$.
\subsection{The SLOCC class $L_{a_2b_2}$ for $a=b=0$}
A representative of the class $L_{a_2b_2}$ for $a=b=0$ is given by $\ket\Psi= \ket{0110}+\ket{0011}$, i.e. the state is separable and its possible transformations are known \cite{Nielsen}.
\section{The SLOCC classes $L_{ba_3}$}
\subsection{\label{secLba3aneq0}The SLOCC classes $L_{ba_3}$ for $a\neq 0$}
Let us consider next the classes $L_{ba_3}$ with $a\neq 0$. Note that the corresponding seed state can be chosen as \footnote{Notice that the seed state is in the same SLOCC class as the state $\ket{\phi}=a(\ket{0000}+\ket{1111})+ a\ket{\Psi^+}\ket{\Psi^+}+b \ket{\Psi^-}\ket{\Psi^-}+ i/\sqrt{2}(\ket{0111}+\ket{1011}-\ket{0001}-\ket{0010})$, which differs up to some signs from the representative for this SLOCC class presented in  \cite{slocc4}.}

\begin{eqnarray}\label{ab3}
&\ket{\Psi_{ab_3}} =1/(32 a^3)[(-i + (8\,i - 8) a^2)\ket{1111} -(i + (8\,i - 8 ) a^2)\ket{0000} -(1 + (4 + 4 i) a^2)\sqrt{2}(\ket{00}\ket{\Psi^-}\\ \nonumber
&+\ket{\Psi^-}\ket{00})-(i-32a^4) (\ket{0011}+\ket{1100})+(i+16 a^3(a - b))
(\ket{0101}+\ket{1010})-(i + 16 a^3 (a + b))\\ \nonumber &
(\ket{0110}+\ket{1001})-(1 - (4 + 4 i) a^2)
\sqrt{2}(\ket{\Psi^-}\ket{11}+\ket{11}\ket{\Psi^-})]
\end{eqnarray}
The corresponding $Z$ and $\tilde{Z}$ matrices coincide and have symmetric Jordan form.
In particular, they are of the form
\begin{eqnarray}
&Z_{\Psi_{ab_3}}=\tilde{Z}_{\Psi_{ab_3}}= \left(
    \begin{array}{cccc}
      a^2 & \frac{1}{2} +\frac{i}{2} & 0&0 \\
     \frac{1}{2} +\frac{i}{2} & a^2  & \frac{1}{2} -\frac{i}{2}&0\\
 0 &\frac{1}{2} -\frac{i}{2} & a^2&0\\
0 & 0  & 0&b^2\\
    \end{array}
  \right).
\end{eqnarray}
As before one obtains completely different symmetries for $a^2\neq b^2$ and $a^2= b^2$ which we will distinguish in the following. Nevertheless, one finds that no state is reachable in both cases.
\subsubsection{The case $a^2\neq b^2$ and $a\neq 0$}
For the classes $L_{ba_3}$ with $a^2\neq b^2$ and $a\neq 0$ the corresponding seed state has no non-trivial symmetry. Thus, no transformations via SEP is possible and, therefore, all states in these SLOCC classes are in $MES_4$.
\subsubsection{\label{secaa3} The case $a^2= b^2$ and $a\neq 0$}
Let us consider next the classes $L_{ba_3}$ with $a= b$ and $a\neq 0$. Note here that the parameter choice $a=b$ and $a=-b$ for the seed states leads to the same SLOCC classes up to exchange of particles. In order to simplify the symmetries we will not consider the state $\ket{\Psi_{ab_3}}$ [see Eq. (\ref{ab3})] as seed state but rather \bea\ket{\Psi_{aa_3}}=X(-\pi/4)\otimes X(\pi/4)\otimes X(-\pi/4)\otimes X(\pi/4)\ket{\tilde{\Psi}_{aa_3}},\eea where we set on the righthand side of Eq. (\ref{ab3}) $a=b$.
The symmetries of these states are given by
\begin{eqnarray}\label{eqsx}
S_{x}&=\left(
    \begin{array}{cc}
      1 & x \\
      0 & 1 \\
    \end{array}
  \right)^{\otimes 4}\equiv \tilde{s}_x^{\otimes 4},
\end{eqnarray}
where $x\in\C$.
It is easy to see that one can choose the standard form $d_1\otimes g_2\otimes g_3\otimes g_4 \ket{\Psi_{aa_3}}$ as for an appropriate value of $x$
the matrix $ \tilde{s}_x^\dagger G_i \tilde{s}_x$ is diagonal.
With all that, it is now straightforward to derive the states in $MES_4$ of these classes.
\begin{lemma} \label{lem07} The SLOCC classes $L_{ba_3}$ with $a^2= b^2$ and $a\neq 0$ contains no state that is reachable via LOCC.
\end{lemma}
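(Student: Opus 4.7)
The plan is to show that the only SEP transformations in these classes are between LU--equivalent states, which (since LOCC $\subseteq$ SEP) immediately implies no state is reachable via LOCC. The key fact I will exploit is that the symmetry group consists entirely of upper triangular matrices $\tilde{s}_x^{\otimes 4}$ with $\tilde{s}_x = \bigl(\begin{smallmatrix} 1 & x \\ 0 & 1 \end{smallmatrix}\bigr)$, combined with the chosen standard form which forces \emph{both} $G_1$ and $H_1$ to be diagonal.

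First I would record a few elementary computations. If $H_1 = \mathrm{diag}(h_1^{00}, h_1^{11})$ is diagonal, then $(\tilde{s}_x^\dagger H_1 \tilde{s}_x)_{00} = h_1^{00}$ and $(\tilde{s}_x^\dagger H_1 \tilde{s}_x)_{10} = x^* h_1^{00}$, while for a general positive definite $H_i$ one has $(\tilde{s}_x^\dagger H_i \tilde{s}_x)_{00} = h_i^{00}$ and $(\tilde{s}_x^\dagger H_i \tilde{s}_x)_{01} = x\, h_i^{00} + h_i^{01}$. Now I would evaluate the matrix element $\langle 1000 | \cdot | 0000\rangle$ of Eq.\ (\ref{EqSep}). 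Because $G_1$ is diagonal the right-hand side vanishes, while the left-hand side equals $h_1^{00} h_2^{00} h_3^{00} h_4^{00} \sum_x p_x x^*$. Since every $H_i$ is the Gram matrix of an invertible $h_i$, it is strictly positive definite, so all diagonal entries $h_i^{00}$ are strictly positive. This yields $\sum_x p_x x^* = 0$, and by complex conjugation $\sum_x p_x x = 0$.

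Next I would look at the matrix element $\langle 1000 | \cdot | 0100 \rangle$, whose right-hand side again vanishes by the diagonality of $G_1$. Its left-hand side reduces, using the computations above, to $h_1^{00} h_3^{00} h_4^{00} \sum_x p_x x^*(x h_2^{00} + h_2^{01}) = 0$. The term proportional to $h_2^{01}$ drops out by the previous identity, leaving $h_2^{00} \sum_x p_x |x|^2 = 0$ and hence $\sum_x p_x |x|^2 = 0$. Since the $p_x$ are non-negative probabilities, this forces $p_x = 0$ for every $x \neq 0$. Therefore Eq.\ (\ref{EqSep}) collapses to $H = r G$; taking traces gives $r = 1$, whence $H_i = G_i$ for all $i$, meaning $\ket{\Psi_1}$ and $\ket{\Psi_2}$ are LU--equivalent. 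Consequently, no state in this class can be the target of a non-trivial SEP transformation, and a fortiori none is reachable by LOCC. I do not foresee a real obstacle: the argument is essentially forced by the extremely rigid shape of the upper-triangular symmetry combined with the positivity of the Gram matrices $H_i$; the only mild point of care is to verify that the standard form indeed places both $G_1$ and $H_1$ in diagonal form, so that the vanishing of the two selected matrix elements of the right-hand side is guaranteed.
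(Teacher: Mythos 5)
Your proposal is correct and follows essentially the same route as the paper's proof: the paper likewise uses the diagonality of the first party's operator in the standard form together with the matrix elements $\ket{0000}\bra{1000}$ and $\ket{0100}\bra{1000}$ (the conjugates of yours) of Eq.\ (\ref{EqSep}) to derive $\sum_x p_x x=0$ and $\sum_x p_x |x|^2=0$, forcing $p_x=0$ for $x\neq 0$ and hence $H=G$. The only cosmetic difference is your explicit emphasis on positivity of the $H_i$ to cancel the prefactors, which the paper uses implicitly.
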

\begin{proof}
Inserting the symmetries $S_x$ into Eq. (\ref{EqSep}) one obtains
\begin{eqnarray}\label{sepab3}&\sum_x p_{x} \otimes_i \left(
    \begin{array}{cccc}
       & h_i^1&x h_i^1+ h_i^2  \\
     &x^* h_i^1+ h_i^{2 *} &y\\
     \end{array}
  \right)=
r \otimes_iG_i,\end{eqnarray}
where $y=h_i^1|x|^2+(x^*h_i^2+x h_i^{2 *})+ h_i^3$. Note that $h_1^2=g_1^2=0$. Thus by considering the matrix element $\ket{0000}\bra{1000}$ and $\ket{0100}\bra{1000}$ of this equation one obtains $\sum_x p_{x} x=0$ and $\sum_x p_{x} |x|^2=0$. This can only be fulfilled if $p_x=0$ for $x\neq 0$.
Inserting this result in Eq.(\ref{sepab3}) and using the normalization condition $\tr(G_i)=\tr(H_i)=1$ (which leads to $r=1$) one obtains that $\otimes_iG_i=\otimes_iH_i$.
Therefore, no states in this SLOCC class are reachable via SEP.
\end{proof}
As we have shown all states in these SLOCC classes are in $MES_4$. Trivially it follows that there are also no convertible states in these SLOCC classes.
\subsection{The SLOCC classes $L_{ba_3}$ with $a= 0$}
Let us discuss now the classes $L_{ba_3}$ with $a= 0$ and $b$ arbitrary. The corresponding $Z_\Psi$ and $\tilde{Z}_\Psi$ matrices coincide and are in symmetric Jordan form with a Jordan block of dimension 2 (with eigenvalue 0) and two Jordan blocks of dimension 1 (with eigenvalues 0 and $b^2$), i.e.
\begin{eqnarray}
&Z_\Psi=\tilde{Z}_\Psi= \left(
    \begin{array}{cccc}
     0 & 0 & 0&0 \\
     0 & b^2  &  0&0\\
 0 &0 & \frac{i}{2}&\frac{1}{2}\\
0 & 0  & \frac{1}{2}&-\frac{i}{2}\\
    \end{array}
  \right).
\end{eqnarray}
One could choose as a seed states the states \bea\label{b02}\ket\Psi=-b\ket{\Phi^-}\ket{\Phi^-}-e^{i\pi/4}(\ket{10}\ket{\Phi^+}-\ket{\Phi^+}\ket{10}).\eea
The symmetries are different depending on whether $b$ vanishes or not. In particular, for the SLOCC classes for $b\neq 0$ no state is reachable, whereas for the case $b=0$ transformations are possible.
\subsubsection{\label{secLba3a0bneq0}The case $a= 0$ and $b\neq 0$}
The corresponding seed states could be chosen as in Eq. (\ref{b02}).
These states have symmetries
\begin{eqnarray}
S_{x}&=(\sigma_x\otimes\one\otimes\sigma_x\otimes\one) \tilde{s}_x^{\otimes 4}(\sigma_x\otimes\one\otimes\sigma_x\otimes\one) ,
\end{eqnarray}
where $\tilde{s}_x$ is defined as in Eq. (\ref{eqsx}). It follows that the states \bea\ket{\Psi_{0b0_2}}=(\sigma_x\otimes\one\otimes\sigma_x\otimes\one)\ket\Psi\eea have symmetries \bea S_x= \tilde{s}_x^{\otimes 4}.\eea Note that these symmetries correspond to the ones for the classes $L_{ba_3}$ with $a= b$ and $a\neq 0$ (see subsection \ref{secaa3}). Therefore, the results for these classes apply. In particular, the standard forms for these classes coincide and all states in the SLOCC classes $L_{ba_3}$ with $a= 0$ and $b\neq 0$ are in $MES_4$ and no state is convertible (see proof of Lemma \ref{lem07}).\\
\subsubsection{\label{secW}The case $a=b= 0$}
The SLOCC class $L_{ab_3}$ with $a=b= 0$ contains the $4$-particle W state. One could choose as seed state $\ket\Psi=e^{i\pi/4}(\ket{10}\ket{\Phi^+}-\ket{\Phi^+}\ket{10})$, which is LU--equivalent (with the LU transformation $\sigma_y\otimes\sigma_z\otimes\sigma_x\otimes\one$) to the $W$--state, $\ket{W}=\ket{0001}+\ket{0010}+\ket{0100}+\ket{1000}$. Hence, we will consider the state \bea \ket{\Psi_{000_2}}=\ket{W}\eea as seed state. The symmetry is given by
\begin{eqnarray}
&S_{w,x,y,z}= \frac{1}{z^2}\left(
    \begin{array}{cc}
      z & w \\
      0 & 1/z \\
    \end{array}
  \right)\otimes\left(
    \begin{array}{cc}
      z & x \\
      0 & 1/z \\
    \end{array}
  \right)\otimes\left(
    \begin{array}{cc}
      z & y\\
      0 & 1/z \\
    \end{array}
  \right)
\otimes\left(
    \begin{array}{cc}
      z & -y-w-x \\
      0 & 1/z \\
    \end{array}\right).
\end{eqnarray}\\
This symmetry allows to choose the standard form $d_1\otimes d_2\otimes g_3\otimes \one\ket{W}$, where $d_1=\mbox{diag}(1,x_1/x_4)$, $d_2= \mbox{diag}(1,x_2/x_4)$ and $g_3=[x_4,x_0;0,x_3]$, with $x_i\geq 0$\footnote{To ease the notation we consider here unnormalized local operators.}. That is, any state in this class is of the form
$x_0\ket{0000}+x_1\ket{1000}+x_2\ket{0100}+x_3\ket{0010}+x_4\ket{0001}$. As we will show now, the reachable states in the 3-qubit W-class (see e.g. \cite{MESus}) can be generalized to four qubits, as stated in the following lemma. Note that in contrast to \cite{MESus} we provide here a rigorous proof of the fact that only unitary symmetries can be used for transformations.
\begin{lemma}  The only states in the SLOCC classes  $L_{ab_3}$ with $a=b= 0$ that are reachable via LOCC are given by $d_1\otimes d_2\otimes g_3\otimes \one\ket{W}$ where $g_3\neq d_3$.
\end{lemma}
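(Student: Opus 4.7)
The plan is to apply Eq.~(\ref{EqSep}) with the full symmetry group $\{S_{w,x,y,z}\}$ and first show that only the unitary phase--gate symmetries can contribute to any SEP (and hence LOCC) transformation. Since the standard form established immediately above already puts every state in the class into the form $d_1\otimes d_2\otimes g_3\otimes \one|W\rangle$, the lemma then amounts to showing that such a state is non--trivially reachable if and only if $g_3$ is non--diagonal.

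The main technical obstacle will be the reduction to unitary symmetries, because $S_{w,x,y,z}$ depends on four complex parameters. I would expand $S_{w,x,y,z}^\dagger H S_{w,x,y,z}$ in the computational basis and inspect selected matrix elements of Eq.~(\ref{EqSep}). Diagonal matrix elements $|k_1k_2k_3k_4\rangle\langle k_1k_2k_3k_4|$ at bit--strings of different Hamming weight produce factors $|z|^{4(2-\sum_i k_i)}$; combining the resulting equations $\sum p_{w,x,y,z}|z|^{4m}=1$ for several integers $m$ with $\sum p_{w,x,y,z}=1$ forces $|z|=1$ throughout the support. Off--diagonal elements of the form $\langle 1000|\cdot|0000\rangle$ and its permutations across the four sites yield $\sum p_{w,x,y,z}\,w=0$ (and analogously for $x,y$), while the corresponding shifted diagonal elements give quadratic conditions $\sum p_{w,x,y,z}|w|^2=0$ etc., forcing $w=x=y=0$ on the support. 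This step is in the spirit of the arguments already used in Lemma~\ref{lem5} and Lemma~\ref{lem07}, but more involved due to the additional off--diagonal parameters.

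Once only unitary symmetries $\mathrm{diag}(e^{i\theta},e^{-i\theta})^{\otimes 4}$ remain, Observation~\ref{obsmon} gives $r=1$, and tracing Eq.~(\ref{EqSep}) over all parties but one yields ${\cal E}_i(H_i)=G_i$, where each Kraus operator of ${\cal E}_i$ is a diagonal phase gate. Such a map preserves the $\one$ and $\sigma_z$ Bloch components of $H_i$ while only dephasing the $\sigma_x,\sigma_y$ components. For $i\in\{1,2\}$ and $i=4$, where the standard form demands $H_i$ diagonal or $H_4=\one/2$, this immediately gives $G_i=H_i$; for $i=3$ it forces $G_3$ and $H_3=g_3^\dagger g_3$ to share the same diagonal. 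Hence, if $g_3$ is diagonal the equation compels $G_3=H_3$ and the source is LU--equivalent to the target, proving that such states are not reachable. Conversely, when $g_3$ is non--diagonal I would exhibit the explicit protocol: let $\tilde d_3$ be the positive diagonal matrix with $\tilde d_3^\dagger\tilde d_3=\mathrm{diag}(g_3^\dagger g_3)$, take the source $d_1\otimes d_2\otimes\tilde d_3\otimes\one|W\rangle$, have party~$3$ apply the two--outcome POVM $\{\tfrac{1}{\sqrt2}g_3\tilde d_3^{-1},\tfrac{1}{\sqrt2}g_3\sigma_z\tilde d_3^{-1}\}$, and have parties~$1,2,4$ each apply $\sigma_z$ upon the second outcome. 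Since $-\sigma_z^{\otimes 4}=S_{0,0,0,i}\in S(|W\rangle)$ and $\sigma_z$ commutes with every diagonal operator, both outcomes yield the target $d_1\otimes d_2\otimes g_3\otimes\one|W\rangle$ up to a global sign, completing the proof.
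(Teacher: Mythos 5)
Your proposal is correct and follows essentially the same route as the paper: the two-outcome POVM $\{\tfrac{1}{\sqrt{2}}\,g_3\tilde d_3^{-1},\tfrac{1}{\sqrt{2}}\,g_3\sigma_z\tilde d_3^{-1}\}$ on party~3 with $\sigma_z$ corrections on the other parties is exactly the protocol used there, and the non-reachability of $d_1\otimes d_2\otimes d_3\otimes \one\ket{W}$ is likewise obtained by inspecting matrix elements of Eq.~(\ref{EqSep}) to force $w=x=y=0$ and $|z|=1$ on the support (the paper gets the quadratic conditions from the two-site elements $\sigma_x^j\proj{0000}\sigma_x^i$ combined via $w^{(4)}=-w-x-y$, rather than from diagonal elements, but this is cosmetic). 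One caution: your blanket claim that only unitary phase gates can contribute to \emph{any} SEP transformation in this class is more than the sketched matrix elements deliver when $H_3$ is non-diagonal (extra $h_3^2$ cross terms then enter the two-site conditions); since you only invoke the reduction for diagonal targets, where all $G_i$ with $i\neq 3$ are diagonal by the standard form and the argument does close, the proof of the lemma is unaffected, but the reduction should be stated for that case only.
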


Note that this means that the only states which are reachable are of the form $x_0\ket{0000}+x_1\ket{1000}+x_2\ket{0100}+x_3\ket{0010}+x_4\ket{0001}$ with $x_0\neq 0$.
\begin{proof}
That the states presented in the lemma can be reached via LOCC can be easily seen as follows. Applying the POVM $\{1/\sqrt{2}\one^{\otimes 2} \otimes h_3 g_3^{-1}\otimes \one,1/\sqrt{2}\sigma_z^{\otimes 2}\otimes h_3 \sigma_zg_3^{-1}\otimes \sigma_z\}$, with $G_3 = diag[(x_4)^2,(x_0)^2 + (x_3)^2]$ to the state $d_1\otimes d_2\otimes g_3\otimes \one\ket{W}$ leads deterministically to the state $d_1\otimes d_2\otimes h_3\otimes \one\ket{W}=x_0\ket{0000}+x_1\ket{1000}+x_2\ket{0100}+x_3\ket{0010}+x_4\ket{0001}$ where $h_3=[x_4,x_0;0,x_3]$ with $x_0\neq 0$.

To see that non of the remaining states, i.e. states of the form $d_1\otimes d_2\otimes d_3\otimes \one\ket{W}$ are reachable, we consider Eq. (\ref{EqSep}) for $H_1, H_2, H_3,G_1,G_2$ diagonal, $H_4=G_4\propto \one$ and $G_3=g_3^\dagger g_3$ with $g_3=[x_4,x_0;0,x_3]$, i.e.

\bea \label{Eq_WSep} \sum p_{w,x,y,z} S_{w,x,y,z}^\dagger (H_1\otimes H_2\otimes H_3\otimes \one) S_{w,x,y,z}=r G_1\otimes G_2\otimes G_3 \otimes \one.\eea

Considering now the matrix elements $\ket{0000}\bra{0000}\sigma_x^{i}$, where $\sigma_x^{i}$ is acting on system $i$ for $i\in \{1,2,3,4\}$ of Eq. (\ref{Eq_WSep}), one easily obtains that $G_3$ has to be diagonal for the equation to be satisfied. Next, we consider the matrix elements $\sigma_x^j \ket{0000}\bra{0000}\sigma_x^{i}$ for $j,i\in \{1,2,3,4\}$, $i\neq j$ to obtain that the symmetries used in this equations, must all be diagonal, i.e. $p_{w,x,y,z}=0$ unless $w=x=y=0$. Combining now the diagonal matrix element $\proj{ijkl}$ with those of $\proj{ijkl^\prime}$ leads to $p_{w,x,y,z}=0$ for $|z|\neq 1$, which implies that there exist no non--trivial transformation and hence, the states are not reachable.

\end{proof}

Thus, one obtains that the states in this SLOCC class which are contained in $MES_4$ are given by $d_1\otimes d_2\otimes d_3\otimes \one\ket{W}=
x_1\ket{1000}+x_2\ket{0100}+x_3\ket{0010}+x_4\ket{0001}$.\\
As the symmetry contains $\sigma_z^{\otimes 4}$ any state ($d_1\otimes d_2\otimes g_3\otimes \one\ket{W}$) in this SLOCC class is convertible via the LOCC protocol $\{\sqrt{p}\one^{\otimes 2} \otimes h_3g_3^{-1}\otimes \one,\sqrt{1-p}\sigma_z^{\otimes 2}\otimes h_3\sigma_zg_3^{-1}\otimes \sigma_z\}$ for a appropriately chosen $h_3$ and $p$. Note that all these results can be straightforward generalized to an arbitrary number of qubits. Note further that the W--class and the GHZ--class are the only classes where no state is isolated.
\section{The SLOCC classes $L_{a_4}$}
\subsection{\label{secLa4aneq0}The SLOCC classes $L_{a_4}$ for $a\neq 0$}
Let us consider next the classes $L_{a_4}$ (see \cite{slocc4}) with $a\neq 0$. The  $Z$ and $\tilde{Z}$ matrices for states in this class can be brought to Jordan form with a single Jordan block of dimension 4 (with eigenvalues $a^2$). Since the states in these SLOCC classes have no non-trivial symmetry, all states are in $MES_4$ and no state is convertible.\\
\subsection{The SLOCC class $L_{a_4}$ for $a= 0$}
The seed state for the class $L_{a_4}$ with $a= 0$ could be chosen as \begin{eqnarray}
&\ket{\Psi} =i \ket{00}(\ket{10}+i\ket{11})
+i\ket{01}(-\ket{11}+i\ket{10})
+i\ket{10}(-i\ket{11}+\ket{10}+i\ket{01}+\ket{00})\\ \nonumber
&+i\ket{11}(-\ket{11}-i\ket{10}-\ket{01}+i\ket{00}).
\end{eqnarray}
Note that the corresponding $Z_\Psi$ and $\tilde{Z}_\Psi$ matrices have both two Jordan blocks of dimension 2 (with eigenvalues 0), i.e.
\begin{eqnarray}
&Z_\Psi=\tilde{Z}_\Psi= \left(
    \begin{array}{cccc}
     \frac{i}{2} & \frac{1}{2} & 0&0 \\
    \frac{1}{2}&-\frac{i}{2}  &  0&0\\
 0 &0 & \frac{i}{2}&\frac{1}{2}\\
0 & 0  & \frac{1}{2}&-\frac{i}{2}\\
    \end{array}
  \right).
\end{eqnarray}
In order to simplify the corresponding symmetries we will use \bea\ket{\Psi_{0_20_2}}=\one\otimes X(-\pi/4)\otimes \one \otimes X(-\pi/4)\ket\Psi\eea as seed state. Then the symmetries are given by \begin{eqnarray}
&S_{x,y,z}=\otimes_i s_i =\frac{1}{y^3 z^3}\left(
    \begin{array}{cc}
      1 & 0 \\
      x & y^2z^2 \\
    \end{array}
  \right)\otimes\left(
    \begin{array}{cc}
      z &0\\
      0 & 1/z \\
    \end{array}
  \right)\otimes\left(
    \begin{array}{cc}
      1 & 0\\
      -x & y^2z^2 \\
    \end{array}
  \right)
\otimes\left(
    \begin{array}{cc}
      y & 0 \\
      0 & 1/y \\
    \end{array}\right),
\end{eqnarray}
where $x,y,z\in \C$ and $y,z\neq 0$. This symmetry allows to choose the standard form as $\one\otimes g_2^x\otimes g_3\otimes g_4 \ket{\Psi_{0_20_2}}$ where $g_2^2\geq 0$ and $g_4^2\geq 0$ (for $g_4^2= 0$ and/or $g_2^2= 0$ choose $g_3^2\geq 0$). Here we used that $s_2^{\dagger} g_2^{\dagger}g_2s_2\propto G_2^x$ for a properly chosen $z$ and $s_1^{\dagger} g_1^{\dagger}g_1s_1\propto \one$ for a properly chosen $x$ and $|y|$. Furthermore it is easy to see that the phase of $y$ can be chosen such that the symmetry $S_{0,y,1}$ allows to choose $g_4^2\geq 0$ (or $g_3^2\geq 0$). In the following lemma we characterize the states which are in $MES_4$ for this SLOCC class.
\begin{lemma}  The only states in the SLOCC class  $L_{a_4}$ with $a=0$ that are reachable via LOCC are given by $\one\otimes h_2^x\otimes h_3\otimes h_4 \ket{\Psi_{0_20_2}}$ with $h_i^2=0$ for at least one $i\in\{2,3,4\}$ and $h_j^2\neq 0$ for at least one $j\in\{2,3,4\}$.
\end{lemma}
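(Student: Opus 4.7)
The plan is to follow the schema established by Lemmas~\ref{lem3}, \ref{lem5}, and \ref{lem77}: first reduce the SEP equation~(\ref{EqSep}) to its unitary part, then derive necessary conditions on $H$, then construct explicit LOCC protocols for the claimed reachable set. I would insert the parametric symmetry $S_{x,y,z}$ into Eq.~(\ref{EqSep}) for an arbitrary transformation $g\ket{\Psi_{0_20_2}}\to h\ket{\Psi_{0_20_2}}$ with both states in standard form, so that $H_1=G_1=\one/2$, and read off specific matrix elements that isolate the non-unitary parameters. The off-diagonal entry $x$ in $s_1,s_3$ is pinned down by entries of the type $\ket{0{\ast}{\ast}{\ast}}\bra{1{\ast}{\ast}{\ast}}$, giving $\sum p_{x,y,z}\,x=\sum p_{x,y,z}|x|^2=0$ and hence $p_{x,y,z}=0$ for $x\neq 0$; the dilation factors $|y|^{\pm 4},|z|^{\pm 4}$ are extracted from the diagonal entries $\ket{ijkl}\bra{ijkl}$ and pinned to $|y|=|z|=1$ by the same Cauchy--Schwarz identity already used in Lemma~\ref{lem5}. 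Only unitary symmetries then contribute; Observation~\ref{obsmon} gives $r=1$, and Eq.~(\ref{eq_symLUsingle}) reduces the problem to $\mathcal{E}_i(H_i)=G_i$ on each site.

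With $y=e^{i\beta},z=e^{i\gamma}$, each $s_i$ is a $\sigma_z$-rotation by $\theta_1=\theta_3=\beta+\gamma,\ \theta_2=\gamma,\ \theta_4=\beta$, so $\mathcal{E}_i$ preserves $\bar h_i^3$ and multiplies the complex amplitude $c_i=\bar h_i^1+i\bar h_i^2$ by $\mu_i=\sum_j p_j e^{i 2\theta_i^j}$. If $c_2=c_3=c_4=0$ (all three diagonal) the single-party equations are automatic with $G=H$, so the target is unreachable. The non-trivial direction is to show that the other extreme, $c_2,c_3,c_4$ all non-zero, is likewise unreachable. For this I would use the full joint form of Eq.~(\ref{EqSep}) on the triple $(2,3,4)$: the required tensor factorisation, read across the three bipartitions, forces the measure $p_j\mapsto(\beta_j,\gamma_j)$ to be a product distribution against each of its three affinely-dependent marginals $\gamma,\beta,\beta+\gamma$, which is possible only when $p$ collapses to a single delta mass. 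Then $G=S^\dagger H S$ for a fixed unitary symmetry and the two states are LU--equivalent, so no new state is reached. Together, the two extremes yield the necessary conditions of the lemma: among $\{H_2,H_3,H_4\}$ at least one must be diagonal and at least one non-diagonal.

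For sufficiency, fix any target $\one\otimes h_2^x\otimes h_3\otimes h_4\ket{\Psi_{0_20_2}}$ with $H_\ell$ diagonal and $H_k$ non-diagonal for some $\ell,k\in\{2,3,4\}$. The unitary symmetry group contains the three discrete Pauli elements $\one\otimes\sigma_z\otimes\one\otimes\sigma_z$, $\sigma_z\otimes\one\otimes\sigma_z\otimes\sigma_z$, $\sigma_z\otimes\sigma_z\otimes\sigma_z\otimes\one$, realised at $(\beta,\gamma)=(\tfrac{\pi}{2},\tfrac{\pi}{2}),(\tfrac{\pi}{2},0),(0,\tfrac{\pi}{2})$; a short case analysis on which subset of $\{H_2,H_3,H_4\}$ is non-diagonal shows that at least one such symmetry $S=\bigotimes s_i$ has $s_k=\sigma_z$ while $s_i=\one$ on every non-diagonal $H_i$ with $i\neq k$. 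The initial state with $G_i=H_i$ for $i\neq k$ and $G_k=\tfrac{1}{2}(H_k+\sigma_z H_k\sigma_z)$ then lies in the correct standard form, is not LU--equivalent to the target, and the two-outcome POVM $\{\tfrac{1}{\sqrt{2}}hg^{-1},\tfrac{1}{\sqrt{2}}hSg^{-1}\}$ realises the transition by LOCC exactly as in Lemma~\ref{lem3}. I expect the main obstacle to be the rigidity step in the middle paragraph: unlike the multiplicative couplings of Lemma~\ref{lem5}, the additive relation $\theta_3=\theta_2+\theta_4$ couples all three parties at once, and forcing the product-measure collapse in the all-non-diagonal case is likely to require a careful subcase analysis depending on the relative phases of $c_2,c_3,c_4$.
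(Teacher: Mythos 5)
Your proposal is correct and its skeleton --- pin $x=0$ from off-diagonal matrix elements, reduce to unitary symmetries so that $r=1$ via Observation \ref{obsmon}, split into the all-diagonal, all-non-diagonal and mixed cases, and reach every mixed-case state by a two-outcome POVM built from one of the Pauli symmetries $\one\otimes\sigma_z\otimes\one\otimes\sigma_z$, $\sigma_z\otimes\one\otimes\sigma_z\otimes\sigma_z$, $\sigma_z\otimes\sigma_z\otimes\sigma_z\otimes\one$ --- is exactly the paper's. Where you genuinely depart is the all-non-diagonal case. The paper chases explicit matrix elements of the reduced three-party equation (\ref{EqnewBK}): from $\ket{111}\bra{000}$, $\ket{110}\bra{000}$ and $\ket{000}\bra{001}$ it obtains $h_2^2h_3^2h_4^2=g_2^2g_3^2g_4^2$, $h_2^2h_3^2\sum p\, e^{i2\tilde\phi}=g_2^2g_3^2$ and $h_4^2\sum p\, e^{i2\tilde\phi}=g_4^2$, and then has to invoke the standard-form conventions $h_4^2=|h_4^2|$, $g_4^2=|g_4^2|$ to resolve a residual sign before concluding $H_i=G_i$. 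Your route --- the single-site relations $g_i^2=\mu_i h_i^2$ combined with the three-site relation give $\mu_2\mu_3\mu_4=1$, and $|\mu_i|\le 1$ then forces $|\mu_i|=1$ for each $i$, so the symmetry measure is a point mass and the transformation is a single LU --- reaches the same conclusion more cleanly and without leaning on the phase conventions. Two caveats. First, the operative mechanism is not that the measure is "a product distribution against its affinely-dependent marginals" (pairwise first-moment factorisation is satisfied, e.g., by a uniform phase on one party), but the single identity $\sum_j p_j e^{2i(\theta_2^j+\theta_3^j+\theta_4^j)}=1=\mu_2\mu_3\mu_4$ that follows because the three phases sum to zero identically; stated this way, the "careful subcase analysis on the relative phases of $c_2,c_3,c_4$" you anticipate as the main obstacle simply evaporates. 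Second, the unitarity reduction is not the direct diagonal read-off of Lemma \ref{lem5}: since $H_3,H_4$ have unconstrained diagonals and the symmetry carries the prefactor $1/(y^3z^3)$, the paper first projects Eq. (\ref{EqSep}) onto $\ket{\Psi^-}_{13}$ to get $h_4^1=rg_4^1$ and to match the diagonals of $H_3$ and $G_3$, deduces $|y|=1/|z|$ from that, and only then extracts $\sum p|z|^4=\sum p|z|^{-4}=1$ from Eq. (\ref{Eqdxyz}); your sketch should make these intermediate steps explicit.
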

\begin{proof}
We split all states into the following three cases
\begin{itemize}
\item[(i)] $h_2^2,h_3^2,h_4^2\neq 0$
\item[(ii)] $h_2^2=h_3^2=h_4^2=0$ and
\item[(iii)]  $h_j^2=0$, $h_i^2\neq 0$ and $h_k^2$ arbitrary for $\{i,j,k\}=\{2,3,4\}$.
\end{itemize}

We will show now that states in case (i) and (ii) are not reachable, whereas we will present a LOCC protocol reaching states in class (iii).
Let us start with case (i). Considering the $\ket{1111}\bra{0010}$ matrix element of Eq. (\ref{EqSep}) leads in this case to $\sum p_{x,y,z} x=0$.
The matrix element $\ket{1000}\bra{0010}$ implies then that $p_{x,y,z}=0$ whenever $x\neq 0$. Hence, $s_1 \propto s_3$ for any symmetry which can be used for the transformation. Projecting then Eq. (\ref{EqSep}) onto $\ket{\Psi^-}_{13}$ leads to a simple equation for party $2$ and $4$. Considering in this resulting equation the matrix element $\ket{00}\bra{00}$ leads to $h_4^1=r g_4^1$. Using this equation together with the $\ket{1000}\bra{1000}$ matrix element of Eq. (\ref{EqSep}) one obtains that the diagonal elements of $H_3$ coincide with the diagonal elements of $G_3$. It is then easy to show that $|y|=1/|z|$ has to hold for any symmetry occurring in Eq. (\ref{EqSep}) by considering further diagonal elements of that equation. Hence $s_1$ is unitary, which implies (using that in standard form $H_1=\one$) that we only have to consider the equation for parties $2$ to $4$. More precisely we have,
\begin{eqnarray} \label{EqnewBK}
&\sum_{|z|,\phi,\tilde{\phi}} p_{|z|,\phi,\tilde{\phi}} \left(
    \begin{array}{cc}
   |z|^2/2 &   h_2^2e^{-i2\phi} \\
    h_2^{2}e^{i2\phi} & 1/(2|z|^2) \\
    \end{array}
  \right)\otimes\left(
    \begin{array}{cc}
      h_3^1&   h_3^2e^{i2(\tilde{\phi}+\phi)} \\
    h_3^{2   *}e^{-i2(\tilde{\phi}+\phi)} & h_3^3 \\
    \end{array}
  \right)\otimes\left(
    \begin{array}{cc}
      h_4^1/|z|^2 &   h_4^2e^{-i2\tilde{\phi}} \\
    h_4^{2}e^{i2\tilde{\phi}} & h_4^3|z|^2\\
    \end{array}\right)\\ \nonumber
&=r  G_2^x\otimes G_3\otimes G_4,
\end{eqnarray}
where we adapted the notation for the probabilities and where $z=|z|e^{i\phi}$ and $y=e^{i\tilde{\phi}}/|z|$.
Considering the matrix elements $\ket{ijk}\bra{ijk}$ of this equation for $i,j,k\in\{0,1\}$ one obtains that
\bea \label{Eqdxyz}h_3^{1+2j} h_4^{1+2k}  \sum_{|z|,\phi,\tilde{\phi}} p_{|z|,\phi,\tilde{\phi}} |z|^{4[k-i]} =r  g_3^{1+2j} g_4^{1+2k}.\eea Recall that the diagonal elements of $G_3$ coincide with those of $H_3$. From this equation for $i=j=k=0$ and for $i=1$ and $j=k=0$, as well as for  $i=k=1$ and $j=0$ and for $k=1$ and $i=j=0$ it follows that $\sum_{|z|,\phi,\tilde{\phi}} p_{|z|,\phi,\tilde{\phi}}  |z|^4=\sum_{|z|,\phi,\tilde{\phi}} p_{|z|,\phi,\tilde{\phi}}  1/|z|^4=1$ which can only be fulfilled if $p_{|z|,\phi,\tilde{\phi}}=0$ for $|z|\neq 1$. Thus, only the symmetries where $|z|= 1$ and therefore $|y|= 1$, i.e. unitary symmetries can be used for transformations. Thus, we can use Observation \ref{obsmon} and therefore $r=1$. Moreover, using Eq. (\ref{Eqdxyz}) and the normalization condition $\tr(G_i)=\tr(H_i)=1\,\,\,\forall i$ it is easy to show that $ h_4^1= g_4^1$ and  $ h_4^3= g_4^3$. Thus, the diagonal elements of both $G_3$ and $G_4$ can not be changed.
From the matrix element $\ket{111}\bra{000}$,$\ket{110}\bra{000}$ of Eq. (\ref{Eqdxyz}) and $\ket{000}\bra{001}$ one obtains
\begin{eqnarray}\nonumber
&h_2^2h_3^2h_4^2=g_2^2g_3^2g_4^2\\\nonumber&
 h_2^2h_3^2\sum_{|z|,\phi,\tilde{\phi}} p_{|z|,\phi,\tilde{\phi}} e^{i2\tilde{\phi}}=g_2^2g_3^2\end{eqnarray}
and
\bea \nonumber h_4^2\sum_{|z|,\phi,\tilde{\phi}} p_{|z|,\phi,\tilde{\phi}} e^{i2\tilde{\phi}}=g_4^2.\eea
Due to the fact that the sum in the last equation cannot be zero, as $g_4^2\neq 0$ (the left--hand side of the first equation cannot vanish), the last two equations lead to $ g_4^2/(g_2^2g_3^2)=h_4^2/(h_2^2h_3^2)$ and then it is easy to see that $h_4^2=g_4^2$ and $ g_2^2g_3^2=h_2^2h_3^2$. Note that one uses here that in standard form $h_4^2=|h_4^2|$ and $g_4^2=|g_4^2|$. Via a similar argument one obtains $h_2^2=g_2^2$ and so $h_3^2=g_3^2$. Hence, states with $h_2^2,h_3^2,h_4^2\neq 0$ can only be reached from states with $h_2^2=g_2^2$, $h_3^2=g_3^2$ and $h_4^2=g_4^2$ and are therefore in $MES_4$.

Let us next consider the case (ii). The fact that $x=0$ has to hold for any symmetry occurring in Eq. (\ref{EqSep}) follows from considering the matrix element $\ket{1000}\bra{0010}$ of that equation. That $|y|=1/|z|=1$ has to hold follows via the same argument as above. From Eq. (\ref{Eqdxyz}) it follows then immediately that the diagonal elements of $H_i$ coincide with those of $G_i$ for $i=3,4$. Considering the matrix element $\ket{100}_{i,j,k}\bra{000}$ where $i,j,k\in\{2,3,4\}$ of Eq. (\ref{EqnewBK}) leads to $g_i^2=0$. Thus, this class of states is also in $MES_4$. \\

We still need to show that for case (iii) the corresponding states are reachable. In order to see this note that $\one\otimes\sigma_z\otimes\one\otimes\sigma_z$, $\sigma_z\otimes\sigma_z\otimes\sigma_z\otimes\one$ and $\sigma_z\otimes\one\otimes\sigma_z\otimes\sigma_z$ are symmetries of the seed state. Thus, for any $\{i,j,k\}=\{2,3,4\}$ there exists a symmetry that is acting on parties $i$, $j$ and $k$ via $Z_i\otimes Z_j\otimes \one_k$. In the following we will denote this symmetry by $s^{ij}$. Note that the operator acting on party $1$ is determined by $s_1\otimes s^{ij}$ being a symmetry. In order to reach a state with $h_j^2=0$, $h_i^2\neq 0$ and $h_k^2$ arbitrary one can use  $\{1/\sqrt{2}h_i\one^{\otimes 4} d_i^{-1},1/\sqrt{2}h_i s^{ij} d_i^{-1}\}$, where $d_i$ is chosen such that it corresponds to a valid POVM. Note that since $h_j^2=0$ one can easily apply $\sigma_z$ on party $1$ and/or $j$ in case of the second outcome.
\end{proof}
Note that again any LOCC protocol that we constructed exploits only the symmetries that are an element of the Pauli group. The states in this SLOCC class that are in $MES_4$ are given by $\one\otimes \one\otimes d_3\otimes d_4 \ket{\Psi_{0_20_2}}$ and $\one\otimes h_2^x\otimes h_3\otimes h_4 \ket{\Psi_{0_20_2}}$ where $h_2^x,h_3$ and $h_4$ are not diagonal.
In the following lemma we show which states in this SLOCC class are convertible.
\begin{lemma}  The only states in the SLOCC class  $L_{a_4}$ with $a=0$ that are convertible are given by $\one\otimes g_2^x\otimes g_3\otimes g_4 \ket{\Psi_{0_20_2}}$ where at least one $g_j^2=0$ for $j\in\{2,3,4\}$.
\end{lemma}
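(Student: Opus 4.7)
The plan is to establish the two directions separately and lean heavily on the previous reachability lemma.

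For sufficiency, I would construct an explicit LOCC protocol in the spirit of case (iii) of the preceding lemma. Given a state $\one\otimes g_2^x\otimes g_3\otimes g_4 \ket{\Psi_{0_20_2}}$ with $g_j^2=0$ for some $j\in\{2,3,4\}$, let $\{i,j,k\}=\{2,3,4\}$ and use the symmetry $s^{ij}\in S(\Psi_{0_20_2})$ which acts as $\sigma_z$ on parties $i$ and $j$, trivially on party $k$, and with the fixed Pauli on party $1$ dictated by membership in the symmetry group. Party $i$ implements the two-outcome POVM $\{\sqrt{p}\,h_i g_i^{-1},\ \sqrt{1-p}\,h_i\sigma_z g_i^{-1}\}$, with $h_i$ chosen so that $\bar h_i^1=\bar g_i^1$, $\bar h_i^3=\bar g_i^3$ and $(2p-1)\bar h_i^2=\bar g_i^2$. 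Because $g_j^2=0$ we have $[\sigma_z,G_j]=0$, so on party $j$ the second branch is a plain unitary, while parties $k$ and $1$ apply the local unitary from $s^{ij}$ in the second branch. For any $G_i$ one can select $p\in(0,1)$ and a non-LU-equivalent $H_i>0$ obeying the above relations, so a genuine (non-trivial) conversion to a state of the form characterized by the previous lemma exists.

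For necessity, assume $g_i^2\neq 0$ for all $i\in\{2,3,4\}$ and argue by contradiction. By the preceding lemma, any non-LU-equivalent reachable target must satisfy $h_l^2=0$ for some $l$ and $h_m^2\neq 0$ for some $m$ in $\{2,3,4\}$. The plan is to adapt the matrix-element analysis of case (i) of the preceding proof, but keyed to the non-vanishing of the initial $g_i^2$ rather than the target $h_i^2$. Inserting the symmetries $S_{x,y,z}$ and the standard form ($H_1=G_1=\one$) into Eq.~(\ref{EqSep}), the entry $\ket{1111}\bra{0010}$ will give $\sum p_{x,y,z}\,x=0$, and then $\ket{1000}\bra{0010}$, which carries an overall factor proportional to $g_3^2\neq 0$, will force $p_{x,y,z}=0$ for $x\neq 0$. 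The diagonal entries then fix $|y|=1/|z|=1$, so only unitary symmetries contribute and $r=1$ by Observation~\ref{obsmon}. With $r=1$, tracing out party $1$ and using the $\ket{111}\bra{000}$, $\ket{110}\bra{000}$ and $\ket{000}\bra{001}$ entries of the reduced equation, supplemented by $g_i^2\neq 0$ to divide through, will yield $\bar h_i^2=\bar g_i^2$ for all $i\in\{2,3,4\}$, contradicting the existence of some $h_l^2=0$.

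The main obstacle will be verifying that the necessity step genuinely extends the case (i) analysis without needing $h_i^2\neq 0$: each cancellation of $h$-factors in that analysis must be replaced by cancellations justified by $g_i^2\neq 0$. Since Eq.~(\ref{EqSep}) treats initial and final operators on equal footing up to bookkeeping, the matrix-element identities still force $\sum p_{x,y,z}x=0$ and then $p_{x,y,z}=0$ for $x\neq 0$ as soon as \emph{any one} party-$3$-type factor is nonzero, and similarly for the $y$- and $|z|$-dependent identities. Once that is checked, the rest of the necessity argument reduces to linear algebra identical to that used in case (i).
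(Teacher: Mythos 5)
Your sufficiency argument is correct and coincides with the paper's: the two-outcome POVM built from $\one^{\otimes 4}$ and the Pauli symmetry $s^{ij}$ acting as $\sigma_z$ on parties $i,j$ and trivially on party $k$, with party $i$ measuring and the conditions $\bar h_i^1=\bar g_i^1$, $\bar h_i^3=\bar g_i^3$, $(2p-1)\bar h_i^2=\bar g_i^2$, is exactly the protocol used there.

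The necessity direction has a genuine gap. You propose to rerun the case-(i) matrix-element analysis of the reachability proof ``keyed to $g_i^2\neq 0$ rather than $h_i^2\neq 0$,'' on the grounds that Eq.\ (\ref{EqSep}) treats $H$ and $G$ on an equal footing. It does not: the symmetries conjugate $H$ on the left-hand side while $G$ appears bare on the right, and the matrix elements that yield conditions on the probabilities (such as $\sum p_{x,y,z}\,x=0$ or $p_{x,y,z}=0$ for $x\neq 0$) are precisely those whose right-hand side vanishes identically because party $1$ carries $\one/2$ together with an off-diagonal index; to extract anything from them one must divide by the accompanying $H$-entries, not $G$-entries. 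Concretely, the element $\ket{1111}\bra{0010}$ is proportional to $h_2^{2*}h_4^{2*}$, and the element $\ket{1000}\bra{0010}$ has identically vanishing right-hand side (it is not ``proportional to $g_3^2$''). Since your target, being reachable, necessarily has $h_l^2=0$ for some $l\in\{2,3,4\}$, at least one of these elements degenerates to $0=0$, and your chain breaks before you reach $x=0$, $|y|=1/|z|=1$, or $r=1$; likewise the relations you want from $\ket{111}\bra{000}$ etc.\ were derived in case (i) only after those facts and under $h_i^2\neq0$ for all $i$. The repair is the paper's (simpler) argument, which runs in the opposite direction: take the index $l$ with $h_l^2=0$ guaranteed by the reachability lemma and show $g_l^2=0$ for that same $l$. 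For $l\in\{2,4\}$ the symmetry is diagonal on party $l$, so a single element off-diagonal only on party $l$ (e.g.\ $\ket{0100}\bra{0000}$) has vanishing left-hand side and right-hand side equal to $g_l^{2*}$ times strictly positive factors, forcing $g_l^2=0$; for $l=3$ one first shows $x=0$ via $\ket{1000}\bra{0010}$ --- which now works precisely because $h_3^2=0$ kills the uncontrolled term --- and then proceeds identically.
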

\begin{proof}
Note first that the only states that are reachable obey that at least one $h_j^2=0$ for $j\in\{2,3,4\}$. Via Eq. (\ref{EqSep}) one obtains that this is also a necessary condition for all states which might be convertible to them. More precisely for $j=2,4$ it only requires to consider one particular matrix element of Eq. (\ref{EqSep}) to show that also $G_j$ has to be diagonal. For $j=3$, one first shows that $x=0$ using the same argument as in the proof above [case (ii)] and then proceeds as before. In order to see that indeed any state obeying this constraint is convertible consider the POVM $\{\sqrt{p}h_i\one^{\otimes 4} g_i^{-1},\sqrt{1-p}h_i s^{ij} g_i^{-1}\}$. Here, $s^{ij}$ is acting as $Z_i\otimes Z_j\otimes \one_k$ on parties $i,j,k\in\{2,3,4\}$ and the operator acting on the first qubit is defined by $s^{ij}$ being a symmetry of the seed state. One obtains that this is a valid POVM for $h_i^1=g_i^1$, $h_i^3=g_i^3$ and $(2p-1)h_i^2=g_i^2$.  As the only constraints on $G_i$ and $H_i$ is that they are positive operators of rank 2, for any $G_i$ one can find a $H_i\neq G_i$ and a value for $p$ such that these conditions are fulfilled.
\end{proof}
Thus, the non-isolated states in $MES_4$ in this SLOCC class are given by \bea\one\otimes \one\otimes d_3\otimes d_4 \ket{\Psi_{0_20_2}}.\eea
\section{The SLOCC classes $L_{a_20_{3\oplus1}}$}
Let us treat next the SLOCC classes $L_{a_20_{3\oplus1}}$ (see \cite{slocc4}). As for $a=0$ the corresponding states are biseperable we will not consider them here.
We chose the seed states as \bea\ket{\Psi_{a_20_{3\oplus1}}}=a(\ket{1001}+\ket{0110})-i/(2a)\ket{1010}+e^{i\pi/4}\ket{11}\ket{\Phi^+}.\eea
The corresponding $Z$ and $\tilde{Z}$ matrices are in symmetric Jordan form and are given by
\begin{eqnarray}
&Z_\Psi= \left(
    \begin{array}{cccc}
     \frac{i}{2} & \frac{1}{2} & 0&0 \\
    \frac{1}{2}&-\frac{i}{2}  &  0&0\\
 0 &0 & a^2+\frac{i}{2}&\frac{1}{2}\\
0 & 0  & \frac{1}{2}&a^2-\frac{i}{2}\\
    \end{array}
  \right)
\end{eqnarray}
and
\begin{eqnarray}
&\tilde{Z}_\Psi= \left(
    \begin{array}{cccc}
     0 & 0 & 0&0 \\
    0&0  &  0&0\\
 0 &0 & a^2+\frac{i}{2}&\frac{1}{2}\\
0 & 0  & \frac{1}{2}&a^2-\frac{i}{2}\\
    \end{array}
  \right).
\end{eqnarray}
The symmetries of these states are given by \bea S_m =(\sigma_z^{\otimes 4})^m,\eea for $m\in\{0,1\}$. Thus, we can apply Lemma \ref{lem0}. One obtains that the only states that are reachable in this SLOCC classes can be written as $h_i\otimes d_j\otimes d_k\otimes d_l \ket{\Psi_{a_20_{3\oplus1}}}$ where $\{i,j,k,l\} =\{1,2,3,4\}$ and $h_i\neq d_i$, whereas all other states are in $MES_4$. The convertible states are given by $g_i\otimes d_j\otimes d_k\otimes d_l \ket{\Psi_{a_20_{3\oplus1}}}$ (see Lemma \ref{lemc11}).\\
\section{The SLOCC class $L_{0_{5\oplus 3}}$}
We will treat next the SLOCC class which is denoted by $L_{0_{5\oplus 3}}$ in \cite{slocc4}. The corresponding seed state can be chosen as \begin{eqnarray}\nonumber
\ket{\Psi}&=1/\sqrt{2}\ket{01}(\ket{\Phi^-}-i\ket{\Psi^+})
+e^{-i\pi/4}\ket{10}\ket{\Psi^-}+e^{i\pi/4}\ket{11}\ket{\Psi^+}-\sqrt{2}e^{-i\pi/4}\ket{1100}.\end{eqnarray}
The $Z$ matrix and $\tilde{Z}$ matrix are of the form
\begin{eqnarray}
&Z_\Psi= \left(
    \begin{array}{cccc}
     \frac{i}{2} & \frac{1}{2} & 0&0 \\
    \frac{1}{2}&-\frac{i}{2}  &  0&0\\
 0 &0 & \frac{i}{2}&\frac{1}{2}\\
0 & 0  & \frac{1}{2}&-\frac{i}{2}\\
    \end{array}
  \right)
\end{eqnarray}
and
\begin{eqnarray}
&\tilde{Z}_\Psi= \left(
    \begin{array}{cccc}
     0 & 0 & 0&0 \\
    0&0  &  \frac{1}{2}+\frac{i}{2}&0\\
 0 &\frac{1}{2}+\frac{i}{2} & 0&\frac{1}{2}-\frac{i}{2}\\
0 & 0  & \frac{1}{2}-\frac{i}{2}&0\\
    \end{array}
  \right).
\end{eqnarray}

In order to simplify the notation for the symmetry consider the state \bea\ket{\Psi_{0_{5\oplus 3}}}=\one\otimes\one\otimes H Z(\pi/4)\otimes H Z(\pi/4)\ket\Psi\eea as seed state. The symmetries of this state are then given by
\begin{eqnarray}
&\tilde{S}_{x,y}= \left(
    \begin{array}{cc}
      1 & 0 \\
      \frac{(1-i) (1+x-y)}{y} & 1/y \\
    \end{array}
  \right)\otimes\one\otimes\left(
    \begin{array}{cc}
      y & 0\\
      x & 1 \\
    \end{array}
  \right)
\otimes\left(
    \begin{array}{cc}
      y & 0\\
      x & 1 \\
    \end{array}\right),
\end{eqnarray}
where $x,y\in \C$, $y\neq 0$. As we will see below, it is possible to choose a different seed state, which is of the form $A\otimes \one^{\otimes 3}\ket{\Psi_{0_{5\oplus 3}}}$, such that all symmetries which can be used in all possible transformations will be unitary. However, for the sake of simplicity we continue with this choice of the seed state.

Using the notation $G_i=[g_i^1,g_i^2;(g_i^2)^\ast,g_i^3]$, as before, we choose the standard form in this case as $G_1\otimes G_2\otimes \one/2 \otimes G_4$, with $g_4^2\geq 0$ unless one of the following conditions is fulfilled. If  $\Re(g_1^2)=\Im(g_1^2)=g_1^3(1+g_4^2/g_4^3)$ and $g_4^2 \neq 0$ we choose $G_1\otimes G_2\otimes G_3\otimes \one/2$, where $g_3^2\geq 0$ and
\bea \label{eq_G1} G_1=\left(
    \begin{array}{cc}
      g_1^1 & (1+i)g_1^3 \\
      (1-i)g_1^3 & g_1^3 \end{array}\right).
\eea
Note that $G_1$ has the property that $(S^{(1)}_{0,e^{i \phi_y}})^\dagger G_1 S^{(1)}_{0,e^{i \phi_y}}=G_1$ for any $\phi_y \in \R$, which will be a crucial property in the following. In case the previous conditions are not satisfied, but $g_4^2=0$ \footnote{Note that if $g_4^2=0$ and $\Re(g_1^2)=\Im(g_1^2)=g_1^3(1+g_4^2/g_4^3)=g_1^3$, the standard form (resulting from the first case) is given by
$G_1\otimes G_2\otimes \one/2 \otimes G_4$, with $G_1$ given in Eq. (\ref{eq_G1}).} one can easily seen that the standard form can be chosen as
$G_1\otimes G_2\otimes \one/2 \otimes G_4$, where $G_4$ is diagonal and
\bea \label{eq_G11} G_1=\left(
    \begin{array}{cc}
      g_1^1 & (1+i)g_1^2 \\
      (1-i)g_1^2 & g_1^3 \end{array}\right),
\eea
where $g_1^2\geq 0$ \footnote{Note that this standard form is not unique if $2g_1^3-g_1^2\geq 0$. However, in this case it can be easily chosen uniquely by applying $S_{0,-1}$ iff $g_1^3<g_1^2$.} Using this standard form we will now show the following lemma.
\begin{lemma} \label{lemma12L05} In the SLOCC class  $L_{0_{5\oplus 3}}$ the reachable states (in standard form) are of the form
\bi
\item[i)]$h_1\otimes h_2\otimes \one \otimes d_4\ket{\Psi_{0_{5\oplus 3}}}$ with $h_2$ arbitrary and $H_1=h_1^\dagger h_1$ is not of the form given in Eq. (\ref{eq_G1}) or
\item[ii)] $h_1\otimes h_2 \otimes h_3\otimes \one \ket{\Psi_{0_{5\oplus 3}}}$, with $h_2$ arbitrary, $h_3$ not diagonal, and $H_1=h_1^\dagger h_1$ is of the form given in Eq. (\ref{eq_G1})or
\item[iii)] same as case ii) with party $3$ and $4$ exchanged.
\ei

\end{lemma}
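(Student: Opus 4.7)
The plan is to apply Eq.~(\ref{EqSep}) with the continuous family of symmetries $\tilde S_{x,y}$ and first reduce the problem to considering only unitary symmetries. Because the party-$2$ factor in every $\tilde S_{x,y}$ is the identity, party~$2$ is a spectator in every operator equation obtained from Eq.~(\ref{EqSep}); tracing over the remaining three parties yields that $H_2$ is only constrained by normalization, which immediately accounts for the arbitrariness of $h_2$ in all three claimed families. Following the remark in the text, I would switch to the equivalent seed state $A\otimes\one^{\otimes3}\ket{\Psi_{0_{5\oplus 3}}}$ for a suitable $A\in GL(2)$ chosen so that $A^{-1}\left(\begin{smallmatrix}1 & 0 \\ (1-i)(1+x-y)/y & 1/y\end{smallmatrix}\right)A$ is unitary exactly when $|y|=1$ and $x=0$. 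Alternatively, the reduction can be obtained directly from Eq.~(\ref{EqSep}) by evaluating distinguished matrix elements $\ket{ijkl}\bra{i'j'k'l'}$: since the symmetry acts on parties~$3$ and $4$ through the lower-triangular factor $\left(\begin{smallmatrix}y & 0 \\ x & 1\end{smallmatrix}\right)$, combining several such matrix elements produces polynomial identities in $\sum p_{x,y}\,x$, $\sum p_{x,y}\,|x|^2$, $\sum p_{x,y}\,|y|^2$ and $\sum p_{x,y}\,|y|^{-2}$ (in the spirit of the proofs of Lemma~\ref{lem3} and Lemma~\ref{lem10}); together with $\sum p_{x,y}=1$ these force $x=0$ and $|y|=1$, so only the unitary subgroup $\{\tilde S_{0,e^{i\phi}}\}$ can contribute.

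Once this is in place, Observation~\ref{obsmon} gives $r=1$ and tracing Eq.~(\ref{EqSep}) over any three parties yields ${\cal E}_i(H_i)=G_i$ for $i=1,3,4$. The induced maps ${\cal E}_3$ and ${\cal E}_4$ are averages of $\mathrm{diag}(e^{i\phi},e^{-i\phi})$-conjugations, so they fix the diagonal entries of $H_3,H_4$ and multiply their off-diagonals by $\sum p\,e^{\pm 2i\phi}$, a complex number of modulus at most one. Hence the diagonal parts of $H_3$ and $H_4$ must equal those of $G_3$ and $G_4$ and off-diagonals can only shrink. Consequently, if the target has $H_3=\one/2$ (resp.\ $H_4=\one/2$) then the source must already have $G_3=\one/2$ (resp.\ $G_4=\one/2$), which constrains the initial state to the appropriate branch of the standard form and yields the trichotomy (i)--(iii).

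The finer split in terms of the party-$1$ block is governed by the following observation: the unique (up to scalar) positive $2\times 2$ operator invariant under the party-$1$ action of $\tilde S_{0,e^{i\phi}}$ for every $\phi$ is precisely the matrix in Eq.~(\ref{eq_G1}). If $H_1$ is not of this form, then the equation ${\cal E}_1(H_1)=G_1$ forces the averaging to act non-trivially also on party~$1$; chasing this through Eq.~(\ref{EqSep}) shows that the only admissible targets have $H_3=\one$ (or symmetrically $H_4=\one$), which is exactly case~(i). If instead $H_1$ has the invariant form, the averaging is entirely absorbed by parties~$3$ and $4$, so $h_3$ (resp.\ $h_4$) may be genuinely non-diagonal, giving cases~(ii) and~(iii).

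Sufficiency is established by exhibiting an explicit two-outcome POVM based on the discrete unitary symmetry $\tilde S_{0,-1}$, which up to an overall sign acts as $\sigma_z\otimes\one\otimes\sigma_z\otimes\sigma_z$. In case~(i) party~$4$ performs $\{\sqrt{p}\,h_4g_4^{-1},\sqrt{1-p}\,h_4\sigma_z g_4^{-1}\}$ and the other parties correct by $\sigma_z$ on the second outcome, exactly as in the protocols used for Lemma~\ref{lem3}, Lemma~\ref{lem10} and Lemma~\ref{lemmab2}; choosing $p$ appropriately allows one to reach any diagonal $d_4$ as target. Cases~(ii) and~(iii) are handled symmetrically by letting party~$3$ (resp.\ $4$) implement the two-outcome POVM. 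The main obstacle is the first step: in the branch where $G_1$ is \emph{not} the invariant matrix of Eq.~(\ref{eq_G1}), the coupling between $x$ and $y$ in $\tilde S_{x,y}$ makes the polynomial identities needed to eliminate non-unitary symmetries subtler than in the previous classes, and one must select the matrix elements judiciously to force the triangular part to vanish — a computation similar in spirit to, but more delicate than, the one performed in the proof of Lemma~\ref{lem3}.
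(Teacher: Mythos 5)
Your reduction to unitary symmetries is the step that fails, and it fails for a reason the paper flags explicitly: this is the one SLOCC class where the symmetries actually used in Eq.~(\ref{EqSep}) are \emph{not} unitary and $r$ cannot be normalized to $1$. Even after forcing $x=0$ and $|y|=1$, the party-$1$ factor of $\tilde S_{0,y}$ is $\bigl(\begin{smallmatrix}1&0\\(1-i)(1-y)/y&1/y\end{smallmatrix}\bigr)$, which is non-unitary for every $y\neq1$ on the unit circle; in particular $\tilde S_{0,-1}$ acts on party $1$ as $S^{(1)}_{0,-1}=[1,0;-2(1-i),-1]$, not as $\pm\sigma_z$ as you claim. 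Consequently Observation~\ref{obsmon} does not apply, the paper instead derives $r=h_1^3/g_1^3\neq1$ from projecting Eq.~(\ref{EqSep}) onto $\ket{\Psi^-}_{34}$, and the map ${\cal E}_1$ is not unital, so the monotonicity and fixed-point arguments you import from the other classes are unavailable on party $1$. (Your remark about passing to the seed state $A\otimes\one^{\otimes3}\ket{\Psi_{0_{5\oplus3}}}$ would repair this, but you do not carry it out, and the rest of your argument is written for the original seed state.) Also, the operators invariant under all $S^{(1)}_{0,e^{i\phi}}$ form a one-parameter family after normalization, not a single operator up to scalar.

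The second gap is in the sufficiency direction, which is where the content of the lemma lies. In case~(i) the non-trivial measurement is performed by party~$1$, not party~$4$: the target has $H_4=D_4$ diagonal and $H_1$ \emph{not} of the form of Eq.~(\ref{eq_G1}), and it is reached from a source with $G_1$ of that invariant form via $\{\tfrac{1}{\sqrt{2r}}h_1g_1^{-1}\otimes\one^{\otimes3},\ \tfrac{1}{\sqrt{2r}}h_1S^{(1)}_{0,-1}g_1^{-1}\otimes\one\otimes\sigma_z\otimes\sigma_z\}$; a $\sigma_z$-type POVM on party~$4$ cannot move a diagonal $H_4$ at all. In cases~(ii)/(iii), when party~$3$ (or $4$) measures, party~$1$ must apply the correction $U^{(1)}=h_1S^{(1)}_{0,-1}h_1^{-1}$ on the second outcome, and this operator is unitary \emph{exactly because} $H_1$ has the invariant form --- that is the mechanism behind the dichotomy between case~(i) and cases~(ii)/(iii). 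Your protocols omit this correction, so they are not valid LOCC maps, and your necessity argument for why $H_4$ must be diagonal when $H_1$ is not of the invariant form (and why the combination of invariant $H_1$ with diagonal $H_4$ is not reachable) is asserted rather than derived.
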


\begin{proof}

Let us first consider states for which the standard form is given by $H_1\otimes H_2\otimes \one \otimes H_4$. Note that the fact that all symmetries act trivially on the second system implies that $H_2=G_2$. Considering now the matrix elements $\ket{1001}\bra{1011}$, $\ket{1001}\bra{1010}$ and $\ket{1001}\bra{0011}$  of Eq. (\ref{EqSep})
it can be seen that for any symmetry which is used it must either hold that $x=0$ or that $\Re(h_1^2)=\Im(h_1^2)=h_1^3(1+h_4^2/h_4^3)$. Note that $h_4^2=0$ implies that $x=0$. This is why this instance will be treated in the first case. Note further that in the second case, the standard form is chosen differently, as mentioned above. Using then Eq. (\ref{EqSep}), one can show that also in the standard form of $G$, $G_1$ has to be of the form given in Eq. (\ref{eq_G1}). In order to do so, one uses
some matrix elements of Eq. (\ref{EqSep}) in combination with the equation, $r=h_1^3/g_1^3$, resulting from the projection of Eq. (\ref{EqSep}) onto $\ket{\Psi^-}_{34}$. Using then again Eq. (\ref{EqSep}) it can be easily seen that for any used symmetry it must hold that $x=0$. Hence, we have that whenever the standard form is used in Eq. (\ref{EqSep}), all symmetries occurring in this equation fulfill that $x=0$. Using similar methods as in e.g. the proof of Lemma \ref{lem3} and the equation resulting from the projection of Eq. (\ref{EqSep}) onto $\ket{\Psi^-}_{34}$ it is then easy to see that in this case it must furthermore hold that $|y|=1$ for any symmetry occurring in Eq. (\ref{EqSep}).
Note that this implies that any symmetry, $S_{x,y}$, occurring in Eq. (\ref{EqSep}) (using the standard form defined above) is of the form $S_{0,y}$, with $|y|=1$. In particular, the symmetries act as phase gates on party $3$ and $4$.

Let us now consider the different standard forms separately. For state with standard form $H_1\otimes H_2\otimes \one \otimes H_4$ we show now that the only states which are reachable are of the form given in (i) and (iii). More precisely, we show that states where (a) $H_1$ is not of the form given in Eq. (\ref{eq_G1}) and $H_4$ is not diagonal and (b) states where $H_1$ is of the form given in Eq. (\ref{eq_G1}) and $H_4$ is diagonal cannot be reached and present a LOCC protocol to reach the others. To prove that states in (a) cannot be reached one has to show that the only solution to Eq. (\ref{EqSep}) is that $G=H$ (using for both the standard form). Note that due to the symmetries, $H_2=G_2$ and $H_3=G_3=\one/2$. Moreover, as the symmetries are only phase gates on party $4$ it follows by tracing over all other parties that the diagonal elements of $H_4$ and $G_4$ have to coincide. Using then that $r=h_1^3/g_1^3$ and that $H_1$ is not of the form given in Eq. (\ref{eq_G1}) one obtains that $H_4=G_4$. It is then straightforward to show that also $H_1=G_1$ which implies that the states in case (a) are not reachable. For states in case (b), it is easy to see that all symmetries occuring in Eq. (\ref{EqSep}) commute with $H$, which implies that $H=G$ and therefore the states are not reachable. Let us now construct the LOCC protocol which reaches the states which do neither belong to case (a) nor to case (b), i.e. those which are given in Lemma \ref{lemma12L05} (case (i) and (iii)).
Any state corresponding to $H_1\otimes H_2\otimes \one \otimes D_4$, with $H_1$ not of the form given in Eq. (\ref{eq_G1}) [case (i)] [for which the standard form is as given above Eq. (\ref{eq_G11})] can be reached from a state $G_1\otimes H_2\otimes \one \otimes D_4$, where $G_1$ is of the form given in Eq. (\ref{eq_G1}). This can be easily seen by noting that,
as $H_4$ is diagonal, it commutes with all symmetries and therefore Eq. (\ref{EqSep}) reduces to $\sum_{y_i:|y_i|=1} p_{y_i} (S_{0,y_i}^{(1)})^\dagger H_1 S_{0,y_i}^{(1)}=r G_1$, which can always be satisfied by choosing $G_1$ of the form given in Eq. (\ref{eq_G1}).
The corresponding POVM elements are $\{1/\sqrt{2r}h_1 g_1^{-1}\otimes \one^{\otimes 3}, 1/\sqrt{2r} h_1 S_{0,-1}^{(1)} g_1^{-1}\otimes \one \otimes (S_{0,-1}^{(3)})^{\otimes 2}\}$, where $S_{0,-1}^{(1)}=[1,0;-2(1-i),-1]$ and $S_{0,-1}^{(3)}=\sigma_z$ and $r=h_1^3/g_1^3=1-4h_1^2+4h_1^3$ \footnote{Let us remark here that this SLOCC class is the only class where we could not find a standard form such that $r$ in Eq. (\ref{EqSep}) can be chosen to be $1$.}. Moreover, any state corresponding to $H_1\otimes H_2\otimes \one \otimes H_4$, with $H_1$ of the form given in Eq. (\ref{eq_G1}) and $H_4$ not diagonal [case (iii)] can be reached from a state $H_1\otimes H_2\otimes \one \otimes D_4$, where $D_4$ has the same diagonal elements as $H_4$. The corresponding POVM elements are $\{1/\sqrt{2} \one^{\otimes 3}\otimes h_4 d_4^{-1}, 1/\sqrt{2} U^{(1)}\otimes  \one \otimes S_{0,-1}^{(3)} \otimes h_4 S_{0,-1}^{(3)} d_4^{-1}\}$, where $U^{(1)}=h_1S_{0,-1}^{(1)} h_1^{-1}$ is unitary and  $S_{0,-1}^{(1)}$, $S_{0,-1}^{(3)}$ as above. Note that the crucial property that $U^{(1)}$ is unitary (which implies that this POVM can be implemented via LOCC) results from the special form of $h_1$.

Hence, it remains to consider states whose standard form is given by $H_1\otimes H_2\otimes H_3 \otimes \one $, where $H_1$ is of the form given in Eq. (\ref{eq_G1}). We consider the two cases $h_3^2=0$ and $h_3^2\neq 0$ separately. In the first case, we have similarly to before that the symmetries commute with $H_3$ and $H_4\propto \one$. Moreover, as the symmetries also commute with $H_1$ we have that $H=G$ has to hold, which implies that these states are not reachable. If, however, $h_3^2\neq 0$, the state can be reached from the state $H_1\otimes H_2\otimes G_3 \otimes \one$, where $G_3$ is diagonal. The corresponding POVM elements are as given in case (iii) replacing party 3 and 4.
\end{proof}

Let us note here that this case seems very different from all the other SLOCC classes because first of all the seed state is reachable [see Lemma \ref{lemma12L05} (i)] and moreover, as seen in the proof, the symmetries used for the transformation are, in contrast to all other cases not unitary. However, this is only an artifact of the choice of the seed state, as we will explain in the following. It is easy to see that choosing the seed state $A\otimes \one^{\otimes 3}\ket{\Psi_{0_{5\oplus 3}}}$ with $A=[\sqrt{2/3}, 1/(2\sqrt{6}) (1+i);1/(2\sqrt{6}) (1-i), 1/\sqrt{6}]$ leads to results resembling all the other cases. More precisely, choosing this seed state, the symmetries used in all the transformations are unitary and therefore [see Eq. (\ref{EqSep})] the seed state is in MES, as it is not reachable. Similarly to the previous lemma we can now prove

\begin{lemma} \label{lemma12} The only states in the SLOCC class  $L_{0_{5\oplus 3}}$ that are convertible are (in standard form) of the form
\bi
\item[i)] $g_1\otimes g_2\otimes \one \otimes d_4\ket{\Psi_{0_{5\oplus 3}}}$ with $g_1,g_2$ arbitrary \footnote{Note that in this case $G_1$ can always be chosen as in Eq. (\ref{eq_G11}) (as imposed by the standard form).} or
\item[ii)] $g_1\otimes g_2 \otimes g_3\otimes \one\ket{\Psi_{0_{5\oplus 3}}}$ with $G_1=g_1^\dagger g_1$ of the form given in Eq. (\ref{eq_G1}) and $g_2,g_3$ arbitrary or
\item[iii)]  same as case ii) with party $3$ and $4$ exchanged.
\ei

\end{lemma}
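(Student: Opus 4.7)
The plan is to characterize convertibility by combining Lemma \ref{lemma12L05} with the structural constraints derived from Eq.\ (\ref{EqSep}). Since $g\ket{\Psi_{0_{5\oplus 3}}}$ is convertible iff it can be transformed via LOCC to some LU--inequivalent reachable state, and since Lemma \ref{lemma12L05} already classifies all reachable states into three cases, the task reduces to determining, for each target case, which initial standard forms admit such a conversion, and then exhibiting explicit LOCC protocols.

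For the necessary direction, I would invoke the key technical observation established in the proof of Lemma \ref{lemma12L05}: any symmetry $\tilde{S}_{x,y}$ with nonzero weight in Eq.\ (\ref{EqSep}) must satisfy $x=0$ and $|y|=1$, i.e. it acts as $\one \otimes \one \otimes Z \otimes Z$ for a phase matrix $Z$. Tracing out subsystems in Eq.\ (\ref{EqSep}) then forces $G_2=H_2$ (the symmetries act trivially on party 2) and forces the diagonal entries of $G_3, G_4$ to coincide with those of $H_3, H_4$. Cross--referencing with the three target forms of Lemma \ref{lemma12L05}: case (i) forces $G_3\propto \one$ and $G_4$ diagonal with fixed diagonal entries, yielding case (i) of the present lemma; cases (ii) and (iii) additionally force $G_1$ to be of the form of Eq.\ (\ref{eq_G1}), because $H_1$ is of this form and the symmetries acting on party 1 are trivial. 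States failing all three conditions cannot be converted.

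For the sufficient direction, the LOCC protocols are essentially those constructed in the proof of Lemma \ref{lemma12L05}, adapted to yield a convertibility rather than a reachability statement. For case (i), one uses a two--outcome POVM with elements proportional to $\{h_1 g_1^{-1}\otimes h_2 g_2^{-1}\otimes \one\otimes \one,\ h_1 S_{0,-1}^{(1)} g_1^{-1}\otimes h_2 g_2^{-1}\otimes \sigma_z\otimes\sigma_z\}$, where the probabilities and $h$ are tuned so that the POVM closure holds with $r=h_1^3/g_1^3$ and produces an LU--inequivalent $h\ket{\Psi_{0_{5\oplus 3}}}$ in form (i) of Lemma \ref{lemma12L05}. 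For case (ii), the defining property $H_1$ being of the form of Eq.\ (\ref{eq_G1}) ensures that $h_1 S_{0,-1}^{(1)} h_1^{-1}$ is \emph{unitary}, so a two--outcome POVM acting nontrivially on party 3 only (with compensating local unitaries on the other parties in the second branch) gives a genuine LOCC protocol to a form (iii) reachable state; case (iii) is symmetric.  In each case, existence of an LU--inequivalent $H$ reduces to a simple positivity check on $H_i>0$ with $\tr H_i=1$.

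The main obstacle is the bookkeeping imposed by the coexistence of several standard forms (depending on whether $g_4^2=0$, whether $G_1$ already has the Eq.\ (\ref{eq_G1}) shape, etc.) and, more substantively, verifying that the constructed two--outcome POVMs are \emph{LOCC--implementable} rather than only separable. The latter hinges on the unitarity of $h_1 S_{0,-1}^{(1)} h_1^{-1}$ in cases (ii) and (iii), an algebraic miracle peculiar to the Eq.\ (\ref{eq_G1}) shape of $G_1$, which is precisely why the special form of $G_1$ appears as a genuine restriction in the statement of the lemma and does not merely reflect an artifact of the chosen standard form.
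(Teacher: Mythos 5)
Your proposal follows essentially the same route as the paper: necessity from the $x=0$, $|y|=1$ constraint established in the proof of Lemma \ref{lemma12L05} combined with the classification of reachable states, and sufficiency via two-outcome POVMs built from the identity and $S_{0,-1}$, with the nontrivial measurement on party 1 in case (i) and on party 3 (resp.\ 4) in cases (ii)/(iii), exploiting the unitarity of $h_1 S_{0,-1}^{(1)}h_1^{-1}$. One correction: in your necessity argument for cases (ii)/(iii) you assert that ``the symmetries acting on party 1 are trivial'' --- they are not ($S_{0,-1}^{(1)}=[1,0;-2(1-i),-1]$); the correct reason $G_1$ inherits the form of Eq.\ (\ref{eq_G1}) is that an $H_1$ of that form is \emph{invariant} under conjugation by every $S_{0,y}^{(1)}$ with $|y|=1$, so Eq.\ (\ref{EqSep}) on party 1 collapses to $H_1=rG_1$.
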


\begin{proof}
Due to the proof of Lemma \ref{lemma12L05} we have that the only symmetries occurring in Eq. (\ref{EqSep}) are of the form $S_{0,y}$, with $|y|=1$. Using then that the reachable states are given in Lemma \ref{lemma12L05} one can easily see that any convertible state has to be of one of the forms given in Lemma \ref{lemma12}. It hence remains to show that any state given in this lemma is convertible. Whereas states corresponding to case (ii) and (iii) can be easily converted using a POVM (constructed from the symmetries identity and $S_{0,-1}$) which acts non--trivially on party 3 and 4 respectively, it is more lengthy, however straightforward to show that any state in case (i) can be converted. In order to do so, one considers a transformation from states $g_1\otimes g_2\otimes \one \otimes d_4\ket{\Psi_{0_{5\oplus 3}}}$ to states $h_1\otimes g_2\otimes \one \otimes d_4\ket{\Psi_{0_{5\oplus 3}}}$. Let us recall here that in this case $G_1$ can always be chosen as in Eq. (\ref{eq_G11}) (as imposed by the standard form). Hence, we assume in the following that $G_1$ is of this form. As parties 2, 3 and 4 apply in this case only phase gates (or the identity) the necessary and sufficient condition for LOCC convertibility is $\sum_{y_i:|y_i|=1} p_{y_i} (S_{0,y_i}^{(1)})^\dagger H_1 S_{0,y_i}^{(1)}=r G_1$. Choosing $p_{y=1},p_{y=-1}\neq 0$ and all other probabilities equal to zero, it can be shown that for any $G_1$ there exists a positive $H_1$ and corresponding probabilities such that the equation above is satisfied, which proves the statement.
\end{proof}

Combining Lemma \ref{lemma12L05} and Lemma \ref{lemma12} it follows that the non-isolated states in the MES are given by $g_1\otimes g_2 \otimes d_3\otimes \one\ket{\Psi_{0_{5\oplus 3}}}$ and $g_1\otimes g_2 \otimes \one\otimes d_4\ket{\Psi_{0_{5\oplus 3}}}$ where in both cases $G_1=g_1^\dagger g_1$ is of the form given in Eq. (\ref{eq_G1}) and $g_2$ is arbitrary.

\section{The SLOCC class $L_{0_{7\oplus 1}}$}
Let us proceed with the SLOCC class $L_{0_{7\oplus 1}}$ in \cite{slocc4}.
One could choose as seed state \begin{eqnarray}&\ket{\Psi}= (1-i)/(2\sqrt{2}) (\ket{00}\ket{\Phi^-}-\ket{01}\ket{\Phi^-})+i/2(\ket{0001}+i\ket{0010}
-i\ket{0101}-\ket{0110})\\\nonumber
&-(1+i)/(2\sqrt{2})(\ket{10}\ket{\Phi^-}-\ket{11}\ket{\Phi^-})-(1+i)/2(\ket{1001}-\ket{1110})
\end{eqnarray}
The corresponding $Z$ and $\tilde{Z}$ matrix are given by
\begin{eqnarray}
&Z_\Psi= \left(
    \begin{array}{cccc}
     0 & \frac{1}{2} & \frac{i}{2}&0 \\
    \frac{1}{2}&\frac{i}{2}  & \frac{1}{2}&-\frac{i}{2}\\
\frac{i}{2} &\frac{1}{2} & -\frac{i}{2}&\frac{1}{2}\\
0 & -\frac{i}{2}  & \frac{1}{2}&0\\
    \end{array}
  \right)
\end{eqnarray}
and
\begin{eqnarray}
&\tilde{Z}_\Psi= \left(
    \begin{array}{cccc}
     0 & 0 & 0&0 \\
    0&0  &  \frac{1}{2}+\frac{i}{2}&0\\
 0 &\frac{1}{2}+\frac{i}{2} & 0&\frac{1}{2}-\frac{i}{2}\\
0 & 0  & \frac{1}{2}-\frac{i}{2}&0\\
    \end{array}
  \right).
\end{eqnarray}

In order to work with a similar kind of symmetry as before we will consider the state \bea\ket{\Psi_{0_{7\oplus 1}}} = H^{\otimes 4} (\sigma_z\otimes \sigma_z\otimes Z(-\pi/4)\otimes Z(-\pi/4))\ket{\Psi}\eea as seed state where $H$ denotes the Hadamard gate.
The corresponding symmetries are \bea S_z=\tfrac{1}{z} P_{z^2}\otimes P_z^{\otimes 3} \eea with $z\in\C\backslash 0$. This symmetry group allows to choose the following standard form
\begin{itemize}
\item $g_1^x\otimes g_2\otimes g_3\otimes g_4\ket{\Psi_{0_{7\oplus 1}}}$ (with $g_1^2>0$) for $g_1^2\neq 0$ and
\item $\one\otimes g_2\otimes g_3\otimes g_4\ket{\Psi_{0_{7\oplus 1}}}$ with $g_2^2\geq 0$ for $g_1^2= 0$.
\item In case  $g_1^2=g_2^2= 0$  we choose $g_3^2\geq 0$ and for $g_1^2=g_2^2=g_3^2= 0$ we set $g_4^2\geq 0$.
\end{itemize}
\begin{lemma} \label{lem13} The only states in the SLOCC class  $L_{0_{7\oplus 1}}$ that are reachable are given by $h_1^x\otimes h_j\otimes d_k\otimes d_l \ket{\Psi_{0_{7\oplus 1}}}$ with $\{j,k,l\} = \{2,3,4\}$ and $h_j\neq d_j$ and $h_1^x\otimes d_2\otimes d_3\otimes d_4 \ket{\Psi_{0_{7\oplus 1}}}$ with $h_1^x\neq \one/2$.
\end{lemma}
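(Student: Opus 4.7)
The plan is to follow closely the strategy used for analogous classes such as Lemma \ref{lem3} and Lemma \ref{lemmasz}, but paying special attention to the fact that the symmetry group $S_z=\tfrac{1}{z}P_{z^2}\otimes P_z^{\otimes 3}$ is asymmetric: party $1$ feels ``double'' the weight of parties $2,3,4$ in both $|z|$ and in the phase $\phi=\arg(z)$. I would insert $S_z$ into Eq.\ (\ref{EqSep}) and write out the resulting matrix-valued constraint on $\otimes_i H_i$ versus $r\otimes_i G_i$.

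The first step is to show that only unitary symmetries ($|z|=1$) can appear with positive weight. For this I would examine the diagonal matrix elements $\ket{ijkl}\bra{ijkl}$, each of which gives an equation of the form
\begin{equation}
h_1^{1+2i}h_2^{1+2j}h_3^{1+2k}h_4^{1+2l}\sum_{|z|,\phi}p_{|z|,\phi}|z|^{4(1-2i)+2(1-2j)+2(1-2k)+2(1-2l)}=rg_1^{1+2i}g_2^{1+2j}g_3^{1+2k}g_4^{1+2l}.
\end{equation}
Taking suitable quotients and pairs (e.g.\ $(0000)$ against $(1111)$, $(0000)$ against $(0111)$, etc.) forces $\sum p_{|z|,\phi}|z|^{4n}=1$ for several exponents $n$, which by positivity of the $p_{|z|,\phi}$ is only possible if all nonzero weight sits at $|z|=1$. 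Observation \ref{obsmon} then yields $r=1$ and the partial traces of Eq.\ (\ref{EqSep}) give $h_i^1=g_i^1$ and $h_i^3=g_i^3$ for every $i$, so that SEP cannot alter the diagonals.

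Next I would analyse the off-diagonal sector. Under $|z|=1$ the symmetry acts as the phase $e^{\pm 4i\phi}$ on party $1$ and as $e^{\pm 2i\phi}$ on parties $2,3,4$. Examining matrix elements that couple the off-diagonal of party $1$ to one of the others (e.g.\ $\ket{1000}\bra{0100}$) yields sums of the form $\sum p_\phi e^{-2i\phi}=0$ whenever both sides would otherwise be inconsistent, while elements coupling two of parties $2,3,4$ (e.g.\ $\ket{0100}\bra{0010}$) force $h_j^2h_k^{2\ast}=g_j^2g_k^{2\ast}$ together with $h_j^2\sum p_\phi e^{-2i\phi}=g_j^2$. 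Together with the standard form (which pins the phase of the first nonzero off-diagonal among parties $2,3,4$), these relations are as stringent as in the proof of Lemma \ref{lem3}: if two or more of $h_2^2,h_3^2,h_4^2$ are nonzero then all off-diagonal components are unchanged, and if $h_1^2$ is nonzero together with any $h_j^2$ ($j\in\{2,3,4\}$) the mismatch between the ``double'' phase on party $1$ and the ``single'' phase on party $j$ forces $g_1^2=h_1^2$ and $g_j^2=h_j^2$ as well. The only surviving possibilities are exactly the two families listed in the lemma.

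Finally I would exhibit LOCC protocols realising each reachable state. The key unitary symmetries are $S_i=\one\otimes\sigma_z^{\otimes 3}$ (obtained for $z=i$) and $S_{e^{i\pi/4}}\propto\sigma_z\otimes Z(\pi/4)^{\otimes 3}$. For the states $h_1^x\otimes h_j\otimes d_k\otimes d_l\ket{\Psi_{0_{7\oplus 1}}}$ with $h_j\neq d_j$, party $j\in\{2,3,4\}$ implements the two-outcome POVM $\{\sqrt{p}\,h_j d_j^{-1},\sqrt{1-p}\,h_j\sigma_z d_j^{-1}\}$, the other two of parties $2,3,4$ applying $\sigma_z$ in case of the second outcome while party $1$ does nothing; choosing $p$ and $d_j$ so that $(2p-1)h_j^2=g_j^2$ and matching diagonals produces a valid POVM. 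For the states $h_1^x\otimes d_2\otimes d_3\otimes d_4\ket{\Psi_{0_{7\oplus 1}}}$ with $h_1^x\neq\one/2$, party $1$ instead implements $\{\sqrt{p}\,h_1^x(g_1^x)^{-1},\sqrt{1-p}\,h_1^x\sigma_z(g_1^x)^{-1}\}$, with parties $2,3,4$ applying $Z(\pi/4)$ on the second outcome; the combined operator is precisely $S_{e^{i\pi/4}}$ (up to an irrelevant phase), so this is a legitimate LOCC protocol. I expect the main obstacle to be the bookkeeping in the off-diagonal case analysis, where the asymmetric phases $e^{\pm 4i\phi}$ versus $e^{\pm 2i\phi}$ require one to keep track of several possible resonances before concluding that no additional reachable states arise beyond those listed.
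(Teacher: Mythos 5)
Your overall architecture matches the paper's (reduce to unitary symmetries with $r=1$, show the diagonal entries of every $H_i$ are frozen, then do a case analysis on the off-diagonal components, and finally exhibit two-outcome protocols built from $\one\otimes\sigma_z^{\otimes3}$ and $\sigma_z\otimes Z(\pi/4)^{\otimes3}$), and your two protocols are exactly the paper's. However, there is a genuine gap in the off-diagonal case analysis. You claim that whenever $h_1^2\neq0$ together with any single $h_j^2\neq0$, $j\in\{2,3,4\}$, the ``mismatch between the double phase on party $1$ and the single phase on party $j$'' forces $g_1^2=h_1^2$ and $g_j^2=h_j^2$. This is false, and it contradicts both the statement of the lemma and your own sufficiency protocol: states $h_1^x\otimes h_j\otimes d_k\otimes d_l\ket{\Psi_{0_{7\oplus 1}}}$ with $h_1^2\neq0$ and $h_j^2\neq0$ \emph{are} reachable. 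Concretely, the relevant constraints are $h_1^2\sum_\phi p_\phi e^{4i\phi}=g_1^2$, $h_j^2\sum_\phi p_\phi e^{-2i\phi}=g_j^2$ and $h_1^2h_j^2\sum_\phi p_\phi e^{2i\phi}=g_1^2g_j^2$, which only require $\sum p_\phi e^{2i\phi}=(\sum p_\phi e^{4i\phi})(\sum p_\phi e^{-2i\phi})$; for weight split between $\phi=0$ and $\phi=\pi/2$ this reads $p_0-p_1=(p_0+p_1)(p_0-p_1)$ and holds identically. The resonance $e^{4i\phi}=1$, $e^{2i\phi}=-1$ at $\phi=\pi/2$ is precisely what lets party $j$'s off-diagonal shrink while party $1$'s is untouched, i.e.\ the symmetry $\one\otimes\sigma_z^{\otimes3}$ your own protocol uses. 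The correct dividing line (the paper's six cases) is that a state is isolated iff at least two of $h_2^2,h_3^2,h_4^2$ are nonvanishing, or all off-diagonals vanish; party $1$'s off-diagonal never enters the obstruction. To repair the argument you must derive the freezing from pairs of parties in $\{2,3,4\}$, where both phases are $e^{\pm2i\phi}$ and the mechanism of Lemma \ref{lem3} genuinely applies, and in the cases involving $h_1^2\neq0$ use the standard-form positivity $h_1^2,g_1^2\geq0$ to fix the sign of $\sum p_\phi e^{-4i\phi}$, as the paper does.

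Two smaller points. First, party $1$ carries $\tfrac{1}{z}P_{z^2}=\mathrm{diag}(z,z^{-3})$, so its diagonal entries scale as $|z|^{2}$ and $|z|^{-6}$, not $|z|^{\pm4}$ as your exponent formula asserts; the conclusion that all weight sits at $|z|=1$ still follows from the correct pairings (and the paper sidesteps the issue by selecting matrix elements such as $\ket{1000}\bra{1000}$ and $\ket{0011}\bra{0011}$ in which the powers of $|z|$ cancel, reading off diagonal preservation and $r=1$ directly). Second, the statement ``elements coupling party $1$'s off-diagonal to another party yield $\sum p_\phi e^{-2i\phi}=0$ whenever both sides would otherwise be inconsistent'' is too vague to carry the necessity direction; the burden of the proof is exactly to decide which sign patterns of vanishing/nonvanishing off-diagonals admit a nontrivial solution, and that requires the explicit case-by-case treatment.
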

\begin{proof}
We will first show that all states different than the ones given in Lemma \ref{lem13} are in $MES_4$. Then we will provide for the states in Lemma \ref{lem13} the corresponding LOCC protocols that allow to reach them.\\
Writing $z=|z|e^{i\phi}$ Eq. (\ref{EqSep}) reads
\begin{eqnarray}\label{Eq_pzBK}
& \otimes\sum_z p_{z}\left(
    \begin{array}{cc}
     |z|^2/2&   h_1^2e^{-4i\phi}/|z|^2\\
    h_1^{2}e^{4i\phi}/|z|^2  &1/(2|z|^6)\\
    \end{array}
  \right)\otimes_{i=2}^4\left(
    \begin{array}{cc}
      h_i^1|z|^2&   h_i^2 e^{-2i\phi}\\
     h_i^{2 *}e^{i2\phi}  & h_i^3/|z|^2\\
    \end{array}
  \right)=r  G_1^x\otimes G_2\otimes G_3\otimes G_4.
\end{eqnarray}
Considering the matrix elements  $\ket{1000}\bra{1000}$ and  $\ket{0011}_{1,i,j,k}\bra{0011}$ of this equation where $\{i,j,k\}=\{2,3,4\}$ one obtains $h_j^1h_k^1/(h_j^3h_k^3)=g_j^1g_k^1/(g_j^3g_k^3)$. Considering this equation for various choices of $j$ and $k$ and using the normalization condition it follows that $h_l^1=g_l^1$ and $h_l^3=g_l^3$ $\forall l\in\{2,3,4\}$, as well as $r=1$. Thus, the diagonal components of $\otimes_i H_i$ can not be changed.\\
In order to structure the proof we  will discuss the following 6 different cases which cover all possibilities: (i) $h_j^2 \neq 0 $ $\forall j\in\{1,2,3,4\}$, (ii) $h_j^2 = 0 $ $\forall j\in\{1,2,3,4\}$ (iii) $h_1^2\neq 0$, $h_j^2=0$ for one $j\in\{2,3,4\}$ and $h_k^2\neq 0$ $\forall k\neq j$ (iv) $h_1^2=0$ and $h_j,h_k\neq 0$ but $h_l$ arbitrary  for  $\{j,k,l\} = \{2,3,4\}$, (v) $h_1^2$ arbitrary, $h_j^2 \neq 0$ and $h_k^2=h_l^2=0$ where $\{j,k,l\} = \{2,3,4\}$ and (vi) $h_1^2\neq 0$ and $h_2^2=h_3^2=h_4^2=0$. We will show that for the cases (i)-(iv) the corresponding states are in $MES_4$. As we will show case (v) and case (vi) correspond to reachable states.\\
Let us start with case (i). In order to see that the corresponding states are in $MES_4$ consider the matrix elements  $\ket{1010}_{1,i,j,k}\bra{1000}$ and $\ket{0101}_{1,i,j,k}\bra{0110}$ for several $i,j,k$. It follows that $h_j^{2 *}h_k^2=g_j^{2 *}g_k^2$ and $h_j^2/h_k^2=g_j^2/g_k^2$. Note that the first equation implies that for $h_j^{2},h_k^{2}\neq 0$  it has to hold that $g_j^{2},g_k^{2}\neq 0$.   We will use the notation $h_j^2=|h_j^2|e^{i \phi_j}$ and $g_j^2=|g_j^2|e^{i \tilde{\phi}_j}$. It follows that $|h_j^2|=|g_j^2| \,\,\, \forall j\in\{2,3,4\}$ and $\phi_j-\phi_k= \tilde{\phi}_j-\tilde{\phi}_k\,\,\, \forall j,k\in\{2,3,4\}$.  Considering now the matrix elements  $\ket{1001}_{1,2,3,4}\bra{0001}$ and  $\ket{0111}_{1,i,j,k}\bra{0100}$ one obtains that
\bea \label{Eq_auxB1}
h_1^2\sum_z p_z e^{-i4\phi}= g_1^2
\eea
and
\bea \label{Eq_auxB2}
h_j^2h_k^2\sum_z p_z e^{-i4\phi}= g_j^2g_k^2.
\eea
Recall that according to the standard form $h_1^2,g_1^2\geq 0$. Hence, the equations above imply that $\sum_z p_z  e^{-i4\phi}> 0$ \footnote{As already mentioned in the case considered here it has to hold that $g_j^2\neq 0$ for $j\in\{2,3,4\}$ and therefore it follows from Eq. (\ref{Eq_auxB2}) that $\sum_z p_z  e^{-i4\phi}\neq 0$.}  and therefore $\phi_j+\phi_k= \tilde{\phi}_j+\tilde{\phi}_k$. As the differences of these phases also coincide, we have $\phi_j=\tilde{\phi}_j$ and $\phi_k=\tilde{\phi}_k$. Thus, $h_j^2=g_j^2$ $\forall j\in\{2,3,4\}$ and so also $h_1^2=g_1^2$. Hence, we have shown in case (i) that $H_i=G_i$ $\forall i\in\{1,2,3,4\}$ and therefore the corresponding states are in $MES_4$.\\
In case (ii) one can easily show that the corresponding states are in $MES_4$. In order to see this consider the matrix elements $\ket{0000}_{i,j,k,l}\bra{1000}$ for various choices of $i,j,k,l$. \\
In case (iii) (and using the same notation as before) it is easy to see that $g_j^2=0$. Furthermore, one can use the same argument as in case (i) to show that $h_k^2=g_k^2$ $\forall k\neq j$ and $h_1^2=g_1^2$. Thus, $H_i=G_i$ $\forall i\in\{1,2,3,4\}$ and the corresponding states are in $MES_4$.\\
For case (iv), i.e. $h_1^2=0$ and $h_j,h_k\neq 0$ but $h_l$ arbitrary, it can be shown that the corresponding states are not reachable by considering  the matrix elements $\ket{1100}_{1,j,k,l}\bra{1000}$, $\ket{1010}_{1,j,k,l}\bra{1000}$ and $\ket{0011}_{1,j,k,l}\bra{0101}$ in Eq. (\ref{Eq_pzBK}). It follows that $h_j^2/h_k^2=g_j^2/g_k^2$ and $h_j^{2 *}h_k^2=g_j^{2 *}g_k^2$. Assume without loss of generality $j<k,l$. Due to the standard form we have $h_j^2> 0$ and therefore it follows that $h_j^2=g_j^2$ and hence $h_k^2=g_k^2$. Analogously one obtains $h_l^2=g_l^2$ in case $h_l^2\neq 0$. In case $h_l^2= 0$ it can be shown as before that $g_l^2=0$. Thus, this kind of states is not reachable.\\
For case (v) and case (vi) we show that the corresponding states are reachable by providing the corresponding LOCC protocols. For case (vi) one can use the POVM  $\{1/\sqrt{2}(g_1^x\otimes \one^{\otimes 3} ),1/\sqrt{2} (g_1^x\sigma_z\otimes Z(\pi/4)\otimes Z(\pi/4)\otimes Z(\pi/4)) \}$ which is acting non-trivially on party $1$ to reach them. States which correspond to case (v), i.e. $h_1^x\otimes h_j\otimes d_k\otimes d_l \ket{\Psi_{0_{7\oplus 1}}}$ where $\{j,k,l\} = \{2,3,4\}$ and $h_j^2\neq 0$, can be reached for example from states of the form $h_1^x\otimes d_j\otimes d_k\otimes d_l \ket{\Psi_{0_{7\oplus 1}}}$ with a properly chosen $d_j$ via $\{1/\sqrt{2}(\one_1\otimes g_jd_j^{-1}\otimes\one^{\otimes 2}) ,1/\sqrt{2}(\one_1\otimes g_j \sigma_z d_j^{-1}\otimes \sigma_z^{\otimes 2})\}$. Thus, we have shown that all states in this SLOCC class apart from the ones given in this lemma are in $MES_4$.
\end{proof}
Note that in contrast to most SLOCC classes before we used in the proof a LOCC protocol that exploits a symmetry that is not an element
of the Pauli group. Note further that nevertheless this symmetry acts with $\sigma_z$ on the qubit on which the LOCC protocol is acting non-trivially. In the following lemma we show which states are convertible in this SLOCC class.
\begin{lemma}  The only states in the SLOCC class  $L_{0_{7\oplus 1}}$ that are convertible are given by $g_1^x\otimes g_j\otimes d_k\otimes d_l \ket{\Psi_{0_{7\oplus 1}}}$ where $j,k,l \in \{2,3,4\}$.
\end{lemma}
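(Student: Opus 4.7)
The plan is to combine the characterization of reachable states given in Lemma~\ref{lem13} with the constraints already extracted in its proof from Eq.~(\ref{Eq_pzBK}), and then to exhibit explicit LOCC protocols realising every purported conversion. The proof of Lemma~\ref{lem13} establishes that any admissible SEP transformation in this class forces $r=1$ and preserves the diagonal entries $\bar g_l^1,\bar g_l^3$ for $l\in\{2,3,4\}$. Since a convertible state in standard form must map to one of the two reachable forms $h_1^x\otimes h_j\otimes d_k\otimes d_l$ (with $h_j\neq d_j$) or $h_1^x\otimes d_2\otimes d_3\otimes d_4$ (with $h_1^x\neq\one/2$), and since the standard form always forces the first operator to be of the shape $g_1^x$, the diagonal-preservation on parties $2,3,4$ implies that the standard form of the initial state must already exhibit two diagonal operators among parties $2,3,4$. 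Hence the only candidates for convertible states are those of the form $g_1^x\otimes g_j\otimes d_k\otimes d_l$ with $\{j,k,l\}=\{2,3,4\}$.

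For sufficiency I would construct explicit two-outcome POVMs. The main ingredients are the Pauli symmetry $\one\otimes\sigma_z^{\otimes 3}\in S(\Psi_{0_{7\oplus 1}})$, obtained by setting $z=i$ in $S_z=\tfrac{1}{z}P_{z^2}\otimes P_z^{\otimes 3}$, and the symmetry $S_{e^{i\pi/4}}\propto \sigma_z\otimes Z(\pi/4)^{\otimes 3}$ already used in the proof of Lemma~\ref{lem13}. In the case where $g_j$ has a nonzero $\sigma_x$ or $\sigma_y$ component, the POVM
\begin{equation}
\bigl\{\sqrt{p}\,\one\otimes h_j g_j^{-1}\otimes\one\otimes\one,\ \sqrt{1-p}\,\one\otimes h_j\sigma_z g_j^{-1}\otimes\sigma_z\otimes\sigma_z\bigr\}
\end{equation}
(up to a permutation placing the non-diagonal factor at position $j$) transforms $g_1^x\otimes g_j\otimes d_k\otimes d_l\ket{\Psi_{0_{7\oplus 1}}}$ to $g_1^x\otimes h_j\otimes d_k\otimes d_l\ket{\Psi_{0_{7\oplus 1}}}$ whenever $\bar h_j^3=\bar g_j^3$ and $(2p-1)\bar h_j^{1,2}=\bar g_j^{1,2}$, which can be satisfied for some $H_j\not\sim G_j$ and $p\in(0,1)$. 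When all of $g_2,g_3,g_4$ are diagonal, I would instead use the POVM $\bigl\{\sqrt{p}\,h_1 g_1^{-1}\otimes\one^{\otimes 3},\ \sqrt{1-p}\,h_1\sigma_z g_1^{-1}\otimes Z(\pi/4)^{\otimes 3}\bigr\}$ acting non-trivially on party~$1$, yielding an LU-inequivalent target $h_1^x$ with $(2p-1)\bar h_1^1=\bar g_1^1$; this covers even the seed state.

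The main technical obstacle is the bookkeeping associated with the normalization factor $r$, which as pointed out in the footnote after Lemma~\ref{lem13} cannot in general be set to one for the natural choice of seed state. However, the symmetries used in both protocols act either trivially or unitarily on the parties that are not being measured, so the POVM completeness relation $\sum_k M_k^\dagger M_k=\one$ reduces to a single-party equation on the measured party. The freedom in choosing $p$ together with the free Pauli coefficients of $H_j$ (respectively $H_1$) then always allows an LU-inequivalent outcome for every admissible initial operator, closing the sufficiency direction and thereby the iff characterization asserted in the lemma.
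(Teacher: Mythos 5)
Your proposal is correct in substance and follows essentially the same route as the paper: necessity by combining the reachability characterization of Lemma \ref{lem13} with the constraints extracted from Eq.~(\ref{Eq_pzBK}), and sufficiency by explicit two-outcome POVMs built from symmetries; your first POVM is exactly the one the paper uses, and the paper in fact applies that single protocol (with $p=1/2$ and $h_j^2\neq 0$) even when all of $g_2,g_3,g_4$ are diagonal, so your separate party-1 protocol with $\sigma_z\otimes Z(\pi/4)^{\otimes 3}$ is a valid but unnecessary variant. Two small inaccuracies are worth fixing. First, preservation of the diagonal entries $\bar g_l^1,\bar g_l^3$ does not by itself force the initial operators $G_k,G_l$ to be diagonal when $H_k,H_l$ are; for that you need the separate implication $h_k^2=0\Rightarrow g_k^2=0$, which comes from off-diagonal matrix elements of Eq.~(\ref{Eq_pzBK}) (and is indeed established in cases (iii) and (iv) of the proof of Lemma \ref{lem13}), so your necessity step is repaired by citing that fact rather than ``diagonal-preservation.'' Second, your worry about $r\neq1$ is misplaced: the footnote you invoke belongs to the $L_{0_{5\oplus3}}$ class (Lemma \ref{lemma12L05}), whereas the proof of Lemma \ref{lem13} explicitly derives $r=1$ for $L_{0_{7\oplus1}}$, consistent with your own opening sentence; no extra bookkeeping is needed.
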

\begin{proof}
Inserting the reachable states of Lemma \ref{lem13} in Eq. (\ref{EqSep}) one obtains as a necessary condition for convertible states that they have to be of the form $g_1^x\otimes g_j\otimes d_k\otimes d_l \ket{\Psi_{0_{7\oplus 1}}}$ where $j,k,l \in \{2,3,4\}$. In order to see that one can indeed transform them consider
$\{\sqrt{p}(\one_1\otimes h_jg_j^{-1}\otimes\one^{\otimes 2} ),\sqrt{1-p} (\one_1\otimes h_j\sigma_z g_j^{-1}\otimes\sigma_z^{\otimes 2})\}$ which is a valid POVM for $h_j^1=g_j^1$, $h_j^3=g_j^3$ and $(2p-1)h_j^2=g_j^2$. Thus, one can find for any state $g_1^x\otimes g_j\otimes d_k\otimes d_l \ket{\Psi}$ a value of $p$ such that the state can be transformed to a LU-inequivalent state via the POVM given above.
\end{proof}
Thus, the convertible states in the MES are given by $\one/2\otimes d_2\otimes d_3\otimes d_4\ket{\Psi_{0_{7\oplus 1}}}$.

\section{Conclusions}

The study of LOCC transformations is of paramount importance in entanglement theory. First, this is because these operations give rise to all possible protocols to manipulate entangled states which are distributed among spatially separated parties. Second, LOCC induces the most natural ordering in the set of entangled states. Hence, its study allows to classify entangled states according to their usefulness. However, although LOCC transformations among pure states were characterized in the bipartite case more than 10 years ago \cite{Nielsen}, they have remained mostly unexplored in the multipartite regime with only a few exceptions \cite{Turgut1,Turgut2,Tajima}. In \cite{MESus}, we recently provided general techniques to address the problem of LOCC convertibility for general multipartite pure states. Our main goal was to determine the MES, that we introduced therein in order to establish maximal multipartite entanglement on solid theoretical and operational grounds and to identify relevant classes of multipartite entangled states for the applications of quantum information theory. In particular, in \cite{MESus} we characterized the MES of 3-qubit states, $MES_3$, and the generic subset of the MES for 4-qubit states, $MES_4$ (i.\ e.\ the MES corresponding to the generic SLOCC families known as $G_{abcd}$ up to certain particular values of the parameters $\{a,b,c,d\}$). In \cite{MESus} we showed that almost all generic 4-qubit states are isolated. Thus, perhaps, surprisingly $MES_4$ is full measure in the set of 4-qubit states. This implies that LOCC induces a trivial ordering (almost all states are incomparable) and LOCC entanglement manipulation is almost never possible already for systems of only 4 constituents. However, there still exists a zero-measure set of the most relevant class of states in this context. This is given by the subset of LOCC convertible states inside the MES, which we also characterized in the generic case. These findings encouraged further the need to study LOCC transformations and to determine the subset of the MES inside the non-generic families, which is the aim of the current paper. Most paradigmatic states belong to these classes and it could happen that, contrary to the generic case, isolation would be rare in some of these cases, unraveling classes with a rich LOCC structure.

In this paper we investigated all possible SLOCC classes (including the remaining cases of the $G_{abcd}$ families). We identified classes with very distinct transformation properties. However, for most classes it holds true that any state which can be reached via a separable state transformation can be reached via a very simple LOCC protocol, where only one party needs to measure. Moreover, for most classes it is as in the generic class true that most states are isolated. Clear exceptions of this are the SLOCC classes $L_{ba_3}$ for $a\neq 0$, $L_{ba_3}$ for $a= 0$ and $b\neq 0$, $L_{a_4}$ for $a\neq 0$, the L-state class, and the GHZ and W--class. Whereas the first 3 classes contain only isolated states, the L-state is special due to the fact that different SEP transformations become possible, prohibiting a simple characterization of reachable states (via SEP) from the standard form.
Interestingly, it turns out from the analysis performed here that only two very particular classes have no isolated states: the generalization of the GHZ and W class to 4-qubit states. Note that this result holds also true for $n$--partite GHZ and W--classes. Moreover, these are the only classes in which a zero-measure subset inside the particular class is in the MES. In all the other cases almost all states inside the particular class are isolated. Hence, the generic impossibility of LOCC entanglement manipulation is not just a feature of the generic SLOCC families. Nevertheless, we characterized here all elements of $MES_4$ (with the exception of the L--state) which are indeed LOCC convertible. This constitutes a zero-measure class of multipartite entangled states which is operationally most relevant for LOCC manipulation. In the future it would be interesting to study further the physical and mathematical properties of this class of states and see whether one can establish connections among their LOCC relevance and other applications of quantum information theory such as quantum computation and quantum communication.

Moreover, our investigation shows that the ideas we introduced in \cite{MESus} are powerful enough to decide LOCC convertibility for SLOCC classes independently of their very different mathematical structure. In particular, the generic 4-qubit family has a very simple stabilizer; however, in this paper we have provided a systematic technique to identify the set of symmetries for an arbitrary SLOCC class, which plays a key role in the characterization of state transformations. Furthermore, we have shown in detail how to construct LU standard forms for each family that are very suitable for the characterization of LOCC transformations (in analogy to the role played by the Schmidt coefficients in the bipartite case).

The tools used here allow in principle to study LOCC convertibility for systems of an arbitrary number of parties and dimensions. However, in these cases, despite recent advances \cite{GourSLOCC}, their SLOCC structure is not as well established as in the 3 and 4-qubit case. Moreover, as can be seen from the length of this paper, the fact that each SLOCC class is studied separately suggests that this is a formidable task. It would be then interesting to find tools that allow to decide LOCC convertibility from general properties of the symmetry group without the need of using its very particular form in each case. More modest problems, such as proving that isolation is a generic feature for $n$-partite states for $n\geq4$, as intuition suggests, could be perhaps addressed with the techniques used here.

The generic triviality of the LOCC ordering also suggests to study state manipulation beyond single-copy LOCC transformations. It would be interesting to see whether isolation can be removed by considering multiple-copy LOCC manipulation and/or the use of catalyzers. As in the bipartite case, this seems to be a natural continuation of our work. Other possible directions are the study of maximal probabilities of success under SLOCC operations and the derivation of operational multipartite entanglement measures according to LOCC usefulness. Another alternative is to study maximally useful states for LOCC conversions from $n$-partite states to $m$-partite states with $m<n$. For instance, the 3-qubit GHZ state allows to obtain any 2-qubit state by LOCC and we recently presented a 6-qubit state that allows to obtain any 3-qubit state by LOCC \cite{REP}. Finally, very little is still known about multipartite LOCC transformations involving mixed states. Apart from that, we intend to also study $\epsilon$--convertability, i.e. LOCC conversions that allow to reach the target state up to a deviation of $\epsilon$ (measured with some suitably chosen norm). In \cite{SaSc15} and \cite{ScSa15} we used the results on LOCC transformation presented here and in \cite{MESus} to introduce and compute new entanglement measures.

\section*{ACKNOWLEDGEMENTS}

The research of CS and BK was funded by the Austrian Science Fund (FWF): Y535-N16. JIdV acknowledges financial support from the Spanish MINECO grants MTM2010-21186-C02-02, MTM2011-26912, MTM2014-54692 and MTM2014-54240-P and CAM regional research consortium QUITEMAD+CM S2013/ICE-2801.

\section*{\label{appA}Appendix A: Proof of Lemma \ref{lemcycle2}}

In this appendix the proof of Lemma \ref{lemcycle2}, i.e. the characterization of the reachable states in the SLOCC classes $G_{abcd}$ with $a^2=-d^2$, $b^2=-c^2$, $a^2 \neq \pm c^2$   and $a,c\neq 0$ is presented. The symmetries for these SLOCC classes are given by
\begin{align}\label{symcycle20}
S&=\{\one^{\otimes 4},\sigma_x^{\otimes 4},\sigma_y^{\otimes 4},\sigma_z^{\otimes 4},\one\otimes\sigma_z\otimes\sigma_x\otimes\sigma_y, \sigma_z\otimes\one\otimes\sigma_y\otimes\sigma_x,\sigma_x\otimes\sigma_y\otimes\one\otimes\sigma_z,\sigma_y\otimes\sigma_x\otimes\sigma_z\otimes\one\}.
\end{align}
We denote the elements of the symmetry group $S$ by $S_j=\otimes_i s_i^j$ for $j=0,\ldots, 7$ and choose the same numeration as in Eq. (\ref{symcycle20}), i.e. $S_0=\one^{\otimes 4}$, $S_1=\sigma_x^{\otimes 4}$ etc.. In order to prove Lemma \ref{lemcycle2} we will make use of the following lemma.
\begin{lemma}\label{twooutcome0}
The only states in the SLOCC classes $G_{abcd}$ with $a^2=-d^2$, $b^2=-c^2$, $a^2 \neq \pm c^2$   and $a,c\neq 0$ that are reachable via using only the symmetries $S_0$ and one $S_j$ for some $j\in\{1,2,3,4,5,6,7\}$ are given by $\otimes h_i \ket{\Psi_{a(ia)c(-ic)}}$ where $\otimes H_i$ obeys the condition that exactly
three out of the four operators $H_i$ commute with the corresponding
operator $s_i^j$ and one operator $H_k$ does not commute with $s_k^j$.
\end{lemma}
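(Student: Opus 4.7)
The plan is to translate the premise ``only the symmetries $S_0$ and $S_j$ are used'' into the SEP equation (\ref{EqSep}) and then read off a sharp restriction on how many of the single-party operators $H_i$ can fail to commute with the corresponding $s_i^j$. Since both $S_0=\one^{\otimes 4}$ and $S_j$ are tensor products of Paulis, hence unitary, Observation~\ref{obsmon} forces $r=1$ and the equation becomes
\begin{equation}\label{eq:appAone}
p_0 \bigotimes_{i=1}^{4} H_i \;+\; (1-p_0)\bigotimes_{i=1}^{4}(s_i^j)^\dagger H_i s_i^j \;=\; \bigotimes_{i=1}^{4} G_i,
\end{equation}
with $p_0\in(0,1)$ (the endpoints realise only LU transformations and must therefore be excluded).

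Introducing the index set $I=\{i:[H_i,s_i^j]\neq 0\}$, for every $i\notin I$ one has $(s_i^j)^\dagger H_i s_i^j = H_i$, so the operator $\bigotimes_{i\notin I}H_i$ factors out on the left-hand side of (\ref{eq:appAone}); comparison with the product right-hand side yields $G_i=H_i$ for each $i\notin I$ together with the reduced equation
\begin{equation}\label{eq:appAtwo}
p_0\bigotimes_{i\in I}H_i \;+\;(1-p_0)\bigotimes_{i\in I}(s_i^j)^\dagger H_i s_i^j \;=\; \bigotimes_{i\in I} G_i.
\end{equation}

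The crux of the argument is to show that $|I|\leq 1$. Assume $|I|\geq 2$ and pick $k,l\in I$. Tracing (\ref{eq:appAtwo}) over the remaining parties in $I$ and using unitary invariance of the trace together with $\tr H_i = \tr G_i =1$ produces
\begin{equation}
p_0\, H_k\otimes H_l \;+\;(1-p_0)(s_k^j)^\dagger H_k s_k^j\otimes (s_l^j)^\dagger H_l s_l^j \;=\; G_k\otimes G_l.
\end{equation}
For each $i\in\{k,l\}$ I would decompose the Pauli vector of $H_i$ into components parallel and perpendicular to the single Pauli $s_i^j$, writing $H_i=\tfrac{1}{2}\one+\vec h_i^{\parallel}\!\cdot\!\vec\sigma+\vec h_i^{\perp}\!\cdot\!\vec\sigma$; conjugation by $s_i^j$ flips precisely the perpendicular component. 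Partial-tracing the displayed two-site equation on one of the sites gives $G_i=\tfrac{1}{2}\one+\vec h_i^{\parallel}\!\cdot\!\vec\sigma+(2p_0-1)\vec h_i^{\perp}\!\cdot\!\vec\sigma$. Comparing the coefficient of a doubly-perpendicular cross-term $(\vec h_k^{\perp}\!\cdot\!\vec\sigma)\otimes(\vec h_l^{\perp}\!\cdot\!\vec\sigma)$ on both sides, the left-hand side contributes $p_0\cdot 1+(1-p_0)\cdot(-1)(-1)=1$ while the right-hand side contributes $(2p_0-1)^2$. Since $k,l\in I$ both $\vec h_k^{\perp}$ and $\vec h_l^{\perp}$ are nonzero, so this cross-term is genuinely present, forcing $(2p_0-1)^2=1$, i.e. $p_0\in\{0,1\}$, contradicting $p_0\in(0,1)$. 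Hence $|I|\leq 1$, and since $|I|=0$ forces $G=H$ the only non-LU case is $|I|=1$, which is precisely the stated condition.

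For the converse direction I would exhibit the explicit two-outcome LOCC protocol: given $I=\{k\}$, party $k$ implements the POVM $\{M_1,M_2\}$ with $M_1=\sqrt{p_0}\,h_k g_k^{-1}$, $M_2=\sqrt{1-p_0}\,h_k s_k^j g_k^{-1}$, and each party $i\neq k$ applies, conditioned on outcome $2$, the operator $h_i s_i^j h_i^{-1}$, which is unitary precisely because $[H_i,s_i^j]=0$; choosing $G_k=p_0 H_k+(1-p_0)(s_k^j)^\dagger H_k s_k^j$ and a suitable $p_0\in(0,1)$ makes this a valid POVM that transforms $g\ket{\Psi_{a(ia)c(-ic)}}$ into $h\ket{\Psi_{a(ia)c(-ic)}}$ non-trivially. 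The main obstacle will be the cross-term comparison in the $|I|\geq 2$ step: one must carefully verify that the perpendicular components $\vec h_k^\perp$, $\vec h_l^\perp$ are genuinely nonzero for $k,l\in I$ (which follows directly from the defining non-commutation since $s_i^j$ is a single Pauli) and that the identified contribution on the right-hand side is not cancelled by any other Pauli term, which is immediate because $G_k\otimes G_l$ is a clean product of operators whose single-site Pauli decomposition has already been pinned down by the partial-trace step.
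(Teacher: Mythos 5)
Your proof is correct and rests on the same mechanism as the paper's: the paper applies Eq.~(\ref{eq_Pauli}) to two-body Pauli operators $P$ whose factors anticommute with the corresponding components of $S_j$, obtaining $\bar{h}_k^a\bar{h}_l^b=(p_0-p_j)^2\,\bar{h}_k^a\bar{h}_l^b$, which is exactly your doubly-perpendicular cross-term comparison forcing $(2p_0-1)^2=1$ whenever two parties fail to commute; the converse POVM is also the same. The only difference is presentational: you give one uniform argument via the set $I$, whereas the paper enumerates the constraints symmetry by symmetry (treating $S_4$ in detail and obtaining $S_5,S_6,S_7$ and $S_1,S_2,S_3$ by permutation) so as to list the reachable families explicitly for later use.
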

\begin{proof}
Since the symmetries are elements of the Pauli group we will use Eq. (\ref{eq_Pauli}). We will first investigate in detail which transformations are possible with the symmetries $S_0$ and $S_4$ and then generalize this result. Considering Eq. (\ref{eq_Pauli}) with P=$\one\otimes \one \otimes\sigma_i\otimes \sigma_j$ where $i=2,3$ and $j=1,3$, as well as with $P=\one\otimes\sigma_i\otimes \one\otimes \sigma_j$ where $i=1,2$ and $j=1,3$ and $P=\one\otimes \sigma_i\otimes \sigma_j\otimes\one$ where $i=1,2$ and $j=2,3$
one obtains that the corresponding equations are always of the form
\bea
\bar{h}_k^i\bar{h}_l^j=\bar{h}_k^i\bar{h}_l^j(p_0-p_4)^2.
\eea
In order to obtain a non-trivial transformation it follows that $\bar{h}_k^i\bar{h}_l^j=0$ for all the above mentioned components. That is, one obtains that either $H_2=H_2^{00A}$ and $H_3=H_3^{A00}$ or that $H_2=H_2^{00A}$ and $H_4=H_4^{0A0}$ or that $H_3=H_3^{A00}$ and $H_4=H_4^{0A0}$. These states are reachable from some other state in case $\otimes_i H_i$ is such that $S_4(\otimes_i H_i) S_4 \neq \otimes_i H_i$. In particular one can reach them via the POVM $\{1/\sqrt{2} h_jS_0g_j^{-1},1/\sqrt{2} h_jS_4g_j^{-1}\}$ (for a properly chosen $g_j$) where $j$ corresponds to the party for which it holds that $s_j^4H_js_j^4\neq H_j$. Using that $S_i$ coincides with $S_4$ for $i\in\{5,6,7\}$ up to permutations, the states reachable via these symmetries follow straightforwardly. For instance, the reachable states that can be obtained via LOCC using $S_0$ and $S_5$ are given by $g_1^{\neq 3}\otimes g_2^{AAA}\otimes g_3^{0A0}\otimes g_4^{A00}\ket{\Psi_{a(ia)c(-ic)}}$ and $g_1^{00A}\otimes g_2^{AAA}\otimes g_3^{\neq 2}\otimes g_4^{A00}\ket{\Psi_{a(ia)c(-ic)}}$ and $g_1^{00A}\otimes g_2^{AAA}\otimes g_3^{0A0}\otimes g_4^{\neq 1}\ket{\Psi_{a(ia)c(-ic)}}$. Let us consider next which states are reachable using the symmetries $S_0$ and one $S_j$ for some $j\in\{1,2,3\}$.  As we have seen in the generic case (see \cite{MESus}) the symmetries $\sigma_l^{\otimes 4}$ for $l=0,1,2,3$ allow to reach only states of the form  $h_i^{\neq m}\otimes h_j^{W_m=A00}\otimes h_k^{W_m=A00}\otimes h_l^{W_m=A00}\ket{\Psi_{a(ia)c(-ic)}}$ for $m\in\{1,2,3\}$ and $h_i^{\neq 0}\otimes \one/2 \otimes \one/2 \otimes \one/2 \ket{\Psi_{a(ia)c(-ic)}}$.
Let us now characterize those states within this set which are reachable using  only the symmetries $S_0$ and one $S_j$ for some $j\in\{1,2,3\}$. We will first consider $S_0$ and $S_1$ and then generalize the results.  Considering Eq. (\ref{eq_Pauli}) with P such that we have $\sigma_i$ on party $k$, $\sigma_j$ on party $l$ and the identity on the remaining parties  for $i\in\{2,3\}$, $j\in\{2,3\}$, $k\neq l$ and $k,l\in\{1,2,3,4\}$ we obtain
\bea
\bar{h}_k^i\bar{h}_l^j=\bar{h}_k^i\bar{h}_l^j(p_0-p_1)^2.
\eea
Hence, states of the form $h_i^{\neq m}\otimes h_j^{W_m=A00}\otimes h_k^{W_m=A00}\otimes h_l^{W_m=A00}\ket{\Psi_{a(ia)c(-ic)}}$ for $m\in\{2,3\}$ and $\{i,j,k,l\}=\{1,2,3,4\}$ are not reachable (except if $h_t^{W_m=A00}=\one/2$ $\forall t\in\{j,k,l\}$ or if they can be written also in the form $h_s^{100}\otimes h_t^{W_m=A00}\otimes \one_p/2\otimes \one_r/2\ket{\Psi_{a(ia)c(-ic)}}$ where $\{s,t,p,r\}=\{1,2,3,4\}$).
Note that for reachable states it has to hold that $S_1^\dagger(\otimes_i H_i) S_1 \neq \otimes_i H_i$ as otherwise Eq. (\ref{EqSep}) implies that $H_i=G_i$ $\forall i\{1,2,3,4\}$.
Thus, the only states that are reachable via $S_0$ and $S_1$ are given by $h_i^{\neq 1}\otimes h_j^{A00}\otimes h_k^{A00}\otimes h_l^{A00}\ket{\Psi_{a(ia)c(-ic)}}$. Via an analogous argumentation one can show that the states that are reachable via $S_0$ and one $S_m$ for some $m\in\{2,3\}$ are given by  $h_i^{\neq m}\otimes h_j^{W_m=A00}\otimes h_k^{W_m=A00}\otimes h_l^{W_m=A00}\ket{\Psi_{a(ia)c(-ic)}}$ where $\{i,j,k,l\}=\{1,2,3,4\}$, which completes the proof.
\end{proof}
We will prove now Lemma \ref{lemcycle2} by showing that any state that is reachable can be obtained via a two outcome POVM using only the symmetries $S_0$ and one $S_j$ for some $j\in\{1,2,3,4,5,6,7\}$. Lemma \ref{twooutcome0} is then used to identify the reachable states.  \\ \\
\textit{Proof of Lemma \ref{lemcycle2}: }
As the symmetries are elements of the Pauli group we use Eq. (\ref{eq_Pauli}) to identify the reachable states. In the following we will use the notation $\eta_{ijkl}=p_0+p_m+p_n+p_s+p_t-p_i-p_j-p_k-p_l$ where $\{m,n,s,t,i,j,k,l\}=\{1,\ldots,7\}$. Note that due to $\sum_{w=0}^7 p_w=1$ it holds that $|\eta_{i,j,k,l}|\leq 1$. Note further that due to the fact that the symmetries form a group we can choose w.l.o.g. that in any transformation $S_0$ occurs, i.e. $p_0\neq 0$, as long as we do not impose any constraint on the standard form. In order to see this consider for simplicity a two-outcome transformation which transforms $g\ket{\Psi}$ to $h\ket{\Psi}$, i.e. $M_1=\sqrt{p_1}hS_ig^{-1}$ and $M_2=\sqrt{p_2}hS_jg^{-1}$, which is a valid POVM if $p_1 S_iHS_i+ p_2 S_jHS_j =G$. Multiplying both sides of this equation with $S_i$ one obtains that this transformation can be done iff $p_1 S_0HS_0+ p_2 S_iS_jHS_jS_i =S_iGS_i=\tilde{G}$. The symmetries form a group and so $S_iS_j=S_k\in S$ and  $\tilde{G}$ defines the same state as $G$, which can be easily seen using the ambiguity due to the symmetries of the seed states. Note that if we would specify here a specific standard form and investigate transformations between states in standard form, either $G$ or $\tilde{G}$ would not be in standard form (except for $S_i=S_0$) and therefore we would not be looking at transformations that would allow to reach the corresponding state. Thus, we are allowed to either restrict to a specific standard form or to transformation that involve $S_0$. In order to make the proof simpler we choose in the following that all allowed transformations involve $S_0$ but we will not use any restrictions set by a standard form.
\\Considering now Eq. (\ref{eq_Pauli}) with $P=S_k$, as well as $P=(\sigma_x\otimes \one\otimes \sigma_x\otimes \one)S_k$, $P=(\sigma_y\otimes \one \otimes \one\otimes\sigma_y) S_k$, and $P=(\sigma_z\otimes \sigma_z \otimes\one\otimes \one) S_k$ where $k \in\{4,5,6,7\}$ one obtains the following equations
\begin{eqnarray}
\bar{h}_2^3\bar{h}_3^1\bar{h}_4^2&=\bar{h}_2^3\bar{h}_3^1\bar{h}_4^2 \eta_{1267}\eta_{2357}\eta_{1356}\label{eqnarr1}\\\nonumber
\bar{h}_1^3\bar{h}_3^2\bar{h}_4^1&=\bar{h}_1^3\bar{h}_3^2\bar{h}_4^1 \eta_{1267}\eta_{1347}\eta_{2346}\\ \nonumber
\bar{h}_1^1\bar{h}_2^2\bar{h}_4^3&=\bar{h}_1^1\bar{h}_2^2\bar{h}_4^3 \eta_{2357}\eta_{1347}\eta_{1245}\\ \nonumber
\bar{h}_1^2\bar{h}_2^1\bar{h}_3^3&=\bar{h}_1^2\bar{h}_2^1\bar{h}_3^3 \eta_{1356}\eta_{2346}\eta_{1245}\\ \nonumber
\bar{h}_1^2\bar{h}_3^3\bar{h}_4^1&=\bar{h}_1^2\bar{h}_3^3\bar{h}_4^1 \eta_{1356}\eta_{2346}\eta_{1245}\\ \nonumber
\bar{h}_1^1\bar{h}_2^3\bar{h}_4^2&=\bar{h}_1^1\bar{h}_2^3\bar{h}_4^2 \eta_{1267}\eta_{2357}\eta_{1356}\\ \nonumber
\bar{h}_2^2\bar{h}_3^1\bar{h}_4^3&=\bar{h}_2^2\bar{h}_3^1\bar{h}_4^3 \eta_{2357}\eta_{1347}\eta_{1245}\\ \nonumber
\bar{h}_1^3\bar{h}_2^1\bar{h}_3^2&=\bar{h}_1^3\bar{h}_2^1\bar{h}_3^2 \eta_{1267}\eta_{1347}\eta_{2346}\\ \nonumber
\bar{h}_1^3\bar{h}_2^2\bar{h}_4^1&=\bar{h}_1^3\bar{h}_2^2\bar{h}_4^1 \eta_{1267}\eta_{1347}\eta_{2346}\\ \nonumber
\bar{h}_1^2\bar{h}_2^3\bar{h}_3^1&=\bar{h}_1^2\bar{h}_2^3\bar{h}_3^1 \eta_{1267}\eta_{2357}\eta_{1356}\\ \nonumber
\bar{h}_1^1\bar{h}_3^2\bar{h}_4^3&=\bar{h}_1^1\bar{h}_3^2\bar{h}_4^3 \eta_{2357}\eta_{1347}\eta_{1245}\\ \nonumber
\bar{h}_2^1\bar{h}_3^3\bar{h}_4^2&=\bar{h}_2^1\bar{h}_3^3\bar{h}_4^2 \eta_{1356}\eta_{2346}\eta_{1245}\\ \nonumber
\bar{h}_2^3\bar{h}_3^2\bar{h}_4^1&=\bar{h}_2^3\bar{h}_3^2\bar{h}_4^1 \eta_{1267}\eta_{1347}\eta_{2346}\\ \nonumber
\bar{h}_1^3\bar{h}_3^1\bar{h}_4^2&=\bar{h}_1^3\bar{h}_3^1\bar{h}_4^2 \eta_{1267}\eta_{2357}\eta_{1356}\\ \nonumber
\bar{h}_1^2\bar{h}_2^1\bar{h}_4^3&=\bar{h}_1^2\bar{h}_2^1\bar{h}_4^3 \eta_{1356}\eta_{2346}\eta_{1245}\\ \nonumber
\bar{h}_1^1\bar{h}_2^2\bar{h}_3^3&=\bar{h}_1^1\bar{h}_2^2\bar{h}_3^3 \eta_{2357}\eta_{1347}\eta_{1245}\\ \nonumber
\end{eqnarray}
Let us now distinguish the following 2 cases:
\begin{itemize}
\item[(i)] There exist a product of the $\bar{h}_i^j$ occuring in Eq. (\ref{eqnarr1}) which does not vanish.
\item[(ii)] All these products vanish.
\end{itemize}
In case (i) it follows immediately that only two symmetries,  $S_0$ and one $S_j$ for some $j\in\{4,5,6,7\}$, can be used for LOCC transformations. Thus, in this case the reachable states are characterized by Lemma \ref{twooutcome0}. Consider for example $\bar{h}_2^3\bar{h}_3^1\bar{h}_4^2\neq 0$. Then the first equation of (\ref{eqnarr1}) implies that $|\eta_{1267}|=|\eta_{2357}|=|\eta_{1356}|=1$ (as $|\eta_{i,j,k,l}|\leq 1$) and therefore all probabilities except $p_0$ and $p_4$ have to be zero. Case (ii) is more involved. However, using again Eq. (\ref{eq_Pauli}) for P being an operator acting non-trivially on two parties one can again show that for another class of states the possible non-trivial LOCC protocols can only use two symmetries. For the remaining set of states one can then show by using Eq. (\ref{eq_Pauli}) with P acting non-trivially on all parties that any reachable state can be obtained from some other state via LOCC by using only $S_0$ and one $S_j$ for some $j\in\{1,2,3,4,5,6,7\}$. Let us first consider Eq. (\ref{eq_Pauli}) for P acting non-trivially on two parties.
Considering first Eq. (\ref{eq_Pauli}) with $P= \sigma_x\otimes\one\otimes\sigma_x\otimes \one$ and $P= \one\otimes\sigma_x\otimes\one\otimes\sigma_x$ one obtains that
\bea\label{e1}
\bar{h}_1^1\bar{h}_3^1=\bar{h}_1^1\bar{h}_3^1 (\eta_{2357})^2
\eea
and
\bea\label{e2}
\bar{h}_2^1\bar{h}_4^1=\bar{h}_2^1\bar{h}_4^1 (\eta_{2346})^2.
\eea
Thus, for $\bar{h}_j^1\neq 0$ $\forall j\in\{1,2,3,4\}$ the only possible transformations are due to the symmetries $S_0$ and $S_1$. Similar one obtains that if $\bar{h}_j^2\neq 0 $ $\forall j$ ($\bar{h}_j^3\neq 0 \forall j$), all probabilities except $p_0$ and $p_2$ ($p_0$ and $p_3$) have to be zero respectively. More precisely, for particular choices of $P$ in Eq. (\ref{eq_Pauli}) one obtains that
\begin{eqnarray}
\label{eqnarr2}
\bar{h}_1^2\bar{h}_4^2&=\bar{h}_1^2\bar{h}_4^2 (\eta_{1356})^2\\ \label{eqnarr21}
\bar{h}_2^2\bar{h}_3^2&=\bar{h}_2^2\bar{h}_3^2 (\eta_{1347})^2\\ \label{eqnarr22}
\bar{h}_1^3\bar{h}_2^3&=\bar{h}_1^3\bar{h}_2^3 (\eta_{1267})^2\\ \label{eqnarr23}
\bar{h}_3^3\bar{h}_4^3&=\bar{h}_3^3\bar{h}_4^3 (\eta_{1245})^2.
\end{eqnarray}
Note here that the case $\bar{h}_i^k\bar{h}_j^k\neq 0$ for two different choices of $i,j,k$ occurring in  Eq. (\ref{e1}), Eq. (\ref{e2}) and the Eqs. (\ref{eqnarr2}) - (\ref{eqnarr23}) leads to the condition that only two of the probabilities can be different from zero. Thus, all transformations that need to be considered in this case only involve $S_0$ and one $S_i$ for some $i\in\{1,2,3,4,5,6,7\}$. These transformations have already been discussed in Lemma \ref{twooutcome0} and therefore the states that are reachable in this case are given in Lemma \ref{twooutcome0}. \\
Let us now consider as an example the case where $\bar{h}_1^1\bar{h}_3^1\neq 0$ and that there is no further restriction on the probabilities from Eq. (\ref{e2}) and the Eqs. (\ref{eqnarr2}) - (\ref{eqnarr23}), as well as from the set of equations given in Eq. (\ref{eqnarr1}), i.e. $\bar{h}_2^1\bar{h}_4^1=0$, $\bar{h}_1^2\bar{h}_4^2=0$ etc..
The cases where some other product of the $\bar{h}_i^j$ occurring in Eq. (\ref{e2}) or in the Eqs. (\ref{eqnarr2}) - (\ref{eqnarr23}) is non-zero and all others are zero can be treated analogously.
Considering Eq. (\ref{eq_Pauli}) for $P=\sigma_y\otimes \sigma_z\otimes \sigma_y\otimes\sigma_z$, $P=\sigma_z\otimes \sigma_y\otimes \sigma_z\otimes\sigma_y$, $P=\sigma_x\otimes \sigma_z\otimes \sigma_z\otimes\sigma_x$, $P=\sigma_z\otimes \sigma_x\otimes \sigma_x\otimes\sigma_z$, $P=\sigma_x\otimes \sigma_x\otimes \sigma_y\otimes\sigma_y$ and $P=\sigma_y\otimes \sigma_y\otimes \sigma_x\otimes\sigma_x$ leads to the conditions
\begin{eqnarray}
\label{eqnarr0}
\bar{h}_1^2\bar{h}_2^3\bar{h}_3^2\bar{h}_4^3&=\bar{h}_1^2\bar{h}_2^3\bar{h}_3^2\bar{h}_4^3 \eta_{1356}\eta_{1267}\eta_{1347}\eta_{1245}\\ \nonumber
\bar{h}_1^3\bar{h}_2^2\bar{h}_3^3\bar{h}_4^2&=\bar{h}_1^3\bar{h}_2^2\bar{h}_3^3\bar{h}_4^2\eta_{1267}\eta_{1347}\eta_{1245}\eta_{1356}\\ \nonumber
\bar{h}_1^1\bar{h}_2^3\bar{h}_3^3\bar{h}_4^1&=\bar{h}_1^1\bar{h}_2^3\bar{h}_3^3\bar{h}_4^1\eta_{2357}\eta_{1267}\eta_{1245}\eta_{2346}\\ \nonumber
\bar{h}_1^3\bar{h}_2^1\bar{h}_3^1\bar{h}_4^3&=\bar{h}_1^3\bar{h}_2^1\bar{h}_3^1\bar{h}_4^3\eta_{2357}\eta_{1267}\eta_{1245}\eta_{2346}\\ \nonumber
\bar{h}_1^1\bar{h}_2^1\bar{h}_3^2\bar{h}_4^2&=\bar{h}_1^1\bar{h}_2^1\bar{h}_3^2\bar{h}_4^2\eta_{2357}\eta_{1347}\eta_{1356}\eta_{2346}\\ \nonumber
\bar{h}_1^2\bar{h}_2^2\bar{h}_3^1\bar{h}_4^1&=\bar{h}_1^2\bar{h}_2^2\bar{h}_3^1\bar{h}_4^1\eta_{2357}\eta_{1347}\eta_{1356}\eta_{2346}
\end{eqnarray}
 Thus, one obtains that for $\bar{h}_1^2\bar{h}_2^3\bar{h}_3^2\bar{h}_4^3\neq 0$ all $p_i$ have to be equal to zero except for either $p_1$ or $p_0$ which corresponds to a LU or the trivial transformation. Thus, in this case no transformation to another (LU-inequivalent) state is possible. Analogous, one obtains that in order to allow for a non-trivial transformation $\bar{h}_1^3\bar{h}_2^2\bar{h}_3^3\bar{h}_4^2=\bar{h}_1^1\bar{h}_2^3\bar{h}_3^3\bar{h}_4^1=\bar{h}_1^3\bar{h}_2^1\bar{h}_3^1\bar{h}_4^3=\bar{h}_1^1\bar{h}_2^1\bar{h}_3^2\bar{h}_4^2=\bar{h}_1^2\bar{h}_2^2\bar{h}_3^1\bar{h}_4^1=0$.
Equation (\ref{e1}) implies that for $\bar{h}_1^1\bar{h}_3^1\neq 0$ the only symmetries that can contribute to a transformation are given by $S_0$, $S_1$, $S_4$ and $S_6$. Thus, the conditions given by Eq. (\ref{e2}) and the Eqs. (\ref{eqnarr2}) - (\ref{eqnarr23}), as well as by those given in Eqs. (\ref{eqnarr1}) and Eqs. (\ref{eqnarr0})  have to hold too, when one replaces $\bar{h}_1^2$  with $\bar{h}_1^3$ (and vice versa) and/or $\bar{h}_3^2$ with $\bar{h}_3^3$ (and vice versa). Solving then $\tr(HP)=0$ for all operators $P$ leading to the equations mentioned above, and to Eq. (\ref{e2}), Eqs. (\ref{eqnarr2}) - (\ref{eqnarr23}), Eqs. (\ref{eqnarr1}) and Eqs. (\ref{eqnarr0}) one obtains several possible sets of conditions on $\otimes_i H_i$.
Here we will just discuss one example of a solution which is given by $H_1=H_1^{100}$, $H_2=\one/2$, $H_4=H_4^{AA0}$ and $H_3=H_3^{1AA}$. In case $\bar{h}_4^1\neq 0$ and/or $\bar{h}_4^2\neq 0$ the corresponding states are reachable via the POVM $\{1/\sqrt{2}h_4S_0,1/\sqrt{2}h_4S_6\}$, which is acting non-trivially on party $4$, from states with $H_1=H_1^{A00}$, $H_2, H_4=\one/2$ and $H_3=H_3^{1AA}$. For $\bar{h}_4^1,\bar{h}_4^2= 0$ and $H_3\neq H_3^{A00}$ one can obtain the corresponding states via $\{1/\sqrt{2}h_3S_0g_3^{-1},1/\sqrt{2}h_3S_1g_3^{-1}\}$ which is a valid two-outcome POVM for $\bar{g}_3^2, \bar{g}_3^3=0$ and $\bar{g}_3^1=\bar{h}_3^1$ and is acting non-trivially on party $3$. Note that here one could also use a POVM that involves all $S_i$ for $i\in\{0,1,4,6\}$ to reach these states. The states with $H_j=H_j^{100}$ for $j=1,3$ and $H_k\propto \one$ for $k=2,4$ are in $MES_4$ as the only possible symmetries commute with $\otimes_i H_i$ and therefore one obtains from Eq. (\ref{EqSep}) that $H_j=G_j$ $\forall j\in\{1,2,3,4\}$.\\
All other solutions can be treated analogously. In particular, one can show that all the corresponding states except the ones where $H_j=H_j^{100}$ for $j=1,3$ and $H_k\propto \one$ for $k=2,4$ can be reached via a two-outcome POVM using $S_0$ and one $S_i$ for some $i\in\{1,4,6\}$. The states that can be reached via a POVM involving only $S_0$ and one $S_i$ for some $i\in\{1,4,6\}$ have already been discussed in Lemma \ref{twooutcome0}. \\
As the whole set of symmetries is invariant under the simultaneous exchange of party 1 and 2 and of party 3 and 4 one immediately gets the case where $\bar{h}_2^1\bar{h}_4^1\neq 0$ \footnote{More precisely, note that this implies that there are no further conditions on the probabilities from the Eq. (\ref{e1}), the Eqs. (\ref{eqnarr2}) - (\ref{eqnarr23}), as well as from the sets of equations given in  Eqs. (\ref{eqnarr1}) and Eqs. (\ref{eqnarr0}), i.e. $\tr(H P)=0$ for the corresponding $P$.}, if one applies this exchange of parties to $\otimes H_i$ and renumbers the used symmetries correspondingly. \\
As mentioned before, the other cases, e.g. $\bar{h}_1^2\bar{h}_4^2\neq 0$ can be treated analogously. One observes that also for these cases any state that is reachable can be obtained via a two-outcome POVM and therefore Lemma  \ref{twooutcome0} characterizes the reachable states for these cases.\\
Let us now treat the case that the parameter of $\otimes H_i$ are chosen such that $\tr(H P)=0$ for all operators $P$ which were used to derive Eqs. (\ref{eqnarr1}),the Eqs. (\ref{e1}) - (\ref{eqnarr23}) and the Eqs. (\ref{eqnarr0}).  One obtains several possible solutions.
For each of the solutions it can be shown that the corresponding states are reachable via a two-outcome POVM (using $S_0$ and one $S_i$ for some $i\in\{1,2,3,4,5,6,7\}$) except for the case $H_i\propto \one$ $\forall i$. In particular, one constructs the  two-outcome POVMs that allow to reach the corresponding states. It is easy to see that the state with $H_i\propto \one$ $\forall i$ is in $MES_4$. Since the reachable states can be obtained via $S_0$ and $S_i$ they are already captured by Lemma \ref{twooutcome0}. In order to illustrate how one can construct the corresponding two-outcome POVMs we will discuss the following solution.
\bea \nonumber \bar{h}_1^1 = 0, \bar{h}_2^1 = 0,  \bar{h}_3^1= 0, \bar{h}_1^2= 0, \bar{h}_3^2= 0, \bar{h}_1^3= 0 \,\,\textrm{and}\,\,
   \bar{h}_4^3 = 0.\eea
Note that all other solutions can be treated analogously.
In case $H_2\neq \one/2$ one can use the POVM  $\{1/\sqrt{2}h_2S_0,1/\sqrt{2}h_2S_7\}$ which is acting non-trivially on party $2$ to reach the corresponding states from states with $G_1, G_2=\one/2$, $G_3=G_3^{00A}$ and $G_4=G_4^{AA0}$. States with $H_2=\one/2$ and $H_4\neq \one/2$ can be reached via the two-outcome POVM $\{1/\sqrt{2}h_4S_0,1/\sqrt{2}h_4S_6\}$ which is acting non-trivially on party $4$. Note that here one could also construct a successful 4-outcome POVM that uses $S_0$, $S_3$, $S_6$ and $S_7$. States with $H_2, H_4=\one/2$ and $H_3\neq \one/2$ are reachable from the corresponding seed state via the POVM $\{1/\sqrt{2}h_3S_0,1/\sqrt{2}h_3S_1\}$ which is acting non-trivially on party $3$. Note here that one could also construct a POVM that uses all the symmetries $S_i$ for $i\in\{0,1,2,3,4,5,6,7\}$ and that allows to obtain these states from some other state.\\
Note that hereby we have discussed for various choices of $P$ all possible solution of Eq. (\ref{eq_Pauli}) , i.e. we considered the case that either $\prod \eta_{i,j,k,l}=1$ or $\tr(H P)=0$, which completes the proof. $\square$

\section*{\label{appB}Appendix B: Proof of Lemmas \ref{lemcoplanar1} and \ref{lemcoplanar2}}

In this section we prove Lemma \ref{lemcoplanar1} and Lemma \ref{lemcoplanar2}. In order to improve readability we also repeat them here.
\\ \\
\noindent\textit{ {\bf Lemma \ref{lemcoplanar1}.} Let $\{\v_1,\v_2,\v_3\}$ be a triplet of coplanar pairwise-nonparallel vectors with scalar products $c_{ij}=\v_i\cdot\v_j$. Then, the only other triplets of coplanar vectors $\{\v'_1,\v'_2,\v'_3\}$ with the same scalar products ($\v'_i\cdot\v'_j=c_{ij}$) fulfilling $||\v'_i||\geq||\v_i||$ $\forall i$ are uniform rotations of the triplet of vectors $\{\v_1,\v_2,\v_3\}$.}

\begin{proof}
Since the vectors are nonparallel (and nonzero) it must hold that $\v_3=\alpha\v_1+\beta\v_2$ for some fixed real numbers $\alpha,\beta\neq0$ for every possible triplet. Thus, the length of the vectors for fixed $c_{ij}$ can be expressed as a function of $\alpha$ and $\beta$ as $||\v_i||^2=f_i(\alpha,\beta)$, where
\begin{align}
f_1(\alpha,\beta)&=\frac{c_{13}-\beta c_{12}}{\alpha},\nonumber\\
f_2(\alpha,\beta)&=\frac{c_{23}-\alpha c_{12}}{\beta},\nonumber\\
f_3(\alpha,\beta)&=\alpha c_{13}+\beta c_{23}.
\end{align}
The gradient of these functions is then given by \footnote{Notice that by the constraints of our setting these gradients never vanish.}
\begin{align}
\nabla f_1(\alpha,\beta)&=\frac{1}{\alpha^2}(\beta c_{12}-c_{13},-\alpha c_{12}),\nonumber\\
\nabla f_2(\alpha,\beta)&=\frac{1}{\beta^2}(-\beta c_{12},\alpha c_{12}-c_{23}),\nonumber\\
\nabla f_3(\alpha,\beta)&=(c_{13},c_{23}).
\end{align}
Let $(\alpha^*,\beta^*)$ be the point corresponding to the initial triplet $\{\v_1,\v_2,\v_3\}$ and consider now an arbitrary point $(\alpha',\beta')$ corresponding to any other triplet $\{\v'_1,\v'_2,\v'_3\}$ \footnote{Since we are requiring that $||\v'_i||\geq||\v_i||$ $\forall i$ and scalar products must be preserved, it is clear that if such a triplet exists, it must also have pairwise-nonparallel nonzero vectors and hence there must exist nonvanishing $\alpha',\beta'$ such that $\v'_3=\alpha'\v'_1+\beta'\v'_2$.}. Let moreover ${\bf x}=(x,y)$ be the the direction connecting $(\alpha^*,\beta^*)$ and $(\alpha',\beta')$. To prove that it is impossible that $||\v'_i||\geq||\v_i||$ $\forall i$, we will show in the following that it cannot hold that $f_i(\alpha',\beta')\geq f_i(\alpha^*,\beta^*)$ $\forall i$ (unless $f_i(\alpha',\beta')= f_i(\alpha^*,\beta^*)$ $\forall i$, i.\ e.\ $||\v'_i||=||\v_i||$ $\forall i$, which accounts for uniform rotations of the triplet of vectors).
  Since $\nabla f_3$ is constant, to obtain a vector $\v'_3$ such that $||\v'_3||\geq||\v_3||$ imposes that
\begin{equation}
{\bf x}\cdot\nabla f_3=xc_{13}+yc_{23}\geq0.
\end{equation}
On the other hand, for every point $(\alpha,\beta)$ it holds that
\begin{equation}
\alpha^2{\bf x}\cdot\nabla f_1(\alpha,\beta)+\beta^2{\bf x}\cdot\nabla f_2(\alpha,\beta)=-xc_{13}-yc_{23}\leq0,
\end{equation}
where we have used the previous equation to obtain the inequality. This inequality is fulfilled if $f_1$ and $f_2$ are monotonously decreasing in the line connecting $(\alpha^*,\beta^*)$ and $(\alpha',\beta')$, which agrees with our claim. If the functions are not constant, the only remaining possibility to satisfy the inequality is then that in every point in the line from $(\alpha^*,\beta^*)$ to $(\alpha',\beta')$ in which $f_1$ increases, then $f_2$ must decrease and viceversa. Thus, our claim is also fulfilled if the functions are monotonous in this direction. In fact, it can be seen that this must be indeed the case \footnote{This can also be seen using that in every fixed direction, $\beta=m\alpha+k$, $f_1$ and $f_2$ correspond to
 
$$f_1(\alpha)=\frac{c_{13}-m\alpha c_{12}-kc_{12}}{\alpha},\quad f_2(\alpha)=\frac{c_{23}-c_{12}\alpha}{m\alpha+k},$$
which leads to
$$f'_1(\alpha)=\frac{kc_{12}-c_{13}}{\alpha^2},\quad f'_2(\alpha)=-\frac{kc_{12}+mc_{23}}{(m\alpha+k)^2}.$$
}. Therefore, the only possible way to avoid that one of the functions decreases is to keep the three of them constant, i.\ e.\ $xc_{13}+yc_{23}=0={\bf x}\cdot\nabla f_i(\alpha,\beta)$ $\forall i$, which means that $||\v'_i||=||\v_i||$ $\forall i$. However, it is straightforward to see that this can only hold under uniform rotations of the three vectors, i.\ e.\ $\alpha$ and $\beta$ cannot change.
\end{proof}
In the following we prove Lemma \ref{lemcoplanar2}.
\\ \\
\noindent\textit{ {\bf Lemma \ref{lemcoplanar2}.} Let $\{\v_1,\v_2,\v_3\}$ be a triplet of coplanar vectors with scalar products $c_{ij}=\v_i\cdot\v_j$ in which $\v_i\|\v_j$ for some $i\neq j$ or $\v_i=0$ for at least one value of $i$. Then, any other triplet of coplanar vectors $\{\v'_1,\v'_2,\v'_3\}$ with the same scalar products ($\v'_i\cdot\v'_j=c_{ij}$) fulfilling $||\v'_i||\geq||\v_i||$ $\forall i$, must fulfill that $\v'_i\|\v'_j$ or $\v'_i=0$.}

\begin{proof}
The case in which one vector is zero, say $\v_1=0$, is trivial as from the fact that $c_{12}=c_{13}=0$ one readily obtains the conclusion. Thus, we consider the case in which at least two vectors are parallel, which we take to be $\v_1$ and $\v_2$ without loss of generality. The parametrization used in the proof of the previous lemma does not hold in this case, so we are forced to use a different one. Let us consider in principle arbitrary coplanar vectors and denote by $\theta_{ij}$ the angles formed by the vectors $\v_i$ and $\v_j$. The coplanarity condition imposes that $\theta_{13}=\theta_{12}+\theta_{23}$ and since $c_{ij}=||\v_i||||\v_j||\cos\theta_{ij}$, we obtain that $||\v_i||^2=g_i(\theta_{12},\theta_{23})$, where
\begin{align}
g_1(\theta_{12},\theta_{23})&=c_1\frac{\cos\theta_{23}}{\cos\theta_{12}\cos(\theta_{12}+\theta_{23})},\nonumber\\
g_2(\theta_{12},\theta_{23})&=c_2\frac{\cos(\theta_{12}+\theta_{23})}{\cos\theta_{12}\cos\theta_{23}},\nonumber\\
g_3(\theta_{12},\theta_{23})&=c_3\frac{\cos\theta_{12}}{\cos\theta_{23}\cos(\theta_{12}+\theta_{23})},
\end{align}
where $c_i=c_{ij}c_{ik}/c_{jk}$. These functions are well defined except when one vector is orthogonal to another. However, in our case it is clear that when $\v_3\perp\v_1\|\v_2$, then, in order to fulfill $c_{13}=c_{23}=0$, it must also hold that $\v'_3\perp\v'_1\|\v'_2$ in agreement with our claim. Therefore, in the cases that remain to be studied the functions $\{g_i\}$ are all continuous (and differentiable). Our initial triplet must correspond to a point $(\theta_{12},\theta_{23})=(0,\theta^*_{23})$ since $\v_1\|\v_2$. In order to find another triplet $\{\v'_1,\v'_2,\v'_3\}$ corresponding to the point $(\theta'_{12},\theta'_{23})$ with $\theta_{12}'\neq0$, we know from the previous proof that from any point in the neighbourhood of $(0,\theta^*_{23})$ such that $\theta_{12}\neq0$, there must exist a path towards $(\theta'_{12},\theta'_{23})$ (keeping $\theta_{12}\neq0$) in which the functions $\{g_i\}$ are all monotonous and one must be decreasing (say $g_i$). By continuity of $g_i$ and its partial derivatives, it must hold then that $g_i(0,\theta^*_{23})>g_i(\theta'_{12},\theta'_{23})$. Hence, the only possibility to find a point $(\theta'_{12},\theta'_{23})$ compatible with the condition $||\v'_i||\geq||\v_i||$ $\forall i$ is to stay in the line $\theta_{12}=0$ (or $\theta_{23}=0$ if $\theta_{23}^*=0$ too) \footnote{Indeed,
$$\nabla g_1(0,\theta^*_{23})=\frac{c_1}{2\cos^2\theta_{23}^*}(\sin(2\theta_{23}^*),0),$$
$$\nabla g_2(0,\theta^*_{23})=-\frac{c_2}{2\cos^2\theta_{23}^*}(\sin(2\theta_{23}^*),0).$$
Since sign$(c_1)$=sign$(c_2)$, it follows that sign$({\bf x}\cdot\nabla g_1(0,\theta^*_{23}))=-$sign$({\bf x}\cdot\nabla g_1(0,\theta^*_{23}))$ unless ${\bf x}$ is proportional to $(0,1)$.}.
\end{proof}

\section*{\label{appC}Appendix C: Proof of Lemmas \ref{lemmaaacc} and \ref{convlemmaaacc}}

In this section we prove Lemma \ref{lemmaaacc}, i.e. we show which states in the SLOCC classes $G_{abcd}$ (see \cite{slocc4}) with  $a^2= d^2\neq\pm c^2$, $c^2= b^2$ and $a, c\neq 0 $ are reachable. Furthermore, we proof which states in these SLOCC classes are convertible.
As already stated in the main text the corresponding symmetries are given by $S_{z_1,z_2,m}=X^m(P_{z_1}\otimes P_{z_2})^{\otimes 2}$ where $m\in\{0,1\}$, $z_1,z_2\in\C\backslash0$ and $X=\sigma_x^{\otimes 4}$ and the standard form is given by $g_1^x\otimes g_2^x\otimes g_3\otimes g_4|\Psi_{aacc}\rangle$, where $g_1^2\geq 0$  and $g_2^2\geq 0$. In case $g_1^2=0$ ($g_2^2=0$) we choose $g_3^2\geq 0$ ($g_4^2\geq 0$) respectively. In order to make the proof more readable we did not use the symmetry $\sigma_x^{\otimes 4}$ for the definition of the standard form and therefore this standard form is not unambiguously.
Using this standard form we will proof the following lemma.\\ \\
\noindent\textit{ {\bf Lemma \ref{lemmaaacc}.} The only states in the SLOCC classes $G_{abcd}$ where  $a^2= d^2\neq\pm c^2$, $c^2= b^2$ and $a, c\neq 0 $ that are reachable via LOCC are given by  $\otimes h_i |\Psi_{aacc}\rangle$ where $\otimes H_i$ obeys the following condition. There exists a unitary symmetry $S_{z_1,z_2,m}=\otimes_i s_i^{z_1,z_2,m}\in S(\Psi_{aacc})$ such that exactly
three out of the four operators $H_i$ commute with the corresponding
operator $s_i^{z_1,z_2,m}$ and one operator $H_j$ does not commute with $s_j^{z_1,z_2,m}$.}
\begin{proof}

We first show that the states given in Lemma \ref{lemmaaacc} are reachable by providing the corresponding LOCC protocols.  Then, we show that the remaining states can not be reached via SEP and therefore also not via LOCC. In particular, we show that for $h_j^1=h_j^3=1/2$ for $j\in\{3,4\}$ only unitary symmetries can possibly be used for transformations and that by using only unitary symmetries states of the form $h_1^x\otimes h_2^x\otimes h_3^x\otimes h_4^x|\Psi_{aacc}\rangle$ with either $h_1^2,h_3^2\neq 0$ or $h_1^2=h_3^2=0$  and where either $h_2^2,h_4^2\neq 0$ or $h_2^2=h_4^2=0$  can not be reached.
Furthermore, we show that states of the form $h_1^x\otimes h_2^x\otimes h_3\otimes h_4|\Psi_{aacc}\rangle$ for which it holds that $h_j\neq h_j^x$ for $j\in\{3,4\}$ and where either $h_1^2,h_3^2\neq 0$ or $h_1^2=h_3^2=0$, as well as either $h_2^2,h_4^2\neq 0$ or $h_2^2=h_4^2=0$ are in $MES_4$.
States of the form $h_1^x\otimes h_2^x\otimes h_j\otimes h_k^x|\Psi_{aacc}\rangle$ where $h_j\neq h_j^x$ and $\{j,k\}=\{3,4\}$ can be reached from states of the form  $g_1^x\otimes g_2^x\otimes g_3^x\otimes g_4^x|\Psi_{aacc}\rangle$ via the POVM $\{1/\sqrt{2} (h_j(g_j^x)^{-1}\otimes \one^{\otimes 3}),1/\sqrt{2} (h_j\sigma_x(g_j^x)^{-1}\otimes \sigma_x^{\otimes 3})\}$ where $\Re(h_j^2)=g_j^2$.
States of the form $h_1^x\otimes h_2^x\otimes h_3\otimes h_4|\Psi_{aacc}\rangle$ where $h_j\neq d_j$ and $h_k=d_k$ for $\{j,k\}=\{1,3\}$ can be obtained via LOCC from states of the form $\one/2\otimes g_2^x\otimes d_3\otimes g_4|\Psi_{aacc}\rangle$. In order to see this consider  $\{1/\sqrt{2} h_j(d_j)^{-1},1/\sqrt{2} h_j\sigma_z(d_j)^{-1}\}$ with $d_j=diag(\sqrt{h_j^1},\sqrt{h_j^3})$. In case of the first (second) outcome, party $k$ does nothing (applies $\sigma_z$) respectively, whereas for both outcomes party $2$ and $4$ do nothing. Analogously, it can be shown that states of the form $h_1^x\otimes h_2^x\otimes h_3\otimes h_4|\Psi_{aacc}\rangle$ where $h_j\neq d_j$ and $h_k=d_k$ for $\{j,k\}=\{2,4\}$ can be reached via LOCC from states of the form $g_1^x\otimes \one/2\otimes g_3\otimes d_4|\Psi_{aacc}\rangle$. In particular, one can use the symmetries $\one^{\otimes 4}$ and $\one\otimes \sigma_z\otimes \one\otimes \sigma_z$ to construct a POVM.
States of the form  $\one/2\otimes h_2^x\otimes h_3^x\otimes h_4^x|\Psi_{aacc}\rangle$ where $h_3^x\neq \one/2$ are reachable from states of the form  $\one/2\otimes g_2^x\otimes \one/2 \otimes g_4^x|\Psi_{aacc}\rangle$. In order to see this consider the POVM $\{1/\sqrt{2}(\one^{\otimes 2}\otimes  h_3^x\otimes \one),1/\sqrt{2} (\sigma_z\otimes\one \otimes h_3^x\sigma_z\otimes \one)\}$ which is acting non-trivially  on party $3$. Analogously, one can use the symmetries $\one^{\otimes 4}$ and $\one\otimes \sigma_z\otimes \one\otimes \sigma_z$ to construct a POVM that allows to reach states of the form $h_1^x\otimes \one/2\otimes h_3^x\otimes h_4^x |\Psi_{aacc}\rangle$ where $h_4^x\neq \one/2$ from states of the form $g_1^x\otimes \one/2\otimes g_3^x \otimes \one/2|\Psi_{aacc}\rangle$.\\
As we will show in the following states of the form $h_1^x\otimes h_2^x\otimes h_3^x\otimes h_4^x|\Psi_{aacc}\rangle$, as well as states of the form $h_1^x\otimes h_2^x\otimes h_3\otimes h_4|\Psi_{aacc}\rangle$ with $h_j\neq h_j^x$ for $j\in\{3,4\}$ are not reachable if either $h_1^2,h_3^2\neq 0$ or $h_1^2=h_3^2=0$  and either $h_2^2,h_4^2\neq 0$ or $h_2^2=h_4^2=0$.
Inserting the symmetries $S_{z_1,z_2,m}$ in Eq. (\ref{EqSep}) one obtains
\begin{eqnarray}\label{Eqsepaacc}&\sum_{z_1,z_2,m} p_{z_1,z_2,m}  \left(
    \begin{array}{cccc}
       & |z_1|^{2(-1)^m}/2&  h_1^2e^{-i2\phi_1(-1)^m} \\
     &h_1^2e^{i2\phi_1(-1)^m}   & 1/(2|z_1|^{2(-1)^m})\\
     \end{array}
  \right)\otimes
\left(
    \begin{array}{cccc}
       &  |z_2|^{2(-1)^m}/2&  h_2^2e^{-i2\phi_2(-1)^m}  \\
     & h_2^2e^{i2\phi_2(-1)^m}  &1/(2|z_2|^{2(-1)^m})\\
     \end{array}
  \right)
\\\nonumber &\otimes\left(
    \begin{array}{cccc}
       & h_3^{1+2m} |z_1|^{2(-1)^m}&   |h_3^2|e^{-i(2\phi_1-\theta_3)(-1)^m} \\
     &|h_3^2|e^{i(2\phi_1-\theta_3)(-1)^m}& h_3^{3-2m} 1/|z_1|^{2(-1)^m}\\
     \end{array}
  \right)\otimes
\left(
    \begin{array}{cccc}
       &  h_4^{1+2m} |z_2|^{2(-1)^m}&   |h_4^2|e^{-i(2\phi_2-\theta_4)(-1)^m} \\
     &|h_4^2|e^{i(2\phi_2-\theta_4)(-1)^m}   & h_4^{3-2m} 1/|z_2|^{2(-1)^m}\\
     \end{array}
  \right)\\\nonumber &=
r G_1^x\otimes G_2^x \otimes G_3\otimes G_4,\end{eqnarray}
where $z_j=|z_j|e^{i\phi_j}$ for $j\in\{1,2\}$ and $h_l^2=|h_l^2|e^{i\theta_l}$ for $l\in\{3,4\}$. Let us first show that states of the form $h_1^x\otimes h_2^x\otimes h_3^x\otimes h_4^x|\Psi_{aacc}\rangle$ with either $h_1^2,h_3^2\neq 0$ or $h_1^2=h_3^2=0$  and either $h_2^2,h_4^2\neq 0$ or $h_2^2=h_4^2=0$ are in $MES_4$. Considering the matrix elements $\ket{jklw}\bra{jkw}$ of Eq. (\ref{Eqsepaacc}) for $\{j,k,l,w\}=\{0,0,1,1\}$ with $j\neq l$ and $k\neq w$ leads to
\bea
1/4=rg_3^{1+2s}g_4^{1+2t} \,\,\, \forall s,t\in\{0,1\}.
\eea
Considering this equation for different values of $s$ and $t$ and using the normalization condition, i.e. $\tr(G_i)=1$ it is easy to show that $g_3^1=g_3^3=g_4^1=g_4^3=1/2$. Note that this implies that $r=1$. Considering the matrix elements $\ket{1110}\bra{1110}$ and $\ket{0100}\bra{0100}$ of Eq. (\ref{Eqsepaacc}) one obtains that $\sum_{z_1,z_2,m} p_{z_1,z_2,m} |z_1|^{4(-1)^m}=\sum_{z_1,z_2,m} p_{z_1,z_2,m} 1/|z_1|^{4(-1)^m}=1$ and therefore $\sum_{z_1,z_2,m} p_{z_1,z_2,m} (|z_1|^{2}-1/|z_1|^{2})^2=0$ which can only be fulfilled if  $p_{z_1,z_2,m}=0$ for $|z_1|\neq 1$. Via an analogous argument one obtains that $p_{z_1,z_2,m}=0$ for $|z_2|\neq 1$. Thus, for these states only unitary symmetries could be possibly used in order to reach them via SEP transformations. Considering the matrix elements $\ket{1000}\bra{0000}$, $\ket{0010}\bra{0000}$ and $\ket{0010}\bra{1000}$ of Eq. (\ref{Eqsepaacc}) one obtains that
\bea
h_i^2\sum_{z_1,z_2,m} p_{z_1,z_2,m} e^{-i2\phi_1(-1)^m}= g_i^2\,\,\, \textrm{for}\,\,\, i\in\{1,3\}
\eea
and
\bea\label{aacc3}
h_1^2h_3^2= g_1^2g_3^2.
\eea
Hence, for $h_1^2=h_3^2=0$ it follows that $g_1^2=g_3^2=0$. For the case $h_1^2,h_3^2\neq 0$\footnote{Note that in this case $\sum_{z_1,z_2,m} p_{z_1,z_2,m} e^{-i2\phi_1(-1)^m}\neq 0$ as otherwise $g_i^2=0$ for $i\in\{1,3\}$ which would imply $h_1^2h_3^2=0$ due to Eq. (\ref{aacc3}).} one obtains that $h_1^2/h_3^2=g_1^2/g_3^2$ which together with Eq. (\ref{aacc3}) implies that $h_1^2=g_1^2$ and $h_3^2=g_3^2$. Analogously, one can show that for states of the form $h_1^x\otimes h_2^x\otimes h_3^x\otimes h_4^x|\Psi_{aacc}\rangle$ with either $h_2^2,h_4^2\neq 0$ or $h_2^2=h_4^2=0$ it has to hold that $h_2^2=g_2^2$ and $h_4^2=g_4^2$. Combining these results one obtains that $H_i^x=G_i^x$ $\forall i\in\{1,2,3,4\}$ and thus states of the form $h_1^x\otimes h_2^x\otimes h_3^x\otimes h_4^x|\Psi_{aacc}\rangle$ with either $h_1^2,h_3^2\neq 0$ or $h_1^2=h_3^2=0$  and either $h_2^2,h_4^2\neq 0$ or $h_2^2=h_4^2=0$ are in $MES_4$.\\
Let us now show that states of the form $h_1^x\otimes h_2^x\otimes h_3\otimes h_4|\Psi_{aacc}\rangle$ with $h_j\neq h_j^x$ for $j\in\{3,4\}$ and either $h_1^2,h_3^2\neq 0$ or $h_1^2=h_3^2=0$  and either $h_2^2,h_4^2\neq 0$ or $h_2^2=h_4^2=0$ are not reachable.
Considering the matrix elements $\ket{jklw}\bra{jkw}$ of Eq. (\ref{Eqsepaacc}) for $\{j,k,l,w\}=\{0,0,1,1\}$ with $j\neq l$ and $k\neq w$ leads to
\bea\label{aacc4}
 \tilde{p}_0 h_3^{1+2s}h_4^{1+2t}+\tilde{p}_1 h_3^{3-2s}h_4^{3-2t}=rg_3^{1+2s}g_4^{1+2t},
\eea
where $s,t\in\{0,1\}$ and $\sum_{z_1,z_2} p_{z_1,z_2,m}=\tilde{p}_m$. Adding the equations for all possible combinations of $s$ and $t$  one obtains that $r=1$. From Eq. (\ref{aacc4}) for $s=0$ ($t=0$) and $t\in\{0,1\}$ ($s\in\{0,1\}$) resp. one obtains that
\bea\label{aacc6}
 \tilde{p}_0 h_3^{1}+\tilde{p}_1 h_3^{3}=g_3^{1}
\eea
and
\bea\label{aacc7}
\tilde{p}_0 h_4^{1}+\tilde{p}_1 h_4^{3}=g_4^{1}.
\eea
Using these equations, as well as Eq. (\ref{aacc4}) for $s=t=0$ and $\tilde{p}_0 +\tilde{p}_1=1$ it is easy to show that either $\tilde{p}_0\tilde{p}_1=0$ (solution (i)) or $h_3^{1}= h_3^{3}$ (solution (ii)) or $h_4^{1}= h_4^{3}$ (solution (iii)).
For solution (i) either $\tilde{p}_0=0$ or $\tilde{p}_1=0$. Note that these two possibilities coincide up to LUs and as we did not use $\sigma_x^{\otimes 4}$ to fix the standard form we only have to consider  transformation with $\tilde{p}_1=0$, i.e. transformations using the symmetries $(P_{z_1}\otimes P_{z_2})^{\otimes 2}$.  For $\tilde{p}_1=0$ it immediately follows from Eq. (\ref{aacc6}) and Eq. (\ref{aacc7}) that $h_i^1=g_i^1$ for $i\in\{3,4\}$. Due to the normalization condition, i.e. $\tr(H_i)=\tr(G_i)$ we also have that $h_i^3=g_i^3$ for $i\in\{3,4\}$. Analogously to before one can show that in this case $p_{z_1,z_2,m}=0$ for $|z_1|\neq 1$ and/or $|z_2|\neq 1$.
Considering the matrix elements $\ket{1000}\bra{0000}$, $\ket{0010}\bra{0000}$ and $\ket{0010}\bra{1000}$ of Eq. (\ref{Eqsepaacc}) one obtains that
\bea
h_i^2\sum_{z_1,z_2} p_{z_1,z_2,0} e^{-i2\phi_1}= g_i^2\,\,\, \textrm{for} \,\,\,i\in\{1,3\}
\eea
and
\bea
h_1^{2}h_3^2= g_1^{2}g_3^2.
\eea
Using that $h_1^2, g_1^2\geq 0$ and a similar argument as before, one can show that $h_i^2=g_i^2$ for $i\in\{1,3\}$ and analogously for party $2$ and $4$. Thus, the symmetries $(P_{z_1}\otimes P_{z_2})^{\otimes 2}$ do not allow to reach states of the form $h_1^x\otimes h_2^x\otimes h_3\otimes h_4|\Psi_{aacc}\rangle$ with $h_j\neq h_j^x$ for $j\in\{3,4\}$ and either $h_1^2,h_3^2\neq 0$ or $h_1^2=h_3^2=0$  and either $h_2^2,h_4^2\neq 0$ or $h_2^2=h_4^2=0$.
Let us consider next solution (ii), i.e. $h_3^{1}= h_3^{3}$. Note that due to the normalization condition we have that $h_3^{1}= h_3^{3}=1/2$. Thus, from Eq. (\ref{aacc6}) one obtains that $g_3^{1}=1/2$ and therefore also that $g_3^{3}=1/2$. Considering the matrix elements $\ket{1110}\bra{1110}$, $\ket{0100}\bra{0100}$, $\ket{1011}\bra{1011}$ and $\ket{0001}\bra{0001}$ of Eq. (\ref{Eqsepaacc}) one obtains that
\begin{eqnarray}
\sum_{z_1,z_2} p_{z_1,z_2,0} |z_1|^{4}h_4^{1+kl}+\sum_{z_1,z_2} p_{z_1,z_2,1} 1/|z_1|^{4}h_4^{3-2k}=g_4^{1+2k}
\end{eqnarray}
and
\begin{eqnarray}
\sum_{z_1,z_2} p_{z_1,z_2,0} 1/|z_1|^{4}h_4^{1+2l}+\sum_{z_1,z_2} p_{z_1,z_2,1} |z_1|^{4}h_4^{3-2l}=g_4^{1+2l},
\end{eqnarray}
where $k,l\in\{0,1\}$. Adding these equations for $l=0$, $l=1$, $k=0$ and  $k=1$ one obtains that $\sum_{z_1,z_2,m} p_{z_1,z_2,m} (|z_1|^{2}-1/|z_1|^{2})^2=0$ and therefore $p_{z_1,z_2,m}=0$ for $|z_1|\neq 1$. Note that $h_3^{1}= h_3^{3}=1/2$ and $h_3\neq h_3^x$ implies that $\Im(h_3^2)\neq 0$. Note further that since for the class of states we are discussing here it also has to hold that either $h_1^2,h_3^2\neq 0$ or $h_1^2=h_3^2=0$, we also have that $h_1^2\neq 0$.
Considering the matrix elements $\ket{1001}\bra{0001}$ and $\ket{1100}\bra{0100}$ of Eq. (\ref{Eqsepaacc}) one obtains that
\bea\label{aacc8}
h_1^2(\sum_{z_1,z_2} p_{z_1,z_2,0}e^{-i2\phi_1}h_4^1+p_{z_1,z_2,1}e^{i2\phi_1}h_4^3)=g_1^2h_4^1
\eea
and
\bea\label{aacc9}
h_1^2(\sum_{z_1,z_2} p_{z_1,z_2,0}e^{-i2\phi_1}h_4^3+p_{z_1,z_2,1}e^{i2\phi_1}h_4^1)=g_1^2h_4^3.
\eea
Taking the imaginary part of these equations one obtains that $\sum_{z_1,z_2} p_{z_1,z_2,0}\sin(2\phi_1)h_4^1=\sum_{z_1,z_2} p_{z_1,z_2,1}\sin(2\phi_1)h_4^3$ and $\sum_{z_1,z_2} p_{z_1,z_2,0}\sin(2\phi_1)h_4^3=\sum_{z_1,z_2} p_{z_1,z_2,1}\sin(2\phi_1)h_4^1$. Thus, either $h_4^3=h_4^1=1/2$ (solution (a)) or $\sum_{z_1,z_2} p_{z_1,z_2,0}\sin(2\phi_1)=\sum_{z_1,z_2} p_{z_1,z_2,1}\sin(2\phi_1)=0$ (solution (b)). For solution (a) one can show analogously to before that $g_4^3=g_4^1=1/2$ and $p_{z_1,z_2,m}=0$ for $|z_2|\neq 1$. Thus, in case $h_3^3=h_3^1=h_4^3=h_4^1=1/2$ only unitary symmetries can be used for SEP transformations. Let us first discuss solution (a) in more detail. Note that since we are discussing states here with the property that $h_4\neq h_4^x$ and either $h_2^2,h_4^2\neq 0$ or $h_2^2=h_4^2=0$ it follows from $h_4^3=h_4^1=1/2$ that $\Im(h_4^2)\neq 0$ and therefore also $h_2^2\neq 0$.  Considering the matrix elements $\ket{0010}\bra{1000}$ and $\ket{0001}\bra{0100}$ of Eq. (\ref{Eqsepaacc}) and taking the imaginary part one obtains that
\begin{eqnarray}\label{aacc10}
h_i^2 \Im(h_j^2) (\tilde{p}_0-\tilde{p}_1)=g_i^2 \Im(g_j^2).
\end{eqnarray}
where $\{i,j\}\in\{\{1,3\},\{2,4\}\}$. Using the matrix elements $\ket{0110}\bra{1001}$, $\ket{0011}\bra{1100}$, $\ket{1001}\bra{0110}$, and $\ket{1100}\bra{0011}$ of Eq. (\ref{Eqsepaacc}) one can easily show that
\bea
h_1^2 h_2^2 \Im(h_3^2) \Im(h_4^2)=g_1^2g_2^2 \Im(g_3^2)  \Im(g_4^2).
\eea
As $h_1^2,h_2^2,\Im(h_3^2),\Im(h_4^2)\neq 0$ one obtains that $(\tilde{p}_0-\tilde{p}_1)^2=1$ and therefore either $\tilde{p}_0=0$ or $\tilde{p}_1=0$. As we have already pointed out these two cases coincide up to LUs. Moreover, we have already shown that transformations using just symmetries with $m=0$ do not allow to reach states of the form $h_1^x\otimes h_2^x\otimes h_3\otimes h_4|\Psi_{aacc}\rangle$ with $h_j\neq h_j^x$ for $j\in\{3,4\}$ and either $h_1^2,h_3^2\neq 0$ or $h_1^2=h_3^2=0$  and either $h_2^2,h_4^2\neq 0$ or $h_2^2=h_4^2=0$ (see solution (i)). \\
Let us now proceed with solution (b), i.e. $\sum_{z_1,z_2} p_{z_1,z_2,0}\sin(2\phi_1)=\sum_{z_1,z_2} p_{z_1,z_2,1}\sin(2\phi_1)=0$. Adding  Eq. (\ref{aacc8}) and Eq. (\ref{aacc9}) one obtains that
\begin{equation}\label{aacc14}
h_1^2\sum_{z_1,z_2,m} p_{z_1,z_2,m}\cos(2\phi_1)=g_1^2.\end{equation}
Considering the matrix elements $\ket{0110}\bra{0100}$ and $\ket{0110}\bra{1100}$,  $\ket{0011}\bra{0001}$, and $\ket{0011}\bra{1001}$ of Eq. (\ref{Eqsepaacc}) leads to
\begin{eqnarray}
\label{aacc11}
&\Im(h_3^2)\sum_{z_1,z_2} (p_{z_1,z_2,0} h_4^{1+2l}-p_{z_1,z_2,1}h_4^{3-2l})\cos(2\phi_1)=\Im(g_3^2)g_4^{1+2l}
\end{eqnarray}
and
\begin{eqnarray}\label{aacc12}
h_1^2 \Im(h_3^2) (\tilde{p}_0h_4^{1+2l}-\tilde{p}_1h_4^{3-2l})=g_1^2 \Im(g_3^2)g_4^{1+2l},
\end{eqnarray}
where $l\in\{0,1\}$. Using Eq. (\ref{aacc7}), $\tilde{p}_0+\tilde{p}_1=1$ and the normalization condition it is easy to show that
\bea\label{aacc13}
\tilde{p}_0 h_4^{3}+\tilde{p}_1 h_4^{1}=g_4^{3}.
\eea
Using this equation and Eq. (\ref{aacc11}) for $l=0$ and $l=1$, as well as Eq. (\ref{aacc7}) one obtains that for $\sum_{z_1,z_2} (p_{z_1,z_2,0} h_4^{3}-p_{z_1,z_2,1}h_4^{1})\cos(2\phi_1)\neq 0$
\bea\label{aacc15}
\frac{\bar{p}_0h_4^{1}-\bar{p}_1h_4^{3}}{\bar{p}_0h_4^{3}-\bar{p}_1h_4^{1}}=\frac{\tilde{p}_0h_4^{1}+\tilde{p}_1h_4^{3}}{\tilde{p}_0h_4^{3}+\tilde{p}_1h_4^{1}},
\eea
where here and in the following $\sum_{z_1,z_2} p_{z_1,z_2,m} \cos(2\phi_1)=\bar{p}_m$. Note that here we used that $\Im(g_3^2)=0$ would imply that $\bar{p}_0h_4^{3}-\bar{p}_1h_4^{1}=0$, since $\Im(h_3^2)\neq 0$. Note further that by using Eq. (\ref{aacc11}) it is easy to show that for $\bar{p}_0h_4^{3}-\bar{p}_1h_4^{1}=0$ it also has to hold that $\bar{p}_0h_4^{1}-\bar{p}_1h_4^{3}=0$  which can only be fulfilled by $\bar{p}_0=\bar{p}_1=0$ or $h_4^{1}=h_4^{3}=1/2$ and $\bar{p}_0=\bar{p}_1$. Note that $h_4^{1}=h_4^{3}=1/2$ corresponds to solution (a) and has already been discussed. Using Eq. (\ref{aacc14})  one obtains for $\bar{p}_0=\bar{p}_1=0$ that $g_1^2=0$ and thus Eq. (\ref{aacc12}) leads to
\begin{eqnarray}
\tilde{p}_0h_4^{1+2l}-\tilde{p}_1h_4^{3-2l}=0,
\end{eqnarray}
for $l=0,1$. These equations can only be fulfilled for $h_4^{1}=h_4^{3}=1/2$ and $\tilde{p}_0=\tilde{p}_1$. Note that we have already shown that for $h_4^{1}=h_4^{3}=1/2$ either $\tilde{p}_0=0$ or $\tilde{p}_1=0$. Thus, $\bar{p}_0=\bar{p}_1=0$ is not a possible solution. So let us proceed with the case $\bar{p}_0h_4^{3}-\bar{p}_1h_4^{1} \neq 0$. Using Eq. (\ref{aacc15}) one can show that either $h_4^{1}=h_4^{3}=1/2$  or \bea \label{aacc16}
\bar{p}_0\tilde{p}_1+\bar{p}_1\tilde{p}_0=0.\eea
Note that $h_4^{1}=h_4^{3}=1/2$ corresponds to solution (a) and we will therefore proceed with the latter case.
Note further that $\bar{p}_1=0$ implies that either $\tilde{p}_1=0$ or $\bar{p}_0=0$. These two solutions have been already discussed. Thus, we only have to consider here the case $\bar{p}_1, \tilde{p}_1\neq 0$ which leads to
\bea \label{aacc17}
\frac{\bar{p}_0}{\bar{p}_1}=-\frac{\tilde{p}_0}{\tilde{p}_1}.\eea
Using Eqs. (\ref{aacc11}) and (\ref{aacc12}) for $l=0$ and $l=1$, as well as Eq. (\ref{aacc14}) one can show that for $\Im(g_3^2),g_1^2\neq 0$ \footnote{ Note that for $\Im(g_3^2)=0$ and/or $g_1^2= 0$ it has to hold that $\tilde{p}_0h_4^{1+2l}-\tilde{p}_1h_4^{3-2l}=0$ for $l=0,1$, which we have already discussed.}
\bea
\tilde{p}_0^2-\tilde{p}_1^2=\bar{p}_0^2-\bar{p}_1^2.
\eea
From this equation and Eq. (\ref{aacc17}) it follows that $\tilde{p}_1=\mp \bar{p}_1$ and $\tilde{p}_0=\pm \bar{p}_0$. In order to get rid of the ambiguity of the sign we use Eq. (\ref{aacc14}), as well as the fact that according to our standard form  $g_1^2,h_1^2 \geq 0 $ and we choose $\tilde{p}_0\geq \tilde{p}_1$  \footnote{Note that we can always choose this order because $\tilde{p}_0\leq \tilde{p}_1$ lead to transformations that are LU-equivalent to the ones obtained for $\tilde{p}_0\geq \tilde{p}_1$  as we did not use $\sigma_x$ to fix our standard form.}.  Thus, with our choice we obtain $\tilde{p}_1=- \bar{p}_1$ and $\tilde{p}_0= \bar{p}_0$.
Using now Eq. (\ref{aacc8}) and Eq. (\ref{aacc9}), as well as Eqs. (\ref{aacc7}) and (\ref{aacc13})  one obtains for $\tilde{p}_0h_4^{3}-\tilde{p}_1h_4^{1}\neq 0$ that either $\tilde{p}_0\tilde{p}_1=0$ or $h_4^{3}=h_4^{1}=1/2$. Note that both cases have already been discussed. Note further that for $\tilde{p}_0h_4^{3}-\tilde{p}_1h_4^{1}= 0$ it has to hold that $g_1^2=0$ (see Eq. (\ref{aacc9})) and therefore it follows from Eq. (\ref{aacc8}) that $\tilde{p}_0h_4^{1}-\tilde{p}_1h_4^{3}= 0$. Note that we have discussed already the case $\tilde{p}_0h_4^{1+2l}-\tilde{p}_1h_4^{3-2l}= 0$ for $l=0,1$. Thus, states of the form $h_1^x\otimes h_2^x\otimes h_3\otimes h_4|\Psi_{aacc}\rangle$ with $h_j\neq h_j^x$ for $j\in\{3,4\}$, $h_3^1=h_3^2=1/2$ and either $h_1^2,h_3^2\neq 0$ or $h_1^2=h_3^2=0$  and either $h_2^2,h_4^2\neq 0$ or $h_2^2=h_4^2=0$ (which correspond to solution (ii)) can not be reached via LOCC.
Solution (iii), i.e. $h_4^1=h_4^3=1/2$ can be treated completely analogous to solution (ii).\\
To summarize, we have shown that for states of the form $h_1^x\otimes h_2^x\otimes h_3\otimes h_4|\Psi_{aacc}\rangle$ with $h_j\neq h_j^x$ for $j\in\{3,4\}$ and either $h_1^2,h_3^2\neq 0$ or $h_1^2=h_3^2=0$  and either $h_2^2,h_4^2\neq 0$ or $h_2^2=h_4^2=0$ only symmetries with $m=0$ (or $m=1$) could have been possibly used for SEP transformations, but these symmetries do not allow to obtain these states via SEP as has been shown for solution (i). Thus, these states are in $MES_4$.
\end{proof}
In the following we will prove Lemma \ref{convlemmaaacc}. In order to improve the readability we repeat the lemma here again.\\ \\
\noindent\textit{ {\bf Lemma \ref{convlemmaaacc}.} The only states  in the SLOCC classes $G_{abcd}$ where  $a^2= d^2\neq\pm c^2$, $c^2= b^2$ and $a, c\neq 0 $ that are convertible via LOCC are given by $\otimes g_i |\Psi_{aacc}\rangle$ where $\otimes G_i$ obeys the following condition. There exists a non-trivial unitary symmetry $S_{z_1,z_2,m}=\otimes_i s_i^{z_1,z_2,m}\in S(\Psi_{aacc})$ such that three out of the four operators $G_i$ commute with the corresponding
operator $s_i^{z_1,z_2,m}$ and one operator $G_j$ is arbitrary.}
\begin{proof}
The only states that can be reached are given by the ones in Lemma \ref{lemmaaacc}.  We will first show that states of the form $h_1^x\otimes h_2^x\otimes h_3\otimes h_4 \ket{\Psi_{adcc}}$ where $h_i\neq h_i^x$ and $h_j=h_j^x$ for $\{i,j\} = \{3,4\}$ and either $h_k^2=h_j^2=0$ or $h_k^2, h_j^2\neq 0$ for $\{j,k\} \in \{\{1,3\},\{2,4\}\}$ can only be possibly reached by states of the form $g_1^x\otimes g_2^x\otimes g_3\otimes g_4 \ket{\Psi_{adcc}}$ where $g_j=g_j^x$ and either $g_k^2=g_j^2=0$ or $g_k^2, g_j^2\neq 0$. In order to see this recall that in the proof of Lemma \ref{lemmaaacc} we have shown that states with $h_k=h_k^x$ and $h_l=h_l^x$ for some $\{k,l\} \in \{\{1,3\},\{2,4\}\}$  and either  $h_k^2=h_l^2=0$ or $h_k^2, h_l^2\neq 0$ can only be obtained from states with $h_k=g_k$ and $h_l=g_l$. In particular, we have shown that $h_i^1=h_i^3=1/2$ for some $i\in\{3,4\}$ implies that $g_i^1=g_i^3=1/2$ (see Eqs. (\ref{aacc6})  and (\ref{aacc7})). Moreover, we have shown that for $h_k^2,h_l^2\in\R$ and either $h_k^2=h_l^2=0$ or $h_k^2, h_l^2\neq 0$ for some $\{k,l\} \in \{\{1,3\},\{2,4\}\}$ it has to hold that $h_k^2=g_k^2$ and $h_l^2=g_l^2$.
To convert these states one can use the POVM $\{\sqrt{p} (h_ig_i^{-1}\otimes \one^{\otimes 3}),\sqrt{1-p} (h_i\sigma_x g_i^{-1}\otimes \sigma_x^{\otimes 3})\}$, where $\Re (h_i^2)=\Re (g_i^2)$, $ (2p-1)\Im (h_i^2)=\Im (g_i^2)$ and $(2 p-1) (h_i^1-1/2)=(g_i^1-1/2)$ for a properly chosen $p\geq 0$.\\
Let us next show that states with $h_w^2=0$ for some $w\in \{1,2,3,4\}$  can only  be reached from states with  $g_w^2=0$.
In order to see this consider the matrix element $\ket{1000}_{w,j,k,l}\bra{0000}$ of Eq. (\ref{Eqsepaacc}) for $j\neq k \neq l\neq j$, $j,k,l\neq w$ and $j,k,l\in\{1,2,3,4\}$ and use that $g_i^1\neq 0$ $\forall i\in\{1,2,3,4\}$.
Thus, states of the form  $h_1^{x}\otimes h_2^x\otimes h_3\otimes h_4 \ket{\Psi_{aacc}}$ with $h_i\neq d_i$ and $h_j=d_j$ for some $\{i,j\} \in \{\{1,3\},\{2,4\}\}$ can only be obtained from states of the form $g_1^{x}\otimes g_2^x\otimes g_3\otimes g_4 \ket{\Psi_{aacc}}$ with $h_j=d_j$.
Except for the states of the form $h_1^x\otimes h_2^x\otimes h_3\otimes h_4|\Psi_{aacc}\rangle$ where $h_i=d_i\neq \one/2$, $h_k=h_k^x$ and $h_j=\one/2$ for $\{i,j\}, \{k,l\} \in \{\{1,3\},\{2,4\}\}$, $k,l\neq i,j$, $k\in\{3,4\}$ and either $h_k^2=h_l^2=0$ or $h_k^2, h_l^2\neq 0$, they can be reached by $\{\sqrt{p} h_ig_i^{-1},\sqrt{1-p} h_i\sigma_z g_i^{-1}\}$ which corresponds to a valid POVM for $h_i^3=g_i^3$,  $h_i^1=g_i^1$ and $(2 p -1 ) h_i^2=g_i^2$. Note that in case of the second outcome party $j$ has to apply $\sigma_z$ in order to make the transformation deterministic. Note further that again for any $G_i$ one can find a $H_i$ (which is not LU-equivalent) and a value of  $p$ such that these conditions are fulfilled. States of the form $h_1^x\otimes h_2^x\otimes h_3\otimes h_4|\Psi_{aacc}\rangle$ where $h_i=d_i\neq \one/2$, $h_k=h_k^x$ and $h_j=\one/2$ for $\{i,j\}, \{k,l\} \in \{\{1,3\},\{2,4\}\}$, $k,l\neq i,j$, $k\in\{3,4\}$ and either $h_k^2=h_l^2=0$ or $h_k^2, h_l^2\neq 0$ can be obtained from states of the form $g_1^x\otimes g_2^x\otimes g_3\otimes g_4|\Psi_{aacc}\rangle$ where $g_i=\tilde{d}_i$, $g_k=g_k^x$, $g_j=\one/2$ and either $g_k^2=g_l^2=0$ or $g_k^2, g_l^2\neq 0$\footnote{Note that we used here again that states with $h_k=h_k^x$ and $h_l=h_l^x$ for some $\{k,l\} \in \{\{1,3\},\{2,4\}\}$  and either  $h_k^2=h_l^2=0$ or $h_k^2, h_l^2\neq 0$ can only be reached from states with $h_k=g_k$ and $h_l=g_l$.}.
To this end one can use the POVM $\{\sqrt{p} d_i\tilde{d}_i^{-1},\sqrt{1-p} d_i\sigma_y \tilde{d}_i^{-1}\}$ where $d_i=diag (\sqrt{1/2+\bar{h}_i^1},\sqrt{1/2-\bar{h}_i^1})$,  $\tilde{d}_i=diag (\sqrt{1/2+(2 p -1 ) \bar{h}_i^1},\sqrt{1/2-(2 p -1 ) \bar{h}_i^1})$  and in case of outcome 1 (2) all other parties do nothing (party $j$ aplies $\sigma_y$ and all parties except parties $i,j$ apply $\sigma_x$) respectively.
\end{proof}

\section*{\label{appD} Appendix D: Proof of Lemmas \ref{lemmaaaaa} and \ref{convlemmaaaaa}}

In this section we prove Lemma \ref{lemmaaaaa} and Lemma \ref{convlemmaaaaa}, i.e. we show which are the reachable and convertible states in the SLOCC class $G_{abcd}$ where  $a^2= d^2$, $c^2= b^2$, $a^2= -c^2$ and $a \neq 0$.
\\ \\
\textit{ Proof of Lemma \ref{lemmaaaaa}.}
The structure of the proof is the following. We will first show that the states given in this lemma are indeed reachable by constructing a LOCC protocol that allows to reach them. Then we we will use Eq. (\ref{EqSep}) to show that all remaining states can not be reached via SEP transformations which implies that they can not be obtained via LOCC from some other state. \\ Note first that $X^m[(\sigma_z)^k\otimes(\sigma_z)^l\otimes(\sigma_z)^k\otimes(\sigma_z)^l]\in S(\Psi)$, where $X=\sigma_x^{\otimes 4}$ and $m, k, l\in\{0,1\}$. As this symmetry is also contained in the symmetry of the SLOCC classes $G_{abcd}$ where  $a^2= d^2\neq\pm c^2$ and $c^2= b^2$ we have already provided the LOCC protocols that use this symmetry in the proof of Lemma \ref{lemmaaacc} (see Appendix C). In particular, we used these symmetries to construct LOCC protocols that allow to reach states of the form  $h_1^x\otimes h_2^x\otimes h_3\otimes h_4^x|\Psi\rangle$  where $h_3\neq h_3^x$ and $h_1^x\otimes h_2^x\otimes h_3^x\otimes h_4|\Psi\rangle$ where $h_4\neq h_4^x$ and $\one/2\otimes h_2^x\otimes h_3\otimes h_4|\Psi\rangle$ where $h_3\neq d_3$ and $h_1^x\otimes \one/2\otimes h_3\otimes h_4|\Psi\rangle$ where $h_4\neq d_4$ and $h_1^x\otimes h_2^x\otimes d_3\otimes h_4|\Psi\rangle$ where $h_1^x\neq \one/2$  and $h_1^x\otimes h_2^x\otimes h_3\otimes d_4|\Psi\rangle$ where $h_2^x\neq \one/2$.\\
Note that $\one\otimes \sigma_x\otimes \sigma_z\otimes \sigma_y\equiv S_1$ and $\sigma_x\otimes \one \otimes \sigma_y\otimes \sigma_z\equiv S_2$ are (up to some proportionality factor) elements of $S(\Psi)$. We will use these symmetries to construct LOCC protocols that allow to reach the remaining states given in Lemma \ref{lemmaaaaa}. In the following we will use the notation $S_0\equiv \one^{\otimes 4}$. States of the form $h_1^x\otimes h_2^x\otimes d_3\otimes h_4|\Psi\rangle$ where $h_4\neq h_4^y$  can be obtained from states of the form $g_1^x\otimes g_2^x\otimes d_3\otimes g_4^y|\Psi\rangle$ via the POVM $\{1/\sqrt{2} h_4S_0(g_4^y)^{-1},1/\sqrt{2} h_4S_1(g_4^y)^{-1}\}$ (where $\Im(h_4^2)=g_4^2$) which is acting non-trivially on party $4$. Analogously, using $S_0$ and $S_2$ one can construct a POVM acting non-trivially on party $3$ that allows to reach states of the form $h_1^x\otimes h_2^x\otimes h_3\otimes d_4|\Psi\rangle$ where $h_3\neq h_3^y$ from states of the form $g_1^x\otimes g_2^x\otimes g_3^y\otimes d_4|\Psi\rangle$.
States of the form $h_1^x\otimes h_2^x\otimes h_3^y\otimes h_4|\Psi\rangle$ where $h_4\neq d_4$  can for example be reached from states of the form $g_1^x\otimes g_2^x\otimes g_3^y\otimes d_4|\Psi\rangle$. In particular, the POVM $\{1/\sqrt{2} h_4S_0(d_4)^{-1}, 1/\sqrt{2}h_4S_2(d_4)^{-1}\}$ (with $d_4=diag(\sqrt{h_4^1},\sqrt{h_4^3})$) which is acting
non-trivially on party $4$ performs this task. Similarly one can construct a LOCC protocol that allows to reach states of the form $h_1^x\otimes h_2^x\otimes h_3\otimes h_4^y|\Psi\rangle$ where $h_3\neq d_3$ using the symmetries $S_0$ and $S_1$.
\\
We will next show that states of the form $h_1^x\otimes h_2^x\otimes h_3^x\otimes h_4^x|\Psi\rangle$  where $h_i^2\neq 0$ $\forall i\in\{1,2,3,4\}$ and $\one/2\otimes h_2^x\otimes d_3\otimes h_4^y|\Psi\rangle$  where $h_j^2\neq 0$ for $j\in\{2,4\}$ and $d_3\neq \one$ and $h_1^x\otimes \one/2\otimes g_3^y\otimes d_4|\Psi\rangle$  where $h_k^2\neq 0$ for $k\in\{1,3\}$ and $d_4\neq \one$ and $h_1^x\otimes h_2^x\otimes h_3\otimes h_4|\Psi\rangle$  where $h_i^2\neq 0$ $\forall i\in\{1,2,3,4\}$ and $h_l\neq h_l^x, h_l^y$ for $l\in\{3,4\}$ and $|\Psi\rangle$, i.e. all states that are not given by Lemma \ref{lemmaaaaa}, are not reachable via SEP.\\
Note first that every element of $S(\Psi)$ can be written as
\begin{eqnarray}
&[ \left(
    \begin{array}{cc}
      y_1 & 0 \\
      0 & x_1 \\
    \end{array}
  \right)\sigma_x^n\otimes\left(
    \begin{array}{cc}
      y_2 & 0 \\
      0 & x_2\\
    \end{array}
  \right)\otimes\left(
    \begin{array}{cc}
      1/y_1 & 0\\
      0 & (-1)^n/x_1 \\
    \end{array}
  \right)\sigma_x^n
\otimes\left(
    \begin{array}{cc}
      1/y_2 & 0 \\
      0 & (-1)^n/x_2 \\
    \end{array}\right)](\sigma_x^{\otimes 4})^m,
\end{eqnarray}
where $x_1, x_2, y_1, y_2\in\C\backslash 0$ and $m, n\in\{0,1\}$.
Inserting the symmetries into Eq. (\ref{EqSep}) one obtains that
\begin{eqnarray}\label{Eqsepaaaa}&\sum_{x_1,x_2,y_1,y_2,m,n} p_{m,n,x_1,x_2,y_1,y_2} \left(
    \begin{array}{cccc}
       & |y_1|^{2(-1)^{(m+n)}}/2&  h_1^2|y_1||x_1|e^{i(\alpha_1-\beta_1)(-1)^{(m+n)}} \\
     &h_1^2|y_1||x_1|e^{-i(\alpha_1-\beta_1)(-1)^{(m+n)}}    & |x_1|^{2(-1)^{(m+n)}}/2\\
     \end{array}
  \right)\\\nonumber &\otimes
\left(
    \begin{array}{cccc}
       & |y_2|^{2(-1)^{m}}/2&  h_2^2|y_2||x_2|e^{i(\alpha_2-\beta_2)(-1)^{m}} \\
     &h_2^2|y_2||x_2|e^{-i(\alpha_2-\beta_2)(-1)^{m}}    & |x_2|^{2(-1)^{m}}/2\\
     \end{array}
  \right)
\\\nonumber &\otimes\left(
    \begin{array}{cccc}
       & h_3^{[1+2(m-n)^2]}/|y_1|^{2(-1)^{(m+n)}}&   |h_3^2|(-1)^n/(|y_1||x_1|)e^{-i(\alpha_1-\beta_1-\theta_3)(-1)^{(m+n)}} \\
     &|h_3^2|(-1)^n/(|y_1||x_1|)e^{i(\alpha_1-\beta_1-\theta_3)(-1)^{(m+n)}}& h_3^{[3- 2(m-n)^2]}/|x_1|^{2(-1)^{(m+n)}}\\
     \end{array}
  \right)\\\nonumber &\otimes
\left(
    \begin{array}{cccc}
       &  h_4^{1+2m}/|y_2|^{2(-1)^m}&   |h_4^2|(-1)^n/(|y_2||x_2|) e^{-i(\alpha_2-\beta_2-\theta_4)(-1)^m} \\
     &|h_4^2|(-1)^n/(|y_2||x_2|) e^{i(\alpha_2-\beta_2-\theta_4)(-1)^m}   & h_4^{3-2m}/|x_2|^{2(-1)^m}\\
     \end{array}
  \right)\\\nonumber &=
r G_1^x\otimes G_2^x \otimes G_3\otimes G_4,\end{eqnarray}
where $m, n\in\{0,1\}$ $x_j=|x_j|e^{i\alpha_j}$ and  $y_j=|y_j|e^{i\beta_j}$ for $j\in\{1,2\}$ and $h_l^2=|h_l^2|e^{i\theta_l}$ for $l\in\{3,4\}$.
We will first consider states of the form $h_1^x\otimes h_2^x\otimes h_3^x\otimes h_4^x|\Psi\rangle$  where $h_i^2\neq 0$ $\forall i\in\{1,2,3,4\}$. Considering the matrix elements $\ket{jkjk}\bra{jkjk}$ of Eq. (\ref{Eqsepaaaa}) for $j,k\in\{0,1\}$ leads to
\bea \label{eqdiagaaaa}
1/4=rg_3^{1+2s}g_4^{1+2t} \,\,\, \forall s,t\in\{0,1\}.
\eea
Additionally using the normalization condition, i.e. $\tr(G_i)=1$ $\forall i$, one obtains that $g_3^1=g_3^3=1/2$ and  $g_4^1=g_4^3=1/2$. Note that this implies that $r=1$. Considering the matrix elements $\ket{j0k0}\bra{j0k0}$ of Eq. (\ref{Eqsepaaaa}) for $j,k\in\{0,1\}$ and $j\neq k$ leads to
\begin{eqnarray}
&\sum_{x_1,x_2,y_1,y_2} (p_{0,0,x_1,x_2,y_1,y_2}+p_{1,1,x_1,x_2,y_1,y_2}) |y_1|^2/|x_1|^2+ (p_{0,1,x_1,x_2,y_1,y_2}+p_{1,0,x_1,x_2,y_1,y_2}) |x_1|^2/|y_1|^2=1
\end{eqnarray}
and
\begin{eqnarray}
&\sum_{x_1,x_2,y_1,y_2} (p_{0,0,x_1,x_2,y_1,y_2}+p_{1,1,x_1,x_2,y_1,y_2}) |x_1|^2/|y_1|^2+ (p_{0,1,x_1,x_2,y_1,y_2}+p_{1,0,x_1,x_2,y_1,y_2}) |y_1|^2/|x_1|^2=1.
\end{eqnarray}
Adding these equations and using that $\sum_{x_1,x_2,y_1,y_2,m,n} p_{m,n,x_1,x_2,y_1,y_2}=1$ one obtains that
\bea
\sum_{x_1,x_2,y_1,y_2,m,n} p_{m,n,x_1,x_2,y_1,y_2} (|x_1|/|y_1|-|y_1|/|x_1|)^2=0.
\eea
This condition can only be fulfilled if $p_{m,n,x_1,x_2,y_1,y_2} =0$ for $|y_1| \neq |x_1|$. Note that this implies that  Eq. (\ref{Eqsepaaaa}) is independent of $|y_1|$ and $|x_1|$. Analogously, one can show that $p_{m,n,x_1,x_2,y_1,y_2} =0$ for $|y_2| \neq |x_2|$ and therefore Eq. (\ref{Eqsepaaaa}) does not depend on  $|y_2|$ and $|x_2|$.
Considering the matrix elements $\ket{1010}\bra{0000}$, $\ket{0101}\bra{0000}$ and $\ket{1111}\bra{0000}$ of Eq. (\ref{Eqsepaaaa}) and using that for the here considered states $h_i^2\in\Re$ $\forall i\in\{1,2,3,4\}$ leads to
\begin{eqnarray}
&h_1^2h_3^2(\tilde{p}_{00}+\tilde{p}_{10}-\tilde{p}_{01}-\tilde{p}_{11})=g_1^2g_3^2\\ \nonumber
&h_2^2h_4^2(\tilde{p}_{00}+\tilde{p}_{10}-\tilde{p}_{01}-\tilde{p}_{11})=g_2^2g_4^2
\end{eqnarray}
and
\bea
h_1^2h_3^2h_2^2h_4^2=g_1^2g_3^2g_2^2g_4^2,
\eea
where here and in the following we use \bea\tilde{p}_{mn}=\sum_{x_1,x_2,y_1,y_2} p_{m,n,x_1,x_2,y_1,y_2}.\eea Note that as $g_1^2, g_2^2\in\R$ (according to our standard form) and $h_i^2\in\Re$ $\forall i\in\{1,2,3,4\}$ these equations imply that $g_3^2,g_4^2 \in \R$. As $h_i^2\neq 0$ $\forall i\in\{1,2,3,4\}$ one obtains that $(\tilde{p}_{00}+\tilde{p}_{10}-\tilde{p}_{01}-\tilde{p}_{11})^2=1$ and therefore either $\tilde{p}_{00}=\tilde{p}_{10}=0$ or $\tilde{p}_{01}=\tilde{p}_{11}=0$. Note that the first case leads to $h_1^2<0$ or $h_3^2<0$ which does not match our standard form. Thus, we have that $\tilde{p}_{01}=\tilde{p}_{11}=0$ and therefore
\bea\label{aaaa1}
h_1^2h_3^2=g_1^2g_3^2
\eea
and
\bea\label{aaaa2}
h_2^2h_4^2=g_2^2g_4^2.
\eea
Considering the matrix elements $\ket{1000}\bra{0000}$, $\ket{0000}\bra{0010}$, $\ket{0100}\bra{0000}$  and $\ket{0000}\bra{0001}$ of Eq. (\ref{Eqsepaaaa}) and using that $h_i^2\neq 0$ $\forall i\in\{1,2,3,4\}$ one can easily show that
\bea
h_i^2/h_j^2=g_i^2/g_j^2\,\,\, \textrm{for}\,\,\, \{i,j\}\in\{\{1,3\},\{2,4\}\}.
\eea
Using this equation, as well as Eq. (\ref{aaaa1}), Eq. (\ref{aaaa2}), and  $g_1^2, g_2^2>0$ one obtains that $h_i^2=g_i^2$ $\forall i\in\{1,2,3,4\}$. Thus, we have shown that $g_i=h_i^x$ $\forall i \in\{1,2,3,4\}$ and therefore states of the form $h_1^x\otimes h_2^x\otimes h_3^x\otimes h_4^x|\Psi\rangle$  are in $MES_4$.\\
Let us next show that states of the form $\one/2\otimes h_2^x\otimes d_3\otimes h_4^y|\Psi\rangle$  where $h_j^2\neq 0$ for $j\in\{2,4\}$ and $d_3\neq \one$ are not reachable via LOCC. Considering the matrix elements $\ket{jkjk}\bra{jkjk}$ of Eq. (\ref{Eqsepaacc}) for $\{j,k\}=\{0,1\}$ leads to
\bea \label{aaaa6}
1/2[h_3^{1+2j}(\tilde{p}_{00}+\tilde{p}_{11})+h_3^{3-2j}(\tilde{p}_{10}+\tilde{p}_{01})]=rg_3^{1+2j}g_4^{1+2k}.
\eea
Considering this equation for $k=0$ and $k=1$ (and fixed j) it is easy to see that $g_4^1=g_4^3$. Additionally using the normalization condition, i.e. $\tr(G_i)=1$ one obtains that $g_4^1=g_4^3=1/2$. Adding the equations for $j=0$ and $j=1$ and using again the normalization condition leads to $r=1$. Considering the matrix elements $\ket{ljlk}\bra{ljlk}$ of Eq. (\ref{Eqsepaaaa}) for $j,k,l\in\{0,1\}$ and $j\neq k$ leads to
\begin{eqnarray}\label{aaaa3}
&\sum_{x_1,x_2,y_1,y_2} [(p_{0,0,x_1,x_2,y_1,y_2}h_3^{1+2l} +p_{0,1,x_1,x_2,y_1,y_2}h_3^{3-2l}) |y_2|^2/|x_2|^2\\\nonumber & + (p_{1,0,x_1,x_2,y_1,y_2}h_3^{3-2l}+p_{1,1,x_1,x_2,y_1,y_2}h_3^{1+2l}) |x_2|^2/|y_2|^2]=g_3^{1+2l}
\end{eqnarray}
and
\begin{eqnarray}\label{aaaa4}
&\sum_{x_1,x_2,y_1,y_2} [(p_{0,0,x_1,x_2,y_1,y_2}h_3^{1+2l} +p_{0,1,x_1,x_2,y_1,y_2}h_3^{3-2l}) |x_2|^2/|y_2|^2\\ \nonumber &+ (p_{1,0,x_1,x_2,y_1,y_2}h_3^{3-2l}+p_{1,1,x_1,x_2,y_1,y_2}h_3^{1+2l}) |y_2|^2/|x_2|^2]=g_3^{1+2l}.
\end{eqnarray}
Adding these four equations leads to
\bea
\sum_{x_1,x_2,y_1,y_2,m,n} p_{m,n,x_1,x_2,y_1,y_2} (|x_2|/|y_2|-|y_2|/|x_2|)^2=0,
\eea
which can only be fulfilled if $p_{m,n,x_1,x_2,y_1,y_2} =0$ for $|y_2| \neq |x_2|$. Thus,  Eq. (\ref{Eqsepaaaa}) is independent of $|y_2|$ and $|x_2|$. Considering the matrix elements $\ket{1000}\bra{0000}$ and $\ket{0010}\bra{0000}$ of Eq. (\ref{Eqsepaaaa}) it is easy to see that for $h_1^2=h_3^2=0$ we have that $g_1^2=g_3^2=0$. Considering the matrix elements $\ket{k0k1}\bra{k1k0}$ for $k\in\{0,1\}$ of Eq. (\ref{Eqsepaaaa}) and using that $h_4^2\in i\R$ leads to
\bea \label{aaaa5}
h_2^2h_4^2(\tilde{p}_{00}h_3^{1+2k}-\tilde{p}_{10}h_3^{3-2k}-\tilde{p}_{01}h_3^{3-2k}+\tilde{p}_{11}h_3^{1+2k})=g_2^2g_4^2g_3^{1+2k}.
\eea
Note that since $h_2^2, g_2^2\in \R$ and $h_4^2\in i\R$ we have that $g_4^2\in i\R$. Note further that we consider here states with $h_2^2,h_4^2\neq 0$. Thus, for $g_2^2g_4^2= 0$ one obtains that
\bea
(\tilde{p}_{00}+\tilde{p}_{11})h_3^{1+2k}=(\tilde{p}_{10}+\tilde{p}_{01})h_3^{3-2k},
\eea
for $l\in\{0,1\}$. As $\tilde{p}_{mn}=0$ $\forall m,n\in\{0,1\}$ is no solution because $\sum_{m,n=0}^1\tilde{p}_{mn}=1$ and  as $h_3^1, h_3^3\neq 0$, one obtains for $g_2^2g_4^2= 0$ that $h_3^3=h_3^1=1/2$. As we consider here states with $d_3\neq \one/2$ it has to hold that $g_2^2,g_4^2\neq 0$. Note that the same argument shows that for the states considered here it has to hold that $\tilde{p}_{00}h_3^{1+2k}-\tilde{p}_{10}h_3^{3-2k}-\tilde{p}_{01}h_3^{3-2k}+\tilde{p}_{11}h_3^{1+2k}\neq 0$ for $k\in\{0,1\}$. Considering Eq. (\ref{aaaa5}), as well as Eq. (\ref{aaaa6}) for $k=0,1$ one can easily show that
\begin{eqnarray}
&\frac{(\tilde{p}_{00}+\tilde{p}_{11})h_3^{1}-(\tilde{p}_{10}+\tilde{p}_{01})h_3^{3}}
{(\tilde{p}_{00}+\tilde{p}_{11})h_3^{3}-(\tilde{p}_{10}+\tilde{p}_{01})h_3^{1}}=
\frac{h_3^{1}(\tilde{p}_{00}+\tilde{p}_{11})+h_3^{3}(\tilde{p}_{10}+\tilde{p}_{01})}
{h_3^{3}(\tilde{p}_{00}+\tilde{p}_{11})+h_3^{1}(\tilde{p}_{10}+\tilde{p}_{01})}.
\end{eqnarray}
Using this equation it is easy to show that either $\tilde{p}_{00}=\tilde{p}_{11}=0$ or $\tilde{p}_{01}=\tilde{p}_{10}=0$ or $h_3^1=h_3^3=1/2$. The latter solution is not possible as we consider here states with $d_3\neq \one/2$. For the case $\tilde{p}_{00}=\tilde{p}_{11}=0$ one obtains from Eq. (\ref{aaaa5}) that either $g_2^2<0$ or $\Im(g_4^2)<0$ which does not match our standard form. Thus, we have that $\tilde{p}_{01}=\tilde{p}_{10}=0$. Using Eq. (\ref{aaaa6}) for $j=0,1$ it is easy to see that $h_3^1=g_3^1$ and $h_3^3=g_3^3$. Considering the matrix elements  $\ket{0000}\bra{0100}$  and $\ket{0001}\bra{0000}$ of Eq. (\ref{Eqsepaaaa}) and using that $h_i^2\neq 0$ for $ i\in\{2,4\}$ one can easily show that
\bea
h_2^2/h_4^2=g_2^2/g_4^2.
\eea
Moreover, from Eq. (\ref{aaaa5}) for $k=0,1$ one obtains that $g_2^2g_4^2=g_2^2g_4^2$. As according to our standard form it holds that $h_2^2,g_2^2\geq 0$ one obtains that $g_i^2=h_i^2$ for $i\in\{2,4\}$. To summarize these results we have shown that for states of the form $\one/2\otimes h_2^x\otimes d_3\otimes h_4^y|\Psi\rangle$  where $h_j^2\neq 0$ for $j\in\{2,4\}$ and $d_3\neq \one$ Eq. (\ref{EqSep}) can only hold if $H_i=G_i$ $\forall i\in\{1,2,3,4\}$ and therefore these states are in $MES_4$. Analogously, one can show that states of the form  $h_1^x\otimes \one/2\otimes g_3^y\otimes d_4|\Psi\rangle$  where $h_k^2\neq 0$ for $k\in\{1,3\}$ and $d_4\neq \one$  are not reachable via SEP.\\
We will show next that states of the form $h_1^x\otimes h_2^x\otimes h_3\otimes h_4|\Psi\rangle$  where $h_i^2\neq 0$ $\forall i\in\{1,2,3,4\}$ and $h_l\neq h_l^x, h_l^y$ for $l\in\{3,4\}$ are not reachable either. Considering the matrix elements $\ket{jkjk}\bra{jkjk}$ of Eq. (\ref{Eqsepaacc}) for $\{j,k\}=\{0,1\}$ leads to
\begin{eqnarray} \label{aaaa8}
&h_3^{1+2j}h_4^{1+2k}\tilde{p}_{00}+h_3^{1+2j}h_4^{3-2k}\tilde{p}_{11}+h_3^{3-2j}h_4^{3-2k}\tilde{p}_{10}
+h_3^{3-2j}h_4^{1+2k}\tilde{p}_{01}=rg_3^{1+2j}g_4^{1+2k}.
\end{eqnarray}
Using these equations, as well as the normalization condition, i.e. $\tr(G_i)=\tr(H_i)=1$, it is easy to show that $r=1$,
\bea\label{aaaa14}
g_3^{1+2j}=h_3^{1+2j}(\tilde{p}_{00}+\tilde{p}_{11})+h_3^{3-2j}(\tilde{p}_{01}+\tilde{p}_{10}),
\eea
and
\bea\label{aaaa15}
g_4^{1+2k}=h_4^{1+2k}(\tilde{p}_{00}+\tilde{p}_{01})+h_4^{3-2k}(\tilde{p}_{10}+\tilde{p}_{11}),
\eea
where $j,k\in\{0,1\}$
Using these equations, as well as Eq. (\ref{aaaa8}) for $j=0$ and $k=0$ it can be shown that either $h_4^1=h_4^3=1/2$ (solution (a)) or $h_3^1=h_3^3=1/2$ (solution (b)) or $\tilde{p}_{11}\tilde{p}_{10}=\tilde{p}_{00}\tilde{p}_{01}$ (solution (c)). We will first consider solution (a). As shown above, for $h_4^1=h_4^3=1/2$ it has to hold that
$p_{m,n,x_1,x_2,y_1,y_2} =0$ for $|y_2| \neq |x_2|$ and therefore  Eq. (\ref{Eqsepaaaa}) is independent of $|y_2|$ and $|x_2|$. Moreover, we have already shown that states with $h_4^1=h_4^3=1/2$ can only be reached from states with $g_4^1=g_4^3=1/2$. Considering the imaginary part of the matrix elements $\ket{k0k1}\bra{k1k0}$ for $k\in\{0,1\}$ of Eq. (\ref{Eqsepaaaa})  leads to
\begin{eqnarray}\label{aaaa9}
&h_2^2\Im(h_4^2)(\tilde{p}_{00}h_3^{1+2k}-\tilde{p}_{10}h_3^{3-2k}
-\tilde{p}_{01}h_3^{3-2k}+\tilde{p}_{11}h_3^{1+2k})=g_2^2\Im(g_4^2)g_3^{1+2k}.
\end{eqnarray}
Note that as $h_4\neq h_4^x, h_4^y$ we have that $\Im(h_4^2)\neq 0$ and $\Re(h_4^2)\neq 0$. Moreover, we consider states with $h_i^2\neq 0$ $\forall i\in\{1,2,3,4\}$ and so in particular $h_2^2\neq 0$.
Thus, we can use the same argument as before and obtain that for $\tilde{p}_{00}h_3^{1+2k}-\tilde{p}_{10}h_3^{3-2k}-\tilde{p}_{01}h_3^{3-2k}+\tilde{p}_{11}h_3^{1+2k}\neq 0$ for $k\in\{0,1\}$ (which implies $g_2^2,g_4^2\neq 0$) that $\tilde{p}_{00}=\tilde{p}_{11}=0$ (case (i)) or $\tilde{p}_{01}=\tilde{p}_{10}=0$ (case (ii)) or $h_3^1=h_3^3=1/2$ (case (iii)). Note that for $\tilde{p}_{00}h_3^{1+2k}-\tilde{p}_{10}h_3^{3-2k}-\tilde{p}_{01}h_3^{3-2k}+\tilde{p}_{11}h_3^{1+2k}= 0$ for $k\in\{0,1\}$ we have already shown that it has to hold that $h_3^1=h_3^3=1/2$. We will discuss first case (iii), i.e. $h_3^1=h_3^3=1/2$. Analogously to the discussion for states of the form $h_1^x\otimes h_2^x\otimes h_3^x\otimes h_4^x|\Psi\rangle$  where $h_i^2\neq 0$ $\forall i\in\{1,2,3,4\}$ one can show that for $h_4^1=h_4^3=h_3^1=h_3^3=1/2$ it has to hold that $p_{m,n,x_1,x_2,y_1,y_2} =0$ for $|y_1| \neq |x_1|$ and therefore  Eq. (\ref{Eqsepaaaa}) does not depend on $|y_1|$ and $|x_1|$. Moreover, we have shown that for $h_4^1=h_4^3=h_3^1=h_3^3=1/2$ Eq. (\ref{Eqsepaaaa}) can only be fulfilled if $g_4^1=g_4^3=g_3^1=g_3^3=1/2$. Considering the matrix elements $\ket{1111}\bra{0000}$, $\ket{1010}\bra{0101}$, $\ket{0101}\bra{1010}$  and $\ket{0000}\bra{1111}$ of Eq. (\ref{Eqsepaaaa}) and adding them one obtains that
\bea\label{aaaa110}
h_1^2h_2^2\Re(h_3^2)\Re(h_4^2)=g_1^2g_2^2\Re(g_3^2)\Re(g_4^2).
\eea
Considering the matrix elements $\ket{1010}\bra{0000}$ and $\ket{0101}\bra{0000}$ of Eq. (\ref{Eqsepaaaa}) leads to
\begin{eqnarray}\label{aaaa90}
&h_1^2\Re(h_3^2)(\tilde{p}_{00}+\tilde{p}_{10}-\tilde{p}_{01}-\tilde{p}_{11})=g_1^2\Re(g_3^2)\\
&h_1^2\Im(h_3^2)(\tilde{p}_{00}-\tilde{p}_{10}+\tilde{p}_{01}-\tilde{p}_{11})=g_1^2\Im(g_3^2)\label{aaaa12}\\
&h_2^2\Re(h_4^2)(\tilde{p}_{00}+\tilde{p}_{10}-\tilde{p}_{01}-\tilde{p}_{11})=g_2^2\Re(g_4^2)\label{aaaa10}
\end{eqnarray}
and
\bea\label{aaaa13}
h_2^2\Im(h_4^2)(\tilde{p}_{00}-\tilde{p}_{10}-\tilde{p}_{01}+\tilde{p}_{11})=g_2^2\Im(g_4^2).
\eea
Recall that we consider here states with $h_i^2\neq 0$ for $i\in\{1,2,3,4\}$ and $h_l\neq h_l^x, h_l^y$ for $l\in\{3,4\}$. Thus, for $h_4^1=h_4^3=h_3^1=h_3^3=1/2$ it must hold that neither $\Re(h_l^2)=0$ nor $\Im(h_l^2)=0$.
Thus, inserting Eq. (\ref{aaaa90}) and Eq. (\ref{aaaa10}) into Eq. (\ref{aaaa110}) leads to $(\tilde{p}_{00}+\tilde{p}_{10}-\tilde{p}_{01}-\tilde{p}_{11})^2=1$. Thus, either $\tilde{p}_{00}=\tilde{p}_{10}=0$ or $\tilde{p}_{01}=\tilde{p}_{11}=0$. Note that the first case leads to either $g_1^2<0$ or $\Re(g_3^2)<0$ which does not match our standard form. Thus, we have that $\tilde{p}_{01}=\tilde{p}_{11}=0$. Considering the matrix elements $\ket{1111}\bra{0000}$, $\ket{1010}\bra{0101}$, $\ket{0101}\bra{1010}$  and $\ket{0000}\bra{1111}$ of Eq. (\ref{Eqsepaaaa}) one can show that \bea\label{aaaa11}
h_1^2h_2^2\Im(h_3^2)\Im(h_4^2)=g_1^2g_2^2\Im(g_3^2)\Im(g_4^2).
\eea
Using this equation, as well as Eq. (\ref{aaaa12}) and Eq. (\ref{aaaa13}) it can be easily seen that either $\tilde{p}_{00}=0$ or $\tilde{p}_{10}=0$. Using our definition of the standard form one can show that it has to hold that $\tilde{p}_{10}=0$. Similar to before one can show that
\bea
h_i^2/h_j^2=g_i^2/g_j^2\,\,\, \textrm{for}\,\,\, \{i,j\}\in\{\{1,3\},\{2,4\}\}.
\eea
Note that $g_i\neq 0$ $\forall i\in\{1,2,3,4\}$ due to Eq. (\ref{aaaa11}) and $h_i^2\neq 0$.
Using these equations, as well as Eqs. (\ref{aaaa90}), (\ref{aaaa12}), (\ref{aaaa10}) and (\ref{aaaa13}) one can show analogously to before that $h_j^2=g_j^2$ $\forall j\in\{1,2,3,4\}$. Thus, states of the form $h_1^x\otimes h_2^x\otimes h_3\otimes h_4|\Psi\rangle$  where $h_i^2\neq 0$ $\forall i\in\{1,2,3,4\}$, $h_l\neq h_l^x, h_l^y$ for $l\in\{3,4\}$ and $h_4^1=h_4^3=h_3^1=h_3^3=1/2$ are not reachable.\\
We will discuss next solution (a) together with the condition of case (ii), i.e. $h_4^1=h_4^3=1/2$ and $\tilde{p}_{01}=\tilde{p}_{10}=0$. Using Eq. (\ref{aaaa14}) for $k=0,1$ it is easy to see that $h_3^1=g_3^1$ and $h_3^3=g_3^3$. Considering the matrix elements $\ket{1010}\bra{0000}$, $\ket{0101}\bra{0000}$,$\ket{1111}\bra{0000}$ and $\ket{1111}\bra{1010}$ of Eq. (\ref{Eqsepaaaa}) one can show that
\begin{eqnarray}
&h_1^2h_3^2(\tilde{p}_{00}-\tilde{p}_{11})=g_1^2g_3^2\\
&h_2^2(h_4^2\tilde{p}_{00}-(h_4^{2})^*\tilde{p}_{11})=g_2^2g_4^2
\end{eqnarray}
and
\bea
h_1^2h_2^2h_3^2(h_4^2\tilde{p}_{00}+(h_4^{2})^*\tilde{p}_{11})=g_1^2g_2^2g_3^2g_4^2.
\eea
Thus, it follows that
\bea
(h_4^2\tilde{p}_{00}+(h_4^{2})^*\tilde{p}_{11})=(\tilde{p}_{00}-\tilde{p}_{11})
(h_4^2\tilde{p}_{00}-(h_4^{2})^*\tilde{p}_{11}).
\eea
Taking the real part of this equation and using that for the states we consider here $\Re(h_4^2)\neq 0$ one obtains that $(\tilde{p}_{00}-\tilde{p}_{11})^2=1$. As it has to hold that $\tilde{p}_{00}+\tilde{p}_{11}=1$ one obtains that either $\tilde{p}_{00}=0$ or $\tilde{p}_{11}=0$. Note that the first case does not match our definition of the standard form. Thus, we have that $\tilde{p}_{11}=0$ and therefore $h_1^2h_3^2=g_1^2g_3^2$ and $h_2^2h_4^2=g_2^2g_4^2$. For $\tilde{p}_{01}=\tilde{p}_{10}=\tilde{p}_{11}=0$ it can be easily shown that $p_{m,n,x_1,x_2,y_1,y_2} =0$ for $|y_1| \neq |x_1|$. Considering the matrix elements  $\ket{0000}\bra{1000}$, $\ket{0010}\bra{0000}$,$\ket{0000}\bra{0100}$ and $\ket{0001}\bra{0000}$ of Eq. (\ref{Eqsepaaaa}) one can show analogously to before that $h_i^2=g_i^2$ $\forall i\in\{1,2,3,4\}$. Thus,  transformations with $p_{m,n,x_1,x_2,y_1,y_2} =0$ for $m\neq n$ and $m,n\in\{0,1\}$ (case (ii)) do not allow to reach states of the form $h_1^x\otimes h_2^x\otimes h_3\otimes h_4|\Psi\rangle$  where $h_i^2\neq 0$ $\forall i\in\{1,2,3,4\}$, $h_l\neq h_l^x, h_l^y$ for $l\in\{3,4\}$ and $h_4^1=h_4^3=1/2$. \\
We will next consider solution (a) together with the condition of case (i), i.e. $\tilde{p}_{00}=\tilde{p}_{11}=0$ and $h_4^1=h_4^3=1/2$. It can be easily shown that $h_3^3=g_3^1$ and $h_3^1=g_3^3$.
Furthermore, one can show analogously to case (ii) that either $\tilde{p}_{01}=0$ or $\tilde{p}_{10}=0$. In particular, one considers the matrix elements $\ket{0000}\bra{1010}$, $\ket{0101}\bra{0000}$,$\ket{0101}\bra{1010}$ and $\ket{1111}\bra{1010}$ of Eq. (\ref{Eqsepaaaa}). Moreover, one can show similar to before that for $\tilde{p}_{10}=0$ $h_3^2=-g_3^{2\,*}$ and $h_4^2=-g_4^2$ and for $\tilde{p}_{01}=0$ $h_3^2=g_3^{2\,*}$ and $h_4^2=g_4^{2\,*}$. Note that therefore case (i) does not match our standard form. Thus, we have shown that neither of the cases (i)-(iii) allows to reach states of the form $h_1^x\otimes h_2^x\otimes h_3\otimes h_4|\Psi\rangle$  where $h_i^2\neq 0$ $\forall i\in\{1,2,3,4\}$, $h_l\neq h_l^x, h_l^y$ for $l\in\{3,4\}$ and $h_4^1=h_4^3=1/2$, i.e. solution (a).\\
Note that in case of solution (b), i.e. $h_3^1=h_3^3=1/2$ a completely analogous argument can be used to show that states of the form $h_1^x\otimes h_2^x\otimes h_3\otimes h_4|\Psi\rangle$  where $h_i^2\neq 0$ $\forall i\in\{1,2,3,4\}$, $h_l\neq h_l^x, h_l^y$ for $l\in\{3,4\}$ and $h_3^1=h_3^3=1/2$ are in $MES_4$. \\
Thus, it remains to discuss solution (c), i.e. $\tilde{p}_{11}\tilde{p}_{10}=\tilde{p}_{00}\tilde{p}_{01}$. Considering the matrix elements $\ket{k0k1}\bra{k1k0}$ for $k\in\{0,1\}$ of Eq. (\ref{Eqsepaaaa}) leads to
\begin{eqnarray}
&h_2^2\Im(h_4^2)(\tilde{p}_{00}h_3^{1+2k}-\tilde{p}_{10}h_3^{3-2k}
-\tilde{p}_{01}h_3^{3-2k}+\tilde{p}_{11}h_3^{1+2k})=g_2^2\Im(g_4^2)g_3^{1+2k}
\end{eqnarray}
and
\begin{eqnarray} \label{aaaa20}
&h_2^2\Re(h_4^2)(\tilde{p}_{00}h_3^{1+2k}+\tilde{p}_{10}h_3^{3-2k}
-\tilde{p}_{01}h_3^{3-2k}-\tilde{p}_{11}h_3^{1+2k})=g_2^2\Re(g_4^2)g_3^{1+2k}.
\end{eqnarray}
Note that for $\Im(h_4^2)\neq 0$ one can use the same argument as before to show that $\tilde{p}_{00}=\tilde{p}_{11}=0$  or $\tilde{p}_{01}=\tilde{p}_{10}=0$  or $h_3^1=h_3^3=1/2$. Via a similar argument one can show that for  $\Im(h_3^2)\neq 0$ it has to hold that $\tilde{p}_{01}=\tilde{p}_{11}=0$  or $\tilde{p}_{00}=\tilde{p}_{10}=0$ or $h_4^1=h_4^3=1/2$. Analogously, one can show by using Eq. (\ref{aaaa20}) and Eq. (\ref{aaaa14}) for $j=0,1$ that for $\Re(h_4^2)\neq 0$ it has to hold that either $h_3^1=h_3^3=1/2$ or $\tilde{p}_{11}\tilde{p}_{01}=\tilde{p}_{00}\tilde{p}_{10}$.
For the latter case one obtains that $\tilde{p}_{00}=\tilde{p}_{11}=\tilde{p}_{01}=\tilde{p}_{10}$ or $\tilde{p}_{01}=\tilde{p}_{10}=0$ or $\tilde{p}_{00}=\tilde{p}_{11}=0$ as we also have that $\tilde{p}_{11}\tilde{p}_{10}=\tilde{p}_{00}\tilde{p}_{01}$.
Similarly one can show that for  $\Re(h_3^2)\neq 0$ it has to hold that either $h_4^1=h_4^3=1/2$ or $\tilde{p}_{11}\tilde{p}_{00}=\tilde{p}_{01}\tilde{p}_{10}$. As we are considering here solution (c), i.e. $\tilde{p}_{11}\tilde{p}_{10}=\tilde{p}_{00}\tilde{p}_{01}$ it has to hold for the latter case that either $\tilde{p}_{00}=\tilde{p}_{11}=\tilde{p}_{01}=\tilde{p}_{10}$ or $\tilde{p}_{01}=\tilde{p}_{11}=0$ or $\tilde{p}_{00}=\tilde{p}_{10}=0$.
Note that the case $h_4^1=h_4^3=1/2$ ($h_3^1=h_3^3=1/2$) corresponds to solution (a) (solution (b)) respectively which has been already discussed. Note further that since $h_i^2\neq 0$ for $i\in\{1,2,3,4\}$ we have that $\Im(h_j^2)\neq 0$ and/or $\Re(h_j^2)\neq 0$ for $j\in\{3,4\}$. For $h_4^1,h_4^3,h_3^1,h_3^3\neq 1/2$ and either $\Im(h_4^2),\Im(h_3^2)\neq 0$, or $\Re(h_4^2),\Im(h_3^2)\neq 0$, or $\Im(h_4^2),\Re(h_3^2)\neq 0$ one obtains that $\tilde{p}_{mn}=0$ except for exactly one pair $(m, n)\in\{(0, 0),(0, 1),(1, 0),(1, 1)\}$. Note here that $\tilde{p}_{00}=\tilde{p}_{11}=\tilde{p}_{01}=\tilde{p}_{10}=0$ does not correspond to a valid transformation. It can be shown that the only possibility to match our definition of the standard form is that $\tilde{p}_{00}\neq 0$. Furthermore, one can show that in this case it holds that $H_i=G_i$ $\forall i\in\{1,2,3,4\}$ and therefore these kind of transformations to not allow to reach these states from some other states. For $\Re(h_4^2),\Re(h_3^2)\neq 0$ and $h_4^1,h_4^3,h_3^1,h_3^3\neq 1/2$ we have either that $\tilde{p}_{mn}=0$ except for exactly one pair $(m, n)\in\{(0, 0),(0, 1),(1, 0),(1, 1)\}$ which does not allow to reach these states or $\tilde{p}_{00}=\tilde{p}_{11}=\tilde{p}_{01}=\tilde{p}_{10}=1/4$. Note that we used here that $\sum_{m,n}\tilde{p}_{mn}=1$. Considering Eqs. (\ref{aaaa14}) and (\ref{aaaa15}) for $j,k=0,1$ one obtains for the latter case that $h_i^1=g_i^1$ and $h_i^3=g_i^3$ for $i\in\{3,4\}$. Considering the matrix elements $\ket{k0k1}\bra{k1k0}$ and $\ket{0k1k}\bra{1k0k}$ for $k\in\{0,1\}$ of Eq. (\ref{Eqsepaaaa}) one can easily show that
$g_1^2g_3^2=0$ and $g_2^2g_4^2=0$. Considering now the matrix elements $\ket{0011}\bra{1100}$, $\ket{1100}\bra{0011}$, $\ket{0101}\bra{1010}$ and $\ket{1010}\bra{0101}$ of Eq. (\ref{Eqsepaaaa}) and adding the corresponding equations one obtains that
\bea
h_1^2h_2^2\Re(h_3^2)\Re(h_4^2)=g_1^2g_2^2\Re(g_3^2)\Re(g_4^2)=0.
\eea
As $h_i^2\neq 0$ for $i\in\{1,2\}$ for the states we are discussing here and $\Re(h_4^2),\Re(h_3^2)\neq 0$ this equation can not be fulfilled. Thus, states of the form  $h_1^x\otimes h_2^x\otimes h_3\otimes h_4|\Psi\rangle$  where $h_i^2\neq 0$ $\forall i\in\{1,2,3,4\}$ and $h_l\neq h_l^x, h_l^y$ for $l\in\{3,4\}$ are in $MES_4$.\\
It can be easily seen that the seed states for these SLOCC classes are not reachable. In particular, using Eq. (\ref{eqdiagaaaa}) one obtains that $g_i^1=g_i^3=1/2$ for $i\in\{3,4\}$. Considering then the matrix elements $\ket{1000}_{i,j,k,l}\bra{0000}$ of Eq. (\ref{Eqsepaaaa}) where $\{i,j,k,l\}=\{1,2,3,4\}$ and using that $h_m^2=0$ $\forall m\in\{1,2,3,4\}$ leads to $g_m^2=0$ $\forall m$. Thus, one obtains $H_i=G_i=\one/2$ $\forall i\in\{1,2,3,4\}$ which implies that the seed states are in $MES_4$.
$\square$
\\ \\
In the following we prove Lemma \ref{convlemmaaaaa}, i.e. which states in the SLOCC class $G_{abcd}$ where  $a^2= d^2$, $c^2= b^2$, $a^2= -c^2$ and $a \neq 0$ are convertible.
\\ \\
\textit{ Proof of Lemma \ref{convlemmaaaaa}.}
Note first that the states that can be reached from some convertible state are given by the ones in Lemma \ref{lemmaaaaa}. Note further that using the same techniques as in the proof of Lemma \ref{lemmaaaaa} it can be shown that states with  $h_i^2=0$ ($h_i=h_i^x$ or $h_i=h_i^y$)  for some $i\in\{3,4\}$ can only be obtained from states with $g_i^2=0$ ($g_i=g_i^x$ or $g_i=g_i^y$) respectively. Combining these results one obtains that the only potentially convertible states are the ones given by Lemma \ref{convlemmaaaaa}. Hence, it remains to show that they are indeed convertible via LOCC. In order to do so, we first note that  $X^m[(\sigma_z)^k\otimes(\sigma_z)^l\otimes(\sigma_z)^k\otimes(\sigma_z)^l]\in S(\Psi)$ (where $m, k, l\in\{0,1\}$) which is element of the symmetry group of the SLOCC classes $G_{abcd}$ where  $a^2= d^2\neq\pm c^2$ and $c^2= b^2$ (see Sec. \ref{secaacc}). There we have already provided the LOCC protocols that allow to convert states of the form $g_1^x\otimes g_2^x\otimes g_3\otimes g_4^x|\Psi\rangle$ and $g_1^x\otimes g_2^x\otimes g_3^x\otimes g_4|\Psi\rangle$ and $\one/2\otimes g_2^x\otimes g_3\otimes g_4|\Psi\rangle$ and $g_1^x\otimes \one/2\otimes g_3\otimes g_4|\Psi\rangle$ and  $g_1^x\otimes g_2^x\otimes d_3\otimes g_4|\Psi\rangle$ and $g_1^x\otimes g_2^x\otimes g_3\otimes d_4|\Psi\rangle$ in the proof of Lemma \ref{convlemmaaacc} (see Appendix C). In the following we will use the notation $\one\otimes \sigma_x\otimes \sigma_z\otimes \sigma_y\equiv S_1$, $\sigma_x\otimes \one \otimes \sigma_y\otimes \sigma_z\equiv S_2$ and $\one^{\otimes 4}\equiv S_0$. States of the form $g_1^x\otimes g_2^x\otimes g_3^y\otimes g_4|\Psi\rangle$ can be converted to states of the form $h_1^x\otimes h_2^x\otimes h_3^y\otimes h_4|\Psi\rangle$ where $h_4\neq d_4$  via the POVM $\{\sqrt{p} h_4S_0g_4^{-1}, \sqrt{1-p}h_4S_2g_4^{-1}\}$ (with $(2p-1)h_4^2=g_4^2$, $h_4^1=g_4^1$ and $h_4^3=g_4^3$) which is acting non-trivially on party 4. Note that for any $G_4$ one can always find a value of $p$ and $H_4$ which is a positive operator of rank 2 such that these conditions are fulfilled. Analogously, using $S_0$ and $S_1$ one can construct a POVM that is acting non-trivially on party $3$ which allows to convert states of the form $g_1^x\otimes g_2^x\otimes g_3\otimes g_4^y|\Psi\rangle$ to states of the form $h_1^x\otimes h_2^x\otimes h_3\otimes h_4^y|\Psi\rangle$ where $h_3\neq d_3$.
$\square$

\section*{\label{appE}Appendix E: Proof of Lemma \ref{lem10}}

In this section we prove Lemma \ref{lem10} and hereby show which states in the SLOCC classes $L_{a_2b_2}$ (see \cite{slocc4}) for $a^2= - b^2$ and $a,b\neq0$ are reachable.
As in the main text we will use the notation $G_i=g_i^\dagger g_i=1/2\one+\sum_j \bar{g}_i^j \sigma_j$ where $\bar{g}_i^j\in \R$ and $0\leq\sqrt{\sum_j[\bar{g}_i^j]^2}<1/2$ and similar for $H_i$. In order to compactify the presentation in this appendix we use the following unconventional notation. That is, we will use $G_i^{W}=1/2\one+\sum_j \bar{g}_i^j \sigma_j$ for $W\in\{W_1 W_2 W_3,\neq j\}$ and   $W_j \in \{j,0,A\}$ whenever we need to give information on the precise structure of $G_i$. Here $W_j=j$ ($W_j=0$) denotes $\bar{g}_i^j\neq 0$($\bar{g}_i^j= 0$) whereas $W_j=A$ means that $\bar{g}_i^j$ is arbitrary. Whenever $W$ corresponds to $\neq j$  this means that $\bar{g}_i^k\neq 0$ and/or $\bar{g}_i^l\neq 0$ where $k\neq l\neq j\neq k$ and $j,k,l\in\{1,2,3\}$.
In case $W$ corresponds to $\neq 0$ this denotes the constraint that $G_i^{\neq 0}\neq \one/2$.  We will use the same notation for $H_i^{W}$. In order to illustrate this notation let us consider two examples. By $G_i^{A20}$ we denote the case that $\bar{g}_i^1$ is arbitrary, $\bar{g}_i^2\neq 0$ and $\bar{g}_i^3=0$. By $G_i^{\neq 3}$ we denote that $\bar{g}_i^1\neq 0$ and /or $\bar{g}_i^2\neq 0$ (i.e. $\sigma_zG_i^{\neq 3}\sigma_z\neq G_i^{\neq 3}$) and $\bar{g}_i^3$ is arbitrary.  \\ In the following we will denote  by $S_k=\otimes_i s_i^k$ the elements of the symmetry group $S(\Psi_{(-b_2)b_2})=\{\one^{\otimes 4},\one\otimes\sigma_x\otimes\sigma_z\otimes\sigma_x,\sigma_z\otimes\sigma_y\otimes\one\otimes\sigma_y,\sigma_z^{\otimes 4}\}$ which are numbered in the same order as they are given here in the set.\\
We will use the following lemma to prove Lemma \ref{lem10}.
\begin{lemma}\label{twooutcome}
The only states in the SLOCC classes $L_{a_2b_2}$ with $a^2= - b^2$ and $a,b\neq 0$ that are reachable via using solely the symmetries $S_0$ and $S_j=\otimes_i s_j^i$ for one $j\in\{1,2,3\}$ are given by $\otimes h_i \ket{\Psi_{(-b_2)b_2}}$ where $\otimes H_i$ obeys the condition that exactly
three out of the four operators $H_i$ commute with the corresponding
operator $s_j^i$ and one operator $H_k$ does not commute with $s_j^k$.
\end{lemma}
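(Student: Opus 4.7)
The plan is to use that both $S_0$ and $S_j$ are tensor products of Pauli operators, hence unitary, so that Observation \ref{obsmon} gives $r=1$ and Eq.\ (\ref{EqSep}) reduces to
\begin{equation*}
p_0\bigotimes_{i=1}^{4} H_i + p_j\bigotimes_{i=1}^{4} (s_j^i)^\dagger H_i s_j^i = \bigotimes_{i=1}^{4} G_i,
\end{equation*}
with $p_0,p_j>0$ for a nontrivial transformation. I will call party $i$ \emph{bad} if $[H_i,s_j^i]\neq 0$, i.e.\ if $H_i$ has a nonzero component along some Pauli direction anticommuting with $s_j^i$. The aim is to show that reachability of $h\ket{\Psi_{(-b_2)b_2}}$ via such a two-outcome POVM is equivalent to the existence of exactly one bad party.

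For necessity I would apply Eq.\ (\ref{eq_Pauli}) to every Pauli $P=\bigotimes_l\sigma_{k_l}$ that commutes with $S_j$, which since $S_j$ itself is a Pauli amounts to the set $A(P)$ of positions where $\sigma_{k_l}$ anticommutes with $s_j^l$ having even cardinality. On such $P$ one computes $\prod_l \eta_{k_l}=(p_0-p_j)^{|A(P)|}$, which equals $1$ only in the trivial case $p_j\in\{0,1\}$, so $|A(P)|=2$ forces $\tr(HP)=0$. Picking $P$'s with $k_1=0$ and two anticommuting factors supported on different parties $i\neq i'$ among those where $s_j^l\neq\one$ then yields all bilinear relations $\bar h_i^a\,\bar h_{i'}^b =0$ for $\sigma_a,\sigma_b$ anticommuting with $s_j^i,s_j^{i'}$ respectively; geometrically these say that at most one party is bad. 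A short case analysis then closes the necessity: zero bad parties yields $S_j^\dagger H S_j = H$, hence $G=H$ and the transformation is trivial; two or more bad parties violates the bilinear relations; exactly one bad party $k$ makes the left-hand side factor automatically as $G_i=H_i$ for $i\neq k$ and $G_k = p_0 H_k + p_j (s_j^k)^\dagger H_k s_j^k$, with necessarily $s_j^k\neq \one$.

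For sufficiency I would provide an explicit two-outcome LOCC protocol for each $H$ with exactly one bad party $k$: party $k$ performs the POVM $\{\sqrt{p_0}\,h_k g_k^{-1},\,\sqrt{p_j}\,h_k s_j^k g_k^{-1}\}$, which is well defined because the above operator equation guarantees $G_k\geq 0$ and $M_1^\dagger M_1+M_2^\dagger M_2 = \one$; conditioned on the second outcome each other party $l\neq k$ applies the local unitary $s_j^l$, which is absorbed using $[H_l,s_j^l]=0$ (so that $s_j^l g_l=g_l s_j^l$) together with the fact that $S_j$ is a symmetry of $\ket{\Psi_{(-b_2)b_2}}$ to produce the same target state. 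For any $H_k$ not commuting with $s_j^k$ one can choose $p_0,p_j\in(0,1)$ and a positive rank-two $G_k$ not LU-equivalent to $H_k$, yielding a genuine source state $g\ket{\Psi_{(-b_2)b_2}}$. The main subtlety I expect is verifying that the resulting $G$, after being put into the standard form introduced above the lemma, is truly LU-inequivalent to $H$ (rather than an artefact of the standard form); this is handled by carefully invoking the sign and support conventions on the $\{\bar g_i^k\}$ fixed there, together with the explicit form of $S_j$.
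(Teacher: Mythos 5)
Your proposal is correct and follows essentially the same route as the paper's proof: both extract the bilinear constraints $\bar{h}_i^a\bar{h}_{i'}^b=0$ from Eq.~(\ref{eq_Pauli}) applied to weight-two Pauli operators commuting with $S_j$, conclude that for a non-trivial transformation (i.e.\ $(p_0-p_j)^2\neq 1$) at most one party can carry an $H_i$ not commuting with $s_i^j$, and then settle the dichotomy exactly as you do --- no non-commuting party forces $G=H$, while exactly one non-commuting party $k$ factorizes Eq.~(\ref{EqSep}) into $G_i=H_i$ for $i\neq k$ and $G_k=p_0H_k+p_j(s_k^j)^\dagger H_ks_k^j$, realized by the two-outcome POVM on party $k$ with the commuting parties applying $s_i^j$ as a correction. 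One small repair: your restriction to Paulis with $k_1=0$ is only appropriate for $j=1$, where $s_1^1=\one$; for $S_2=\sigma_z\otimes\sigma_y\otimes\one\otimes\sigma_y$ and $S_3=\sigma_z^{\otimes 4}$ party $1$ carries a nontrivial Pauli, so you must also include $P$'s with an anticommuting factor on party $1$ in order to exclude party $1$ as a second non-commuting site --- this is already implied by your general prescription of taking all pairs among the parties with $s_l^j\neq\one$, so it is a slip of wording rather than of substance.
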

\begin{proof}
Let us first consider which states are reachable using only $S_0$ and $S_1$. Since these symmetries are elements of the Pauli group we will use Eq. (\ref{eq_Pauli}) for $P=\one\otimes \sigma_i\otimes\sigma_j\otimes \one$ for $i\in\{2,3\}$ and $j\in\{1,2\}$ which leads to
\bea
\bar{h}_2^i\bar{h}_3^j=(p_0-p_1)^2\bar{h}_2^i\bar{h}_3^j.
\eea
Thus, either $\bar{h}_2^i\bar{h}_3^j=0$ for $i\in\{2,3\}$ and $j\in\{1,2\}$ or $(p_0-p_1)^2=1$. The latter case corresponds to a trivial transformation. Note that by exchanging party 2 and 4 in this argument one obtains analogous conditions involving party 4.  Therefore, one obtains that either $H_3=H_3^{00A}$ or the case that $H_2=H_2^{A00}$ and $H_4=H_4^{A00}$. In case $H_2=H_2^{A00}$ and $H_4=H_4^{A00}$ one can reach the corresponding state iff $H_3=H_3^{\neq 3}$. In order to see this note that if  $H_3=H_3^{00A}$ then $S_0$ and $S_1$ commute with $\otimes H_i$ and therefore using Eq. (\ref{EqSep}) (with symmetries $S_0$ and $S_1$) leads to $H_l=G_l$ $\forall l\in\{1,2,3,4\}$. Thus, these states are not reachable via $S_0$ and $S_1$. In order to see that states with $H_3= H_3^{\neq 3}$ are reachable consider $\{1/\sqrt{2}h_3S_0g_3^{-1},1/\sqrt{2}h_3S_1g_3^{-1}\}$ which is a valid POVM for $\bar{g}_3^2=0$, $\bar{g}_3^1=0$ and $\bar{h}_3^3=\bar{g}_3^3$. This POVM allows to obtain these states from states of the form $g_1^{AAA}\otimes g_2^{A00}\otimes g_3^{00A}\otimes g_4^{A00}\ket{\Psi_{(-b_2)b_2}}$. Let us proceed with the case $H_3=H_3^{00A}$.
Considering Eq. (\ref{eq_Pauli}) for $P=\one\otimes \sigma_i\otimes\one\otimes \sigma_j$ for $i\in\{2,3\}$ and $j\in\{2,3\}$ one obtains that
\bea
\bar{h}_2^i\bar{h}_4^j=(p_0-p_1)^2\bar{h}_2^i\bar{h}_4^j.
\eea
Thus, in order to allow for a non-trivial transformation one obtains that either $H_2=H_2^{A00}$ or $H_4=H_4^{A00}$. States of the form $h_1^{AAA}\otimes h_2^{A00}\otimes h_3^{00A}\otimes h_4^{\neq 1}\ket{\Psi_{(-b_2)b_2}}$ can be obtained for example from states of the form $g_1^{AAA}\otimes g_2^{A00}\otimes g_3^{00A}\otimes g_4^{A00}\ket{\Psi_{(-b_2)b_2}}$ via the POVM $\{1/\sqrt{2}h_4S_0g_4^{-1},1/\sqrt{2}h_4S_1g_4^{-1}\}$ where $\bar{g}_4^2=0$, $\bar{g}_4^3=0$ and $\bar{h}_4^1=\bar{g}_4^1$. Note that analogously one can construct a POVM that allows to reach states with $H_4=H_4^{A00}$, $H_3=H_3^{00A}$ and $H_2=H_2^{\neq 1}$. Note further that  we have already shown before that states of the form $h_1^{AAA}\otimes h_2^{A00}\otimes h_3^{00A}\otimes h_4^{A00}\ket{\Psi_{(-b_2)b_2}}$ are not reachable by using solely the symmetries $S_0$ and $S_1$. Thus, the only states, $\otimes h_i \ket{\Psi_{(-b_2)b_2}}$, that are reachable via $S_0$ and $S_1$ have the property that three out of the four operators $h_j$ commute with the corresponding
operator $s_i^j$ and one does not commute.\\
The states that can be reached by LOCC transformations that only use the symmetries $S_0$ and $S_2$ can be obtained via a completely analogous argument. One just has to exchange party 1 and 3, as well as the $\sigma_x$- and $\sigma_y$-components for party 2 and 4.\\
Let us now consider which states are reachable using only the symmetries $S_0$ and $S_3$.
Considering Eq. (\ref{eq_Pauli}) with $P$ denoting the operator such that $\sigma_i$ is acting on party $k$, $\sigma_j$ is acting on party $l$ and the identity is acting on the remaining parties  for $i\in\{1,2\}$, $j\in\{1,2\}$, $k\neq l$ and $k,l\in\{1,2,3,4\}$  one obtains that
\bea
\bar{h}_k^i\bar{h}_l^j=(p_0-p_3)^2\bar{h}_k^i\bar{h}_l^j.
\eea
Thus, in order to allow for a non-trivial transformation either $H_k=H_k^{00A}$ or $H_l=H_l^{00A}$. Considering this condition for all different pairs $(k,l)$ one obtains that at least for three parties we have that the corresponding operator is $H_i=H_i^{00A}$. For $H_i=H_i^{00A}$ $\forall i\in\{1,2,3,4\}$ it is easy to see by considering  Eq. (\ref{EqSep}) (with symmetries $S_0$ and $S_3$) that $H_i=G_i$ $\forall i\in\{1,2,3,4\}$. Therefore, these states are not reachable via the symmetries $S_0$ and $S_3$. In order to complete the proof we note that states of the form $h_i^{\neq 3}\otimes h_j^{00A}\otimes h_k^{00A}\otimes h_l^{00A}\ket{\Psi_{(-b_2)b_2}}$ where $\{i,j,k,l\}=\{1,2,3,4\}$ can be reached for example from states of the form $g_1^{00A}\otimes g_2^{00A}\otimes g_3^{00A}\otimes g_4^{00A}\ket{\Psi_{(-b_2)b_2}}$ by using the POVM $\{1/\sqrt{2}h_iS_0g_i^{-1},1/\sqrt{2}h_iS_3g_i^{-1}\}$ where $\bar{g}_i^1=0$, $\bar{g}_i^2=0$ and $\bar{h}_i^3=\bar{g}_i^3$.
\end{proof}
As we will show in the proof of Lemma \ref{lem10} for any state that is reachable there exists a two-outcome POVM using solely the symmetries $S_0$ and some $S_i$ for $i\in\{1,2,3\}$ that allows to obtain it.
In order to improve the readability we repeat here Lemma \ref{lem10}.\\ \\
 \noindent\textit{ {\bf Lemma \ref{lem10}.} The only states  in the SLOCC classes $L_{a_2b_2}$ with $a^2= - b^2$ and $a,b\neq0$ that are reachable via LOCC are given by
$h_1\otimes h_2\otimes h_3\otimes h_4\ket{\Psi_{(-b_2)b_2}}$ where $\otimes H_i$ obeys the following condition. There exists a symmetry $S_k=\otimes_i s_i^k\in S(\Psi_{(-b_2)b_2}) $ for some $k\in\{1, 2, 3\}$ such that exactly
three out of the four operators $H_i$ commute with the corresponding operators $s_i^k$ and one operator $H_j$ does not commute with $s_j^k$.}
\begin{proof}
Note first that in this proof we will not use the standard form, since then it is easier to deal with transformations that are equivalent up to conjugation with $S_i$ where $S_i\in S(\Psi_{(-b_2)b_2})$. In particular, these transformations allow to obtain the same states if one does not use the standard form. To clarify that note that if $\sum_j p_j S_j\otimes_k H_k S_j =\otimes_k G_k$ then conjugating with $S_i$ leads to $\sum_j p_{f(i,j)} S_j\otimes_k H_k S_j =\otimes_k S_iG_kS_i$ where $f(i,j)$ is a function defined by $f(i,j)=l$ for $S_iS_j \propto S_l$. Note that $\otimes_k S_iG_kS_i$ corresponds to the same state as $\otimes_k G_k$ and if one does not use the restrictions set by the standard form both versions can occur on the right-hand-side of Eq. (\ref{EqSep}). Thus, conjugation with $S_i$ where $S_i\in S(\Psi_{(-b_2)b_2})$ simply corresponds to a reordering of the probabilities. \\
Note further that since the symmetry is unitary, we can use Eq. (\ref{eq_symLU}) which gives us necessary conditions for a state to be reachable. Moreover, since the symmetry is an element of the Pauli group we can make use of Eq. (\ref{eq_Pauli}). In the following we use the notation $\eta_i = p_0+p_i-p_l-p_k$ where $i\neq l\neq k\neq i\in\{1,2,3\}$ and $\eta_0=\sum_i p_i=1$.
We will distinguish here the 4 different cases:
\begin{itemize}
\item[(i)] $H_1= H_1^{00A}$ and $H_3=H_3^{00A}$
\item[(ii)] $H_1= H_1^{\neq 3}$ and $H_3=H_3^{\neq 3}$
\item[(iii)] $H_1= H_1^{00A}$  and $H_3=H_3^{\neq 3}$
\item[(iv)] $H_1= H_1^{\neq 3}$  and $H_3=H_3^{00A}$.
\end{itemize}
Let us start with case (i). Using Eq. (\ref{EqSep}) and the fact that for this case $H_i$ for $i\in\{1,3\}$ commutes with $S_k$ $\forall k\in\{0,1,2,3\}$ we have that $H_i=G_i$ for $i\in\{1,3\}$. Tracing over pary $1$ and $3$ Eq. (\ref{eq_symLU}) is equivalent to \bea
\label{Eq_sep2} \h_{(2)}(\h_{(4)})^T \bigodot (N_1-N_2)=0,\eea where by
$\odot$ we denote the Hadamard product (i.\ e.\ entry-wise matrix
multiplication), $\h_{(i)}=(h_i^{1},h^2_{i},h^3_{i})^T$ for
$i\in \{2,4\}$ and $N_1=\vec{\e} \vec{\e}^T$, with
$\vec{\e}=(\e_1,\e_2,\e_3)^T$
and \bea N_2=\begin{pmatrix} \e_0 &\e_3& \e_2\\ \e_3& \e_0& \e_1\\
\e_2 & \e_1& \e_0\end{pmatrix}.\eea
This set of equations also occurs in the generic case and thus has already been discussed in \cite{MESus}. Hence, one obtains that for case (i) the only reachable states are given by $h_1^{0 0 A}\otimes h_k^{\neq j}\otimes h_3^{0 0 A}\otimes h_l^{j 0 0}\ket{\Psi_{(-b_2)b_2}}$ where $\{k,l\}=\{2,4\}$ and $j\in\{1,2,3\}$, as well as $h_1^{0 0 A}\otimes h_k^{\neq 0}\otimes h_3^{00A}\otimes \one_l/2 \ket{\Psi_{(-b_2)b_2}}$ where $\{k,l\}=\{2,4\}$. Note that these states are reachable via a two-outcome POVM.\\
We will proceed with case (ii). Considering Eq. (\ref{eq_Pauli}) with $P=\sigma_i\otimes \sigma_z\otimes\sigma_j\otimes \one$ for $i\in\{1,2\}$ and $j\in\{1,2\}$ one obtains that
\bea
\bar{h}_1^i\bar{h}_2^3\bar{h}_3^j=\eta_1\eta_2\eta_3\bar{h}_1^i\bar{h}_2^3\bar{h}_3^j.
\eea
Since $\eta_j\leq 1$ $\forall j\in\{1,2,3\}$ the solution $\eta_1\eta_2\eta_3=1$ implies that $\eta_1,\eta_2,\eta_3\in\{1,-1\}$ and therefore $p_m\neq 0$ for exactly one $m\in\{0,1,2,3\}$. Thus, $\eta_1\eta_2\eta_3=1$ corresponds to a trivial transformation. \\
Note that as $H_1= H_1^{\neq 3}$ and $H_3=H_3^{\neq 3}$ there exists at least one $i\in\{1,2\}$ and at least one $j\in\{1,2\}$ such that $\bar{h}_1^i\bar{h}_3^j\neq 0$. Thus, in order to allow for a non-trivial transformation we also have that $\bar{h}_2^3=0$. Analogously, one obtains that $\bar{h}_4^3=0$ (for a non-trivial transformation). Considering  Eq. (\ref{eq_Pauli}) for $P=\one\otimes \sigma_y\otimes\sigma_i\otimes \one$ and $P=\one\otimes\one \otimes\sigma_i\otimes \sigma_y$ for $i\in\{1,2\}$, as well as for $P=\sigma_j\otimes \sigma_x\otimes\one\otimes \one$ and $P=\sigma_j\otimes\one \otimes\one\otimes \sigma_x$ for $j\in\{1,2\}$ leads to
\begin{eqnarray}
&\bar{h}_2^1\bar{h}_1^j=\eta_1^2\bar{h}_2^1\bar{h}_1^j,\\
&\bar{h}_4^1\bar{h}_1^j=\eta_1^2\bar{h}_4^1\bar{h}_1^j, \\ \label{appC1}
&\bar{h}_2^2\bar{h}_3^i=\eta_2^2\bar{h}_2^2\bar{h}_3^i
\end{eqnarray}
and
\bea\label{appC2}
\bar{h}_4^2\bar{h}_3^i=\eta_2^2\bar{h}_4^2\bar{h}_3^i.
\eea
Thus, it is required that either $\eta_2^2=1$ or $\bar{h}_2^2=\bar{h}_4^2=0$, as $\bar{h}_3^i\neq 0$ for at least one $i\in\{1,2\}$ and in addition it has to hold that either $\bar{h}_2^1=\bar{h}_4^1=0$ or $\eta_1^2=1$, as $\bar{h}_1^j\neq  0$ for at least one $j\in\{1,2\}$. In case $\eta_l^2=1$ for $l\in\{1,2\}$ one obtains that either $p_0=p_l=0$ or $p_k=p_m=0$ where $k\neq m\neq l\neq k$ and $k,m\in\{1,2,3\}$. Note that the transformations that correspond to these two solutions are equivalent up to conjugation with $S_k$ and therefore they allow to obtain the same states \footnote{Recall that if one does not use the restrictions set by the standard form, transformations that only differ up to conjugation with one of the symmetries allow to reach the same states.}.  In Lemma \ref{twooutcome} we have already shown which states are reachable by using the symmetries $S_0$ and $S_l$ for $l\in\{1,2\}$. Thus, it remains to consider the case $\bar{h}_2^1=\bar{h}_4^1=\bar{h}_2^2=\bar{h}_4^2=0$. Recall that we have already shown that for a non-trivial transformation it has to hold that $\bar{h}_2^3=\bar{h}_4^3=0$. Combining these conditions we have that $H_2=H_4=\one/2$. States of the form  $h_1^{\neq 3}\otimes \one/2\otimes h_3^{\neq 3}\otimes \one/2\ket{\Psi_{(-b_2)b_2}}$ can for example  be reached from states of the form $g_1^{\neq 3}\otimes\one/2\otimes g_3^{00A}\otimes \one/2 \ket{\Psi_{(-b_2)b_2}}$ via the POVM $\{1/\sqrt{2}h_3S_0g_3^{-1},1/\sqrt{2}h_3S_1g_3^{-1}\}$ where $\bar{g}_3^2=0$, $\bar{g}_3^1=0$ and $\bar{h}_3^3=\bar{g}_3^3$.\\
Let us proceed with case (iii). Note that $H_3=H_3^{\neq 3}$ implies that there exists at least one $i\in\{1,2\}$ such that $h_3^i\neq 0$.
Thus, it follows from  Eqs. (\ref{appC1}) and (\ref{appC2}) that either $\eta_2^2=1$ or $\bar{h}_2^2=\bar{h}_4^2=0$. As has been already discussed for $\eta_2^2=1$ the corresponding reachable states can be obtained by using only the symmetries $S_0$ and $S_2$ and therefore they are given by Lemma \ref{twooutcome}. Hence, it remains to consider the case $\bar{h}_2^2=\bar{h}_4^2=0$.
Considering Eq. (\ref{eq_Pauli}) for $P=\one\otimes \sigma_x\otimes\one\otimes \sigma_x$ and $P=\one\otimes \sigma_z\otimes\one\otimes \sigma_z$, as well as for $P=\one\otimes \sigma_x\otimes\sigma_i\otimes \sigma_z$ and $P=\one\otimes \sigma_x\otimes\sigma_i\otimes \sigma_z$ where $i\in\{1,2\}$ leads to
\begin{eqnarray}
&\bar{h}_2^1\bar{h}_4^1=\eta_1^2\bar{h}_2^1\bar{h}_4^1,\\
&\bar{h}_2^3\bar{h}_4^3=\eta_3^2\bar{h}_2^3\bar{h}_4^3, \\
&\bar{h}_2^1\bar{h}_3^i\bar{h}_4^3=\eta_1\eta_2\eta_3\bar{h}_2^1\bar{h}_3^i\bar{h}_4^3
\end{eqnarray}
and
\bea
\bar{h}_2^3\bar{h}_3^i\bar{h}_4^1=\eta_1\eta_2\eta_3\bar{h}_2^3\bar{h}_3^i\bar{h}_4^1.
\eea
Recall that $\eta_1\eta_2\eta_3=1$ corresponds to a trivial transformations. Thus, in order to allow for a non-trivial transformation it has to hold that $\bar{h}_2^3\bar{h}_4^1=0$ and $\bar{h}_2^1\bar{h}_4^3=0$.
As before for the case $\eta_l^2=1$ for $l\in\{1,3\}$ one obtains that either $p_0=p_l=0$ or $p_k=p_m=0$ where $k\neq m\neq l\neq k$ and $k,m\in\{1,2,3\}$. As $p_0=p_l=0$ and $p_k=p_m=0$ correspond to transformations that are equivalent up to conjugation with $S_k$ and we do not use the restrictions set by the standard form here the corresponding reachable states are given by Lemma \ref{twooutcome}. For $\eta_l^2\neq 1$ for $l\in\{1,3\}$ we have that $\bar{h}_j^1=0$ for at least one $j\in\{2,4\}$ and $\bar{h}_k^3$ for at least one $k\in\{2,4\}$. Combining this with the conditions that $\bar{h}_2^2=\bar{h}_4^2=0$, $\bar{h}_2^3\bar{h}_4^1=0$ and $\bar{h}_2^1\bar{h}_4^3=0$ one obtains that $H_i=\one/2$ for some $i\in\{2,4\}$ and $H_j=H_j^{A0A}$ for $i\neq j$ and $j\in\{2,4\}$. W.l.o.g. we choose $i=2$
 and $j=4$. In case $H_4\neq \one/2$, one can use the POVM $\{1/\sqrt{2}h_4S_0,1/\sqrt{2}h_4S_2\}$ to obtain these states from states of the form $g_1^{00A}\otimes\one/2\otimes g_3^{\neq 3}\otimes \one/2 \ket{\Psi_{(-b_2)b_2}}$. In case $H_4= \one/2$ the POVM $\{1/\sqrt{2}h_3S_0g_3^{-1},1/\sqrt{2}h_3S_1g_3^{-1}\}$ where $\bar{g}_3^2=0$, $\bar{g}_3^1=0$ and $\bar{h}_3^3=\bar{g}_3^3$ allows to reach the corresponding states from states of the form $g_1^{00A}\otimes\one/2\otimes g_3^{00A}\otimes \one/2 \ket{\Psi_{(-b_2)b_2}}$.\\
Note that case (iv) can be treated completely analogously to case (iii). In particular, one only has to exchange party 1 and 3 and simultaneously the x- and y- components for party 2 and 4, as well as the role of $S_1$ and $S_2$ (and therefore also $\eta_1$ and $\eta_2$).
\end{proof}

\section*{\label{appF}Appendix F: Proof of Lemmas \ref{lem77} and \ref{lemmab2}}

This section is concerned with the proofs of  Lemma \ref{lem77}  and Lemma \ref{lemmab2} (see subsection \ref{seca2a2}), which specifies the states that are reachable  and convertible in  the SLOCC classes $L_{a_2b_2}$ (see \cite{slocc4}) for $a^2= b^2$ and $a,b\neq0$. We will first prove Lemma \ref{lem77}. In order to improve the readability we repeat the lemma here again.\\ \\
\noindent\textit{ {\bf Lemma \ref{lem77}.} The only states in the SLOCC classes $L_{a_2b_2}$ with $a^2= b^2$ and $a,b\neq0$ that are reachable via LOCC are given by $d_1 \otimes h_2^x\otimes h_3\otimes h_4^x \ket{\Psi_{a_2a_2}}$ where $h_3\neq d_3$ and $d_1 \otimes \one/2\otimes h_3\otimes h_4 \ket{\Psi_{a_2a_2}}$ where $h_4\neq d_4$  and  $d_1 \otimes h_2^x\otimes h_3\otimes d_4 \ket{\Psi_{a_2a_2}}$ where $h_2^x\neq \one/2$ and $d_1 \otimes h_2^x\otimes d_3\otimes h_4 \ket{\Psi_{a_2a_2}}$ where $h_4\neq h_4^x$ and $d_1 \otimes \one /2\otimes d_3\otimes h_4 \ket{\Psi_{a_2a_2}}$ where $h_4\neq \one/2$.}
\begin{proof} Recall that the corresponding symmetry is given by \begin{eqnarray}
\tilde{S}_{x,z,m}&=(\sigma_z\otimes\sigma_y\otimes\sigma_z\otimes\sigma_y)^{m} \left(
    \begin{array}{cc}
      1 & 0 \\
      x & 1 \\
    \end{array}
  \right)\otimes P_{1/z}\otimes\left(
    \begin{array}{cc}
      1 & 0\\
      -x & 1 \\
    \end{array}
  \right)
\otimes P_{z},
\end{eqnarray}
where $m\in\{0,1\}$, $x,z\in\C$ and $z\neq 0$.
This symmetry allows to fix the  standard form as $d_1 \otimes g_2^x\otimes g_3\otimes g_4 \ket{\Psi_{a_2a_2}}$ where $g_2^2\geq 0$.  In case $g_2^2=0$ one  chooses $g_4^2\geq 0$.  Moreover, we have that $\Re(g_3^2)\geq 0$ and for $\Re(g_3^2)= 0$ we choose $\Im(g_3^2)\geq 0$. Inserting the symmetries into Eq. (\ref{EqSep}) leads to
\begin{eqnarray}\label{eq7}
&\sum_{x,|z|,\phi,m} p_{x,|z|,\phi,m}\left(
    \begin{array}{cc}
      h_1^1+h_1^3|x|^2&  h_1^3 x^* \\
    h_1^3 x  & h_1^3 \\
    \end{array}
  \right)\otimes \left(
    \begin{array}{cc}
     1/(2|z|^2)&  (-1)^mh_2^2e^{2i\phi} \\
   (-1)^mh_2^2e^{-2i\phi} & |z|^2/2\\
    \end{array}
  \right)\\ \nonumber&\otimes\left(
    \begin{array}{cc}
     a_3(x) &  -h_3^3 x^*+(-1)^mh_3^2 \\
    -h_3^3 x^*+(-1)^mh_3^{2 *}   & h_3^3 \\
    \end{array}
  \right)\otimes\left(
    \begin{array}{cc}
     (h_4^3)^m|z|^2/(h_4^1)^{m-1}&  (-1)^m|h_4^2|e^{-i(2\phi-(-1)^m\theta_4)} \\
   (-1)^m|h_4^{2}|e^{i(2\phi-(-1)^m\theta_4)} &(h_4^1)^m /[|z|^2(h_4^3)^{m-1}]\\
    \end{array}
  \right)\\ \nonumber&=r \tilde{D}_1\otimes G_2^x\otimes G_3\otimes G_4,
\end{eqnarray}
where $h_4^2=|h_4^{2}|e^{i\theta_4}$ and $a_3(x)=h_3^1+h_3^3|x|^2-(-1)^mh_3^2x-(-1)^mh_3^{2 *}x^*$.
We will distinguish in the following 2 different cases: (i) $h_3^2\neq 0$ and (ii) $h_3^2=0$. For case (i) we will show that either  $ p_{x,|z|,\phi,m}=0$ for $m=1$ (solution (a)) or for $m=0$ (solution (b)) or $h_4^1=h_4^3=1/2$ (solution (c)). For solution (c) we show that  states with $H_2=H_4=\one/2$ or $h_i^2=0$ and $h_j^2\neq 0$ for $\{i,j\}=\{2,4\}$ are reachable via providing the corresponding LOCC protocol. In case $h_2^2,h_4^2 \neq 0$ we show that either $h_4^2\in\R$ which corresponds to reachable states or  $ p_{x,|z|,\phi,m}=0$ for $m=0$ or for $m=1$. Then, we show that using only symmetries with $m=0$ (Solution (a)) only allows to reach states with $h_i^2=0$ and $h_j^2\neq 0$ for $\{i,j\}=\{2,4\}$. This result applies independently of the form of $H_3$.\\
For case (ii) it is easy to show that if $H_4\neq H_4^x$ the corresponding states are reachable via LOCC. Thus, we only have to investigate the cases (I) $H_2,H_4\neq \one/2$ and $H_4= H_4^x$ (II) $H_2=H_4= \one/2$ and (III) $H_i=\one/2$ and $H_j=H_j^x\neq \one/2$ for $\{i,j\}=\{2,4\}$. We show that the cases (ii) (I) and (ii) (II) correspond to states in $MES_4$. Note that case (ii) (II) includes the seed states. Furthermore, we show that the states corresponding to case (ii) (III) are reachable.
Let us start with case (i), i.e. $h_3^2\neq 0$.
Considering the matrix elements $\ket{1010}\bra{1010}$ and $\ket{1111}\bra{1111}$ of Eq. (\ref{eq7}) and additionally using the normalization condition one easily obtains that
\begin{eqnarray}
&r=h_1^3h_3^3/g_1^3g_3^3,\label{lem70}\\
&g_4^1=\tilde{p}_0h_4^1+\tilde{p}_1h_4^3\label{lem71}
\end{eqnarray} and
\bea g_4^3=\tilde{p}_0h_4^3+\tilde{p}_1h_4^1\label{lem72},
\eea where here and in the following we use the definition $\tilde{p}_m=\sum_{x,|z|,\phi} p_{x,|z|,\phi,m}$. Considering the matrix elements $\ket{1010}\bra{0010}$ and $\ket{1111}\bra{0111}$ leads to $\sum_{x,|z|,\phi} p_{x,|z|,\phi,0}x^*h_4^1=-\sum_{x,|z|,\phi} p_{x,|z|,\phi,1}x^*h_4^3$ and $\sum_{x,|z|,\phi} p_{x,|z|,\phi,0}x^*h_4^3=-\sum_{x,|z|,\phi} p_{x,|z|,\phi,1}x^*h_4^1$. Using these equations as well as Eq. (\ref{lem70}), (\ref{lem71}) and (\ref{lem72}) for the matrix elements $\ket{1010}\bra{1000}$ and $\ket{1111}\bra{1101}$ of Eq. (\ref{eq7}) one obtains that
\bea\label{h3}
\frac{h_3^2}{h_3^3}\frac{(\tilde{p}_0h_4^1-\tilde{p}_1h_4^3)}{(\tilde{p}_0h_4^1+\tilde{p}_1h_4^3)}
=\frac{h_3^2}{h_3^3}\frac{(\tilde{p}_0h_4^3-\tilde{p}_1h_4^1)}{(\tilde{p}_0h_4^3+\tilde{p}_1h_4^1)}=\frac{g_3^2}{g_3^3}.
\eea
From this equation it follows that $\tilde{p}_0\tilde{p}_1[(h_4^3)^2-(h_4^1)^2]=0$, which can only be fulfilled if $\tilde{p}_1=0$ (solution (a)), $\tilde{p}_0=0$ (solution (b)) or $h_4^3=h_4^1=1/2$ (solution (c)).  Note that we used here the normalization condition $\tr (H_4)=1$, i. e. $h_4^3=h_4^1$ implies  $h_4^3=h_4^1=1/2$. \\
Let us first consider solution (c), i.e. $h_4^3=h_4^1=1/2$. From Eq. (\ref{lem71}) and (\ref{lem72}) it follows that $g_4^3=g_4^1=1/2$.
When investigating the implications of solution (c) in more detail we will distinguish in the following the 3 different subcases (I) $H_2=H_4=\one/2$, (II) $h_2^2,h_4^2\neq 0$ and (III) $h_i^2=0$ and $h_j^2\neq 0$ for $\{i,j\}=\{2,4\}$.
States that correspond to case (i) (I), i.e. $H_2=H_4=\one/2$ and $h_3^2\neq 0$, can be reached via a LOCC protocol using the symmetries $\one^{\otimes 4}$ and $\sigma_z\otimes\sigma_y\otimes \sigma_z\otimes \sigma_y$. \\
We will now treat case (i) (II), i.e. $h_2^2,h_4^2\neq 0$ and $h_3^2\neq 0$, using solution (c). Considering the matrix elements $\ket{1110}\bra{1110}$ and $\ket{1011}\bra{1011}$ of Eq. (\ref{eq7}) leads to $\sum_{x,|z|,\phi} p_{x,|z|,\phi,m}|z|^4=\sum_{x,|z|,\phi} p_{x,|z|,\phi,m}1/|z|^4=1$ which can only be fulfilled if $p_{x,|z|,\phi,m}=0$ for $|z|\neq 1$. Considering the matrix element $\ket{1111}\bra{0010}$  of Eq. (\ref{eq7}) and using that  $\sum_{x,|z|,\phi} p_{x,|z|,\phi,0}x^*=-\sum_{x,|z|,\phi} p_{x,|z|,\phi,1}x^*$ one obtains that $h_2^2\sum_{x,|z|,\phi} p_{x,|z|,\phi,0}x^* (h_4^2-h_4^{2 *})=0$. Thus, either $h_4^2\in\R$ or $\sum_{x,|z|,\phi} p_{x,|z|,\phi,0}x^*=0$. In case $h_4^2\in\R$ one can use the POVM $\{1/\sqrt{2} (\one^{\otimes 2}\otimes h_3d_3^{-1}\otimes \one) ,1/\sqrt{2} (\sigma_z\otimes\sigma_x\otimes h_3 \sigma_zd_3^{-1}\otimes\sigma_x ) \}$ (for a correspondingly chosen $d_3$) to reach these states. Let us now investigate what happens for $h_4^2\notin\R$. In this case we have already shown that  $\sum_{x,|z|,\phi} p_{x,|z|,\phi,0}x^*=-\sum_{x,|z|,\phi} p_{x,|z|,\phi,1}x^*=0$. Recall that $h_3^2/h_3^3(\tilde{p}_0-\tilde{p}_1)=g_3^2/g_3^3$ for $h_4^3=h_4^1=1/2$ (see Eq. (\ref{h3})). The matrix element $\ket{1111}\bra{1010}$ of Eq. (\ref{eq7}) corresponds to $h_2^2 (h_4^2\tilde{p}_0+h_4^{2 *}\tilde{p}_1)=g_2^2 g_4^2$. From the matrix element $\ket{1111}\bra{1000}$ one obtains that
\bea
h_2^2 h_3^2/h_3^3(h_4^2\tilde{p}_0-h_4^{2 *}\tilde{p}_1)=g_2^2g_4^2 g_3^2/g_3^3.
\eea
Inserting the expressions for $g_3^2/g_3^3$ and $g_2^2 g_4^2$ and taking the imaginary part shows that for $h_4^2\notin\R$ it has to hold that $\tilde{p}_0=1$ or $\tilde{p}_1=1$. Note that this corresponds to solution (a) or solution (b) which we will discuss later on.  \\
Let us consider now the case (i) (III), i.e $h_i^2=0$ and $h_j^2\neq 0$ where $i,j\in\{2,4\}$. The POVM $\{1/\sqrt{2} h_j\one^{\otimes 4}d_j^{-1} ,1/\sqrt{2} h_j \one\otimes\sigma_z\otimes\one\otimes\sigma_zd_j^{-1} \}$ with a correspondingly chosen $d_j$ allows to reach these states.  Note that these states are reachable independent of whether $h_3^2$ is zero or not.\\
Let us now consider solution (a) for case (i), i.e. $p_{x,|z|,\phi,1}=0$ $\forall x,z\in\C$ and $h_3^2\neq 0$. Note that we will treat this solution without using any constraints that we obtained in the discussion of solution (c). Considering first Eq. (\ref{lem71}) and (\ref{lem72}) one obtains that $h_4^3=g_4^3$ and $h_4^1=g_4^1$, i.e. the diagonal elements of $H_4$ can not be changed. Furthermore, it can be easily shown that $p_{x,|z|,\phi,0}=0$ for $|z|\neq 1$. From the matrix elements $\ket{1010}\bra{0010}$ and $\ket{1000}\bra{0010}$ of Eq. (\ref{eq7}) it follows that $p_{x,|z|,\phi,0}=0$ for $x\neq 0$. Using that only symmetries with $x=0$ and $m=0$, i.e. symmetries which act trivially on party $1$ and $3$, can contribute to a transformation in this case it is easy to see that $H_i=G_i$ for $i\in\{1,3\}$. \\
Thus, when proceeding with our discussion of which states are reachable using only symmetries with $m=0$ we will distinguish the 3 different subcases: (A) $h_2^2=h_4^2=0$, (B)$h_2^2,h_4^2\neq 0$ and (C) $h_i^2=0$ and $h_j^2\neq 0$ for $\{i,j\}=\{2,4\}$.
Let us show first that for case (i) (A) and (i) (B) the corresponding states are in $MES_4$. It can be easily seen that $h_2^2=h_4^2=0$ implies $g_2^2=g_4^2=0$ and thus a transformation using only symmetries with $m=0$ is not possible. In order to show that states corresponding to case (i) (B) are not reachable consider the matrix element $\ket{1111}\bra{1010}$, which leads to $h_2^2 h_4^2=g_2^2 g_4^2$.
Using the matrix elements $\ket{1110}\bra{1010}$ and $\ket{1010}\bra{1011}$ of Eq. (\ref{eq7}) it can be easily seen that it has to hold that $h_4^{2 *}/h_2^2=g_4^{2 *}/g_2^2$. Using additionally that according to our standard form $h_2^2\geq 0$ leads to  $h_2^2=g_2^2$ and therefore  $h_4^2=g_4^2$.
Note that in our discussion of which states are reachable using only symmetries with $m=0$ we never used any constraint on $h_3^2$. Thus, these results hold true independent of the form of $H_3$.\\
Let us treat next the case (ii), i.e. $h_3^2=0$. In case $H_4\neq H_4^x$ the states are reachable via the POVM $\{1/\sqrt{2} h_4\one^{\otimes 4}g_4^{-1} ,1/\sqrt{2} h_4 \sigma_z\otimes\sigma_x\otimes\sigma_z\otimes\sigma_xg_4^{-1}  \}$ where $g_4^1=g_4^3=1/2$ and $\Re(h_4^2)=g_4^2$. Thus, we only have to consider the cases (I) $H_2,H_4\neq \one/2$ and $H_4= H_4^x$ (II) $H_2=H_4= \one/2$ and (III) $H_i=\one/2$ and $H_j=H_j^x\neq \one/2$ for $\{i,j\}=\{2,4\}$. We will first show that for case (ii) (I), i.e. $H_2, H_4\neq \one/2$, $H_4= H_4^x$ and $h_3^2=0$, the states are in $MES_4$. It can be easily seen that for $h_3^2=0$ one obtains that $p_{x,|z|,\phi,m}=0$ for $x\neq 0$ and $g_3^2=0$. Thus, it follows that $H_i=G_i$ for $i\in\{1,3\}$.  Considering Eq. (\ref{lem71}) and (\ref{lem72}) leads to $g_4^1=g_4^3=1/2$. In order to show that $h_2^2=g_2^2$ and  $h_4^2=g_4^2$ consider the matrix elements  $\ket{1110}\bra{1010}$ and $\ket{1110}\bra{1111}$ of Eq. (\ref{eq7}), as well as $\ket{1111}\bra{1010}$ and recall that according to our standard form $g_2^2> 0$. Thus, in this case $H_i=G_i$ for $i\in\{1,2,3,4\}$  and therefore the corresponding states are in $MES_4$. \\
It can be shown using our choice of the standard form that for case (i) solution (b) is no solution to Eq. (\ref{eq7}).\\
The case (ii) (II), i.e. $h_3^2=0$ and $H_2=H_4=\one/2$, also corresponds to states in $MES_4$. As we have discussed before it is easy to show that for $h_3^2=0$ one obtains that $p_{x,|z|,\phi,m}=0$ for $x\neq 0$ and $g_3^2=0$. Thus, $H_i=G_i$ for $i\in\{1,3\}$. Moreover, one can show that $p_{x,|z|,\phi,m}=0$ for $|z|\neq 1$, which implies that only unitary symmetries can be used for transformations and therefore $G_2=G_4=\one/2$. Thus, the states which correspond to case (ii) (II) are in $MES_4$. Note that the seed states are contained in this case.\\
For the remaining case (ii) (III), i.e. $h_3^2=0$ and  $H_i=\one/2$ and $H_j=H_j^x\neq \one/2$ for $\{i,j\}=\{2,4\}$ one can construct a LOCC protocol which allows to reach the corresponding states. In particular, one uses the symmetries $\one^{\otimes 4}$ and $\sigma_z\otimes \sigma_y\otimes \sigma_z\otimes \sigma_y$.
 \end{proof}
Let us now show Lemma \ref{lemmab2} which specifies the states that are convertible in this SLOCC classes.\\ \\
\noindent\textit{ {\bf Lemma \ref{lemmab2}.}  The only states in the SLOCC classes $L_{a_2b_2}$ with $a^2=  b^2$ and $a,b\neq0$ that are convertible via LOCC are given by $d_1 \otimes g_2^x\otimes g_3\otimes g_4^x \ket{\Psi_{a_2a_2}}$ and $d_1 \otimes \one\otimes g_3\otimes g_4 \ket{\Psi_{a_2a_2}}$  and  $d_1 \otimes g_2^x\otimes g_3\otimes d_4 \ket{\Psi_{a_2a_2}}$ and $d_1 \otimes g_2^x\otimes d_3\otimes g_4 \ket{\Psi_{a_2a_2}}$.}
\begin{proof}
The only states that can be reached from a convertible state are given by the ones in Lemma \ref{lem77}. From Eq. (\ref{eq7}) it can be easily seen that from $h_i^2=0$ it follows that $g_i^2=0$ (see also proof of Lemma \ref{lem77}). Thus, the only states that allow to reach the states $d_1 \otimes \one/2\otimes h_3\otimes h_4 \ket{\Psi_{a_2a_2}}$ where $h_4\neq d_4$  and  $d_1 \otimes h_2^x\otimes h_3\otimes d_4 \ket{\Psi_{a_2a_2}}$ where $h_2^x\neq \one/2$ and $d_1 \otimes h_2^x\otimes d_3\otimes h_4 \ket{\Psi_{a_2a_2}}$ where $h_4\neq h_4^x$ and $d_1 \otimes \one /2\otimes d_3\otimes h_4 \ket{\Psi_{a_2a_2}}$ where $h_4\neq \one/2$ are of the form $d_1 \otimes \one/2\otimes g_3\otimes g_4 \ket{\Psi_{a_2a_2}}$  and  $d_1 \otimes g_2^x\otimes g_3\otimes d_4 \ket{\Psi_{a_2a_2}}$ and $d_1 \otimes g_2^x\otimes d_3\otimes g_4 \ket{\Psi_{a_2a_2}}$. In order to show that the states $d_1 \otimes \one/2\otimes g_3\otimes g_4 \ket{\Psi_{a_2a_2}}$ are indeed convertible consider  $\{\sqrt{p} (\one^{\otimes 3}\otimes h_4g_4^{-1}),\sqrt{1-p} (\one\otimes\sigma_z\otimes\one\otimes h_4\sigma_zg_4^{-1})\}$ which is a valid POVM for $h_4^3=g_4^3$, $h_4^1=g_4^1$ and $(2 p -1 ) h_4^2=g_4^2$. Since it is always possible to find a value of $p$ and a $H_4\neq G_4$, these states are convertible. Analogously one can show that the states $d_1 \otimes g_2^x\otimes g_3\otimes d_4 \ket{\Psi_{a_2a_2}}$ ($d_1 \otimes g_2^x\otimes d_3\otimes g_4 \ket{\Psi_{a_2a_2}}$) are convertible via constructing a POVM acting non-trivially on party $2$ (party $4$) via using the symmetries $\one^{\otimes 4}$ and $\one\otimes\sigma_z\otimes\one\otimes \sigma_z$ ($\sigma_z\otimes \sigma_x\otimes\sigma_z\otimes \sigma_x$) respectively. The states $\tilde{d}_1 \otimes h_2^x\otimes h_3\otimes h_4^x \ket{\Psi_{a_2a_2}}$ where $h_3,h_2^x,h_4^x$ are non-diagonal can only be obtained from states of the form $d_1 \otimes g_2^x\otimes g_3\otimes g_4^x \ket{\Psi_{a_2a_2}}$ (see proof of Lemma \ref{lem77} where it has been shown that $h_4^1=h_4^3=1/2$ implies $g_4^1=g_4^3=1/2$ and consider the matrix elements $\ket{1110}\bra{1010}$ and $\ket{1110}\bra{1111}$, as well as $\ket{1111}\bra{1010}$ of Eq. (\ref{eq7}) to show that $g_4^2=h_4^2\in\R$). Note that in case $g_i^x=\one/2$ for $i\in\{2,4\}$ we would obtain the same convertible states as already discussed. In order to see that the states $d_1 \otimes g_2^x\otimes g_3\otimes g_4^x \ket{\Psi_{a_2a_2}}$ are indeed convertible consider $\{\sqrt{p} (\one^{\otimes 2}\otimes h_3g_3^{-1}\otimes \one),\sqrt{1-p} (\sigma_z\otimes \sigma_x\otimes h_3\sigma_zg_3^{-1}\otimes \sigma_x)\}$ which forms a POVM for $h_3^3=g_3^3$, $h_3^1=g_3^1$ and $(2 p -1 ) h_3^2=g_3^2$.
\end{proof}

\end{document}